\newlist{inenum}{enumerate*}{1}
\setlist[inenum]{label=(\roman*)}
\newlist{compitem}{itemize}{3}
\setlist[compitem]{label=\textbullet,nosep}
\def\defeq{:=}
\NewDocumentCommand{\dmax}{}{%
	\Delta%
}
\newcommand{\costbound}{\ensuremath{\OPT'}}
\NewDocumentCommand{\cost}{o m O{k} o}{%
	\mathrm{cost}\IfNoValueF{#4}{^{#4}}_{#3}(#2\IfNoValueF{#1}{, #1})%
}
\NewDocumentCommand{\opt}{m O{k}}{%
	\mathrm{opt}_{#2}(#1)%
}
\def\adv/{\textsc{Adv}}
\def\alg/{\textsc{Alg}}
\NewDocumentCommand{\auxgraph}{m o o}{%
	#1%
	\IfNoValueF{#2}{_{#2%
			\IfNoValueF{#3}{,#3}%
	}}%
}
\NewDocumentCommand{\auxnum}{m o o}{%
	\lvert {#1}%
	\IfNoValueF{#2}{_{#2%
			\IfNoValueF{#3}{,#3}%
	}} \rvert%
}
\NewDocumentCommand{\kg}{o o}{%
	\auxgraph{G}[#1][#2]%
}
\NewDocumentCommand{\actno}{o o}{%
	\auxgraph{A}[#1][#2]%
}
\NewDocumentCommand{\actnum}{o o}{%
	\auxnum{A}[#1][#2]%
}
\NewDocumentCommand{\pasno}{o o}{%
	\auxgraph{P}[#1][#2]%
}
\NewDocumentCommand{\pasnum}{o o}{%
	\auxnum{P}[#1][#2]%
}
\NewDocumentCommand{\disno}{o o}{%
	\auxgraph{D}[#1][#2]%
}
\NewDocumentCommand{\disnum}{o o}{%
	\auxnum{D}[#1][#2]%
}
\NewDocumentCommand{\akg}{o o}{%
	\auxgraph{H}[#1][#2]%
}
\NewDocumentCommand{\s}{o}{%
	\sigma%
	\IfNoValueF{#1}{_{#1}}%
}
\NewDocumentCommand{\allpoints}{m}{%
	Q%
	\IfNoValueF{#1}{_{#1}}%
}
\NewDocumentCommand{\points}{m}{%
	P%
	\IfNoValueF{#1}{_{#1}}%
}
\NewDocumentCommand{\q}{o o}{%
	q%
	\IfNoValueF{#1}{_{#1%
			\IfNoValueF{#2}{,#2}%
	}}
}
\NewDocumentCommand{\V}{m}{%
	V(#1)%
}
\NewDocumentCommand{\E}{m}{%
	E(#1)%
}
\NewDocumentCommand{\ngh}{o m}{%
	\Gamma%
	\IfNoValueF{#1}{_{#1}}%
	(#2)%
}
\NewDocumentCommand{\dg}{o m}{%
	\operatorname{deg}%
	\IfNoValueF{#1}{_{#1}}%
	(#2)%
}
\NewDocumentCommand{\dsp}{o m m}{%
	d%
	\IfNoValueF{#1}{_{#1}}%
	(#2, #3)%
}
\NewDocumentCommand{\ans}{m}{%
	\textrm{ans}(#1)%
}
\declaretheoremstyle[
bodyfont=\normalfont\itshape,
]{normal_style}
\declaretheoremstyle[
headfont=\normalfont\itshape,
notefont=\normalfont\itshape,
qed=\qedsymbol
]{proof_style}
\declaretheorem[
name=Theorem,
style=normal_style]{theorem}
\declaretheorem[
name=Corollary,
numberlike=theorem,
style=normal_style]{corollary}
\declaretheorem[
name=Lemma,
numberlike=theorem,
style=normal_style]{lemma}
\declaretheorem[
name=Claim,
numberlike=theorem,
style=normal_style]{claim}
\declaretheorem[
name=Observation,
numberlike=theorem,
style=normal_style]{observation}
\declaretheorem[
name=Proposition,
numberlike=theorem,
style=normal_style]{proposition}
\declaretheorem[
name=Definition,
numberlike=theorem,
style=normal_style]{definition}
\let\proof\@undefined
\let\endproof\@undefined
\declaretheorem[
name=Proof,
numbered=no,
style=proof_style]{proof}
\def\xth/{%
	\textsuperscript{th}%
}
	\let\Cref\crtCref
	\let\cref\crtcref
\newcommand{\delete}{\texttt{Delete}}
\newcommand{\ins}{\ttx{Insert}}
\newcommand{\poly}{\operatorname{poly}}
\newcommand{\polylog}{\operatorname{polylog}}
\newcommand{\LFMIS}{\operatorname{LFMIS}} 
\newcommand{\elim}{\operatorname{elim}} 
\newcommand{\OPT}{\operatorname{OPT}} 
\newcommand{\ALG}{\mathcal{L}_{k+1}} 
\newcommand{\Run}{\operatorname{Time}}
\newcommand{\R}{\mathbb{R}}
\newcommand{\N}{\mathbb{N}}
\newcommand{\ex}[1]{\mathop{{\bf E}\left[ #1 \right]}}
\newcommand{\exx}[2]{\mathop{{\bf E}}_{#1}\left[ #2 \right]}
\newcommand{\pr}[1]{\operatorname{{\bf Pr}}\left[ #1 \right]}
\newcommand{\prb}[2]{\mathop{{\bf Pr}}_{#1}\left[ #2 \right]}
\newcommand{\aw}[1]{{\color{orange}#1}}
\def\rem#1{{\marginpar{\raggedright\scriptsize #1}}}
\newcommand{\awr}[1]{\rem{\textcolor{orange}{#1}}}
\renewcommand{\awr}[1]{}
\renewcommand{\aw}[1]{#1}
\newcommand{\bD}{\mathbf{D}}
\newcommand{\bX}{\mathbf{X}}
\newcommand{\bZ}{\mathbf{Z}}
\newcommand{\cA}{\mathcal{A}}
\newcommand{\cB}{\mathcal{B}}
\newcommand{\cC}{\mathcal{C}}
\newcommand{\cD}{\mathcal{D}}
\newcommand{\cE}{\mathcal{E}}
\newcommand{\cF}{\mathcal{F}}
\newcommand{\cH}{\mathcal{H}}
\newcommand{\cI}{\mathcal{I}}
\newcommand{\cL}{\mathcal{L}}
\newcommand{\cP}{\mathcal{P}}
\newcommand{\cQ}{\mathcal{Q}}
\newcommand{\cX}{\mathcal{X}}
\newcommand{\eps}{\epsilon} 
\newcommand{\ttx}[1]{\texttt{#1}}
\newcounter{Frame}
\newenvironment{Frame}[1][h]{%
	\refstepcounter{Frame}
	\begin{mdframed}[%
		frametitle={#1},
		skipabove=\baselineskip plus 2pt minus 1pt,
		skipbelow=\baselineskip plus 2pt minus 1pt,
		linewidth=1.0pt,
		frametitlerule=true,
		]%
	}{%
	\end{mdframed}
}
\title{Optimal Fully Dynamic $k$-Center Clustering for Adaptive and Oblivious Adversaries} %
\author{
	MohammadHossein Bateni\\
	Google Research \\
	\texttt{bateni@google.com}
	\and
Hossein Esfandiari \\
Google Research \\
	\texttt{esfandiari@google.com}
	\and 
	Hendrik Fichtenberger\\
	Google Research	\\
	\texttt{fichtenberger@google.com}
	\and
	Monika Henzinger \\
	University of Vienna \\
	\texttt{monika.henzinger@univie.ac.at}
	\thanks{This project has received funding from the European Research Council (ERC) under the European Union's
		\begin{wrapfigure}{r}{2cm}%
			\includegraphics[width=2cm]{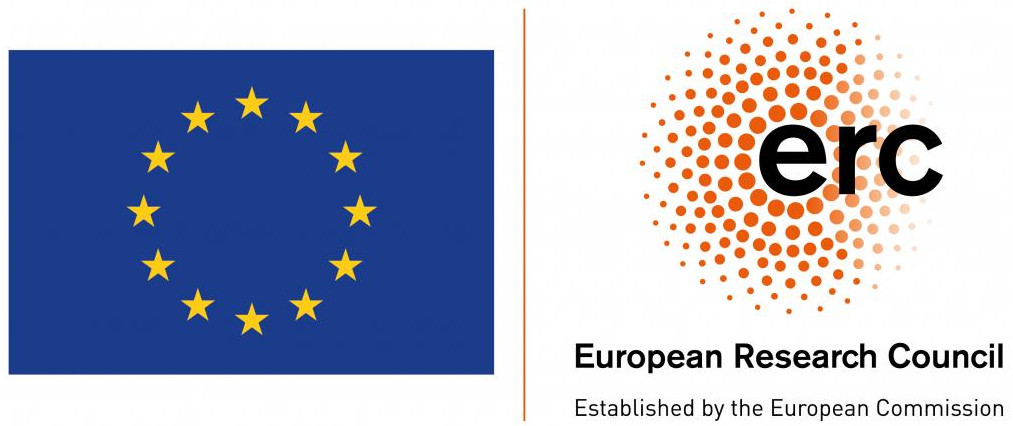}%
		\end{wrapfigure}%
		Horizon 2020 research and innovation programme (Grant agreement No. 101019564 ``The Design of Modern Fully Dynamic Data Structures (MoDynStruct)'' and from the Austrian Science Fund (FWF) project ``Fast Algorithms for a Reactive Network Layer (ReactNet)'', P~33775-N, with additional funding from the \textit{netidee SCIENCE Stiftung}, 2020--2024.}
	\and
	Rajesh Jayaram \\
	Google Research\\
	\texttt{rkjayaram@google.com} 
	\and
	Vahab Mirrokni\\
	Google Research\\
	\texttt{mirrokni@google.com}
	\and 
	Andreas Wiese \\
	Technical University of Munich\\
	\texttt{andreas.wiese@tum.de}
}
\date{}
\begin{document}
\maketitle

	\begin{abstract}
	In fully dynamic clustering problems, a clustering of a given data set in a metric space must be maintained while it is modified through insertions and deletions of individual points.
		In this paper, we resolve the complexity of fully dynamic $k$-center clustering against both adaptive and oblivious adversaries. Against oblivious adversaries, we present the first algorithm for fully dynamic $k$-center in an arbitrary metric space that maintains an optimal $(2+\epsilon)$-approximation in $O(k \cdot  \mathrm{polylog}(n,\Delta))$ amortized update time. Here, $n$ is an upper bound on the number of active points at any time, and $\Delta$ is the aspect ratio of the metric space. Previously, the best known amortized update time was $O(k^2\cdot \mathrm{polylog}(n,\Delta))$, and is due to Chan, Gourqin, and Sozio (2018). Moreover, we demonstrate that our runtime is optimal up to $\mathrm{polylog}(n,\Delta)$ factors. In fact, we prove that even offline algorithms for $k$-clustering tasks in arbitrary metric spaces, including $k$-medians, $k$-means,  and $k$-center, must make at least $\Omega(n k)$ distance queries to achieve any non-trivial approximation factor. This implies a lower bound of $\Omega(k)$ which holds even for the insertions-only setting.
		
		For adaptive adversaries, we give the first deterministic algorithm for fully dynamic $k$-center which achieves a $O\left( \min\left\{\frac{\log(n/k)}{\log \log n},k\right\}\right)$ approximation in $O(k \cdot  \mathrm{polylog}(n,\Delta))$ amortized update time. Further, we demonstrate that any algorithm which achieves a $O\left( \min\left\{\frac{\log n}{ k \log f(k,2n)},k\right\}\right)$-approximation against adaptive adversaries requires $f(k,n)$ update time, for any arbitrary function $f$. Thus, in the regime where $k = O(\sqrt{\frac{\log n }{\log \log n}})$, we close the complexity of the problem up to $\mathrm{polylog}(n, \Delta)$ factors in the update time. Our lower bound extends to other $k$-clustering tasks in arbitrary metric spaces, including $k$-medians and $k$-means.

		Finally, despite the aforementioned lower bounds, we demonstrate that an update time sublinear in $k$ is possible against oblivious adversaries for metric spaces which admit locally sensitive hash functions (LSH), resulting in improved algorithms for a large class of metrics including Euclidean space, $\ell_p$-spaces, the Hamming Metric, and the Jaccard Metric. We also give the first fully dynamic $O(1)$-approximation algorithms for the closely related $k$-sum-of-radii and $k$-sum-of-diameter problems, with $O(\mathrm{poly}(k,\log \Delta))$ update time. 
		
	\end{abstract}
	
\thispagestyle{empty}
\newpage
\tableofcontents
\thispagestyle{empty}
\newpage

\pagenumbering{arabic}
\setcounter{page}{1}

\section{Introduction}

Clustering is a fundamental and well-studied problem in computer science,
which arises in approximation algorithms, unsupervised learning, computational
geometry, classification, community detection, image segmentation,
databases, and other areas \cite{hansen1997cluster,schaeffer2007graph,fortunato2010community,shi2000normalized,arthur2006k,tan2013data,coates2012learning}.
The goal of clustering is to find a structure in data by grouping
together similar data points. Clustering algorithms optimize a given
objective function which characterizes the quality of a clustering.
One of the classical and best studied clustering objectives is the
$k$-center objective.

Specifically, given a metric space $(\cX,d)$ and a set of points
$P\subseteq\cX$, the goal of $k$-center clustering is to output
a set $C\subset\cX$ of at most $k$ ``centers'', so that the
maximum distance of any point $p\in P$ to the nearest center $c\in C$
is minimized. In other words, the goal is to minimize the objective
function $\max_{p\in P}d(p,C)$, where $d(p,C)=\min_{c\in C}d(p,c)$. The $k$-center clustering problem
admits several well-known greedy $2$-approximation algorithms \cite{gonzalez1985clustering,hochbaum1986unified}.
However, it is known to be NP-hard to approximate the objective to
within a factor of $(2-\epsilon)$ for any constant $\eps>0$~\cite{hsu1979easy}.
Moreover, even restricted to Euclidean space, it is still NP-hard
to approximate beyond a factor of $1.822$ \cite{feder1988optimal,bern1997approximation}.

While the approximability of many clustering tasks, including $k$-center
clustering, is fairly well understood in the static setting, the same
is not true for \textit{dynamic datasets}. Recently, due to the proliferation
of data and the rise of modern computational paradigms where data
is constantly changing, there has been significant interest in developing
dynamic clustering algorithms~\cite{cohen2016diameter,lattanzi2017consistent,ChaFul18,DBLP:conf/esa/GoranciHL18,schmidt2019fully,goranci2019fully,HenzingerK20,henzinger2020dynamic,fichtenberger2021consistent}.
In the incremental dynamic setting, the dataset $P$ is observed via
a sequence of insertions of data points, and the goal is to maintain
a good $k$-center clustering of the current set of active points.
In the \textit{fully dynamic} setting, points can be both inserted
and deleted from $P$.

The study of dynamic algorithms for $k$-center was initated by Charikar,
Chekuri, Feder, and Motwani~\cite{charikar2004incremental}, whose
``doubling algorithm'' maintains an $8$-approximation in amortized
$O(k)$ update time. However, the doubling algorithm is unable to
handle deletions of data points. It was not until recently that the
first fully dynamic algorithm for $k$-center, with update time better
than naively recomputing a solution from scratch, was developed. In
particular, the work of Chan, Guerqin, and Sozio \cite{ChaFul18}
proposed a randomized algorithm that maintains an optimal $(2+\eps)$-approximation
in $O(\frac{\log\Delta}{\eps}k^{2})$ amortized time per update, where
$\Delta$ is the aspect ratio of the dataset. The algorithm is randomized
against an \emph{oblivious adversary}, i.e., the adversary has to
fix the input sequence in advance and cannot adapt it based on the
decisions of the algorithm. 

Since then, algorithms with improved running time have been demonstrated
for the special cases of Euclidean space \cite{schmidt2019fully}
(albeit, with a larger approximation factor), and for spaces with
bounded doubling dimension \cite{goranci2019fully}. However, despite
this progress, for general metrics the best known dynamic algorithm
for $k$-center is still the algorithm by Chan et al.~\cite{ChaFul18}
and it is open to find a dynamic algorithm with an update time that
is sub-quadratic in $k$ with any non-trivial approximation guarantee. Furthermore, it is open to find a dynamic algorithm for $k$-center
with any non-trivial approximation guarantee and update time against
an \emph{adaptive adversary}, i.e., an adversary that can choose the
next update depending on the previous decisions of the algorithm.
Note that deterministic algorithms are necessarily robust against adaptive adversaries, however, most of the aforementioned algorithms are randomized. This raises the following question.

\begin{center}
\emph{What are the best possible update times and approximation
ratios of\\ dynamic algorithms for $k$-center against oblivious and
adaptive adversaries?}
\end{center}

In addition to $k$-center, there are other popular clustering objectives whose complexity in the fully dynamic model is not fully understood. For example, in
the well-studied $k$-median and $k$-means problems, we minimize
$\sum_{p\in P}d(p,C)$ and $\sum_{p\in P}(d(p,C))^{2}$, respectively.  The latter problems are special cases of the more general \emph{$(k,z)$-clustering problem}, where one minimizes $\sum_{p\in S}d(p,C)^{z}$.  There are $O(1)$-approximation
algorithms known for these problems in the offline setting~\cite{gonzalez1985clustering,hochbaum1985best,hochbaum1986unified,ahmadian2016approximation}
as well as in the dynamic setting~\cite{HenzingerK20}.
However, as is the case for $k$-center, these dynamic algorithms are
randomized, and thus far no deterministic dynamic $O(1)$-approximation algorithms for these objectives
are known.

In addition to $k$-means and median, a natural variant of the $k$-center is the \emph{$k$-sum-of-radii} problem
(also known as the \emph{$k$-cover problem}) where we choose $k$ centers $C = \{c_1,\dots,c_k\}$, and seek to minimize
their sum $\sum_{i}r_{i}$ where $r_i$ is the maximum distance $d(p,c_i)$ over all points point $p$ assigned to $c_i$ (i.e., the radius of the clustered centered at $c_i$). Related to this is the \emph{$k$-sum-of-diameters}
problem, where one minimizes the sum of the diameters
of the clusters.  For each of these closely related variants to $k$-center, no algorithms with non-trivial guarantees were known to exist. 

\subsection{Our Contributions}
Our main contribution is the resolution of the complexity of fully dynamic $k$-center
clustering, up to polylogarithmic factors in the update time, against oblivious adversaries, thereby answering the prior open question, as well as the resolution of the problem against adaptive adversaries for the regime when $k = O(\sqrt{\log n/ \log \log (n+\Delta)})$. A summary of our upper and lower bounds for fully dynamic $k$-center is given in Table~\ref{tab:k-center-main}.

{
\def\arraystretch{1.1}
\begin{table}
\centering
\begin{tabular}{ccccc}
\toprule
$k$-center & \multicolumn{2}{c}{Upper bound} & \multicolumn{2}{c}{Lower bound}\tabularnewline \cmidrule(r){2-3} \cmidrule(l){4-5}
Adversary & update time & approx. ratio & update time & approx. ratio\tabularnewline \midrule
Oblivious & \textbf{\textcolor{black}{$\bm{\tilde{O}(k/\eps)}$}} & \textbf{\textcolor{black}{$\bm{2+\eps}$}} & \textbf{\textcolor{black}{$\bm{\Omega(k)}$}} & \bf{any}\tabularnewline
 & $\tilde{O}(k^{2}/\eps)$ & $2+\eps$~\cite{ChaFul18} &  & \tabularnewline
Adaptive & $\bm{\tilde{O}(k)}$ & $\bm{O\left(\min\{\frac{\log(n/k)}{\log \log n},k\}\right)}$ & $\bm{f(k,n)}$ & $\bm{\Omega\left(\min\{\frac{\log(n)}{k \log f(k,2n)},k\}\right)}$\tabularnewline
\bottomrule
\end{tabular}%
\def\arraystretch{1}

\caption{\label{tab:k-center-main}Our main new upper and lower bounds for $k$-center
where $f(k,n)$ is an arbitrary function. The notation $\tilde{O}$ hides factors that are polylogarithmic in $n,\Delta$.
}
\end{table}
}

\paragraph{Oblivous adversaries.}
We first design an algorithm with update time that is \textit{linear} in $k$, 
and only logarithmic in $n$ and $\Delta$, where $n$ is an upper
bound on the number of active points at a given point in time,\footnote{We remark that our algorithm does not need to know $n$ or the total number of updates in advance.} and
$\Delta$ denotes the aspect ratio of the given metric space. Our algorithm achieves an
approximation ratio of $2+\epsilon$, which is essentially tight since
it is NP-hard to obtain an approximation ratio of $2-\epsilon$. This improves on the prior best known quadratic-in-$k$ update time of \cite{ChaFul18}, for the same approximation ratio. 
\begin{theorem}
\label{thm:main} There is a randomized fully dynamic algorithm that,
on a sequence of insertions and deletions of points from an arbitrary
metric space, maintains a $(2+\eps)$-approximation to the optimal
$k$-center clustering. The amortized update time of the algorithm
is $O(\frac{\log\Delta\log n}{\eps}(k+\log n))$ in expectation, and
$O(\frac{\log\Delta\log n}{\eps}(k+\log n)\log\delta^{-1})$ with
probability $1-\delta$ for any $\delta\in(0,\frac{1}{2})$. 
\end{theorem}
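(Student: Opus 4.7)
The plan is to first reduce to a decision version of $k$-center via standard radius-guessing: for each $r \in \{(1+\eps)^i : i = 0, 1, \ldots, O(\log_{1+\eps} \Delta)\}$, maintain an independent data structure that either outputs at most $k$ pivots covering all active points within distance $(2+\eps) r$, or declares that the current optimum exceeds $r$. Running all $O(\log \Delta / \eps)$ instances in parallel and reporting the solution of the smallest successful $r$ yields the $(2+\eps)$-approximation with only an $O(\log \Delta / \eps)$ multiplicative blow-up in update time.

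For a fixed guess $r$, the core idea is to exploit obliviousness of the adversary by assigning each inserted point an independent uniformly random priority, and then to maintain the ``greedy-on-random-order'' clustering: the set $C$ of pivots is built by scanning the currently active points in increasing order of priority and adding any $p$ with $d(p, C) > (2+\eps) r$; every non-pivot point is attached to the earliest-priority pivot within distance $(2+\eps) r$ of it. This is exactly Gonzalez's greedy run on a random order, so whenever $|C| \le k$ every active point is covered, and whenever $|C|$ would exceed $k$ we certify that the optimum is $> r$. Insertions cost $O(k)$ distance queries to walk the pivots in priority order, and deletions of non-pivots are trivial.

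The main obstacle is the deletion of a pivot $c$: its cluster must be reorganized, some of its members promoted to new pivots, and others possibly migrated. I would handle this by keeping a min-heap by priority inside every cluster together with a global reverse index mapping each point to its current pivot, so the next pivot candidate is retrievable in $O(\log n)$ time and a reassignment of a point costs $O(k)$ distance queries to find its new nearest pivot. The heart of the analysis is a probabilistic lemma: any fixed point $p$ suffers only $O(\log n)$ cluster changes in expectation over the random priorities across the entire update sequence. The intuition is a harmonic-sum argument standard in random-order analyses: a reassignment of $p$ triggered by the deletion of its current pivot requires that pivot to be the priority-minimizer of a certain witness set of size $s$ determined by the geometry around $p$, an event of probability $O(1/s)$, and summing over $s = 1, \ldots, n$ yields $O(\log n)$.

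Given this lemma, the total reassignment work is $O(nk \log n)$ over $n$ updates, i.e.\ $O(k \log n)$ amortized per update at a single radius guess. Multiplying by the $O(\log \Delta / \eps)$ radius guesses and adding the $O(\log n)$ per-operation heap overhead yields the claimed expected amortized bound $O(\frac{\log \Delta \log n}{\eps}(k + \log n))$. The high-probability statement follows by maintaining $O(\log \delta^{-1})$ independent copies of the random-priority scheme in parallel and applying a Chernoff tail bound on the aggregated reassignment counts. The most delicate point will be formalizing the random-order witness argument against the $(2+\eps)$-radius covering constraint (rather than exact nearest-neighbor assignment), since the witness set for a reassignment is defined by a ball of radius $(2+\eps)r$ and some care is needed to bound its effective size to keep the harmonic sum logarithmic.
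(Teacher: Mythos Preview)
Your high-level architecture matches the paper's: radius guessing via $O(\eps^{-1}\log\Delta)$ threshold graphs, and on each threshold graph maintaining the greedy-on-random-priority independent set (this is exactly the paper's top-$(k{+}1)$ LFMIS under a random ranking $\pi$). The paper also arrives at the same target lemma, namely that the total number of leader/pivot reassignments is $O(M\log n)$ in expectation over $\pi$.

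The gap is in your justification of that lemma. Your sketch (``the deleted pivot must be the priority-minimizer of a witness set of size $s$, probability $O(1/s)$, sum to $O(\log n)$'') does not go through as stated, for two reasons. First, the adversary's deletions are oblivious but deterministic: the pivot being deleted is a fixed point, not a random one, so there is no event of the form ``the deleted point happens to be the minimizer'' to condition on. Second, you do not account for cascades on \emph{insertions}: when a newly inserted point has small priority it becomes a pivot, may demote several existing pivots, and each demotion releases an entire cluster whose members must be re-examined; these are reassignments not triggered by any deletion at all. This cascade is exactly why the prior algorithm of Chan--Gourqin--Sozio paid $O(k^2)$ rather than $O(k)$. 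The paper closes this gap not by a direct harmonic argument but by (i) introducing a \emph{leader mapping} $\ell$ that is allowed to lag behind the true eliminator mapping $\elim_\pi$, (ii) a careful charging argument (their Lemma~\ref{lem:main}) showing that every change to $\ell$ can be charged to at most a constant number of changes to $\elim_\pi$, and (iii) invoking Theorem~3 of Behnezhad et al.\ (FOCS 2019), a nontrivial structural result stating that a single vertex insertion or deletion changes only $O(\log n)$ eliminators in expectation. That last ingredient is the real workhorse and is not a one-line harmonic sum.

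A smaller point: for the high-probability bound the paper does not run $O(\log\delta^{-1})$ independent copies with Chernoff (the per-update reassignment counts are unbounded, so Chernoff does not directly apply). Instead it uses a Las-Vegas restart trick (their Proposition~\ref{prop:highProb}): run one copy, and whenever its total work exceeds four times its expected budget, restart with fresh randomness; a geometric/Markov analysis then gives the $O(\log\delta^{-1})$ overhead.
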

 
We demonstrate that our update time is tight up to logarithmic factors,
even in the insertion only setting. In fact, our lower bound extends
immediately to all the other clustering problems defined above, see
Table~\ref{tab:other-LB} for a further overview of our lower bounds beyond those given in Table~\ref{tab:k-center-main}.
\begin{theorem}
\label{thm:LB}If a dynamic algorithm for $k$-center, $k$-median,
$k$-means, $(k,z)$-clustering for any $z>0$, $k$-sum-of-radii,
or $k$-sum-of-diameter has an approximation ratio that is bounded
by any function in $k$ and $n$, then it has an update time of $\Omega(k)$.
This holds already if points can only be inserted but not deleted.
\end{theorem}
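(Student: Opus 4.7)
The plan is an adversarial argument showing that, even in the insertion-only setting on a sequence of $n = k+1$ points, any algorithm whose approximation ratio is bounded by any function $f(k,n)$ must make $\Omega(k^2)$ distance queries in total, yielding $\Omega(k)$ amortized queries per update. Since update time is at least the number of distance queries, this suffices.

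The hard family consists of $\{0,1\}$-valued pseudometrics on $k+1$ points parametrized by a secret pair $\sigma = \{i^*, j^*\} \in \binom{[k+1]}{2}$: set $d(i^*, j^*) = 0$ and $d(p,q) = 1$ for every other distinct pair. The triangle inequality is immediate, and the instance has exactly $k$ distinct locations, so the optimum cost is $0$ for each of the listed objectives (one center per location). Since the algorithm's approximation ratio is at most $f(k,n)$, its output must also achieve cost $0$. For each of the objectives, a cost of $0$ requires every input point to be at distance $0$ from some selected center, so in particular the one input point that the algorithm does not select as a center must coincide with a selected one. Hence the unique omitted point must lie in $\sigma$.

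To lower-bound the queries, I will invoke Yao's principle and analyze a deterministic algorithm under the uniform distribution on $\sigma$. Let $\tau$ denote the sequence of queries the algorithm would make if every query returned $1$, and let $p_0^*$ be the point it would omit in that trace; set $T = |\tau|$. If $\sigma \in \tau$, the algorithm can identify $\sigma$ during execution and succeed. Otherwise the real execution coincides with $\tau$, the algorithm outputs $p_0^*$, and it succeeds only if $p_0^* \in \sigma$. Since $p_0^*$ belongs to exactly $k$ pairs, the overall success probability is at most
\[
\frac{T}{\binom{k+1}{2}} + \left(1 - \frac{T}{\binom{k+1}{2}}\right)\cdot \frac{k}{\binom{k+1}{2} - T} \;\leq\; \frac{T+k}{\binom{k+1}{2}}.
\]
Requiring this to be at least $2/3$ forces $T = \Omega(k^2)$. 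Because the all-$1$s execution occurs with probability $1 - T/\binom{k+1}{2}$, which one may assume is $\Omega(1)$ (otherwise $T$ is already $\Omega(k^2)$ deterministically), the expected number of queries is $\Omega(k^2)$. Amortized over the $k+1$ insertions, this yields the claimed $\Omega(k)$ lower bound on the update time.

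The main obstacle is formalizing the adaptivity of the queries and checking that the argument extends to the sum-of-radii and sum-of-diameters variants, where the algorithm may also output explicit cluster radii or assignments. However, a cost-$0$ solution to either variant still forces every point in a cluster to coincide with that cluster's center, so the coincidence condition used in the analysis holds unchanged; the $k$-median, $k$-means, and $(k,z)$-clustering cases are analogous, since cost $0$ there also requires every point to be at distance $0$ from its chosen center.
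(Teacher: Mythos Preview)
Your proposal is correct and follows essentially the same route as the paper's Proposition~\ref{prop:lb}: plant a single ``special'' pair among $k+1$ otherwise-indistinguishable points, apply Yao's principle, and observe that a deterministic algorithm following the all-common-value trace can locate the hidden pair only if it is among the $T$ queried positions, forcing $T=\Omega(k^2)$ and hence $\Omega(k)$ amortized queries per insertion.

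The differences are cosmetic. The paper uses a $\{1,R\}$-valued metric (the hidden pair has distance $1$, all others $R$) and casts the task as a decision problem (distinguish $\OPT\le 1$ from $\OPT\ge R$), whereas you use a $\{0,1\}$-valued pseudometric and argue directly about which point the algorithm omits from its center set. The paper's choice avoids distance-zero pairs between distinct points, which technically conflicts with the model's assumption $r_{\min}>0$ (needed to define $\Delta$); your argument adapts immediately by replacing $0$ with any $\epsilon< 1/f(k,n)$, so that a bounded-ratio solution still forces the omitted point into $\sigma$. Your final paragraph about expected versus worst-case query count is unnecessary: once $T=\Omega(k^2)$, any input $\sigma\notin\tau$ (of which there is at least one whenever $T<\binom{k+1}{2}$) already witnesses $\Omega(k^2)$ total queries, and if $T=\binom{k+1}{2}$ the bound is immediate.
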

We remark that our lower bound is unconditional, i.e., it does not depend
on any complexity theoretic assumptions. In fact, we prove an even
stronger statement: any \textit{offline} algorithm must 
query the distances of $\Omega(nk)$ pairs of points in order to guarantee
any bounded approximation ratio with constant probability (see \cref{sec:LB}). 
The amortized per-update time
of $\Omega(k)$ for dynamic insertion-only algorithms then follows immediately. 

{
\def\arraystretch{1.2}
\begin{table}
\centering
\begin{tabular}{ccccc}
\toprule
\multicolumn{1}{c}{} & \multicolumn{2}{c}{Oblivious adversary} & \multicolumn{2}{c}{Adaptive adversary}\tabularnewline \cmidrule(r){2-3} \cmidrule(l){4-5}
\multicolumn{1}{c}{} & update time & approx. ratio & update time & approx. ratio\tabularnewline
\midrule
$k$-median & \textbf{\textcolor{black}{$\bm{\Omega(k)}$}} & \bf{any} & $\bm{f(k,n)}$ & $\bm{\Omega\left(\frac{\log(n)}{\log f(1,2n)}\right)}$\tabularnewline
 & $\tilde{O}(k^{2}/\eps^{O(1)})$ & $5.3+\eps$~\cite{HenzingerK20} &  & \tabularnewline
$k$-means & \textbf{\textcolor{black}{$\bm{\Omega(k)}$}} & \bf{any} & $\bm{f(k,n)}$ & $\bm{\Omega\left(\left(\frac{\log(n)}{\log f(1,2n)}\right)^{2}\right)}$\tabularnewline
 & $\tilde{O}(k^{2}/\eps^{O(1)})$ & $36+\eps$~\cite{HenzingerK20} &  & \tabularnewline
$(k,z)$-clustering & \textbf{\textcolor{black}{$\bm{\Omega(k)}$}} & \bf{any} & $\bm{f(k,n)}$ & \textbf{\textcolor{black}{$\bm{\Omega\left(\left(\frac{\log(n)}{z+\log f(1,2n)}\right)^{z}\right)}$}}\tabularnewline
$k$-sum-of-radii & \textbf{\textcolor{black}{$\bm{\Omega(k)}$}} & \bf{any} & $\bm{f(k,n)}$ & $\bm{\Omega\left(\frac{\log(n)}{\log f(1,2n)}\right)}^\star$\tabularnewline
 & \textcolor{black}{$\bm{k^{O(1)}\log\Delta}\Delta$} & $\bm{O(1)}$ &  & \tabularnewline
$k$-sum-of-diam. & \textbf{\textcolor{black}{$\bm{\Omega(k)}$}} & \bf{any} & $\bm{f(k,n)}$ & $\bm{\Omega\left(\frac{\log(n)}{\log f(1,2n)}\right)}^\star$\tabularnewline
 & \textcolor{black}{$\bm{k^{O(1)}\log \Delta}\Delta$} & $\bm{O(1)}$ &  & \tabularnewline
\bottomrule
\end{tabular}

\caption{\label{tab:other-LB}Our lower bounds for $k$-median, $k$-means,
$k$-sum-of-radii, and $k$-sum-of-diameter and our upper bounds for
the latter two problems, where $f(k,n)$ is an arbitrary function. For the ratios marked with $\star$, we assume that the algorithm also outputs an estimate of the cost of an optimal solution.}
\end{table}
\def\arraystretch{1}
}

\paragraph{Deterministic algorithms and adaptive adversaries.}
For arbitrary metrics, our algorithms access the metric via queries to a distance oracle returning $d(x,y)$ between any two points which were inserted (and possibly deleted) up to that point. In the oblivious model, both the underlying metric and point insertions and deletions are fixed in advance. However, in the adaptive adversary model, the adversary can adaptively choose the values of $d(x,y)$ based on past outputs of the algorithm. Thus, the values of $d(x,y)$ are fixed only as they are queried (subject to obeying the triangle inequality). We refer to this as the \textit{metric-adaptive} model (see discussion below).
 
Given our bounds against oblivious adversaries, it is natural to ask whether constant factor approximations for $k$-center are also achievable in $\tilde{O}(k)$, or even  $\tilde{O}(\mathrm{poly}(k))$, time against metric-adaptive adversaries.
Perhaps surprisingly, we show that any $k$-center algorithm with update time poly-logarithmic in $n + \Delta$ (and arbitrary dependency on $k$) must incur an essentially logarithmic approximation factor if it needs to report an estimation of the cost. Moreover, for a large range of $k$, such a runtime is not possible even if the algorithm is tasked only with returning an approximately optimal set of $k$-centers. 
\begin{theorem}[{restate=[name=]thmDetLB}]
\label{thm:det-1lb} For any $k  \geq 1$, any dynamic algorithm which returns a set of $k$-centers against a metric-adaptive adversary
with an amortized update time of $f(k,n)$, for
an arbitrary function $f$,  must have an approximation ratio of $\Omega\left(\min\{ \frac{\log(n)}{k \log f(k,2n)}, k\}\right)$ for the $k$-center problem. In addition, even for the case of $k=1$, we show that \emph{any} algorithm
with an update time of $f(k,n)$

\begin{itemize}
\item for $1$-median
has an approximation ratio of $\Omega\left(\frac{\log(n)}{\log f(1,2n)}\right)$, 
\item for $1$-means has an approximation ratio of $\Omega\left(\left(\frac{\log(n)}{\log f(1,2n)}\right)^{2}\right)$, 
\item for $(1,z)$-clustering has an approximation ratio of $\Omega\left(\left(\frac{\log(n)}{z+\log f(1,2n)}\right)^{z}\right)$,
\item for $1$-center, $1$-sum-of-radii or $1$-sum-of-diameters has an approximation ratio of
$\Omega\left(\frac{\log(n)}{\log f(1,2n)}\right)$ if the algorithm is also able to estimate the cost of the optimal clustering.
\end{itemize}
\end{theorem}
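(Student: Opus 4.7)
I would prove all bullets uniformly by an adaptive adversary built on a hierarchically separated tree (HST) template. Set $\phi := f(k,2n)$ (or $f(1,2n)$ for $k=1$) and $L := \Theta(\log n/\log \phi)$, adjusted by a factor of $k$ for the $k$-center bullet and by a function of $z$ for $(1,z)$-clustering. The adversary builds an HST on $N \le 2n$ points, with branching $b = \Theta(\phi)$ and depth $L$, where edge lengths shrink from root to leaves by a factor $\gamma > 1$ so that two leaves whose lowest common ancestor sits at depth $d$ lie at distance $\Theta(\gamma^{L-d})$. The overall idea is that the algorithm's $n\phi$ queries are too few to resolve the placement of points at every HST level.

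\textbf{Adversary behavior.} I would describe the adversary as inserting points one by one and maintaining a partial embedding into the HST leaves: when the algorithm asks $d(p,q)$, the adversary reveals the LCA depth in its tentative embedding, committing $p$ and $q$ only to the extent forced by the answer. The key invariant is that the revealed answers are preserved by a rich group of subtree-permutation isometries -- which is why the HST is chosen, since inter-subtree distances in an HST depend only on LCA depth. A counting argument then shows that after the $n\phi$ total queries, there is some depth $i^\ast \le L$ at which the adversary retains enough freedom to relocate an $\omega(1)$-sized subtree arbitrarily.

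\textbf{Translating distance gap to cost.} Once the algorithm commits its output, the adversary finalizes the embedding by placing a ``hidden optimal cluster'' inside an unresolved depth-$(i^\ast{+}1)$ subtree that contains none of the algorithm's centers (guaranteed to exist because $b^{i^\ast} \gg k$). The resulting distance from the algorithm's nearest center to the hidden cluster is $\Omega(\gamma^{L-i^\ast})$, while the planted optimum sits inside the cluster with cost $O(\gamma^{L-i^\ast-1})$, giving a per-level distance ratio $\Omega(\gamma)$. Compounded across the levels, this yields the $\Omega(L) = \Omega(\log n/\log \phi)$ distance gap that immediately gives the 1-median, 1-center-with-cost, 1-sum-of-radii, and 1-sum-of-diameter bounds. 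For $(1,z)$-clustering and 1-means, the $z$-th power in the cost function amplifies a distance gap $\Omega(L)$ into a cost gap $\Omega(L^z)$; re-balancing $\gamma$ and $b$ so the HST still has $n$ leaves after the amplification is where the $z$ in the denominator appears. For the $k$-center bullet, the adversary partitions the $L$ ``levels of ambiguity'' among $k$ centers, giving rate $\Omega(\log n/(k\log \phi))$, truncated at $k$ because any $k$ distinct input points trivially yield an $O(k)$-approximation.

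\textbf{Main obstacle.} The technical heart of the argument is maintaining consistency: at every step, after revealing a query's answer, the adversary must certify that at least one completion of the metric (satisfying the triangle inequality) still exists for every possible algorithm output. The HST structure handles this cleanly -- the set of consistent completions corresponds to a group of tree automorphisms that preserve the revealed answers, and I would show that each query constrains this group by at most a $\phi^{O(1)}$ factor, so that after $n\phi$ queries the group remains nontrivial at some depth $\le L$. A secondary delicate step is the $k$-center bullet's partition argument: one must argue that the algorithm's query budget can resolve only $\approx L/k$ levels per center on average, so pigeonhole then locates a hidden cluster at depth $\Omega(L/k)$ from every algorithm center.
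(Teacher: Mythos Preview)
Your HST framework is appealing, but the argument has a genuine gap at its core step. You compute that relocating the hidden cluster by one HST level changes the algorithm-to-OPT ratio by a factor $\gamma$, and then assert that this ``compounds across the levels'' to $\Omega(L)$. That assertion is unjustified: in an HST with geometrically shrinking edge lengths, the ratio between the algorithm's cost and OPT after a single relocation is $\Theta(\gamma)$, not $\Theta(L)$, and nothing in your sketch iterates the relocation or otherwise accumulates the factor. To land on $\Omega(L)$ you would need either $\gamma = \Theta(L)$ (blowing up the aspect ratio to $L^L$ and forcing you to redo the counting argument) or an entirely different comparison of costs than the one you wrote down. The same unproven jump recurs in the $(1,z)$ bullet, where you assert a cost gap of $\Omega(L^z)$ without computing either the algorithm's cost or OPT on any concrete final metric. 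Separately, the $k$-center step --- ``partition the $L$ levels of ambiguity among $k$ centers'' --- has no counterpart in the HST structure you set up; HST levels are global, not per-center, so it is unclear what is being pigeonholed.

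The paper's construction avoids weighted trees altogether. The adversary maintains an \emph{unweighted} graph $G$ whose edges record the queries it has answered with distance $1$; it deletes any point whose vertex reaches degree $\Theta(f)$, so the maximum degree stays $O(f)$. The adversary's remaining freedom at a clean time step is exactly the ability to add edges between low-degree (``open'') vertices, which can only shorten shortest-path distances; any such augmentation is shown to be a consistent metric. Because degrees are $O(f)$, BFS from any vertex $p^*$ has $L = \Theta(\log n / \log f)$ levels $V^{(0)},V^{(1)},\dots$, and the metric $M(p^*)$ obtained by connecting adjacent levels puts level $i$ at distance exactly $i$ from $p^*$ --- an additive scale, which is what delivers the $\Omega(L)$ ratio directly (and, against the uniform metric $M_{\mathrm{uni}}$, the diameter/cost-estimation bullets). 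For $k$-center the $1/k$ factor comes from a pigeonhole on these BFS levels: among $L$ levels, some run of $\Theta(L/k)$ consecutive levels contains none of the algorithm's $O(k)$ centers, and the adversary fixes the metric $M_{\ell_1,\ell_2}(p^*)$ that collapses everything outside the run, so that OPT $=1$ (one center per three levels inside the run) while the algorithm pays the run length. For $(1,z)$-clustering the paper explicitly sums the costs under $M_\ell(C)$ versus a center placed in the far set $V_+$, and the $z$ in the denominator emerges from balancing that sum --- a calculation your sketch does not perform.
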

In particular, Theorem \ref{thm:det-1lb} demonstrates that for any $k$ satisfying $\omega(1) \leq k \leq o(\frac{\log n}{\log \log (n+\Delta)})$, there is no constant-factor approximation algorithm for $k$-center running in $\polylog(n,\Delta)$ time which is correct against an adaptive adversary. 
In fact, we prove even more fine-grained trade-offs between the update
time of an algorithm and the possible approximation ratio (see \Cref{sec:LBadap} for details).

\paragraph{Separation between Adversarial Models.} An important consequence of Theorem \ref{thm:det-1lb} is a separation in the complexity of $k$-clustering tasks between two distinct types of adaptive adversaries. Specifically, Theorem \ref{thm:det-1lb} is a lower bound for adversaries which decide on the answers to distance queries on the fly, without having fixed the metric beforehand, subject to the constraint that the answers are always consistent with some metric. In other words, the metric itself, in addition to the points which are inserted and deleted, are adaptively chosen (i.e., they are metric-adaptive adversaries). This is as opposed to the setting where the metric is fixed in advance, and only the insertions and deletions of points from that space are adaptively chosen (call these \textit{point-adaptive} adversaries).

 To illustrate the difference, suppose the current active point set is $P$. When a point-adaptive adversary inserts a new point $q$, since the metric is fixed, it must irrevocably decide on the value of $d(p,q)$ for all $p \in P$. On the other hand, a metric-adaptive adversary can defer fixing these value until it is queried for them. For example, suppose $P=\{a,b\}$ and then a point $c$ is inserted, after which the algorithm queries the adversary for the value of $d(a,c)$, and then makes a (possibly randomized) decision $\cD$ based on $d(a,c)$ (e.g. $\cD$ could be whether to make $c$ a center). Then a metric-adaptive adversary can adaptively decide on the value of $d(b,c)$ based on the algorithm's decision $\cD$ (subject to not violating the triangle inequality), whereas a point-adaptive algorithm must have fixed $d(b,c)$ independent of $\cD$.
 
This distinction is nuanced, but has non-trivial consequences. 
 Namely, while Theorem \ref{thm:det-1lb} rules out fully dynamic $O(1)$-approximation algorithms with $\polylog(n,\Delta)$ update time for $k$-means and $k$-medians for metric-adaptive adversaries, in \cite{HenzingerK20} the authors design such algorithms against point-adaptive adversaries. Thus, Theorem \ref{thm:det-1lb} demonstrates that the algorithms of \cite{HenzingerK20} would not have been possible against metric-adaptive adversaries.
To the best of our knowledge, this is the first separation between such adversarial models for dynamic clustering.

Now while the \textit{randomized} algorithms of \cite{HenzingerK20} apply only to the weaker point-adaptive adversaries, note that deterministic algorithms are necessarily correct even against the stronger metric-adaptive adversaries. We show that such a deterministic algorithm in fact exists, with complexity matching the lower bound of \Cref{thm:det-1lb} for $k = O(\sqrt{\log n / \log \log n})$.
\begin{theorem}
\label{thm:mpt-det-upper} Fix any $B \geq  2$ and $\eps  \in (0,1)$. Then there is a deterministic dynamic algorithm
for $k$-center with an amortized update time of $O\left(\frac{kB \log n\log\Delta }{\eps}\right)$
and an approximation factor of $(4+\eps)\min\left\{\frac{\log(n/k)}{\log B},k\right\}$. 
 Furthermore, the worst-cast insertion time is $O\left(\frac{k B \log n\log\Delta}{\eps}\right)$, and the worst-case deletion time is $O\left(\frac{k^2 B \log n\log\Delta}{\eps}\right)$.
\end{theorem}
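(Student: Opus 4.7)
The plan is to combine three ingredients: (i) an exponential search over $O(\log\Delta/\eps)$ geometric guesses $r=(1+\eps)^i$ of the optimal radius, running one data structure $D_i$ per guess in parallel and reporting the solution from the smallest guess that opens at most $k$ centers; (ii) a greedy incremental insertion scheme, which attaches each new point to the nearest current center whenever one is within a prescribed covering radius; and (iii) a deterministic $B$-ary ``backup hierarchy'' of bounded depth attached to each center, providing replacements when a center is deleted. The approximation factor $(4+\eps)\min\{\log_B(n/k),k\}$ will decompose as a factor $(1+\eps)$ from the geometric search, a factor $2$ from the standard $k$-center covering-to-clustering conversion, and a factor $2\min\{\log_B(n/k),k\}$ coming either from the depth of the backup hierarchy or from a complementary shallow mechanism used when $B$ is small.

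For each guess $r$, $D_i$ would maintain a partition of the active points into at most $k$ clusters with designated centers and, inside every cluster, a $B$-ary tree $T_c$ of depth at most $d:=\lceil\log_B(n/k)\rceil$ rooted at $c$, whose nodes are the active points assigned to $c$. The key invariant is that every node at depth $j$ lies within distance $2jr$ of $c$, preserved inductively since inserting a new point $p$ into $T_c$ proceeds top-down, placing $p$ as a child of the shallowest node $v\in T_c$ with $d(v,p)\le 2r$; such a $v$ always exists because the child-slots at every level of the tree form a $2r$-net of the cluster. An insertion costs $O(k)$ for the greedy assignment plus $O(Bd)=O(B\log n)$ for the tree descent, and summing over the $O(\log\Delta/\eps)$ instances yields the claimed $O(kB\log n\log\Delta/\eps)$ insertion time.

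Deletions are the crux. Removing a non-center node triggers a local cascade that promotes one of its descendants to fill the vacancy at cost $O(Bd)$. Removing a center $c$ elevates one of its level-$1$ children $v$ to become the new center and then cascades through $T_c$; the re-rooted tree still obeys the per-edge invariant, so the cluster radius grows by at most $2r$. However, the promoted $v$ may violate the greedy assignment between other clusters, so I would re-scan the remaining $k-1$ centers against $v$ and re-compute their top-level attachments, producing the extra factor of $k$ in the worst-case deletion time $O(k^2 B\log n\log\Delta/\eps)$. For the $\min\{\cdot,k\}$ bound, whenever $d>k$ the algorithm would switch to a shallower variant that caps the effective tree depth at $k$ and triggers a full cluster rebuild after more than $k$ cascading promotions, so the cluster radius stays within $O(kr)$. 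The main technical obstacle will be controlling the depth of $T_c$ under adversarial deletion sequences that repeatedly target the root; the plan is to pair the data structure with an amortized rebuilding scheme that rebuilds a cluster whenever its tree deviates from a balanced $B$-ary shape by more than a constant factor, charging the $O(|T_c|B)$ rebuild cost to the $\Theta(|T_c|)$ updates responsible for the imbalance, which preserves the amortized bounds stated in the theorem.
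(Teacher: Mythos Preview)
Your outer framework (geometric search over $O(\log\Delta/\eps)$ radii, greedy covering) matches the paper, but the core data structure is quite different. The paper does \emph{not} hang a backup tree off each center. Instead it maintains a single global $B$-ary ``clustering tree'' $T$ of depth $\lceil\log_B(n/(Bk))\rceil$ whose leaves partition the input into groups of at most $Bk$ points; each node stores at most $Bk$ points and, via a simple blocking graph, selects at most $k$ of them as centers (pairwise distance $>\OPT'$) covering the node with radius $\OPT'$. Inner nodes store exactly the centers chosen by their $B$ children. The root's centers are the output, and the $\log_B(n/k)$ factor is just the depth times the per-level radius, summed by triangle inequality. The $\min\{\cdot,k\}$ part comes from a separate ``super-cluster'' argument: when $\OPT\le\OPT'/2$, every point reaches a root center through a chain of at most $k$ optimal clusters, each of diameter $\le 2\OPT'$. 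Amortization is clean because marking a point as a center in a node is irrevocable until that point is deleted, so each point pays the $O(Bk)$ marking cost at most once per level.

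Your per-cluster backup tree has a genuine gap at the deletion step. You assert that after deleting the center $c$ and promoting a level-$1$ child $v$, ``the re-rooted tree still obeys the per-edge invariant.'' It does not: the other level-$1$ children were each within $2r$ of $c$, hence only guaranteed within $4r$ of $v$, so ``depth-$j$ node within $2jr$ of root'' fails immediately. More seriously, after $s$ consecutive deletions of the current center the cluster radius is $\Theta(sr)$ by the triangle inequality, and this is a fact about point positions, not tree shape; an adaptive adversary that always deletes the current center drives $s$ up to the cluster size, and no tree rebalancing helps, because rebalancing does not move points. To bound the radius you would have to \emph{recluster} from scratch around a fresh center, but doing that without touching all points in the cluster is precisely the difficulty your proposal does not resolve. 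The same issue undermines the ``cap depth at $k$'' plan for the $\min\{\cdot,k\}$ bound: capping tree depth does nothing to cap the covering radius, which is what the approximation ratio measures.
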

In particular, by setting $B = \log n$, we obtain a fully dynamic and deterministic algorithm for $k$-centers with an approximation 
ratio of $O\left(\min\left\{\frac{\log(n/k)}{\log \log n},k\right\}\right)$, which, by \cref{thm:det-1lb}, is optimal for the regime when $k = O(\sqrt{\frac{\log n }{\log \log n}})$
among algorithms that are robust against metric-adaptive adversaries and whose
update time is polylogarithmic in $n$
and~$\Delta$. Moreover, that by setting $B = n^{\eps}$ for any $\eps  \in (0,1)$ we obtain a $O(1/\eps)$-approximation in time $\tilde{O}(k n^\eps )$. Thus, \cref{thm:mpt-det-upper} gives a deterministic $O(1)$-approximate algorithm for fully dynamic $k$-centers running in time $\tilde{O}(k n^\eps )$ for any constant $\eps > 0$. 

\paragraph{Improved Fully Dynamic $k$-center via Locally Sensitive Hashing.}

The lower bound of Theorem \ref{thm:LB} demonstrates that, even against an oblivious adversary, one cannot beat $\Omega(k)$ amortized update time for fully dynamic
$k$-center. However,
in \cite{schmidt2019fully,goranci2019fully} it was shown that for
the case of Euclidean space or metrics with bounded doubling dimension,
update times \textit{sublinear }in $k$ are possible. These results rely on
nearest neighbor data structures with running times that are sublinear
in $k$, which are designed specifically for the respective metric
spaces, along with specialized clustering algorithms to employ them.
Note that for the case of Euclidean space, the resulting approximation
factors were still logarithmic. 

We significantly generalize and strengthen the above results, by demonstrating that \textit{any} metric space admitting sublinear
time nearest neighbor search data structures also admits fully dynamic
$k$-center algorithms whose update time is sublinear in~$k$.
We do this via a black-box reduction, showing that we can obtain a fully dynamic algorithm for $k$-center for any metric space, given a \textit{locally sensitive hash function}
(LSH) for that space.

Informally, an LSH for a space $(\cX,d)$ is a hash function $h$ mapping $\cX$ to some number of hash buckets, such that closer points in the metric are \textit{more} likely to collide than far points. Thus, after hashing a subset $S \subset \cX$ into the hash table, given a query point $q \in \cX$, to find the closest points to $q$ in $S$, it (roughly) suffices to search only through the hash bucket $h(q)$. The ``quality'' of an LSH family $\cH$ is parameterized by four values $(r,cr,p_1,p_2)$, meaning that for any $x,y \in \cX$:
 \begin{itemize}
 	\item If $d(x,y) \leq r$, then $\prb{h \sim \cH}{h(x) = h(y)} \geq p_1$. 
 	\item If $d(x,y) > cr$, then $\prb{h \sim \cH}{h(x) = h(y)} \leq p_2$. 
 \end{itemize}
Given the above definition, we can now informally state our main result for LSH-spaces (see Section \ref{sec:LSH} for formal statements). 

\renewcommand{\arraystretch}{1.3}
\begin{table}[t]
	\centering
	\begin{tabular}{ccccc}
		\toprule		
		Metric space & \textbf{Our approx.} &\textbf{Our runtime} & Prior approx. &  Prior runtime \\ \midrule

		Arbitrary metric space& $\bm{2+\eps} $& $\bm{\tilde{O}(  k)} $& $2+\eps$ &  $ \tilde{O}( k^2) $ \cite{ChaFul18}\\ %
		$(\R^d,\ell_p)$, $p \in [1,2]$ & $\bm{c (4+\eps) }$ &$\bm{\tilde{O}(  n^{1/c})} $& $O(c \cdot \log n)$ &  $\tilde{O}( n^{1/c}) $ \cite{schmidt2019fully}\\ %
		Eucledian space $(\R^d,\ell_2)$  & $\bm{c (\sqrt{8}+\eps)  }$ & $\bm{\tilde{O}(  n^{1/c^2 + o(1)}) }$& $O(c \cdot \log n)$ & $ \tilde{O}( n^{1/c}) $ \cite{schmidt2019fully}\\ %
		Hamming metric &  $\bm{c (4+\eps)} $ &\bm{$\tilde{O}(  n^{1/c})} $& -- &  -- \\ %
		Jaccard metric &  $\bm{c (4+\eps)} $ &\bm{$\tilde{O}(  n^{1/c})} $&  -- &  -- \\ %
		\begin{tabular}{c}
			EMD over $[D]^d$   \\
			$d= O(1)$
		\end{tabular} &  $\bm{O(c \cdot \log D)}  $ &$\bm{\tilde{O}(  n^{1/c})} $& -- &  -- \\ %
		\begin{tabular}{c}
			EMD over $[D]^d$   \\
			sparsity $s$
		\end{tabular} &  $\bm{O(c \cdot \log s n \log d )} $ &$\bm{\tilde{O}(  n^{1/c})} $& -- &  -- \\ %
		\bottomrule
	\end{tabular}
\caption{Our upper bounds for Fully Dynamic $k$-Centers against oblivious adversaries in different metric spaces. Results for specific metric spaces are obtained by applying known LSH families with Theorem \ref{thm:lshinformal}. } \label{table:LSHResults}
\end{table}

\begin{theorem}[informal]\label{thm:lshinformal}
 Let $(\cX,d)$ be a metric space that admits an LSH $\cH$ with
parameters $(r,cr,p_{1},p_{2})$ for every $r \geq 0$, and a running time of $\Run(\cH)$.
Then there is a fully dynamic algorithm for $k$-center on $(\cX,d)$
with an approximation ratio of $c(2+\eps)$ and an update time of
$O\left(\frac{\log\Delta}{\eps p_{1}}n^{2\rho}\cdot\Run(\cH)\right)$, where $\rho = \ln p_1^{-1}/ \ln p_2^{-1}$.
\end{theorem}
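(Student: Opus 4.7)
The plan is to use the oblivious-adversary algorithm of \Cref{thm:main} as a black box and replace its exact nearest-center computations with a standard Locality Sensitive Hashing (LSH) data structure. First, I would inspect where the linear-in-$k$ cost in \Cref{thm:main} originates: the dominant primitive is a decision query of the form ``given the new point $q$ and a target scale $r$, does the current set of maintained centers contain a point within distance $r$ of $q$?''. The base algorithm answers each such query in $O(k)$ time by iterating over the (at most $k$) active centers. My reduction substitutes each such call with an $(r,cr)$-approximate near-neighbor query that either returns a point within distance $cr$ or certifies (with high probability) that no point lies within distance $r$.

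Next, I would build the Indyk--Motwani LSH structure at scale $r$ from $\cH$: use $L = \lceil p_1^{-k_h}\rceil = O(n^{\rho})$ independent hash tables, each indexed by a concatenation of $k_h = \lceil \log_{1/p_2} n \rceil$ hash functions drawn independently from $\cH$. The $n^{-\rho}$ collision probability at distance $>cr$ keeps each queried bucket light in expectation, while the $L$ repetitions ensure that a true near-neighbor at distance $\le r$ collides in at least one table with probability $\ge 1 - (1-p_1^{k_h})^L = 1 - e^{-\Omega(1/p_1)}$, which I boost to high probability by an extra $O(\log n / p_1)$ factor and union-bound over the sequence of updates. Because the optimum radius is unknown and may shift with updates, I maintain $O(\log \Delta / \eps)$ parallel copies of the structure, one for each guess $r \in \{(1+\eps)^i\}$ in the range $[1,\Delta]$, and run the scale-parallel template of \Cref{thm:main} on top.

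For the approximation ratio, I would argue inductively: every place where the base algorithm of \Cref{thm:main} concludes ``the nearest center is within distance $r$'' is now relaxed to ``within distance $cr$''. Propagating this through the correctness analysis inflates the final bound from $(2+\eps)$ to $c(2+\eps)$. For the runtime, each insertion or deletion hashes the point into $L = O(n^{\rho}/p_1)$ tables per scale at cost $O(k_h \cdot \Run(\cH))$, and the amortized number of near-neighbor queries charged to an update is $O(n^{\rho})$ (arising from the potential-function analysis of \Cref{thm:main}, where the $O(k)$-term is now $O(n^{\rho})$ per scale). Multiplying update and query cost across $O(\log \Delta/\eps)$ scales yields the claimed $O\!\left(\frac{\log \Delta}{\eps p_1} n^{2\rho}\cdot \Run(\cH)\right)$ amortized update time.

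The main obstacle I expect is twofold. First, LSH queries are one-sided-error: a missed near-neighbor can cause the algorithm to erroneously open a spurious cluster, which in a fully dynamic setting could cascade across future updates. Handling this requires both boosting the per-query success probability by the $1/p_1$ factor and verifying that the amortized potential argument of \Cref{thm:main} is robust to rare failure events (e.g., by triggering a local rebuild). Second, the value of $n$ changes over time, so I would need a standard doubling trick: rebuild each LSH structure whenever the active-set size crosses a power of two, amortizing the rebuild cost into the stated bound. These two issues aside, the reduction itself is oblivious to the metric, which is what gives the uniform statement across the wide range of spaces summarized in \Cref{table:LSHResults}.
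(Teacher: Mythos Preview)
Your high-level plan matches the paper's, but two of the issues you flag as ``obstacles'' are in fact genuine gaps that your proposal does not close. First, the primitive the base algorithm needs is not the decision query ``is some center within distance $r$ of $q$?''. When $\ins(v,\pi(v))$ is called, the algorithm must find the neighbor of $v$ in $\ALG$ with \emph{smallest rank} $\pi$, and when $v$ itself has the smallest rank it must enumerate \emph{all} of $N(v)\cap\ALG$ so those vertices can be demoted. The paper therefore stores each hash bucket as a search tree keyed by $\pi$ and supports a \texttt{Query-Top} and a \texttt{Query-All} operation; a yes/no oracle does not suffice, and the $c(2+\eps)$ bound does not follow from merely relaxing a decision threshold.

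Second, and more seriously, the amortized analysis of \Cref{thm:main} rests on the eliminator bound $\ex{|\cC_\pi(G,G')|}=O(\log n)$, which requires the graph $G$ to be fixed \emph{independently of $\pi$}. If you treat the LSH as a black-box oracle that may fail, the edges the algorithm ``sees'' depend on which queries are issued, which depend on $\pi$ and on prior oracle answers, so the graph and $\pi$ are entangled and the eliminator argument no longer applies. Your proposed remedy of ``triggering a local rebuild'' is unworkable because LSH false negatives are undetectable. The paper avoids the entire issue by \emph{defining} the approximate threshold graph $G_{r,cr}$ as ``$(x,y)$ is an edge iff they collide in some table,'' which is a function of the hash seeds alone and is fixed before $\pi$ is drawn; a single union bound over all $\binom{n}{2}$ pairs (not over queries) then makes $G_{r,cr}$ contain $G_r$ with high probability. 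This is also why the paper uses $t\approx 2\log_{1/p_2}n$ and $s\approx n^{2\rho}/p_1$ tables (so the expected number of far-pair collisions is $O(1)$ globally), which is where the $n^{2\rho}$ in the runtime actually comes from---not from your unjustified claim of ``$O(n^{\rho})$ near-neighbor queries per update.''
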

Using known LSH hash functions from the literature, Theorem \ref{thm:lshinformal} immediately yields improved state of the art algorithms for Euclidean spaces,
the Hamming metric, the Jaccard Metric, and the Earth Mover Distance
(EMD). In particular, our bounds significantly improve the prior best known results of \cite{schmidt2019fully} for Euclidean space, by at least a logarithmic factor in the approximation \aw{ratio}. 
See Table \ref{table:LSHResults} for an summary of these results, and see Section \ref{sec:LSHcorollaries} for the formal theorem statements for each metric space.

\paragraph{$k$-sum-of-radii and $k$-sum-of-diameter. }

Finally, we study the $k$-sum-of-radii and $k$-sum-of-diameter problems, for which there were previously no fully dynamic $O(1)$-approximation algorithms known with non-trivial update time. We design the first such algorithms, which hold against oblivious adversaries. Note that, as a consequence of Theorem \ref{thm:det-1lb}, such a constant-factor approximation is only possible against an oblivious adversary.

\begin{theorem}[{restate=[name=]thmPdMain}]
\label{thm:pd-main}
There are randomized dynamic algorithms for the
$k$-sum-of-radii and the $k$-sum-of-diameters problems with update
time $k^{O(1/\epsilon)}\log\Delta$ and with approximation ratios
of $13.008+\epsilon$ and $26.016+\epsilon$, respectively, against
an oblivious adversary. 
\end{theorem}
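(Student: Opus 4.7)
The plan is to reduce dynamic $k$-sum-of-radii to maintaining dynamic $k$-center structures at multiple geometric scales, combined with an enumeration over short ``radius profiles,'' and to derive the sum-of-diameters bound by the standard factor-$2$ relation between diameter and radius.

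First, I would establish a structural lemma: there exists a near-optimal sum-of-radii solution (losing at most a $(1+O(\epsilon))$ factor in cost) in which every cluster radius is a power of $(1+\epsilon)$ and in which the radii take only $O(1/\epsilon)$ distinct values. The rough argument is to sort the $k$ optimal radii $r_1 \geq r_2 \geq \ldots \geq r_k$, partition them into geometric buckets, and observe that all but the top $O(1/\epsilon)$ buckets contribute together only an $\epsilon$-fraction of $\OPT$; these ``small-bucket'' clusters can then be absorbed into (slight enlargements of) the larger clusters via an exchange argument, at the cost of inflating radii by a controllable constant factor. It is precisely this exchange step together with the per-scale $(2+\epsilon)$-factor from \cref{thm:main} that should multiply out to the claimed $13.008+\epsilon$ constant.

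Next, for each scale $r_j = (1+\epsilon)^j$ with $r_j \in [1,\Delta]$ (at most $O(\epsilon^{-1}\log\Delta)$ scales in total), I would run, using the dynamic algorithm of \cref{thm:main}, a $(2+\epsilon)$-approximate $k$-center clustering. This yields, at each scale, a small candidate set $\mathcal{C}_j$ of $O(k)$ centers with the property that every active point lies within $(2+\epsilon)r_j^\ast$ of some center in $\mathcal{C}_j$, where $r_j^\ast$ is the optimal $k$-center radius at that scale. The total amortized update cost over all scales is $O(k \cdot \polylog(n,\Delta) \cdot \epsilon^{-1})$, which is dominated by the enumeration bound. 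After every update (or lazily, amortized against structural changes), I would enumerate over all $k^{O(1/\epsilon)}$ radius profiles --- tuples $((r_{j_1}, k_1),\ldots,(r_{j_T},k_T))$ with $T = O(1/\epsilon)$ and $\sum_i k_i = k$ --- and for each profile check whether there exist $k_i$ centers chosen from $\mathcal{C}_{j_i}$ (for each $i$) whose union of scale-$r_{j_i}$ balls covers every active point, via a greedy covering routine in $\poly(k)$ time. I would output the feasible profile of minimum cost $\sum_i k_i r_{j_i}$. For $k$-sum-of-diameters, the same machinery is reused with cost $2\sum_i k_i r_{j_i}$, yielding ratio $26.016+\epsilon$.

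The main obstacle will be the structural lemma with an $O(1/\epsilon)$ bound on the number of distinct radius scales: a naive bucketing by powers of $(1+\epsilon)$ only yields $O(\epsilon^{-1}\log k)$ scales and would blow the enumeration up to $k^{O(\log k/\epsilon)}$. Driving this down to $k^{O(1/\epsilon)}$ requires the exchange argument above, and quantifying the resulting radius expansion is precisely what fixes the approximation constant at $13.008+\epsilon$ rather than a generic $O(1)$. A secondary challenge is ensuring that the per-profile covering feasibility check really runs in $\poly(k)$ time using only the precomputed $\mathcal{C}_j$'s, so that the overall amortized update time stays within $k^{O(1/\epsilon)}\log\Delta$ rather than picking up extra factors of $n$ or $\log\Delta$ from repeated scanning of the point set.
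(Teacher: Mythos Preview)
Your proposal diverges substantially from the paper's argument and, as written, has real gaps.

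\textbf{The structural lemma is the crux, and the sketch does not establish it.} Rounding radii to powers of $(1+\epsilon)$ and truncating below $\epsilon\,\OPT/k$ leaves the radii in the range $[\epsilon\,\OPT/k,\OPT]$, which is $\Theta(\epsilon^{-1}\log(k/\epsilon))$ geometric scales, not $O(1/\epsilon)$. Your ``exchange argument'' to absorb small clusters into larger ones does not obviously reduce the number of distinct scales further: once the small radii are already at most an $\epsilon$-fraction of $\OPT$ in aggregate, there is nothing left to trade against, yet they may still occupy $\Theta(\log k)$ distinct buckets. With $T=\Theta(\epsilon^{-1}\log k)$ scales, your profile enumeration is $k^{\Theta(\epsilon^{-1}\log k)}$, not $k^{O(1/\epsilon)}$, and the additional factor $(\epsilon^{-1}\log\Delta)^{T}$ from choosing which scales also exceeds the single $\log\Delta$ you are allowed.

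\textbf{The per-profile feasibility test is not $\poly(k)$.} Deciding whether some choice of $k_i$ centers from each $\mathcal{C}_{j_i}$ covers \emph{all} active points is a set-cover-type question over $n$ elements; greedy neither solves it exactly nor avoids scanning the $n$ points. Nothing in the dynamic $k$-center data structure of \cref{thm:main} gives you a $\poly(k)$-time coverage oracle for an arbitrary union of balls at mixed scales.

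\textbf{The constant $13.008$ is a tell.} In the paper this number is exactly $6+2\cdot 3.504$, where $3.504+\epsilon$ is the approximation ratio of the offline Charikar--Panigrahy algorithm and the factor $6$ comes from a primal--dual argument. Nothing in your scale-enumeration picture produces that constant; claiming your exchange step ``fixes the approximation constant at $13.008+\epsilon$'' is not credible without a calculation.

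\textbf{What the paper actually does.} For each guess $\OPT'$ (a power of $1+\epsilon$ in $[1,\Delta]$, hence the $\log\Delta$ factor) it maintains a \emph{bi-criteria} solution with $O(k/\epsilon)$ centers and sum of radii at most $(6+\epsilon)\OPT'$, via a dynamic primal--dual routine on the LP relaxation: randomly pick an uncovered point, raise its dual until a constraint becomes half-tight, open that center at twice the tight radius, repeat; a pruning step ensures the opened balls are disjoint and hence paid for by the dual. The update-time analysis mirrors that of \cite{ChaFul18} (random choice of center makes restarts cheap against an oblivious adversary). After each update, the $O(k/\epsilon)$ bi-criteria centers are fed as a fresh static instance to the $n^{O(1/\epsilon)}$-time $(3.504+\epsilon)$-approximation of \cite{CharikarP04}; since this instance has size $O(k/\epsilon)$, the call costs $k^{O(1/\epsilon)}$. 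Inflating each returned ball by the largest bi-criteria radius it contains yields a feasible $k$-cluster solution of cost at most $(6+2\cdot 3.504+\epsilon)\OPT'$. The sum-of-diameters bound then follows from the factor-$2$ relation. The $k^{O(1/\epsilon)}$ in the theorem is thus the running time of the black-box offline algorithm on a $\poly(k/\epsilon)$-size instance, not the size of any enumeration.
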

Thus, we complete the picture that all clustering problems defined above admit dynamic $O(1)$-approximation
algorithms against an oblivious adversary, but only (poly-)logarithmic
approximation ratios against an adaptive adversary.

\subsection{Technical Overview}\label{sec:tech}
We now describe the main technical steps employed in the primary results of the paper. 
For the remainder of the section, we fix a metric space $(\cX,d)$. Let $r_{\min}$ and $r_{\max}$ be values such that
$r_{\min} \leq d(x,y) \leq r_{\max}$ for  any $x,y \in \cX$. We set $\Delta = r_{\max}/r_{\min}$ as an upper bound on the aspect ratio of $(\cX,d)$. For any fixed point in time, we denote the current input points of a dynamic clustering algorithm by $P$, and write $n$ to denote the maximum size of $P$ during the dynamic stream.

\subsubsection{Algorithm for General Metric Spaces} Our starting point is the well-known reduction of Hochbaum and Shmoys \cite{hochbaum1986unified} from approximating $k$-center to computing a maximal independent set (MIS) in a collection of \textit{threshold} graphs. Formally, given a real $r>0$, the $r$-threshold graph of a point set $P$ is the graph $G_r = (V,E_r)$ with vertex set $V = P$, and where  $(x,y) \in E_r$ is an edge if and only if $d(x,y) \leq r$. One computes an MIS $\cI_r$ in the graph $G_r$ for each $r = (1+\eps)^i r_{\min}$ with $i=0,1,\dots,\lceil \log_{1+\eps} \Delta \rceil$. If $|\cI_r| \leq k$, then $\cI_r$ is a $k$-center solution of cost at most $r$. If $|\cI_r| > k+1$, then there are $k+1$ points whose pair-wise distance is at least $r$. Therefore, by the triangle inequality, the optimal cost is at least $r/2$. These facts together yield a $(2+\eps)$-approximation.

By the above, it suffices to maintain an MIS in $O(\eps^{-1} \log \Delta)$ threshold graphs. Now the problem of maintaining an MIS in a fully dynamic sequence of \textit{edge} insertions and deletions to a graph is very well studied \cite{assadi2019fully,gupta2018simple, onak2018fully,du2018improved, censor2016optimal, chechik2019fully, behnezhad2019fully}. Notably, this line of work has culminated with the algorithms of \cite{chechik2019fully,behnezhad2019fully}, which maintain an MIS in expected $\polylog n$ update time per edge insertion or deletion. Unfortunately, point insertions and deletions from a metric space correspond to \textit{vertex} insertions and deletions in a threshold graph. Since a single vertex update can change up to $O(n)$ edges in the graph at once, one cannot simply apply the prior algorithms for fully dynamic edge updates. Moreover, notice that in this vertex-update model, we are only given access to the graph via queries to the adjacency matrix. Thus, even finding a single neighbor of $v$ can be expensive.

On the other hand, observe that in the above reduction to MIS, one does not always need to compute the entire MIS; for a given threshold graph $G_r$, the algorithm can stop as soon as it obtains an independent set of size at least $k+1$. This motivates the following problem, which is to return either
an MIS of size at most $k$, or an independent set of size at least $k+1$. We refer to this as the $k$-Bounded MIS problem. Notice that given an MIS $\cI$ of size at most $k$ in a graph $G$, and given a new vertex $v$, if $v$ is not adjacent to any $u \in \cI$, then $\cI \cup \{v\}$ is an MIS, otherwise $\cI$ is still maximal. Thus, while an insertion of a vertex $v$ can add $\Omega(n)$ edges to $G$, for the $k$-Bounded MIS problem, one only needs to check the $O(k)$ potential edges between $v$ and $\cI$ to determine if $\cI$ is still maximal. Thus, our goal will be to design a fully dynamic algorithm for $k$-Bounded MIS with $\tilde{O}(k)$ amortized update time in the vertex-update model.

\paragraph{The Algorithm for $k$-Bounded MIS.}
To accomplish the above goal, we will adapt several of the technical tools employed by the algorithms for fully dynamic MIS in the edge-update model. Specifically, one of the main insights of this line of work is to maintain the \textit{Lexicographically First Maximal Independent Set} (LFMIS) with respect to a random permutation $\pi: V \to [0,1]$ of the vertices.\footnote{LFMIS with respects to random orderings were considered in \cite{censor2016optimal,assadi2019fully,chechik2019fully,behnezhad2019fully}.} The LFMIS is a natural object obtained by greedily adding the vertex with smallest $\pi(v)$ to the MIS, removing it and all its neighbors, and continuing iteratively until no vertices remain. Maintaining an LFMIS under a random ranking has several advantages from the perspective of dynamic algorithms. Firstly, it is \textit{history-independent}, namely, once $\pi$ is fixed, the current LFMIS depends only on the current graph, and not the order of insertions and deletions which led to that graph. Secondly, given a new vertex $v$, the probability that adding $v$ to the graph causes a large number of changes to be made to the LFMIS is small, since $\pi(v)$ must have been similarly small for this to occur.

Given the above advantages of an LFMIS, we will attempt to maintain the set $\LFMIS_{k+1}$ consisting of the first $\min\{k+1,|\LFMIS|\}$ vertices in the overall LFMIS with respect to a random ranking $\pi$; we refer to $\LFMIS_{k+1}$ as the top-$k$ LFMIS (see Definition \ref{def:topkLFMIS}). Notice that maintaining this set is sufficient to solve the $k$-Bounded MIS problem. The challenge in maintaining the set $\LFMIS_{k+1}$ will be to handle the ``excess'' vertices which are contained in the $\LFMIS$ but are not in $\LFMIS_{k+1}$, so that their membership in $\LFMIS_{k+1}$ can later be quickly determined when vertices with smaller rank in $\LFMIS_{k+1}$ are removed.
To handle these excess vertices, we make judicious use of a priority queue $\cQ$, with vertex priorities given by the ranking $\pi$.

Now the key difficulty in dynamically maintaining an MIS is that when a vertex $v$ in an MIS is deleted, potentially all of the neighbors of $v$ may need to be added to the MIS, resulting in a large update time. Firstly, in order to keep track of which vertices could possibly enter the LFMIS when a vertex is removed from it, we maintain a mapping $\ell:V \to \LFMIS$, such that for each $u \notin \LFMIS$, we have $\ell(u) \in \LFMIS$ and $(u,\ell(u))$ is an edge. The ``leader'' $\ell(u)$ of $u$ serves as a certificate that $u$ cannot be added to the MIS. When a vertex $v \in \LFMIS$ is removed from the LFMIS, we only need to search through the set $\cF_v = \{u \in V \;| \; \ell(u) = v\}$ to see which vertices should be added to the LFMIS. Note that this can occur when $v$ is deleted, or when a neighbor of $v$ with smaller rank is added to the LFMIS. Consequentially, the update time of the algorithm is a function of the number points $u$ whose leader $\ell(u)$ changes on that step. For each such $u$, we can check in $O(k)$ time if it should be added to $\LFMIS_{k+1}$ by querying the edges between $u$ and the vertices in $\LFMIS_{k+1}$. By a careful amortized analysis, we can prove that the total runtime of this algorithm is indeed at most an $O(k)$ factor larger than the total number of leader changes. This leaves the primary challenge of designing and maintaining a leader mapping which changes infrequently.

A natural choice for such a leader function is to set $\ell(u)$ to be the \textit{eliminator} of $v$ in the LFMIS. Here, for any vertex $u$ not in the LFMIS, the eliminator $\elim_\pi(u)$ of $u$ is defined to be its neighbor with lowest rank  that belongs to the LFMIS. 
The eliminators have the desirable property that they are also history-independent, and therefore the number of changes to the eliminators on a given update depends only on the current graph and the update being made. An important key result of \cite{behnezhad2019fully} is that the expected number of changes to the eliminators of the graph, even after the insertion or removal of an entire vertex, is at most $O(\log n)$. Therefore, if we could maintain the mapping $\ell(v) = \elim_{\pi}(v)$ by keeping track of the eliminators, our task would be complete.

Unfortunately, keeping track of the eliminators will not be possible in the vertex-update model, since we can only query a small fraction of the adjacency matrix after each update. In particular, when a vertex $v$ is inserted, it may change the eliminators of many of its neighbors, but we cannot afford to query all $\Omega(n)$ potential neighbors of $v$ to check which eliminators have changed. Instead, our solution is to maintain a leader mapping $\ell(v)$ which is an ``out-of-date'' version of the eliminator mapping. Each time we check if a vertex $v$ can be added to $\LFMIS_{k+1}$, by searching through its neighbors in $\LFMIS_{k+1}$, we ensure that either $v$ is added to $\LFMIS_{k+1}$ or its leader $\ell(v)$ is updated to the current eliminator of $v$, thereby aligning $\ell(v)$ with $\elim_\pi(v)$. However, thereafter, 
the values of $\ell(v)$ and $\elim_{\pi}(v)$ can become misaligned in several circumstances. In particular, the vertex $v$ may be moved into the queue $\cQ$ due to its leader $\ell(v)$ either leaving the LFMIS, or being pushed out of the top $k+1$ vertices in the LFMIS. In the second case, we show that $v$ can follow its leader to $\cQ$ without changing $\ell(v)$, however, in the first case $\ell(v)$ is necessarily modified. On the other hand, as noted, the eliminator of $v$ can also later change without the algorithm having to change $\ell(v)$.
Our analysis proceeds by a careful accounting, in which we demonstrate that each change in an eliminator can result in at most a constant number of changes to the leaders $\ell$, from which an amortized bound of $O(\log n)$ leader changes follows. %

\paragraph{Comparison to the prior $k$-center algorithm of \cite{ChaFul18}.}
The prior fully dynamic $k$-center algorithm of Chan, Gourqin, and Sozio \cite{ChaFul18}, which obtained an amortized $O(\eps^{-1}\log \Delta \cdot k^2)$ update time, also partially employed the idea of maintaining an LFMIS (although the connection to MIS under lexicographical orderings was not made explicit in that work). However, instead of consistently maintaining the LFMIS with respect to a random ranking $\pi$, they begin by maintaining an LFMIS with respect to the ordering $\pi'$ in which the points were originally inserted into the stream. Since this ordering is adversarial, deletions in the stream can initially be very expensive to handle. To prevent bad deletions from repeatedly occurring, whenever a deletion to a center $c$ occurs, the algorithm of \cite{ChaFul18} randomly reorders all points which are contained in clusters that come after $c$ in the current ordering being used. %
In this way, the algorithm of \cite{ChaFul18} gradually converts the adversarial ordering $\pi'$ into a random ordering $\pi$. However, by reordering \textit{all} points which occurred after a deleted center $c$, instead of just the set of points which were led by that center (via a mapping $\ell$), the amortized update time of the algorithm becomes $O(k^2)$.\footnote{Consider the stream which inserts $k$ clusters of equal size $n/k$, and then begins randomly deleting half of each cluster in reverse order. By the time a constant fraction of all the points are deleted, for each deletion the probability a leader is deleted is $\Omega(k/n)$, but such a deletion causes $O(nk)$ work to be done by the algorithm.} 
In contrast, one of our key insights is to update the entire clustering to immediately reflect a random LFMIS ordering after each update. 

\subsubsection{Algorithm for LSH Spaces}
The extension of our algorithm to LSH spaces is based on the following observation: each time we attempt to add a vertex $v$ to $\LFMIS_{k+1}$, we can determine the fate of $v$ solely by finding the vertex $u \in \LFMIS_{k+1}$ in the neighborhood of $v$ of minimal rank (i.e., the eliminator of $v$, if it is contained in $\LFMIS_{k+1}$). If $\pi(u) < \pi(v)$, we simply set $\ell(v) = u$ and proceed. Otherwise, we must find all other neighbors $w$ of $v$ in $\LFMIS_{k+1}$, remove them from the LFMIS, and set $\ell(w) = u$. Finding the vertex $u$ can therefore be cast as an $r$-\textit{near neighbor search} problem: here, one wants to return any $u \in \LFMIS_{k+1}$ which is at distance at most $r$ from $u$, with the caveat that we need to return such vertices in order based on their ranking. 
Since, whenever $u$ enters the LFMIS, each point $w$ that we search through which leaves $\LFMIS_{k+1}$ had its leader change, if we can find each consecutive neighbor of $u$ in $\LFMIS_{k+1}$ in time $\alpha$, we could hope to bound the total runtime of the algorithm by an $O(\alpha)$ factor more than the total number of leader changes, which we know to be small by analysis of the general metric space algorithm. 

To achieve values of $\alpha$ which are sublinear in $k$, we must necessarily settle for an \textit{approximate near neighbor search} (ANN) algorithm. A randomized, approximate $(r,cr)$-nearest neighbor data structure will return any point in $\LFMIS_{k+1}$ which is at distance at most $cr$, assuming there is at least one point at distance at most $r$ in $\LFMIS_{k+1}$. In other words, such an algorithm can be used to find all edges in $G_r$, with the addition of any arbitrary subset of edges in $G_{cr}$. By relaxing the notion of a threshold graph to allow for such a $c$-approximation, one can hope to obtain a $c(2+\eps)$-approximation to $k$-center via solving the $k$-Bounded MIS problem on each relaxed threshold graph. 

The key issue above is that, when using an ANN data structure, the underlying relaxed threshold graph is no longer a deterministic function of the point set $P$, and is instead ``revealed'' as queries are made to the ANN data structure. We handle this issue by demonstrating that, for the class of ANN algorithms based on locally sensitive hash functions, one can define a graph $G$ which is only a function of the randomness in the ANN data structure, and not the ordering $\pi$. The edges of this graph are defined in a natural way --- two points are adjacent if they collide in at least one of the hash buckets. By an appropriate setting of parameters, the number of collisions between points at distance larger than $cr$ can be made small. By simply ignoring such erroneous edges as they are queried, the runtime increases by a factor of the number of such collisions. %

\subsubsection{Lower bounds against an oblivious adversary}
\awr{commented out description of $\Omega(k^2)$ bound which is no longer mentioned above}
\aw{For proving the lower bound of $\Omega(k)$ we use the following hard distribution:}
in one case we randomly plant $k$ clusters each of size roughly $n/k$, where points within a cluster are close and points in separate clusters are far. In the second case, we do the same, and subsequently choose a point $i \sim [n]$ randomly and move it very far from all points (including its own cluster). Adaptive algorithms can gradually winnow the set of possible locations for $i$ by discovering connected components in the clusters, and eliminating the points in those components. Our proof follows by demonstrating that a large fraction of the input distribution results in any randomized algorithm making a sequence of distance queries which eliminates few data points, and therefore gives only a small advantage in discovering the planted point $i$.

\subsubsection{Lower bounds against an adaptive adversary}

\aw{We sketch our main ideas behind the lower bounds against an adaptive adversary.
For ease of presentation, }we discuss here a simplified setting in which the algorithm \begin{inenum}
\item can query only the distances between points that are currently
in $P$ (i.e., that have been introduced already but not yet removed)
and
\item has a worst-case update time of $f(k,n)$ (rather than amortized
update time) for some function $f$\end{inenum}. Note that these assumptions are not needed for the full proof presented in \Cref{sec:LBadap}. \aw{Our goal is to prove a lower bound on the 
approximation ratio of such an algorithm.}

The adversary starts by adding points into $P$ and maintains an auxiliary
graph $G=(V,E)$ with one vertex $v_{p}$ for each point $p$. Whenever
the algorithm queries the distance between two points $p,p'\in P$,
the adversary reports that they have a distance of 1 and adds an edge
$\{v_{p},v_{p'}\}$ of length 1 to $G$. Intuitively, the adversary
uses $G$ to keep track of the previously reported distances.
Whenever there is a vertex $v_{p}$ with a degree of at least $100f(k,n)$,
in the next operation the adversary deletes the corresponding point
$p$. There could be several such vertices, and then the adversary
deletes them one after the other. Thanks to assumption (i), once a point $p$ is deleted, the
degree of $v_{p}$ cannot increase further. Hence, the degrees of
the vertices in $G$ cannot grow arbitrarily. More precisely, one
can show that the degree of each vertex can grow to at most $O(\log n\cdot f(k,n))$.
 In particular, at least half of the
vertices in $G$ are at distance at least $\Omega(\log_{O(\log n\cdot f(k,n))}n)=\Omega\left(\log n\;/\;[\,\log\log n\cdot\log(f(k,n))\,]\right)$
to $v_{p}$.

Now observe that the algorithm knows only the point distances that
it queried, i.e., those that correspond to edges in $G$. For all other distances,
the triangle inequality imposes only an upper bound for any pairs of points whose corresponding
vertices are connected in $G$. Thus,
the algorithm cannot distinguish the setting where the underlying metric is
the shortest path metric in $G$ from the setting where all points
are at distance 1 to each other. If $k=1$, for any of the problems
under consideration, then for the selected center $c$ the algorithm
cannot distinguish whether all points are at distance 1 to $c$ or
if half of the points in $P$ are at distance $\Omega\left(\log n\;/\;[\,\log\log n\cdot\log(f(1,n))\,]\right)$
to $c$. Therefore, any algorithm which is correct with constant probability will suffer an approximation of at least $\Omega\left(\log n\;/\;[\,\log\log n\cdot\log(f(1,n))\,]\right)$.

We improve the above construction so that it also works for algorithms
that have amortized update times, are allowed to query
distances to points that are already deleted, and may output $O(k)$ instead of $k$ centers. 
To this end, we adjust
the construction so that for points with degree of at least $100f(k,n)$, where $f(k,n)$ is the (amortized) number of distance queries per operation,
we report distances that might be larger than~1 and that still allow
us to use the same argumentation as above. At the same time, we remove
the factor of $\log\log n$ in the denominator.

\subsubsection{Algorithms for $k$-Sum-of-radii and $k$-Sum-of-diameters}

Our algorithms for $k$-sum-of-radii and
$k$-sum-of-diameters against an oblivious adversary use the following paradigm:
we maintain bi-criteria approximations, i.e., solutions with a small
approximation ratio that might use more than $k$ centers, i.e., up to $O(k/\eps)$ centers.
In a second step, we use the centers of this solution as the input
to an auxiliary dynamic instance for which we maintain a solution
with only $k$ centers. These centers then form our solution to the
actual problem, by increasing their radii appropriately. Since the
input to our auxiliary instance is much smaller than the input to
the original instance, we can afford to use algorithms for it whose
update times have a much higher dependence on $n$, \aw{and in fact we can 
even recompute the whole solution from scratch. }

More precisely, recall that in the static offline setting there is a $(3.504+\eps)$-approximation
algorithm with running time $n^{O(1/\eps)}$ for $k$-sum-of-radii~\cite{CharikarP04}.
We provide a black-box reduction that transforms any such offline
algorithm with running time of $g(n)$ and an approximation ratio of
$\alpha$ into a fully dynamic $(6+2\alpha+\eps)$-approximation algorithm
with an update time of $O(((k/\epsilon)^{5}+g(\mathrm{poly}(k/\epsilon)))\log\Delta)$.
To this end, we introduce a dynamic primal-dual algorithm that maintains
a bi-criteria approximation with up to $O(k/\epsilon)$ centers. After
each update, we use these $O(k/\epsilon)$ centers as input for the
offline $\alpha$-approximation algorithm, run it from scratch, and
increase the radii of the computed centers appropriately such that
they cover all points of the original input instance.

For computing the needed bi-criteria approximation, there is a polynomial-time
offline primal-dual $O(1)$-approximation algorithm with a running
time of $\Omega(n^{2}k)$~\cite{CharikarP04} from which we borrow
ideas. Note that a dynamic algorithm with an update time of $(k\log n)^{O(1)}$
yields an offline algorithm with a running time of $n(k\log n)^{O(1)}$
(by inserting all points one after another) and no offline algorithm
is known for the problem that is that fast. Hence, we need new ideas.
First, we formulate the problem as an LP $(P)$ which has a variable
$x_{p}^{(r)}$ for each combination of a point $p$ and a radius~$r$
from a set of (suitably discretized) radii $R$, and a constraint
for each point $p$. Let $(D)$ denote its dual LP, see below, where
$z:=\epsilon\OPT'/k$ and $\OPT'$ is a $(1+\epsilon)$-approximate estimate on the value $\OPT$ of the optimal solution.
{
\thickmuskip=4mu
\medmuskip=3mu
\thinmuskip=2mu
\begin{alignat*}{9}
\min & \,\,\, & \sum_{p\in P}\sum_{r\in R}x_{p}^{(r)}(r+z) &  &  &  &  &  &  &  & \max & \,\,\, & \sum_{p\in P}y_{p}\,\,\,\,\,\,\,\,\,\,\,\\
\mathrm{s.t.} &  & \sum_{p'\in P}\sum_{r:d(p,p')\le r}x_{p'}^{(r)} & \ge1 & \,\,\, & \forall p\in P & \,\,\,\,\,\,\,\: & (P) & \,\,\,\,\,\,\,\,\,\,\,\,\,\,\, &  & \mathrm{s.t.} &  & \sum_{p'\in P:d(p,p')\le r}y_{p'} & \le r+z & \,\,\, & \forall p\in P,r\in R & \,\,\,\,\,\,\,\:(D)\\
 &  & x_{p}^{(r)} & \ge0 &  & \forall p\in P\,\,\forall r\in R &  &  &  &  &  &  & y_{p} & \ge0 &  & \forall p\in P
\end{alignat*}
}

We select a point $c$ randomly and raise its dual variable $y_{c}$.
Unlike \cite{CharikarP04}, we raise $y_{c}$ only until the constraint
for~$c$ and some radius $r$ becomes \emph{half-tight, }i.e., $\sum_{p':d(c,p')\le r}y_{p'}=r/2+z$.\emph{
}We show that then we can guarantee that no other dual constraint
is violated, which saves running time since we need to check only
the (few, i.e. $|R|$ many) dual constraints for~$c$, and not the constraints for
all other input points when we raise $y_{c}$. We add $c$ to our
solution and assign it a radius of~$2r$. Since the constraint for
$(c,r)$ is half-tight, our dual can still ``pay'' for including $c$
with radius $2r$ in our solution. %

In primal-dual algorithms, one often raises all primal variables whose
constraints have become tight, in particular in the corresponding
routine in~\cite{CharikarP04}. However, since we assign to $c$
a radius of $2r$, we argue that we do not need to raise the up to
$\Omega(n)$ many primal variables corresponding to dual constraints
that have become (half-)tight. Again, this saves a considerable amount
of running time. Then, we consider the points that are not covered
by $c$ (with radius $2r$), select one of these points uniformly
at random, and iterate. After a suitable pruning routine (which is
faster than the corresponding routine in~\cite{CharikarP04}) this
process opens $k'= O(k/\eps)$ centers at cost $(6+\eps)\OPT'$,
or asserts that $\OPT>\OPT'$. We show that we can maintain this solution dynamically.

As mentioned above, after each update we feed the centers of the bi-criteria
approximation as input points to our (arbitrary) static offline $\alpha$-approximation
algorithm for $k$-sum-of-radii and run it. Finally, we translate
its solution to a $(6+2\alpha+\eps)$-approximate solution to the original
instance. For the best known offline approximation algorithm \cite{CharikarP04}
it holds that $\alpha=3.504$, and thus we obtain a ratio of $13.008+\eps$
overall. For $k$-sum-of-diameters the same solution yields a $(26.016+2\epsilon)$-approximation.
\vspace{-.2cm}

\subsection{Other Related Work}

\paragraph{$k$-means and $k$-median.}
For $k$-means~\cite{HenzingerK20} gave a conditional lower bound against an oblivious adversary for any dynamic better-than-4 approximate $k$-means algorithm showing that for any $\gamma > 0$ no algorithm with $O(k^{1-\gamma})$ update time and $O(k^{2-\gamma})$ query time exists. 
There exists a fully dynamic constant approximation algorithm for $k$-means and $k$-median with expected amortized update time $\tilde O(n + k^2)$ that tries to minimize the number of center changes~\cite{cohen2019fully}. 
For $d$-dimensional Euclidean metric space dynamic algorithms for $k$-median and $k$-means exist by combining dynamic coreset algorithms for that metric space with a static algorithm, but their running time is in $O(poly(\epsilon^{-1}, k, \log \Delta, d))$~\cite{FrahlingS05,FeldmanSS13,BravermanFLSY17,DBLP:conf/focs/SohlerW18}.

\paragraph{$k$-sum-of-radii.} For a variant of the sum-of-radii problem~\cite{henzinger2020dynamic} gives a fully dynamic algorithm in metric spaces with doubling dimension $\kappa$ that achieves a $O(2^{2\kappa})$ approximation in time $O(2^{6\kappa}\log \Delta)$.

\section{Preliminaries}\label{sec:prelims}

We begin with basic notation and definitions. For any positive integer $n$, we write $[n]$ to denote the set $ \{1,2,\dots,n\}$. In what follows, we will fix any metric space $(\cX,d)$. 
A fully dynamic stream is a sequence $(p_1,\sigma_1),\dots,(p_M,\sigma_M)$ of $M$ updates such that $p_i \in \cX$ is a point, and $\sigma_i \in \{+,-\}$ signifies either an insertion or deletion of a point. Naturally, we assume that a point can only be deleted if it was previously inserted. Moreover, we assume that each point is inserted at most once before being deleted.
\awr{rephrased this sentence, note that now we have several objectives. The old version is commented}

We call a point $p \in \cX$ active at time $t$ if $p$ was inserted at some time $t' < t$, and not deleted anytime between $t'$ and $t$. We write $P^{t} \subset \cX$ to denote the set of active points at time $t$. We let $r_{\min},r_{\max}$ be reals such that for all $t \in [M]$ and $x,y \in P^t$, we have $r_{\min} \leq d(x,y) \leq r_{\max}$, and set $\Delta = r_{\max} / r_{\min}$ to be the aspect ratio of the point set. As in prior works \cite{ChaFul18,schmidt2019fully}, we assume that an upper bound on $\Delta$ is known.%

\paragraph{Clustering Objectives.}
In the $k$-center problem, given a point set $P$ living in a metric space $(\cX,d)$, the goal is to output a set of $k$ \textit{centers} $\cC = \{c_1,\dots,c_k\} \subset \cX$, along with a mapping $\ell: P \to \{c_1,\dots,c_k\}$, 
such that the following objective is minimized:
\[ \cost[P]{\cC}[k][\infty] =   \max_{p \in P} d(p,\ell(p))    \]
In other words, one would like for the maximum distance from $p$ to the $\ell(p)$, over all $p \in P$, to be minimized. A related objective is $k$-sum-of-radii, where the maximum distance to a point is considered for each cluster and added instead of taking the maximum:
\[ \cost[P]{\cC}[k][R\Sigma] =   \sum_{c \in \mathcal{C}} \max_{p \in \ell^{-1}(c)} d(p,c)    \]
For the $k$-sum-of-diameters objective, one considers the pairwise distances of points assigned to the same clusters instead of the distances from each point to its center.
\[ \cost[P]{\cC}[k][D\Sigma] =   \sum_{c \in \mathcal{C}} \max_{p,q \in \ell^{-1}(c)} d(p,q)    \]
Additionally, for any real $z >0$, we introduce the $(k,z)$-clustering problem, which is to minimize 
\[ \cost[P]{\cC}[k][z] =   \sum_{p \in P} d^z(p,\ell(p))    \]

We will be primarily concerned the the $k$-center objective, but we introduce the more general $(k,z)$-clustering objective, which includes both $k$-median (for $z=1$) and $k$-means (for $z=2$), as our lower bounds from Section \ref{sec:LB} will hold for these objectives as well.

We remark that while $\ell(p)$ is usually fixed by definition to be the closest point to $p$ in $C$, the closest point may not necessarily be easy to maintain in a fully dynamic setting. Therefore, we will evaluate the cost of our algorithms with respects to both the centers and the mapping $\ell$ from points to centers. For any $p \in P$, we will refer to $\ell(p)$ as the \textit{leader} of $p$ under the mapping $\ell$, and the set of all points lead by a given $c_i$ is the cluster led by $c_i$.

In addition to maintaining a clustering with approximately optimal cost, we would like for our algorithms to be able to quickly answer queries related to cluster membership, and enumeration over all points in a cluster. Specifically, we ask that our algorithm be able to answer the following queries at any time step $t$: 

\begin{figure}[H]
\begin{Frame}[Queries to a Fully Dynamic Clustering Algorithm]
\begin{enumerate}
    \item \textbf{Membership Query:} Given a point $p \in P^{t}$, return the center $c = \ell(p)$ of the cluster $C$ containing~$p$.
    \item  \textbf{Cluster Enumeration:} Given a point $p \in P^{t}$, list all points in the cluster $C$ containing~$p$.
\end{enumerate}
\end{Frame}
\end{figure}

In particular, after processing any given update, our algorithms will be capable of responding to membership queries in $O(1)$-time, and to clustering enumeration queries in time $O(|C|)$, where $C$ is the clustering containing the query point.

\section{From Fully Dynamic $k$-Center to $k$-Bounded MIS} 
\label{sec:kCenters}

In this section, we describe our main results for fully dynamic $k$-center clustering, based on our main algorithmic contribution, which is presented in Section \ref{sec:generalMetric}. We begin by describing how
the problem of $k$-center clustering of $P$ can be reduced to maintaining a maximal independent set (MIS) in a graph. In particular, the reduction will only require us to solve a weaker version of MIS, where we need only return a MIS of size at most $k$, or an independent set of size at least $k+1$. Formally, this problem, which we refer to as the $k$-Bounded MIS problem, is defined as follows.

\begin{definition}[$k$-Bounded MIS]
Given a graph $G = (V,E)$ and an integer $k \geq 1$, the $k$-bounded MIS problem is to output a maximal independent set $\cI \subset V$ of size at most $k$, or return an independent set $\cI \subset V$ of size at least $k+1$.%
\end{definition}

\paragraph{Reduction from $k$-center to $k$-Bounded MIS.}
The reduction from $k$-center to computing a maximum independent set in a graph is well-known, and can be attributed to the work of Hochbaum and Shmoys \cite{hochbaum1986unified}. The reduction was described in Section \ref{sec:tech}, however, both for completeness and so that it is clear that only a $k$-Bounded MIS is required for the reduction, we spell out the full details here.

Fix a set of points $X$ in a metric space, such that $r_{\min} \leq d(x,y) \leq  r_{\max}$ for all $x,y \in \cX$. Then, for each $r= r_{\min}, (1+\eps/2) r_{\min} , (1+\eps/2)^2 r_{\min}, \dots, r_{\max}$, one creates the \textit{$r$-threshold graph} $G_r = (V,E_r)$, which is defined as the graph with vertex set $V = X$, and $(x,y) \in E_r$ if and only if $d(x,y) \leq r$. One then runs an algorithm for $k$-Bounded MIS on each graph $G_r$, and finds the smallest value of $r$ such that the output of the algorithm $\cI_r$ on $G_r$ satisfies $|\cI_r| \leq k$ --- in other words, $\cI_r$ must be a MIS of size at most $k$ in $G_r$. Observe that $\cI_r$ yields a solution to the $k$-center problem with cost at most $r$, since each point in $X$ is either in $\cI_r$ or is at distance at most $r$ from a point in $\cI_r$. Furthermore, since the independent set $\cI_{r/(1+\eps/2)}$ returned from the algorithm run on $G_{r/(1+\eps/2)}$ satisfies $|\cI_{r/(1+\eps/2)}| \geq k+1$ it follows that there are $k+1$ points in $X$ which are pair-wise distance at least $r/(1+\eps/2)$ apart. Hence, the cost of any $k$-center solution (which must cluster two of these $k+1$ points together) is at least $r/(2+\eps)$ by the triangle inequality. It follows that $\cI$ yields a $2+\eps$ approximation of the optimal $k$-center cost.

Note that, in addition to maintaining the centers $\cI$, for the purposes of answering membership queries, one would also like to be able to return in $O(1)$ time, given any $x \in \cX$, a fixed $y \in \cI$ such that $d(x,y) \leq r$. We will ensure that our algorithms, whenever they return a MIS $\cI$ with size at most $k$, also maintain a mapping $\ell: V \setminus \cI \to \cI$ which maps any $x$, which is not a center, to its corresponding center $\ell(x)$. 

Observe that in the context of $k$-clustering, insertions and deletions of points correspond to insertions and deletions of entire vertices into the graph $G$. This is known as the fully dynamic \textit{vertex update model}. Since one vertex update can cause as many as $O(n)$ edge updates, we will not be able to read all of the edges inserted into the stream. Instead, we assume our dynamic graph algorithms can query for whether $(u,v)$ is an edge in constant time (i.e., constant time oracle access to the adjacency matrix).\footnote{This is equivalent to assuming that distances in the metric space can be computed in constant time, however if such distances require $\alpha$ time to compute, this will only increase the runtime of our algorithms by a factor of $\alpha$.}

Our algorithm for $k$-Bounded MIS will return a very particular type of MIS. Specifically, we will attempt to return the first $k+1$ vertices in a \textit{Lexicographically First MIS} (LFMIS), under a random lexicographical ordering of the vertices.

\paragraph{Lexicographically First MIS (LFMIS).} The LFMIS of a graph $G = (V,E)$ according to a ranking of the vertices specified by a mapping $\pi:V \to [0,1]$ is a unique MIS defined as by the following process. Initially, every vertex $v \in V$ is alive. We then iteratively select the alive vertex with minimal rank $\pi(v)$, add it to the MIS, and then kill $v$ and all of its alive neighbors. We write $\LFMIS(G,\pi)$ to denote the LFMIS of $G$ under $\pi$. For each vertex $v$, we define the \textit{eliminator} of $v$, denoted $\elim_{G,\pi}(v)$ to be the vertex $u$ which kills $v$ in the above process; namely,  $\elim_{G,\pi}(v)$ is the vertex with smallest rank in the set $(N(v) \cup \{v\}) \cap \LFMIS(G,\pi)$.

\begin{definition}\label{def:topkLFMIS}
Given a graph $G = (V,E)$, $\pi:V \to [0,1]$, and an integer $k \geq 1$, we define the top-$k$ LFMIS of $G$ with respect to $\pi$, denoted $\LFMIS_k(G,\pi)$ to be the set consisting of the first $\min\{k,|\LFMIS(G,\pi)|\}$ vertices in $\LFMIS(G,\pi)$ (where the ordering is with respect to $\pi$). When $G,\pi$ are given by context, we simply write $\LFMIS_k$. 
\end{definition}

It is clear that returning a top-$(k+1)$ LFMIS of $G$ with respect to any ordering will solve the $k$-Bounded MIS problem. In order to also obtain a mapping $\ell$ from points to their centers in the independent set, we define the following augmented version of the top-$k$ LFMIS problem, which we refer to as a \textit{top-$k$ LFMIS with leaders}.

\begin{definition}\label{def:LFMISLead}
A top-$k$ LFMIS with leaders consists of the set $\LFMIS_k(G,\pi)$, along with a \textit{leader mapping function} $\ell:V \to V \cup \{\bot\}$, such that $(v,\ell(v)) \in E$ whenever $\ell(v) \neq \bot$, and such that if $\LFMIS_k(G,\pi) = \LFMIS(G,\pi)$, then $\ell(v) \in \LFMIS_k(G,\pi)$ for all $v \in V \setminus \LFMIS_k(G,\pi)$, and $\ell(v) = \bot$ for all $v \in \LFMIS_k(G,\pi)$. 
\end{definition}

Within our algorithm, the event that $\ell(v) = \bot$ will occur only if $v$ itself is a leader in $\LFMIS_{k}$, or if $v$ is among a set of ``unclustered'' points whose leader mapping to a point in $\LFMIS$ may be out of date. We choose to set $\ell(v) = \bot$ when $v \in \LFMIS_k$ is a leader, rather than setting $\ell(v) = v$, to maintain the invariant that $(v,\ell(v)) \in E$ whenever $\ell(v) \neq \bot$. 

The main goal of the following Section \ref{sec:generalMetric} will be to prove the existence of a $\tilde{O}(k)$ amortized update time algorithm for maintaining a top-$k$ LFMIS with leaders of a graph $G$ under a fully dynamic sequence of insertions and deletions of vertices from $G$. Specifically, we will prove the following theorem.

\noindent \textbf{Theorem} \ref{thm:LFMISMain}. {\it There is a algorithm which, on a fully dynamic stream of insertions and deletions of vertices to a graph $G$, maintains at all time steps a top-$k$ LFMIS of $G$ with leaders under a random ranking $\pi: V \to [0,1]$. The expected amortized per-update time of the algorithm is $O(k \log n + \log^2 n)$, where $n$ is the maximum number active of vertices at any time. Moreover, the algorithm does not need to know $n$ in advance. }

Next, we demonstrate how Theorem \ref{thm:LFMISMain} immediately implies the main result of this work (Theorem \ref{thm:main}). Firstly, we prove a proposition which demonstrates that any fully dynamic Las Vegas algorithm with small runtime in expectation can be converted into a fully dynamic algorithm with small runtime with high probability.

\begin{proposition}\label{prop:highProb}
Let $\cA$ be any fully dynamic randomized algorithm that correctly maintains a solution for a problem $\cP$ at all time steps, and runs in amortized time at most $\Run(\cA)$ in expectation. Then there is a fully dynamic algorithm for $\cP$ which runs in amortized time at most $O(\Run(\cA) \log\delta^{-1})$ with probability $1-\delta$ for all $\delta \in (0,1/2)$.
\end{proposition}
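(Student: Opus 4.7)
The plan is to apply a standard Las Vegas amplification argument: run $k = \lceil \log_2 \delta^{-1}\rceil + 1$ independent copies of $\cA$ in parallel, use Markov's inequality together with independence to guarantee a fast copy, and enforce a per-copy budget to cap the combined work.

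First, I would feed every update in the stream to each of the $k$ copies; since each individual copy is correct for $\cP$, every copy maintains a valid solution at every time step, so queries can always be served from any surviving copy. For each $i \in [k]$, let $W_i(T)$ denote the cumulative running time of copy $i$ over the first $T$ updates, so that by hypothesis $\Ex[W_i(T)] \leq T \cdot \Run(\cA)$.

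Next, Markov's inequality gives $\pr{W_i(T) > 2T \cdot \Run(\cA)} \leq 1/2$ for each $i$, and independence of the $k$ copies then yields
\[
\pr{\min_{i \in [k]} W_i(T) > 2T \cdot \Run(\cA)} \leq 2^{-k} \leq \delta.
\]
Hence, with probability $\geq 1 - \delta$, at least one copy $i^\star$ has amortized runtime at most $2 \cdot \Run(\cA)$ over the first $T$ updates, and we answer queries from $i^\star$.

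Then, to bound the total work actually performed, I would install a timeout: as soon as a copy's cumulative work would exceed the budget $2T \cdot \Run(\cA)$ at the current time $T$, abort that copy (leaving it in a possibly inconsistent state but no longer using it). Each copy thus does at most $O(T \cdot \Run(\cA))$ work, so the combined work of all $k$ copies through time $T$ is $O(kT \cdot \Run(\cA)) = O(T \cdot \Run(\cA) \log \delta^{-1})$, yielding the claimed amortized bound of $O(\Run(\cA) \log \delta^{-1})$ per update.

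The main obstacle I anticipate is a subtle \emph{anytime} issue: Markov controls $W_i(T)$ at a fixed stream length $T$, whereas an online implementation must decide whether to abort based on the current prefix. Since $W_i(\cdot)$ is monotone, an intermediate spike could trigger abortion of a copy whose final amortized work is within budget. This can be handled either by (i) interpreting the conclusion as holding for each fixed $T$ separately with probability $1-\delta$, (ii) performing the budget check only at geometric checkpoints $T = 2^j$ while paying a mild extra factor, or (iii) running the copies sequentially --- whenever the active copy blows its budget, discard it and instantiate a fresh copy that replays the logged update stream, with the number of restarts bounded by $k$ via the same Markov-plus-independence argument.
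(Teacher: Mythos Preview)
Your option (iii) is exactly the paper's construction, but the claim that the total number of restarts is bounded by $k=O(\log\delta^{-1})$ does not follow from the Markov-plus-independence argument once the budget is dynamic (i.e., $c\cdot t\cdot\Run(\cA)$ at the current time $t$, which is what you need when the stream length $M$ is not known in advance). A fresh copy that successfully catches up to the current time $t$ within budget may still blow the now-larger budget at some later $t'>t$; Markov at a single fixed horizon controls only the first event, not the second, so restarts can keep accumulating across the stream and ``each fresh copy succeeds with probability $\ge 1/2$'' is not the right invariant. Likewise, option (ii) does not rescue the parallel scheme without re-instantiation: a fixed copy survives any one checkpoint with probability $\ge 1/2$, but across $\Theta(\log M)$ checkpoints its survival probability is not bounded below by a constant, so $k=O(\log\delta^{-1})$ copies do not suffice. (If you allow $M$ to be known, a single fixed budget of $2M\Run(\cA)$ makes both your parallel scheme and option (iii) correct and simpler than the paper's proof; the paper is written precisely to avoid this assumption.)

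The missing ingredient is a per-interval analysis of the sequential scheme. With budget $4t\Run(\cA)$, partition time into dyadic intervals $[2^i,2^{i+1}]$ and observe that if a fresh copy satisfies $W(2^{i+1})\le 2^{i+2}\Run(\cA)$ (probability $\ge 1/2$ by Markov applied to the prefix of length $2^{i+1}$), then $W(t)\le W(2^{i+1})\le 2^{i+2}\Run(\cA)\le 4t\Run(\cA)$ for every $t\in[2^i,2^{i+1}]$, so the copy survives the entire interval. Hence the number $\bZ_i$ of restarts inside interval $i$ is stochastically dominated by a geometric random variable. Although $\sum_i\bZ_i$ is not $O(\log\delta^{-1})$, a restart in interval $i$ costs only $O(2^i\Run(\cA))$; taking tail thresholds $T_i=\log(2/\delta)+\lceil\log M\rceil-i$ gives $\sum_i 2^{i+1}\bZ_i=O(M\log\delta^{-1})$ with probability $1-\delta$, which is the claimed total-runtime bound.
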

\begin{proof}
The algorithm is as follows: we maintain at all time steps a single instance of $\cA$ running on the dynamic stream. If, whenever the current time step is $t$, the total runtime of the algorithm exceeds $4 t \Run(\cA)$, we delete $\cA$ and re-instantiate it with fresh randomness. We then run the re-instantiated version of $\cA$ from the beginning of the stream until either we reach the current time step $t$, or the total runtime again exceeds $4 t \Run(\cA)$, in which case we re-instantiate again. 

Let $M$ be the total length of the stream. For each $i=0,1,2,\dots,\lceil \log M \rceil$, let $\bZ_i$ be the number of times that $\cA$ is re-instantiated while the current time step is between $2^i$ and $2^{i+1}$. Note that the total runtime is then at most 
\[ 4 \Run(\cA) \cdot \left(M +  \sum_{i=0}^{\lceil \log M \rceil} 2^{i+1} \bZ_i \right) \]
Fix any $i \in \{0,1,\dots,\lceil \log M \rceil\}$, and let us bound the value $\bZ_i$. 
Each time that $\cA$ is restarted when the current time $t$ step is between $2^i$ and $2^{i+1}$, the probability that the new algorithm runs in time more than $2^{i+2}\Run(\cA) \geq 4  t \Run(\cA) $ on the first $2^{i+1}$ updates is at most $1/2$ by Markov's inequality. Thu, the probability that $\bZ_i > T_i + 1$ is at most $2^{-T_i}$, for any $T_i \geq 0$. Setting $T_i = \log(2/\delta) + \lceil \log M \rceil - i $, we have
\begin{equation}
    \begin{split}
        \sum_{i=0}^{\lceil \log M \rceil} \pr{\bZ_i > T_i + 1 }& \leq \frac{\delta}{2}   \sum_{i=0}^{\lceil \log M \rceil} \left(\frac{1}{2}\right)^{\lceil \log M \rceil - i} \\ 
        & < \delta
    \end{split}
\end{equation}
In other words, by a union bound, we have $\bZ_i \leq T_i+1$ for all $i=0,1,2,\dots,\lceil \log M \rceil$, with probability at least $1-\delta$. Conditioned on this, the total runtime is at most 

\begin{equation}
    \begin{split}
         4 \Run(\cA) \cdot \left(M +  \sum_{i=0}^{\lceil \log M \rceil} 2^{i+1} (T_i+1) \right) &= O\left(\Run(\cA)\left( \sum_{i=0}^{\lceil \log M \rceil} 2^{i} (  \log \delta^{-1} + \lceil \log M \rceil - i )  \right)\right) \\
         & = O\left(\Run(\cA)\left( M \log \delta^{-1} + M \sum_{i=0}^{\lceil \log M \rceil} \frac{i }{2^{i}}  \right)\right) \\
          & = O\left(\Run(\cA)\cdot  M \log \delta^{-1} \right) \\
    \end{split}
\end{equation}
which is the desired total runtime.
\end{proof}

Given Theorem \ref{thm:LFMISMain}, along with the reduction to top-$k$ LFMIS from $k$-center described in this section, which runs $O(\eps^{-1} \log \Delta)$ copies of a top-$k$ LFMIS algorithm, we immediately obtain a fully dynamic $k$-center algorithm with $\tilde{O}(k)$ expected amortized update time. By then applying Proposition \ref{prop:highProb}, we obtain our main theorem, stated below. 

	\noindent \textbf{Theorem} \ref{thm:main}. {\it 
    There is a fully dynamic algorithm which, on a sequence of insertions and deletions of points from a metric space $\cX$, maintains a $(2+\eps)$-approximation to the optimal $k$-center clustering. The amortized update time of the algorithm is $O(\frac{\log \Delta \log n}{\eps}(k  + \log n))$ in expectation, and $O(\frac{\log \Delta \log n}{\eps}(k  + \log n) \log \delta^{-1})$  with probability $1-\delta$ for any $\delta \in (0,\frac{1}{2})$, where $n$ is the maximum number of active points at any time step. 
    
    The algorithm can answer membership queries in $O(1)$-time, and enumerate over a cluster $C$ in time $O(|C_i|)$. 
}

\section{Fully Dynamic $k$-Bounded MIS with Vertex Updates}
\label{sec:generalMetric}

Given the reduction from Section \ref{sec:kCenters}, the goal of this section will be to design an algorithm which maintains a top-$k$ LFMIS with leaders  (Definition \ref{def:LFMISLead}) under a graph which receives a fully dynamic sequence of \textit{vertex} insertions and deletions. As noted, maintaining a top-$(k+1)$ LFMIS immediately results in a solution to the $k$-Bounded MIS problem. We begin by formalizing the model of vertex-valued updates to dynamic graphs.

\paragraph{Fully Dynamic Graphs with Vertex Updates}
 In the vertex-update fully dynamic setting, at each time step a vertex $v$ is either inserted into the current graph $G$, or deleted from $G$, along with all edges incident to $v$. This defines a sequence of graphs $G^1, G^2,\dots,G^M$, where $G^t= (V^t,E^t)$ is the state of the graph after the $t$-th update. Equivalently, we can think of there being an ``underlying'' graph $G = (V,E)$, where at the beginning all vertices are \textit{inactive}. At each time step, either an active vertex is made inactive, or vice-versa, and $G^{t}$ is defined as the subgraph induced by the active vertices at time $t$. The latter is the interpretation which will be used for this section. 

Since the degree of $v$ may be as large as the number of active vertices in $G$, our algorithm will be unable to read all of the edges incident to $v$ when it arrives. Instead, we require only query access to the adjacency matrix of the underlying graph $G$. Namely, we assume that we can test in constant time whether $(u,v) \in E$ for any two vertices $u,v$.

For the purpose of $k$-center clustering, we will need to maintain a top-$k$ LFMIS with leaders $(\LFMIS_k(G,\pi),\ell)$, along with a Boolean value indicating whether 
$\LFMIS_k(G,\pi) = \LFMIS(G,\pi)$. To do this, we can instead attempt to maintain the set $\LFMIS_{k+1}(G,\pi)$, as well as a leader mapping function $\ell:V \to V \cup \{\bot\}$, with the relaxed property that if $\LFMIS_k(G,\pi) = \LFMIS(G,\pi)$, then $\ell(v) \in \LFMIS(G,\pi)$ for all $v \in V \setminus \LFMIS(G,\pi)$ and $\ell(v) = \bot$ for all $v \in \LFMIS(G,\pi)$. We call such a leader function $\ell$ with this relaxed property a \textit{modified} leader mapping. Thus, in what follows, we will focus on maintaining a top-$k$ LFMIS with this modified leader mapping.

\subsection{The Data Structure}
We now describe the main data structure and algorithm which will maintain a top-$k$ LFMIS with leaders in the dynamic graph $G$.
We begin by fixing a random mapping $\pi:V \to [0,1]$, which we will use as the ranking for our lexicographical ordering over the vertices. It is easy to see that if $|V| = n$, then by discretizing $[0,1]$ so that $\pi(v)$ can be represented in $O(\log n)$ bits we will avoid collisions with high probability. At every time step, the algorithm will maintain an ordered set $\ALG $ of vertices in a linked list, sorted by the ranking $\pi$, with $|\ALG| \leq k+1$. We will prove that, after every update $t$, we have $\ALG = \LFMIS_{k+1}(G^t,\pi)$.\footnote{We use a separate notation $\ALG$, instead of $\LFMIS_{k+1}$, to represent the set maintained by the algorithm, until we have demonstrated that we indeed have $\ALG = \LFMIS_{k+1}(G^t,\pi)$ at all time steps $t$.}  We will also maintain a mapping $\ell:V \to V \cup \{\bot\}$ which will be our leader mapping function. Initially, we set $\ell(v) = \bot$ for all $v$. 
Lastly, we will maintain a (potentially empty) priority queue $\cQ$ of \textit{unclustered} vertices, where the priority is similarly given by $\pi$.

Each vertex $v$ in $G_t$ will be classified as either a \textit{leader}, a \textit{follower}, or \textit{unclustered}. Intuitively, when $\LFMIS_k(G,\pi) = \LFMIS(G,\pi)$, the leaders will be exactly the points in $\LFMIS_k(G,\pi)$, the followers will be all other points $v$ which are mapped to some $\ell(v) \in \LFMIS_k(G,\pi)$ (in other words, $v$ ``follows'' $\ell(v)$), and there will be no unclustered points. At intermediate steps, however, when $|\LFMIS(G,\pi)| \geq k + 1$, we will be unable to maintain the entire set $\LFMIS(G,\pi)$ and, therefore, we will store the set of all vertices which are not in $\LFMIS_{k+1}$ in the priority queue $\cQ$ of unclustered vertices. The formal definitions of leaders, followers, and unclustered points follow.

Every vertex currently maintained in $\ALG$ is be a leader. Each leader $v$ may have a set of follower vertices, which are vertices $u$ with $\ell(u) = v$, in which case we say that $u$ follows $v$. By construction of the leader function $\ell$, every follower-leader pair $(u,\ell(u))$ will be an edge of $G$.  We write $\cF_v = \{u \in V : \ell(u) = v\}$ to denote the (possibly empty) set of followers of a leader $v$. For each leader, the set $\cF_v$ will be maintained as part of the data structure at the vertex $v$.

Now when the size of $\LFMIS$ exceeds $k+1$, we will have to remove the leader $v$ in $\LFMIS$ with the largest rank, so as to keep the size of $\ALG$ at most $k+1$. The vertex $v$ will then be moved to the queue $\cQ$, along with its priority $\pi(v)$. The set $\cF_v$ of followers of $v$ will continue to be followers of $v$ --- their status remains unchanged. In this case, the vertex $v$ is now said to be an \textit{inactive} leader, whereas each leader currently in $\ALG$ is called an \textit{active} leader. If, at a later time, we have $\pi(v) < \max_{u \in \ALG} \pi(u)$, then it is possible that $v$ may be part of $\LFMIS_{k+1}$, in which case we will attempt to reinsert the inactive leader $v$ from $\cQ$ back into $\ALG$. Note, importantly, that whenever $\pi(v) < \max_{u \in \ALG} \pi(u)$ occurs at a future time step for a vertex $v \in \cQ$, then either $v$ is part of $\LFMIS_{k+1}$, or it is a neighbor of some vertex $u \in \LFMIS_{k+1}$ of lower rank. In both cases, we can remove $v$ from $\cQ$ and attempt to reinsert it, with the guarantee that after this reinsertion $v$ will either be an active leader, or a follower of an active leader.

\paragraph{The Unclustered Queue.} We now describe the purpose and function of the priority queue $\cQ$. 
Whenever either a vertex $v$ is inserted into the stream, or it is a follower of a leader $\ell(v)$ who is removed from the $\LFMIS$, we must attempt to reinsert $v$, to see if it should be added to $\LFMIS_{k+1}$. However, if $|\ALG| = k+1$, then the only way that $v$ should be a part of $\LFMIS_{k+1}$ (and therefore added to $\ALG$) is if $\pi(v) < \max_{u \in \ALG} \pi(u)$. If this does not occur, then we do not need to insert $v$ right away, and instead can defer it to a later time when either $|\ALG| < k+1$ or $\pi(v) < \max_{u \in \ALG} \pi(u)$ holds. Moreover, by definition of the modified leader mapping $\ell$, we only need to set $\ell(v)$ when $|\LFMIS_{k+1}| < k+1$. We can therefore add $v$ to the priority queue $\cQ$. 

Every point in the priority queue is called an \textit{unclustered point}, as they are not currently part of a valid $k$-clustering in the graph. By checking the top of the priority queue at the end of processing each update, we can determine whenever either of the events $|\ALG| \leq k$ or $\pi(v) < \max_{u \in \ALG} \pi(u)$ holds; if either is true, we iteratively attempt to reinsert the top of the queue until the queue is empty or both events no longer hold. This will ensure that either all points are clustered (so $\cQ = \emptyset$), or $|\ALG| = k+1$ and $\ALG = \LFMIS_{k+1}$ (since no point in the queue could have been a part of $\LFMIS_{k+1}$).

\paragraph{The Leader Mapping.}  
Notice that given a top-$k$ LFMIS, a valid leader assignment is always given by $\ell(v) = \elim_{G,\pi}(v)$, where $\elim_{G,\pi}(v)$ is the eliminator of $v$ via $\pi$ as defined in Section \ref{sec:kCenters}; this is the case since if  $\LFMIS_k(G,\pi) = \LFMIS(G,\pi)$ then each vertex is either in $\LFMIS_k(G,\pi)$ or eliminated by one of the vertices in $\LFMIS_k(G,\pi)$.
Thus, intuitively, our goal should be to attempt to maintain that $\ell(v) = \elim_{G,\pi}(v)$ for all $v \notin \LFMIS_k(G,\pi)$.
However, the addition of a new vertex which enters $\LFMIS_{k}(G,\pi)$ can change the eliminators of many other vertices \textit{not} in $\LFMIS_k(G,\pi)$. Discovering which points have had their eliminator changed immediately on this time step would be expensive, as one would have to search through the followers of all active leaders to see if any of their eliminators changed. Instead, we note that at this moment, so long as the new vertex does not share an edge with any other active leader, we do not need to modify our leader mapping. Instead, we can \textit{defer} the reassignment of the leaders of vertices $v$ whose eliminator changed on this step, to a later step when their leaders are removed from $\LFMIS_{k+1}(G,\pi)$. Demonstrating that the number of changes to the leader mapping function $\ell$, defined in this way, is not too much larger than the number of changes to the eliminators of all vertices, will be a major component of our analysis.

\paragraph{The Algorithm and Roadmap.} Our main algorithm is described in three routines: Algorithms \ref{alg:main}, \ref{alg:ins}, and \ref{alg:del}. Algorithm \ref{alg:main} handles the inital insertion or deletion of a vertex in the fully dynamic stream, and then calls at least one of Algorithms \ref{alg:ins} or \ref{alg:del}. Algorithm \ref{alg:ins} handles insertions of vertices in the data structure, and Algorithm \ref{alg:del} handles deletions of vertices from the data structure. We begin in Section \ref{sec:correctness} by proving that our algorithm does indeed solve the top-$k$ LFMIS problem with the desired modified leader mapping. Then, in Section \ref{sec:amortized}, we analyze the amortized runtime of the algorithm.

\begin{algorithm}[ht]
\DontPrintSemicolon
	\caption{Process Update}\label{alg:main}
	\KwData{An update $(v,\sigma)$, where $\sigma \in \{+,-\}$.}
 \If{$\sigma = +$ is an insertion of $v$}{
 Generate $\pi(v)$, and set $\ell(v) = \bot$.\;
 Call $\ins(v,\pi(v))$.\;}
\If{$\sigma = -$ is a deletion of $v$}{
Call $\delete(v)$.\;
}

\While{$|\cQ| \neq \emptyset \boldsymbol{\wedge} \left( |\ALG| \leq  k+1 \boldsymbol{\vee} \min_{w \in \cQ} \pi(w) < \max_{w \in \ALG} \pi(w)  \right)$}{ \label{line:whileMain}  
$u \leftarrow \arg \min_{w \in \cQ} \pi(w)$.\;
Delete $u$ from $\cQ$, and call $\ins(u,\pi(u))$. \;
}
\end{algorithm}

\begin{algorithm}[ht]
\DontPrintSemicolon
	\caption{$\ins(v,\pi(v))$} \label{alg:ins}
 \If{$|\ALG| = k+1$ $\boldsymbol{\wedge}$ $\pi(v) > \max_{u \in \ALG} \pi(u)$}{\label{line:firstIfIns}
 Insert $(v,\pi(v))$ into $\cQ$. \;}
\Else{
Compute $S =  \ALG \cap N(v)$\;
\If{$S = \emptyset$}{
Add $v$ to $\ALG$. \;
\If{ $|\ALG| = k+2$}{
Let $u = \arg \max_{u' \in \ALG} \pi(u') $. \;
Remove $u$ from $\ALG$, and insert $(u,\pi(u))$ into $\cQ$. \label{line:delOverflow} \;
}
}
\Else{
    $u^* = \arg \min_{u' \in S} \pi(u')$\;
    \If{$\pi(u^*) < \pi(v)$}{
    \If{$v$ is a leader}{
    For each $w \in \cF_v$, insert $(w,\pi(w))$ to $\cQ$, and set $\ell(w) = \bot$ \label{line:queue1Ins} \;
    Delete the list $\cF_v$. \;
    }
    Add $v$ to $\cF_{u^*}$ as a follower of $u^*$, set $\ell(v) = u^*$. \;
    }
    \Else{
    For each $w \in \cup_{u \in S} \cF_u$, add $(w,\pi(w))$ to $\cQ$, and set $\ell(w) = \bot$. \label{line:queue2Ins} \;
    For each $u \in S$, set $\ell(u) = v$ to be a follower of $v$, remove $u$ from $\ALG$, and delete the list $\cF_u$. \;
    }
 }  
}
\end{algorithm}

\begin{algorithm}[!ht]
\DontPrintSemicolon
	\caption{$\delete(v)$ }\label{alg:del}
	
	\If{$v$ is a follower}{
	Delete $v$ from $\cF_{\ell(v)}$, and remove $v$ from the set of vertices.  \; 
	}
	\ElseIf{$v \in \cQ$}{
	\If{$v$ is a leader}{
	 For each $w \in \cF_v$, insert $(w,\pi(w))$ to $\cQ$, and set $\ell(w) = \bot$ \label{line:queue1Del} \;
	 Delete the list $\cF_v$, and remove $v$ from $\cQ$ and the set of vertices. \;
	}
	\Else{
	Delete $v$ from $\cQ$ and the set of vertices. \;
	}
	}
	\Else(\tcc*[r]{Must have $v \in \ALG$}){
	 For each $w \in \cF_v$, insert $(w,\pi(w))$ to $\cQ$, and set $\ell(w) = \bot$\label{line:queue2Del} \;
	 	 Delete the list $\cF_v$, and remove $v$ from $\ALG$ and the set of vertices. \;
	}
\end{algorithm}

\subsection{Correctness of the Algorithm}\label{sec:correctness}
We will now demonstrate the correctness of the algorithm, by first proving two Propositions. 

\begin{proposition}\label{prop:IS}
After every time step $t$, the set $\ALG$ stored by the algorithm is an independent set in $G^{(t)}$.
\end{proposition}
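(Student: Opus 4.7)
I would prove this by induction on the number of updates processed. The base case is that $\ALG = \emptyset$ before any update, which is trivially independent. For the inductive step, assuming $\ALG$ was independent before the current update, I would track every modification of $\ALG$ made by Algorithms~\ref{alg:main}, \ref{alg:ins}, and~\ref{alg:del} and argue each preserves independence.

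The easy cases are: \emph{(i)} any removal of a vertex from $\ALG$ (the inactive-leader push to $\cQ$ at line~\ref{line:delOverflow}, and the deletion of $v$ from $\ALG$ at line~\ref{line:queue2Del}) clearly preserves independence; and \emph{(ii)} the branches of $\ins(v,\pi(v))$ that do not touch $\ALG$, namely the initial test where $v$ is pushed to $\cQ$ (line~\ref{line:firstIfIns}) and the sub-case $\pi(u^*)<\pi(v)$ in which $v$ becomes a follower. The while loop of Algorithm~\ref{alg:main} only invokes $\ins$, so its modifications are covered by the analysis of $\ins$.

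The two cases that add a vertex to $\ALG$ are the meat of the argument. In the first, $\ins(v,\pi(v))$ adds $v$ when $S := \ALG \cap N(v) = \emptyset$; by the inductive hypothesis $\ALG$ is independent, and $v$ has no neighbor in the current $\ALG$ by definition of $S$, so $\ALG \cup \{v\}$ is still an independent set in $G^{(t)}$. In the second, $\ins(v,\pi(v))$ falls into the branch where $S \neq \emptyset$ and $\pi(u^*) > \pi(v)$: here every $u \in S$ is removed from $\ALG$ and $v$ is placed into $\ALG$ in its stead. The resulting set is $(\ALG \setminus S) \cup \{v\}$, which is independent because $\ALG \setminus S$ is a subset of the old (independent) $\ALG$ and, crucially, the \emph{entire} neighborhood of $v$ inside the old $\ALG$, namely $S$, has just been expelled—so no edge incident to $v$ remains.

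The main (minor) obstacle is simply enumerating the control paths through the three algorithms carefully and confirming that whenever a vertex is added to $\ALG$, \emph{all} of its current neighbors in $\ALG$ are removed in the same atomic step; once this is done the invariant follows immediately from the induction hypothesis. No probabilistic reasoning or amortized accounting is required for this statement, which is why the proof is a short structural case analysis rather than anything more involved.
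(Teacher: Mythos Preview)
Your proof is correct and essentially equivalent to the paper's. The paper packages the same observation—whenever a vertex enters $\ALG$, all of its current neighbors in $\ALG$ are expelled in the same step—as a short proof by contradiction (pick the more recently inserted of two hypothetically adjacent members of $\ALG$ and observe that its insertion would have removed the other), whereas you unfold it as a forward induction with an explicit case analysis of every code path.
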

\begin{proof}
Suppose otherwise, and let $v,u \in \ALG$ be vertices with $(v,u) \in E$. WLOG we have that $v$ is the vertex which entered $\ALG$ most recently of the two. Then we had $ u \in ALG$ at the moment that $\ins(v,\pi(v))$ was most recently called. Since on the step that $\ins(v,\pi(v))$ was most recently called the vertex $p$ was added to $\ALG$, it must have been that $\min_{w \in N(v) \cap \ALG} \pi(w) > \pi(v)$, thus, in particular,  we must have had $\pi(v) < \pi(u)$. However, in this case we would have made $u$ a follower of $v$ at this step and set $\ell(u) = v$, which could not have occurred since then $u$ would have been removed from $\ALG$, which completes the proof. 
\end{proof}

\begin{claim}\label{claim:1}
We always have $|\ALG| \leq k+1$ at all time steps.
\end{claim}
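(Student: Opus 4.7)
The plan is to prove the bound by induction on the number of operations (update steps, together with the iterations of the while loop in Algorithm~\ref{alg:main}) processed so far. The base case is trivial since initially $\ALG = \emptyset$, and the inductive hypothesis is that $|\ALG| \le k+1$ holds just before the current call we are analyzing. It then suffices to show that each primitive operation that can modify $\ALG$---namely a call to $\ins$ or $\delete$, or one iteration of the while loop on line~\ref{line:whileMain} of Algorithm~\ref{alg:main}---preserves this bound.

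The analysis reduces to enumerating the branches of $\ins(v,\pi(v))$ in Algorithm~\ref{alg:ins} that touch $\ALG$. First, if $|\ALG| = k+1$ and $\pi(v) > \max_{u \in \ALG} \pi(u)$ (line~\ref{line:firstIfIns}), then $v$ is pushed onto $\cQ$ and $\ALG$ is untouched. Otherwise we compute $S = \ALG \cap N(v)$ and split into three sub-cases: (a) if $S = \emptyset$, we add $v$ to $\ALG$, so $|\ALG|$ grows by exactly $1$, and the inductive hypothesis guarantees the post-insertion size is at most $k+2$; in the extremal case the guard on line~\ref{line:delOverflow} immediately evicts the max-rank vertex, restoring $|\ALG| \le k+1$. (b) If $S \neq \emptyset$ and $\pi(u^*) < \pi(v)$, then $v$ becomes a follower of $u^*$ and $\ALG$ is left unchanged. (c) If $S \neq \emptyset$ and $\pi(u^*) > \pi(v)$, then every $u \in S$ is removed from $\ALG$ and $v$ is promoted to a leader in their place; since $|S| \ge 1$, the net change to $|\ALG|$ is at most $0$, so the bound is again preserved. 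In $\delete(v)$, each branch either leaves $\ALG$ unchanged or removes $v$ from $\ALG$ (the final else branch), so $|\ALG|$ only decreases.

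Finally, Algorithm~\ref{alg:main} invokes $\ins$ once at the top for an insertion of a new vertex, and subsequently calls $\ins$ repeatedly on the min-priority element of $\cQ$ inside the while loop; $\delete$ is invoked analogously for a deletion. Each of these invocations independently preserves the invariant by the preceding case analysis, so the bound $|\ALG| \le k+1$ holds after the update step finishes.

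The primary subtlety---and the only place that could potentially violate the bound---is sub-case (a) of $\ins$, where $\ALG$ momentarily reaches size $k+2$ before line~\ref{line:delOverflow} restores it. One must verify that no other branch of $\ins$ or $\delete$ can interleave between these two lines; since both reside inside the same straight-line conditional block and Algorithm~\ref{alg:ins} is called synchronously, this is immediate. Thus the invariant is re-established by the time control returns to the caller, and the induction carries through.
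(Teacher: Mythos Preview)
Your proof is correct and follows essentially the same approach as the paper's: both argue inductively by case analysis on the branches of $\ins$, noting that only the $S=\emptyset$ branch can increase $|\ALG|$ and that line~\ref{line:delOverflow} immediately restores the bound in that case. Your version is slightly more explicit (handling $\delete$ and the while-loop iterations separately, and remarking on the transient $k+2$ state), but the underlying argument is the same.
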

\begin{proof}
After the first insertion the result is clear. We demonstrate that the claim holds inductively after each insertion. Only a call to $\ins(v)$ can increase the size of $\ALG$, so consider any such call. If $N(v) \cap \ALG$ is empty, then we add $v$ to $\ALG$, which can possibly increase its size to $k+2$ if it previously had $k+1$ elements. In this case, we remove the element with largest rank and add it to $\cQ$, maintaining the invariant. If there exists a $u^* \in N(v) \cap \ALG$ with smaller rank than $v$, we make $v$ the follower of the vertex in $ N(v) \cap \ALG$ with smallest rank, in which case the size of $\ALG$ is unaffected. In the final case, all points in $N(v) \cap \ALG$ have larger rank than $v$, in which case all of $N(v) \cap \ALG$ (which is non-empty) is made a follower of $v$ and removed from $\ALG$, thereby decreasing or not affecting the size of $\ALG$, which completes the proof. 
\end{proof}

\begin{proposition}[Correctness of Leader Mapping]
At any time step, if $\ALG \leq k$ then every vertex $v \in V$ is either contained in $\ALG$, or has a leader $\ell(v) \in \ALG$ with $(v,\ell(v)) \in E$.
\end{proposition}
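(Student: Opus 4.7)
My plan is to establish the proposition by combining two facts: (I) a structural invariant on the leader assignment that holds after every Process Update, and (II) the fact that the queue $\cQ$ is empty at the end of Process Update.

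For (I), I would prove by induction over the sequence of updates that at the end of \cref{alg:main}, every vertex $v \in V$ falls into exactly one of three categories: (a) $v \in \ALG$; (b) $v \in \cQ$ with $\ell(v) = \bot$; or (c) $v \notin \ALG \cup \cQ$ with $\ell(v) \in \ALG \cup \cQ$ and $(v,\ell(v)) \in E$. The inductive step is a case analysis of \cref{alg:ins} and \cref{alg:del}. Every non-$\bot$ assignment to $\ell$ happens inside $\ins$ and has the form $\ell(v) \gets u^*$ with $u^* \in N(v) \cap \ALG$, or $\ell(u) \gets v$ with $v$ freshly added to $\ALG$ and $u \in N(v) \cap \ALG$, so the edge requirement and the ``leader in $\ALG$'' requirement are immediate at assignment time. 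The invariant can only be violated later if the vertex $\ell(v)$ leaves $\ALG \cup \cQ$. Direct inspection of lines~\ref{line:delOverflow}, \ref{line:queue1Ins}, \ref{line:queue2Ins}, \ref{line:queue1Del}, and~\ref{line:queue2Del} shows that whenever a vertex is removed from $\ALG$ without being placed in $\cQ$ (i.e., it becomes a follower itself or is deleted), its entire former follower set is flushed to $\cQ$ with $\ell$ reset to $\bot$, moving those vertices to category~(b) and preserving the invariant.

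For (II), observe that the guard of the main loop in line~\ref{line:whileMain} contains the disjunct $|\ALG| \leq k+1$, which by \cref{claim:1} is always true, so the loop continues exactly while $\cQ \neq \emptyset$. It remains to argue termination. I would use a rank-based potential: at each iteration we pop the minimum-rank vertex $u$ from $\cQ$ and call $\ins(u, \pi(u))$, and I would show that every vertex inserted into $\cQ$ during that call has rank strictly greater than $\pi(u)$. The rejection branch on line~\ref{line:firstIfIns} cannot fire, because either $|\ALG| \leq k$ before the pop (so $|\ALG| < k+1$ when the check is made) or the loop guard has already ensured $\pi(u) < \max_{w \in \ALG} \pi(w)$; demoted followers flushed on lines~\ref{line:queue1Ins} and~\ref{line:queue2Ins} have rank exceeding that of their former leader, which in turn exceeds $\pi(u)$ by the branch conditions; and the overflow vertex on line~\ref{line:delOverflow} is the strict maximum of $\ALG$ after inserting $u$, which has rank $>\pi(u)$ in the only case where overflow can trigger (namely $|\ALG|=k+1$ and $\pi(u) < \max_{w \in \ALG} \pi(w)$ before insertion). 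Hence the minimum rank in $\cQ$ strictly increases across iterations, so the loop terminates within $|V|$ iterations with $\cQ = \emptyset$.

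Combining (I) and (II), at the end of Process Update we have $\cQ = \emptyset$, so category~(b) is vacuous. Under the hypothesis $|\ALG| \leq k$, every $v \notin \ALG$ therefore falls under category~(c), giving $\ell(v) \in \ALG$ with $(v,\ell(v)) \in E$, which is precisely the statement of the proposition. I expect the main obstacle to be the termination argument in~(II): the algorithm allows nontrivial interleaving of pops and pushes on $\cQ$ within a single Process Update, and the rank-monotonicity observation is the critical lemma that tames this process and makes the rest of the proof routine.
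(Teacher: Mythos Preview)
Your approach is sound and follows the same two-step skeleton as the paper (structural invariant plus $\cQ=\emptyset$), though you are considerably more explicit. The paper's proof is three sentences: it observes that the while-loop's exit condition together with $|\ALG|\le k$ forces $\cQ=\emptyset$, then asserts without further argument that every remaining vertex is a follower of an active leader, and notes that $\ell$ is only ever assigned along edges. Your part~(I) supplies the three-way invariant the paper leaves implicit, and your case analysis for it is correct.

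Your part~(II) goes beyond what the paper does here---the paper simply reads off the exit condition and defers termination to \cref{prop:orderedPi}---and contains an internal inconsistency. You first observe (correctly, for the guard as literally written) that the disjunct $|\ALG|\le k+1$ is always true by \cref{claim:1}, so the guard reduces to $\cQ\ne\emptyset$. But then, when arguing that the rejection branch cannot fire on a popped element with $|\ALG|=k+1$, you appeal to ``the loop guard has already ensured $\pi(u)<\max_{w\in\ALG}\pi(w)$'', which does not follow if the guard was satisfied via the always-true first disjunct. Indeed, with the guard read literally the algorithm would loop forever in that scenario. The surrounding prose and the proof of \cref{prop:orderedPi} make clear the intended disjunct is $|\ALG|\le k$; under that reading your rank-monotonicity argument is valid. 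You should either adopt the intended guard explicitly, or---as the paper does for this proposition---invoke only the exit condition and leave termination to \cref{prop:orderedPi}.
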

\begin{proof}
After processing any update, we first claim that if $\ALG \leq k$ we have $\cQ = \emptyset$. This follows from the fact that the while loop in Line \ref{line:whileMain} of Algorithm \ref{alg:main} does not terminate until one of these two conditions fails to hold. Thus if $\ALG \leq k$, every vertex $v \in V$ is either contained in $\ALG$ (i.e., an active leader), or is a follower of such an active leader, which completes the proof of the proposition, after noting that we only set $\ell(u) = v$ when $(u,v) \in E$ is an edge.
\end{proof}

\begin{lemma}[Correctness of the top-$k$ LFMIS]\label{lem:correctness}
After every time step, we have $\ALG = \LFMIS_{k+1}(G,\pi)$. 
\end{lemma}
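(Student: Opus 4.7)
The plan is to prove the lemma by induction on the time step, maintaining a small suite of structural invariants that pin down exactly where vertices sit among $\ALG$, $\cQ$, and the follower lists, along with the ranks of leaders relative to their followers. From these invariants the conclusion $\ALG = \LFMIS_{k+1}(G^t,\pi)$ follows via a short rank-induction that mirrors the greedy construction of $\LFMIS$.

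Concretely, I would maintain the following invariants to hold after the main while loop of Algorithm \ref{alg:main} terminates on each update:
\begin{enumerate}[label=(\roman*)]
\item $\ALG$ is an independent set in $G^t$ with $|\ALG| \le k+1$ (given by Proposition \ref{prop:IS} and Claim \ref{claim:1});
\item if $\cQ \neq \emptyset$, then $|\ALG| = k+1$ and $\min_{w \in \cQ} \pi(w) \ge \max_{w \in \ALG} \pi(w)$;
\item for every active vertex $v$ that lies in neither $\ALG$ nor $\cQ$, its leader satisfies $\ell(v) \in \ALG \cup \cQ$, $(v,\ell(v)) \in E$, and $\pi(\ell(v)) < \pi(v)$.
\end{enumerate}

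Granting these invariants, here is how I would derive the conclusion. Set $r^\star = \max_{w \in \ALG}\pi(w)$ if $|\ALG| = k+1$ and $r^\star = +\infty$ otherwise. I would prove by induction on the rank $\pi(v)$ of an active vertex $v$ that, whenever $\pi(v) \le r^\star$, we have $v \in \LFMIS(G^t,\pi) \iff v \in \ALG$. If $v \in \ALG$, then by (i) every smaller-ranked neighbor of $v$ lies outside $\ALG$ and, by the inductive hypothesis, outside $\LFMIS$, so $v$ survives the greedy process and enters $\LFMIS$. If instead $v \notin \ALG$ and $\pi(v) \le r^\star$, then (ii) rules out $v \in \cQ$ (since distinct ranks, which hold w.h.p., would force $v$ to be the $\ALG$-vertex of rank $r^\star$), so $v$ is a follower; (iii) then gives $u = \ell(v)$ with $\pi(u) < \pi(v) \le r^\star$ and $(v,u) \in E$, and applying (ii) once more forces $u \in \ALG$, after which the inductive hypothesis places $u \in \LFMIS$, killing $v$. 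This yields $\ALG = \LFMIS(G^t,\pi) \cap \{w : \pi(w) \le r^\star\}$; the case $|\ALG| \le k$ then gives $\ALG = \LFMIS = \LFMIS_{k+1}$, while the case $|\ALG| = k+1$ identifies $\ALG$ with the $k+1$ smallest-ranked elements of $\LFMIS$, i.e., $\LFMIS_{k+1}$.

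The bulk of the work is to show that (ii) and (iii) are preserved across each update, which I would do by a case analysis on the three routines. For Algorithm \ref{alg:del}, a deletion of a follower only shrinks $\cF_{\ell(v)}$, while deletion of an (active or inactive) leader moves its followers to $\cQ$ with $\ell \leftarrow \bot$; the subsequent while loop of Algorithm \ref{alg:main} re-inserts $\cQ$-vertices until either $\cQ$ is empty or the guard fails, which is precisely condition (ii). For Algorithm \ref{alg:ins}, each of the four branches is handled separately: the fast-reject branch places $v$ in $\cQ$ with $\pi(v)$ above $\max_{w \in \ALG}\pi(w)$, maintaining (ii); the $S = \emptyset$ branch adds $v$ to $\ALG$ and, in the overflow subcase, ejects the max-rank leader to $\cQ$ without disturbing its followers—this is consistent with (iii) precisely because (iii) allows $\ell$ to point into $\cQ$; the $\pi(u^\star) < \pi(v)$ branch makes $v$ a follower of $u^\star$ and, when $v$ was itself a leader, evicts $\cF_v$ into $\cQ$ with $\ell \leftarrow \bot$; and the $\pi(u^\star) > \pi(v)$ branch swaps $v$ into $\ALG$ in place of $S$, making each $u \in S$ a follower of $v$ while relocating $\bigcup_{u \in S} \cF_u$ into $\cQ$. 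In each branch, the finite set of vertices whose leader assignment is modified is exactly the set on which the algorithm re-establishes $\ell$ (or sends to $\cQ$), so (iii) is restored; the subsequent while loop then restores (ii).

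The main obstacle is the bookkeeping in the insertion case analysis, and in particular handling the fact that a follower $w$ may keep pointing to a leader $u = \ell(w)$ that has just been evicted from $\ALG$ into $\cQ$. This is exactly why (iii) is formulated to allow $\ell(v) \in \ALG \cup \cQ$ and why (ii) is stated with the rank-separation bound: (ii) guarantees that any $\cQ$-leader has rank strictly larger than the accessible range $[0, r^\star]$ used in the rank-induction, which is what prevents a $\cQ$-leader from interfering with the identification of $\ALG$ with $\LFMIS_{k+1}$. Once the invariants are verified through this case analysis—most delicately in the overflow branch and in the while-loop draining of $\cQ$—the rank-induction above closes the proof.
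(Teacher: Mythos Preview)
Your proposal is correct and follows essentially the same approach as the paper's proof. Both arguments hinge on the same three facts---independence of $\ALG$, the while-loop termination condition (your invariant (ii)), and that $\ell(u)=v$ only when $(u,v)\in E$ and $\pi(v)<\pi(u)$ (your invariant (iii))---and both conclude by an induction over ranks; the paper just runs this induction inline by matching the $i$-th smallest elements of $\ALG$ and $\LFMIS_{k+1}$ directly, whereas you first isolate the invariants and then run the rank-induction separately.
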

\begin{proof}
Order the points in $\ALG = (v_1,\dots,v_r)$ and $\LFMIS_{k+1}(G,\pi) = (u_1,\dots,u_s)$ by rank. We prove inductively that $v_i = u_i$. Firstly, note that $u_1$ is the vertex with minimal rank in $G$. As a result, $u_1$ could not be a follower of any point, since we only set $\ell(u) = v$ when $\pi(v) < \pi(u)$. Thus $u_1$ must either be an inactive leader (as it cannot be equal to $v_j$ for $j > 1$) or an unclustered point. In both cases, one has $u_1 \in \cQ$, which we argue cannot occur. To see this, note that at the end of processing the update, the while loop in  Line \ref{line:whileMain} of Algorithm \ref{alg:main} would necessarily remove $u_1$ from $\cQ$ and insert it. It follows that we must have $u_1 = v_1$.

In general, suppose we have $v_i = u_i$ for all $i \leq j$ for some integer $j < s$. We will prove $v_{j+1} = u_{j+1}$. First suppose $r , s \geq j+1$.
Now by definition of the LSFMIS, the vertex $u_{j+1}$ is the smallest ranked vertex in $V \setminus  \cup_{i \leq j} N(v_i) \cup \{u_i\}$. Since we only set $\ell(u) = v$ when $(u,v) \in E$ is an edge, it follows that $u_{j+1}$ cannot be a follower of $u_{i} = v_i$ for any $i \leq j$. Moreover, since we only set $\ell(u) = v$ when $\pi(v) < \pi(u)$, it follows that $u_{j+1}$ cannot be a follower of $v_{i}$ for any $i > j$, since $\pi(v_i) \geq \pi(u_{j+1})$ for all $i > j$. 
Thus, if $v_{j+1} \neq u_{j+1}$, it follows that either $u_{j+1} \in \cQ$, or $u_{j+1}$ is a follower of some vertex $u' \in \cQ$ with smaller rank than $u_{j+1}$. Then, similarly as above, in both cases the while loop in  Line \ref{line:whileMain} of Algorithm \ref{alg:main} would necessarily remove $u_{j+1}$ (or $u'$ in the latter case) from $\cQ$ and insert it, because $\pi(u_{j+1}) < \pi(v_r)$, and in the latter case if such a $u'$ existed we would have $\pi(u') < \pi(u_{j+1}) < \pi(u_r)$. We conclude that $v_{j+1} = u_{j+1}$.

The only remaining possibility is $r \neq s$. First, if $r > s$, by Claim \ref{claim:1} we have $r \leq k+1$, and by  Proposition \ref{prop:IS} $\ALG$ forms a independent set. Thus $v_1,v_2,\dots,v_s,v_{s+1}$ is an independent set, but since $v_i = u_i$ for $i \leq s$ and $\LFMIS_{k+1}(G,\pi) = \{u_1,\dots,u_s\}$ is a maximal independent set whenever $s \leq k$, this yields a contradiction. Finally, if $r < s$, consider the vertex $u_{r+1}$. Since $u_i = v_i$ for all $i \leq r$, $u_{r+1}$ cannot be a follower of $v_i$ for any $i \in [r]$. As a result, it must be that either $u_{r+1} \in \cQ$ or $u_{r+1}$ is a follower of a vertex in $\cQ$. In both cases, 
at the end of the last update, we had $|\ALG| = r \leq k$ and $\cQ \neq \emptyset$, which cannot occur as the while loop in  Line \ref{line:whileMain} of Algorithm \ref{alg:main} would not have terminated. It follows that $r=s$, which completes the proof.
\end{proof}

\subsection{Amortized Update Time Analysis}\label{sec:amortized}
We now demonstrate that the above algorithm runs in amortized $\tilde{O}(k)$-time per update. We begin by proving a structural result about the behavior of our algorithm. In what follows, let $G^t$ be the state of the graph \textit{after} the $t$-th update. Similarly, let $\ell_t(v) \in V \cup \{\bot\}$ be the value of $\ell(v)$ after the $t$-th update. 

\begin{proposition}\label{prop:orderedPi}
Let $\ins(v_1,\pi(v_1)),\ins(v_2,\pi(v_2)),\dots,\ins(v_r,\pi(v_r))$ be the ordered sequence of calls to the $\ins$ function (Algorithm \ref{alg:ins}) which take places during the processing of any individual update in the stream. Then we have $\pi(v_1) < \pi(v_2) < \cdots < \pi(v_r)$. As a corollary, for any vertex $v$ the function $\ins(v,\pi(v))$ is called at most once per time step. 
\end{proposition}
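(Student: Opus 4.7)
The plan is to reduce the claim to a single key lemma about $\ins$: any vertex placed into the queue $\cQ$ during an invocation $\ins(v_i, \pi(v_i))$ has priority strictly greater than $\pi(v_i)$, except possibly in the first-if branch (Line \ref{line:firstIfIns}) which re-queues $v_i$ itself. Observe that within any Process Update call the $\ins$ invocations form a two-stage sequence: an optional direct call on the newly inserted vertex, followed by while-loop extractions of $\arg\min_{w \in \cQ} \pi(w)$. So it suffices to (i) establish the key lemma, (ii) argue the first-if branch only fires on the very first direct call and terminates the loop immediately, and (iii) conclude strict monotonicity by combining these with the fact that queue extractions always remove the current minimum.

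For the key lemma I would case-analyze each queue-push site in $\ins$. On Line \ref{line:delOverflow} the evicted vertex is $\arg\max_{u' \in \ALG} \pi(u')$; since the first-if was not taken we had $\pi(v_i) < \max_{u' \in \ALG} \pi(u')$ (or else $|\ALG| \le k$ before the insertion, in which case no eviction happens), so the evicted max has strictly larger priority than $v_i$. On Line \ref{line:queue1Ins}, where $v_i$ had followers $\cF_{v_i}$, every $w \in \cF_{v_i}$ received $\ell(w) = v_i$ at an earlier time, and inspecting the only place this assignment occurs shows it requires $\pi(v_i) < \pi(w)$. On Line \ref{line:queue2Ins} the queued $w \in \cF_u$ with $u \in S$ satisfies $\pi(w) > \pi(u) \ge \pi(u^*) > \pi(v_i)$ by the branch condition. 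An analogous check disposes of the additions made by $\delete(v^*)$, each of which has priority exceeding $\pi(v^*)$.

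Next I would show that the first-if branch only fires on $\ins(v_1)$ when $v_1$ is a brand-new inserted point. Suppose instead that $v_i$ with $i \ge 2$ was extracted from $\cQ$ by the while loop. Then the loop guard was satisfied just before extraction, so either the size-of-$\ALG$ disjunct held (which rules out $|\ALG| = k+1$ needed by the first-if) or $\min \cQ < \max_{u \in \ALG} \pi(u)$. Since $v_i = \min \cQ$ at the moment of extraction, the latter forces $\pi(v_i) < \max_{u \in \ALG} \pi(u)$, contradicting the first-if requirement $\pi(v_i) > \max_{u \in \ALG} \pi(u)$. If the first-if does fire on $\ins(v_1)$, then $\ALG$ is unchanged and $\pi(v_1) > \max \ALG$, so the while-loop guard fails immediately (via the end-of-update invariant $\min \cQ_0 \ge \max \ALG_{\text{prev}}$, which I would establish by a straightforward induction on the number of updates), giving $r = 1$ and the proposition vacuously.

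The final step packages everything. For $i \ge 2$: at extraction $v_i = \min \cQ$, so after removing $v_i$ every remaining element has priority above $\pi(v_i)$, and by the key lemma the additions made inside $\ins(v_i)$ also exceed $\pi(v_i)$; hence $v_{i+1} = \min \cQ$ satisfies $\pi(v_{i+1}) > \pi(v_i)$. For $\pi(v_1) < \pi(v_2)$ in the insertion case (with $\ins(v_1)$ not in the first-if), the end-of-update invariant combined with the negation of the first-if condition forces $\pi(v_1) < \min \cQ_0$, and the lemma handles the new additions, giving $\pi(v_2) > \pi(v_1)$. The corollary that $\ins(v, \pi(v))$ is called at most once per update is then immediate from strict monotonicity and the almost-sure distinctness of $\pi$-values. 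I expect the main obstacle to be the boundary case between the direct call $\ins(v_1)$ and the queue-driven calls, since without the end-of-update queue invariant a small $\pi(v_1)$ could in principle be preceded in $\cQ_0$ by something even smaller; getting the invariant stated in the right strict form is the crux.
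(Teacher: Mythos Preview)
Your proposal is correct and follows essentially the same approach as the paper. The paper organizes the argument around two facts---(1) whenever $\ins(v_i,\pi(v_i))$ is called, $\pi(v_i)$ is smaller than every rank currently in $\cQ$, and (2) a call to $\ins(v_i,\pi(v_i))$ only adds vertices of strictly larger rank to $\cQ$---and then combines them; your key lemma is precisely fact~(2), and your handling of the first-if branch together with the end-of-update invariant (which is nothing more than the negation of the while-loop guard, so no separate induction is required) is exactly how the paper establishes fact~(1) for the initial call.
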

\begin{proof}
Assume $r>1$, since otherwise the claim is trivial. 
To prove the proposition it will suffice to show two facts: $(1)$ whenever a call to $\ins(v_i,\pi(v_i))$ is made, $\pi(v_i)$ is smaller than the rank of all vertices in the queue $\cQ$, and $(2)$ a call to $\ins(v_i,\pi(v_i))$ can only result in vertices with larger rank being added to $\cQ$. 

To prove $(1)$, note that after the first call to $\ins(v_1,\pi(v_1))$, which may have been triggered directly as a result of $v_1$ being added to the stream at that time step, all subsequent calls to $\ins$ can only be made via the while loop of Line \ref{line:whileMain} in Algorithm \ref{alg:main}, where the point with smallest rank is iteratively removed from $\cQ$ and inserted. Thus, fact $(1)$ trivially holds for all calls to $\ins$ made in this while loop, and it suffices to prove it for  $\ins(v_1,\pi(v_1))$ in the case that $v_1$ is added to the stream at the current update (if $v_1$ was added from the queue, the result is again clear). Now if $\cQ \neq \emptyset$ at the moment $\ins(v_1,\pi(v_1))$ is called, it must be the case that $|\ALG| = k+1$ and $\min_{w \in \cQ} \pi(w) > \max_{w \in \ALG} \pi(w)$ (otherwise the queue would have been emptied at the end of the prior update). Thus, if it were in fact the case that $\pi(v_1) >\min_{w \in \cQ} \pi(w)$, then we also have $\pi(v_1) > \max_{w \in \ALG} \pi(w)$, and therefore the call to $\ins(v_1,\pi(v_1))$ would result in inserting $v_1$ into $\cQ$ on Line \ref{line:firstIfIns} of Algorithm \ref{alg:ins}. Such an update does not modify $\ALG$, and does not change the fact that $\min_{w \in \cQ} \pi(w) > \max_{w \in \ALG} \pi(w)$, thus the processing of the update will terminate after the call to $\ins(v_1,\pi(v_1))$ (contradicting the assumption that $r>1$), which completes the proof of $(1)$.

To prove $(2)$, note that there are only three ways for a call to $\ins(v_i,\pi(v_i))$ to result in a vertex $u$ being added to $\cQ$. In the first case, if $\ell(u)$ was an active leader which was made a follower of $v_i$ as a result of $\ins(v_i,\pi(v_i))$, then we have $\pi(u) < \pi(\ell(u)) < \pi(v_i)$. Next, we could have had $\ell(u) = v_i$ (in the event that $v_i$ was an inactive leader being reinserted from $\cQ$), in which case $\pi(u) < \pi(v_i)$. Finally, it could be the case that $u$ was the active leader in $\ALG$ prior to the call to $\ins(v_i,\pi(v_i))$, and was then removed from $\ALG$ as a result of the size of $\ALG$ exceeding $k+1$ and $u$ having the largest rank in $\ALG$. This can only occur if $v_i$ was added to $\ALG$ and had smaller rank than $u$, which completes the proof of $(2)$.

Since by $(1)$ every time $\ins(v_i,\pi(v_i))$ is called $\pi(v_i)$ is smaller than the rank of all vertices in the queue, and by $(2)$ the rank of all new vertices added to the queue as a result of $\ins(v_i,\pi(v_i))$ will continue to be larger than $\pi(v_i)$, it follows that $v_{i+1}$, which by construction must be the vertex with smallest rank in $\cQ$ after the call to $\ins(v_i,\pi(v_i))$, must have strictly larger rank than $v_i$, which completes the proof of the proposition. 
\end{proof}

The following proposition is more or less immediate. It implies, in particular, that a point can only be added to $\cQ$ once per time step (similarly, $v$ can be removed from $\cQ$ once per time step). 

\begin{proposition}\label{prop:afterQ}
Whenever a vertex $v$ in the queue $\cQ$ is removed and $\ins(v,\pi(v))$ is called, the vertex $v$ either becomes a follower of an active leader, or an active leader itself. 
\end{proposition}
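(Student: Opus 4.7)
My plan is to derive the proposition as a nearly immediate consequence of the loop guard in Algorithm \ref{alg:main} together with a short case analysis on the branches of $\ins$. First I would observe that the only path along which a vertex $v$ is both removed from $\cQ$ and then passed to $\ins(v,\pi(v))$ is the while-loop on Line \ref{line:whileMain} of Algorithm \ref{alg:main}: the deletion routine in Algorithm \ref{alg:del} may remove a queued vertex, but it never calls $\ins$ on it. So it suffices to analyze one iteration of that while-loop.

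Next I would record the precondition forced by the loop guard at the moment of extraction. Since $v = \arg\min_{w \in \cQ}\pi(w)$ when $v$ is removed, the guard (read in light of Claim \ref{claim:1}, which ensures $|\ALG|$ is always at most $k+1$) gives that at that moment either $|\ALG| \leq k$, or $\pi(v) < \max_{u \in \ALG}\pi(u)$; this is precisely how the guard is used in the proof of Proposition \ref{prop:orderedPi}. This precondition is exactly the negation of the guard on Line \ref{line:firstIfIns} of Algorithm \ref{alg:ins}, so the first branch of $\ins$ cannot trigger and in particular $v$ is not immediately pushed back into $\cQ$.

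Then I would split on the three remaining branches of $\ins(v,\pi(v))$, writing $S := \ALG \cap N(v)$ and $u^* := \arg\min_{u \in S}\pi(u)$ when $S \neq \emptyset$. When $S = \emptyset$, $v$ is added to $\ALG$; the only way $v$ could be removed again is the overflow step on Line \ref{line:delOverflow} that evicts the max-rank vertex when $|\ALG|$ reaches $k+2$. But if this happens then $|\ALG|$ was already $k+1$ before adding $v$, so the precondition forces $\pi(v) < \max_{u \in \ALG_{\mathrm{before}}}\pi(u)$, hence the evicted vertex is some $u \neq v$ and $v$ stays in $\ALG$ as an active leader. When $S \neq \emptyset$ and $\pi(u^*) < \pi(v)$, the algorithm sets $\ell(v) = u^*$ and appends $v$ to $\cF_{u^*}$, making $v$ a follower of the active leader $u^*$. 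Finally, when $S \neq \emptyset$ and $\pi(u^*) > \pi(v)$, every vertex of $S$ is demoted out of $\ALG$ and made a follower of $v$, and $v$ takes their place as a new active leader in $\ALG$.

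The only mildly substantive step is verifying the overflow subcase in the first branch; the other cases reduce to direct inspection of the pseudocode, so I do not anticipate a real obstacle beyond bookkeeping.
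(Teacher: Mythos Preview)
Your proposal is correct and follows essentially the same approach as the paper's proof, which is a terse two-sentence case split on whether $v$ has a smaller-rank neighbor in $\ALG$. You have simply fleshed out details the paper leaves implicit—most notably why the first branch of $\ins$ (reinsertion into $\cQ$) cannot fire given the while-loop guard, and why the overflow eviction on Line~\ref{line:delOverflow} cannot hit $v$ itself—but the underlying argument is identical.
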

\begin{proof}
If $v$ shares an edge with a vertex in $\ALG$ with smaller rank, it becomes a follower of such a vertex. Otherwise, all vertices in $N(v) \cap \ALG$ become followers of $v$, and $v$ becomes an active leader by construction (possibly resulting in an active leader of larger rank to be removed from $\ALG$ as a result of it no longer being contained in $\LFMIS_{k+1}$).
\end{proof}

Equipped with the prior structural propositions, our approach for bounding the amortized update time is to first observe that, on any time step $t$, our algorithm only attempts to insert a vertex~$v$, thereby spending $O(k)$ time to search for edges between $v$ and all members of $\ALG$, if either $v$ was the actual vertex added to the stream on step $t$, or when $v$ was added to $\cQ$ on step $t$ or before. Thus, it will suffice to bound the total number of vertices which are ever added into $\cQ$ --- by paying a cost $O(k + \log n)$ for each vertex $v$ which is added to the queue, we can afford both the initial $O(\log n)$ cost of adding it to the priority queue, as well as the $O(k)$ runtime cost of possibly later reinserting $v$ during the while loop in Line \ref{line:whileMain} of Algorithm \ref{alg:main}. We formalize this in the following proposition.

\begin{proposition}\label{prop:T}
Let $T$ be the total number of times that a vertex is inserted into the queue $\cQ$ over the entire execution of the algorithm, where two insertions of the same vertex $v$ on separate time steps are counted as distinct insertions.  Then the total runtime of the algorithm, over a sequence of $M$ insertions and deletions, is at most $O(T (k + \log n) + Mk)$, where $n$ is the maximum number of vertices active at any given time.
\end{proposition}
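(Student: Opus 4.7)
My plan is to charge the total runtime to three buckets: (a) a constant-per-update bookkeeping cost, (b) the cost of $\ins$ calls, and (c) the cost of $\delete$ calls and priority-queue operations. I will then bound the number of $\ins$ calls in terms of $M$ and $T$, and show that each individual $\ins$ or $\delete$ call takes $O(k+\log n)$ time, except for the work of pushing vertices into $\cQ$, which I will charge directly to $T$.

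First I would observe that a single invocation of $\ins(v,\pi(v))$ (Algorithm~\ref{alg:ins}) performs $O(1)$ bookkeeping, computes $S=\ALG\cap N(v)$ in time $O(k)$ using the fact that $|\ALG|\le k+1$ (which is Claim~\ref{claim:1}), and otherwise does only constant-time work per enqueue operation plus the enqueue itself. Every enqueue costs $O(\log n)$ for the priority queue, and every vertex enqueued on Line~\ref{line:queue1Ins}, Line~\ref{line:queue2Ins}, or Line~\ref{line:delOverflow} is counted once in $T$ by definition. Hence the ``intrinsic'' cost of one $\ins$ call, aside from enqueues, is $O(k)$, and each enqueue triggered by the call can be charged $O(\log n)$ against $T$. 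A symmetric analysis handles $\delete(v)$: the only non-constant work inside Algorithm~\ref{alg:del} is (i) the removal of $v$ from $\cQ$, taking $O(\log n)$, and (ii) iterating through $\cF_v$ to push its elements into $\cQ$ on Line~\ref{line:queue1Del} or Line~\ref{line:queue2Del}, and each such push is again one unit in $T$ costing $O(\log n)$. So each $\delete$ call costs $O(\log n)$ plus $O(\log n)$ per enqueue that it performs.

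Next I would bound the total number of $\ins$ calls across the whole stream. Each update triggers at most one ``initial'' $\ins$ (the one fired directly from Algorithm~\ref{alg:main} when $\sigma=+$), contributing at most $M$ calls. Every other $\ins$ call happens inside the while-loop of Line~\ref{line:whileMain} of Algorithm~\ref{alg:main}, which first removes a vertex from $\cQ$. By Proposition~\ref{prop:afterQ}, such a removal happens at most once per insertion into $\cQ$, so the number of such $\ins$ calls is at most $T$. (Proposition~\ref{prop:orderedPi} is not strictly needed here, but it reassures us that each $\ins$ call on a given step processes a distinct vertex.) In total there are at most $M+T$ calls to $\ins$, each costing $O(k)$ in intrinsic work, for a contribution of $O((M+T)k)$.

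Finally I would combine everything. The while-loop in Line~\ref{line:whileMain} performs at most $T$ queue removals over the execution, each costing $O(\log n)$ plus the cost of the resulting $\ins$ call (already counted). Per update, Algorithm~\ref{alg:main} does $O(1)$ extra work outside these calls, contributing $O(M)$. The $\delete$ operations contribute at most $O(M\log n)$ for the bookkeeping plus $O(\log n)$ per enqueue (charged to $T$). Summing the three buckets yields
\[
O(Mk) \;+\; O((M+T)\,k) \;+\; O(T\log n) \;=\; O\bigl(T(k+\log n) + Mk\bigr),
\]
as claimed. The only subtle point, and the part I would be most careful about, is making sure that vertices transferred from one structure to another (e.g.\ an active leader demoted to $\cQ$ on Line~\ref{line:delOverflow}, or a follower re-enqueued when its leader is evicted) are all accounted for by a single unit in $T$; Propositions~\ref{prop:orderedPi} and~\ref{prop:afterQ} guarantee that no such vertex is enqueued, removed, and re-enqueued within the processing of a single update without the corresponding increment to $T$.
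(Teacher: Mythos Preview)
Your proposal is correct and follows essentially the same charging argument as the paper: bound the number of $\ins$ calls by $M+T$ (one per stream insertion plus one per dequeue), charge each such call $O(k)$ for computing $\ALG\cap N(v)$, and charge all priority-queue operations at $O(\log n)$ apiece to $T$. One small wrinkle: you attribute $O(M\log n)$ to deletions for removing $v$ from $\cQ$ and then silently drop it from the final sum; the clean fix (which the paper uses) is to observe that any such removal corresponds to a prior insertion into $\cQ$ and hence is already paid for by $T$, so the term is really $O(T\log n)$ and absorbed.
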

\begin{proof}
Note that the only actions taken by the algorithm consist of adding and removing vertices $v$ from $\cQ$ (modifying the value of $\ell(v)$ in the process, and possibly deleting $\cF_v$), and computing $\ALG \cap N(v)$ for some vertex $v$. The latter requires $O(k)$ time since we have $|\ALG| \leq k+1$ at all time steps. Given $O(\log n)$ time to insert or query from a priority queue with at most $n$ items, we have that $O(T \log n)$ upper bounds the cost of all insertions and deletions of points to $\cQ$. Moreover, all calls to compute $\ALG \cap N(v)$ for some vertex $v$ either occur when $v$ is the vertex added to the stream on that time step (of which there is at most one), or when $v$ is inserted after previously having been in $\cQ$. By paying each vertex $v$ a sum of $O(k)$ when it is added to $\cQ$, and paying $O(k)$ to each vertex when it is first added to the stream, it can afford the cost of later computing $\ALG \cap N(v)$  when it is removed. This results in a total cost of $O(T(k+\log n) + Mk)$, which completes the proof. 
\end{proof}

In what follows, we focus on bounding the quantity $T$. To accomplish this, observe that a vertex $v$ can be added to $\cQ$ on a given time step $t$ in one of three ways:
\begin{figure}[H]
    \centering
  
\begin{Frame}[Scenarios where $v$ is added to $\cQ$]
    \begin{enumerate}
\item The vertex $v$ was added in the stream on time step $t$. In this case, $v$ is added to $\cQ$ when the if statement on Line \ref{line:firstIfIns} of Algorithm \ref{alg:ins} executes. 
\item The vertex $v$ is added to $\cQ$ when it was previously led by $\ell_{t-1}(v) \in V$, and either $\ell_{t-1}(v)$ becomes a follower of another leader during time step $t$ (resulting in $v$ being added to $\cQ$), or $\ell_{t-1}(v)$ is deleted. This can occur in either Lines \ref{line:queue1Ins} or \ref{line:queue2Ins}  of Algorithm \ref{alg:ins} for the first case, or in Lines \ref{line:queue1Del} or \ref{line:queue2Del} of Algorithm \ref{alg:del} in the case of  $\ell_{t-1}(v)$  being deleted. 
\item The vertex $v$ was previously in $\LFMIS_{k+1}$, and subsequently left $\LFMIS_{k+1}$ because $|\LFMIS_{k+1}| = k+1$ and a new vertex $u$ was added to $\LFMIS_{k+1}$ with smaller rank. This occurs in Line \ref{line:delOverflow} of Algorithm \ref{alg:ins}. 
    \end{enumerate}
\end{Frame}

\end{figure}

Obviously, the first case can occur at most once per stream update, so we will focus on bounding the latter two types of additions to $\cQ$. For any step $t$, define $\cA_\pi^t$ to be the number of vertices that are added to $\cQ$ as a result of the second form of insertions above. Namely, $\cA_\pi^t = |\{v \in G^{t} : \ell_{t-1}(v) \in V, \text{ and } \ell_t(v) \neq \ell_{t-1}(v) \}|$. Next, define $\cB_\pi^t$ to be the number of leaders which were removed from $\ALG$ Line \ref{line:delOverflow} of Algorithm \ref{alg:ins} (i.e., insertions into $\cQ$ of the third kind above). Letting $T$ be as in Proposition \ref{prop:T}, we have $T \leq  M+  \sum_t \cA_\pi^t + \cB_\pi^t $.

To handle $\sum_t \cA_\pi^t$ and $\sum_t \cB_\pi^t$, we demonstrate that each quantity can be bounded by the total number of times that the \textit{eliminator} of a vertex changes. Recall from Section \ref{sec:kCenters} that, given a graph $G=(V,E)$, $v \in V$, and ranking $\pi:V \to [0,1]$, the eliminator of $v$, denoted $\elim_{G,\pi}(v)$, is defined as the vertex of smallest rank in the set $(N(v) \cup \{v\} ) \cap \LFMIS(G,\pi)$. Now define $\cC_\pi^t$ to be the number of vertices whose eliminator changes after time step $t$. Formally, for any two graphs $G,G'$ differing in at most once vertex, we define 
$\cC_\pi(G,G') = \{v \in V | \elim_{G,\pi}(v) \neq \elim_{G',\pi}(v)\}$, and set $\cC_\pi^t = |\cC_\pi(G^{t-1},G^{t})|$. %
We now demonstrate that $\sum_t \cC_\pi^t$ deterministically upper bounds both $\sum_t \cA_\pi^t$ and $\sum_t \cB_\pi^t$.

\begin{lemma}\label{lem:main}
Fix any ranking $\pi: V \to [0,1]$. Then we have $\sum_t \cA_\pi^t \leq 5 \sum_t \cC_\pi^t$, and moreover for any time step $t$ we have $\cB_\pi^t \leq \cC_\pi^t$.
\end{lemma}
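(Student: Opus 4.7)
The plan is to prove both inequalities via charging arguments that assign the relevant events (overflow evictions for $\cB$, leader changes for $\cA$) to distinct eliminator changes of the underlying graph transition $G^{t-1} \to G^t$.

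For the bound $\cB_\pi^t \leq \cC_\pi^t$, I would work time step by time step. Each overflow occurs inside some $\ins(v,\pi(v))$ call when $v$ has $S = \ALG \cap N(v) = \emptyset$ and joins $\ALG$ of size $k+1$, causing the max-rank element to be evicted to $\cQ$ at Line \ref{line:delOverflow}. I plan to argue that the inserted vertex $v$ responsible for the overflow must satisfy $\elim_{G^{t-1},\pi}(v) \ne \elim_{G^t,\pi}(v)$: intuitively, if $v$ had the same eliminator in both graphs, then $v$ would have been stably placed in the data structure before time step $t$ started (either as an active leader, a follower of an unchanged leader, or in $\cQ$ behind an unchanged threshold), and no code path would have triggered $\ins(v)$ followed by its promotion into $\ALG$ during this step. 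By Proposition \ref{prop:orderedPi}, $\ins$ is called on each vertex at most once per time step, so distinct overflows in the same time step correspond to distinct vertices $v$, and therefore to distinct eliminator changes.

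For the bound $\sum_t \cA_\pi^t \leq 5 \sum_t \cC_\pi^t$, the difficulty is that a single update can cause many simultaneous leader changes (e.g., when a leader $w$ with many followers is deleted via Line \ref{line:queue2Del}, or demoted via Lines \ref{line:queue1Ins}, \ref{line:queue2Ins}, \ref{line:queue1Del}), while only inducing a bounded number of eliminator changes at that time step. The charging must therefore be amortized across time: for each leader change of $v$ at time $t$, I would charge either an eliminator change of $v$ at time $t$ directly, or fall back to an eliminator change that arose at the earlier time step $s < t$ when $\ell(v) = w$ was originally assigned. The key point is that when $\ell(v) = w$ is set (in the $\pi(u^*) < \pi(v)$ branch of $\ins$), the algorithm has just certified that $w$ is the minimum-rank neighbor of $v$ in the current $\ALG$, so $w$ is closely tied to $\elim_{G^s,\pi}(v)$ at that moment; between $s$ and $t$, every subsequent ``movement'' of $v$'s leader can be matched with an eliminator change along the way. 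The constant $5$ is the overcharge from a case analysis over the four code paths (Lines \ref{line:queue1Ins}, \ref{line:queue2Ins}, \ref{line:queue1Del}, \ref{line:queue2Del}) where leader changes originate: each eliminator change absorbs at most a constant number of such charges, coming from $v$ itself and from the demoted neighbor $w$.

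The main obstacle is the second inequality. The core difficulty is that $\ell(v)$ is an intentionally stale proxy for $\elim_{G,\pi}(v)$, so at the moment $\ell(v)$ changes, one cannot argue locally that $v$'s eliminator just changed. I expect the formal proof to require an invariant maintained inductively at the end of each time step, asserting roughly that whenever $\ell(v) = w \ne \bot$, there exists a ``witness'' neighbor in $\LFMIS(G^t,\pi)$ of rank at most $\pi(w)$, so that every discrepancy between $\ell$ and $\elim$ can be accounted for by an eliminator change already charged. Verifying this invariant case by case across the four relevant code lines, and checking that the charging never double-counts an eliminator change more than five times globally, will be the technical crux.
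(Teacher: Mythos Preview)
Your plan for $\cB_\pi^t \leq \cC_\pi^t$ is essentially the paper's argument: charge each overflow eviction to the vertex $u$ that was \emph{inserted} into $\ALG$ and caused it, and argue that $u$'s eliminator changed. The paper's justification is more direct than your ``stably placed'' intuition: if $u$ kicks out some $v$ on Line~\ref{line:delOverflow}, then $\pi(u)<\pi(v)$ and $v\in\LFMIS_{k+1}(G^{t-1},\pi)$, so $u\notin\LFMIS(G^{t-1},\pi)$; but $u\in\LFMIS(G^{t},\pi)$ by correctness, so $\elim_{G^{t-1},\pi}(u)\neq u=\elim_{G^{t},\pi}(u)$. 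Distinctness of the $u$'s follows from Proposition~\ref{prop:orderedPi} (the inserted ranks are increasing during a time step, so no inserted $u$ is later evicted).

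Your plan for $\sum_t \cA_\pi^t \leq 5\sum_t \cC_\pi^t$ has a genuine gap, and your diagnosis of where the constant $5$ comes from is off. The paper does \emph{not} do a global charging scheme across the four code lines with ``$v$ itself and the demoted neighbor $w$'' absorbing charges; the factor $5$ is not a code-path overcharge. Instead, the paper fixes a single vertex $v$ and compares the two integer-indexed sequences $\lambda_t:=\ell_t(v)$ and $\sigma_t:=\elim_{G^t,\pi}(v)$, proving two structural facts: (i) whenever $\lambda$ changes to a value in $V$, that value equals $\sigma$ at that moment (so the sequences realign); and (ii) whenever $\lambda$ drops from an aligned value to $\bot$, then $\sigma$ must change at that same step. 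From (i) and (ii) a pure sequence argument shows that in any window of three consecutive $\lambda$-changes there is at least one $\sigma$-change; accounting for shared interval endpoints gives one $\sigma$-change per five $\lambda$-changes. Summing over vertices yields the bound.

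Your ``fall back to an eliminator change that arose at the earlier time step $s$'' does not work as stated: at the moment $\ell(v)$ was set to $w$ you only know $w=\elim_{G^s,\pi}(v)$ (this is exactly the paper's claim (i)), not that the eliminator \emph{changed} at $s$. What you actually need is that $\sigma$ changed somewhere in the interval $(s,t]$, and then an argument that no single $\sigma$-change gets charged more than a constant number of times; that is precisely the per-vertex sequence analysis above, not the code-path case split you outline. Your proposed invariant (a witness in $\LFMIS$ of rank $\le \pi(w)$) is correct but too weak for the counting: the paper needs the exact alignment $\ell_t(v)=\elim_{G^t,\pi}(v)$ at each non-$\bot$ update, not merely a bound on the rank.
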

\begin{proof}
We first prove the second statement. Fix any time step $t$, and let $v_1,\dots,v_r$ be the $r = |\cB_\pi^t|$ vertices which were removed from $\ALG$, ordered by the order in which they were removed from $\ALG$. For this to occur, we must have inserted at least $r$ vertices $u_1,\dots,u_r$ into $\ALG$ which were not previously in $\ALG$ on the prior step; in fact, Line \ref{line:delOverflow} of Algorithm \ref{alg:ins} induces a unique mapping from each $v_i$ to the vertex $u_i$ which forced it out of $\ALG$ during a call to $\ins(u_i,\pi(u_i))$. Note that, under this association, we have $\pi(u_i) < \pi(v_i)$ for each $i$. We claim that the eliminator of each such $u_i$ changed on time step $t$.

Now note that $\{u_1,\dots,u_r\}$ and $\{v_1,\dots,v_r\}$ are disjoint, since $u_i$ was inserted before $u_{i+1}$ during time step $t$ by the definition of the ordering, and so $\pi(u_1) < \pi(u_2) < \dots < \pi(u_r)$ by Proposition \ref{prop:orderedPi}, so no $u_i$ could be later kicked out of $\ALG$ by some $u_j$ with $j > i$. It follows that none of $u_1,\dots,u_r$ were contained in $\LFMIS_{k+1}(G^{t-1},\pi)$, but they are all in $\LFMIS_{k+1}(G^{t},\pi)$. Now note that it could not have been the case that $u_i \in \LFMIS(G^{t-1},\pi)$, since we had $v_i \in \LFMIS_{k+1}(G^{t-1},\pi)$ but $\pi(u_i) < \pi(v_i)$. Thus $u_i \notin \LFMIS(G^{t-1},\pi)$, and therefore the eliminator of $u_i$ changed on step $t$ from $\elim_{G^{t-1},\pi}(u_i) \neq u_i$ to $\elim_{G^t,\pi}(u_i) = u_i$, which completes the proof of the second statement.

We now prove the first claim that $\sum_t \cA_\pi^t \leq 5 \sum_t \cC_\pi^t$.  Because a vertex can be inserted into $\cQ$ at most once per time step (due to Proposition \ref{prop:afterQ}), each insertion $\cQ$ which contributes to $\sum_t \cA_\pi^t$ can be described as a  vertex-time step pair $(v,t)$, where we have $\ell_{t}(v) \neq \ell_{t-1}(v) \in V$ because either $\ell_{t-1}(v)$ became a follower of a vertex in $\LFMIS_{k+1}(G^t,\pi)$, or because $\ell_{t-1}(v)$ was deleted on time step $t$. %
We will now need two technical claims.

\begin{claim}\label{claim:2}
Consider any vertex-time step pair $(v,t)$ where $\ell_t(v) \in V$ and $\ell_{t-1}(v) \neq \ell_t(v)$. In other words, $v$ was made a follower of some vertex $\ell_t(v)$ during time step $t$. Then $\ell_t(v) = \elim_{G^t,\pi}(v)$.
\end{claim}
\begin{proof}
First note that the two statements of the claim are equivalent, since if $\ell(v)$ is set to $u \in V$ during time step $t$, then by Proposition \ref{prop:orderedPi} we have that $\ell(v)$ is not modified again during the processing of update $t$, so $u = \ell_{t}(v)$. Now the algorithm would only set $\ell_t(v) = u$ in one of two cases. In this first case, it occurs during a call to $\ins(v,\pi(v))$, in which case $\ell_t(v)$ is set to the vertex with smallest rank in $\LFMIS_{k+1}(G^t,\pi)\cap N(v)$, which by definition is $\elim_{G^t,\pi}(v)$. In the second case, $v$ was previously in $\LFMIS_{k+1}(G^{t-1},\pi)$, and $\ell(v)$ was changed to a vertex $w$ during a call to  $\ins(w,\pi(w))$, where $w \in N(v)$ and $\pi(w) < \pi(v)$. Since prior to this insertion $v$ was not a neighbor of any point in $\LFMIS_{k+1}(G^{t-1},\pi)$, and since by Proposition \ref{prop:orderedPi} the $\ins$ function will not be called again on time $t$ for a vertex with rank smaller than $w$, it follows that $w$ has the minimum rank of all neighbors of $v$ in $\LFMIS(G^{t},\pi)$, which completes the claim. 
\end{proof}

\begin{claim}\label{claim:3}
Consider any vertex-time step pair $(v,t)$ where $\ell_t(v) = \bot$ and $\ell_{t-1}(v) = \elim_{G^{t-1},\pi}(v) $. Then $\elim_{G^{t-1},\pi}(v) \neq \elim_{G^{t},\pi}(v) $.
\end{claim}
\begin{proof}
If $\ell_{t-1}(v) = \elim_{G^{t-1},\pi}(v)$, then $\elim_{G^{t-1},\pi}(v) \in v$  and $\ell_t(v)$ is changed to $\bot$ during time step $t$, then as in the prior claim, this can only occur if $\ell_{t-1}(v)$ is made a follower of another point in $\LFMIS_{k+1}(G^t,\pi)$, or if $\ell_{t-1}(v)$ is deleted on that time step. In both cases we have $\ell_{t-1}(v) \notin \LFMIS(G^t,\pi)$. Since $\elim_{G^{t},\pi}(v)$ is always in $\LFMIS(G^t,\pi)$, the claim follows.  
\end{proof}

Now fix any vertex $v$, and let $\sigma_1,\dots,\sigma_M$ be the sequence of eliminators of $v$, namely $\sigma_t = \elim_{G^t,\pi}(v)$ (note that $\sigma_t$ is either a vertex in $V$ or $\sigma_t = \emptyset$). Similarly define $\lambda_1,\dots,\lambda_M$ by $\lambda_t = \ell_t(v)$, and note that $\lambda_t \in V \cup \{\bot\}$. To summarize the prior two claims: each time $\lambda_{t-1} \neq \lambda_t$ and $\lambda_t \in V$, we have $\sigma_t = \lambda_t$; namely, the sequences become aligned at time step $t$. Moreover, whenever the two sequences are aligned at some time step $t$, namely $\sigma_t = \lambda_t$, and subsequently $\lambda_{t+1} = \bot$, we have that $\sigma_{t+1} \neq \sigma_t$. We now prove that every five subsequent changes in the value of $\lambda$ cause at least one unique change in $\sigma$.

To see this, let $t_1 < t_2 < t_3$ be three subsequent changes, so that $\lambda_{t_1} \neq \lambda_{t_1 - 1}$, $\lambda_{t_2} \neq \lambda_{t_2 - 1}$, $\lambda_{t_3} \neq \lambda_{t_3 - 1}$, and $\lambda_i$ does not change for all $i=t_1,\dots,t_2-1$ and $i= t_2,\dots,t_3-1$. First, if $\lambda_{t_1},\lambda_{t_2} \in V$, by Claim \ref{claim:2} we have $\sigma_{t_1} = \lambda_{t_1}$ and  $\sigma_{t_2} = \lambda_{t_2}$, and thus $\sigma_{t_1} \neq \sigma_{t_2}$, so $\sigma$ changes in the interval $[t_1,t_2]$. If $\lambda_{t_1} \in V$ and $\lambda_{t_2} = \bot$, we have $\sigma_{t_1} = \lambda_{t_1}$, and so if $\sigma$ does not change by time $t_2-1$ we have $\sigma_{t_2-1} = \lambda_{t_2-1}$, and thus $\sigma_{t_2} \neq \sigma_{t_2-1}$ by Claim \ref{claim:3}, so $\sigma$ changes in the interval $[t_1,t_2]$. Finally, if $\lambda_{t_1} = \bot$, then we must have $\lambda_{t_2} \in V$, and so $\lambda_{t_3} = \bot$. Then by the prior argument, $\sigma$ must change in the interval $[t_2,t_3]$. Thus, in each case, $\sigma$ must change in the interval $[t_1,t_3]$. To avoid double counting changes which occur on the boundary, letting $t_1,\dots,t_r$ be the sequence of all changes in $\lambda$, it follows that there is at least one change in $\sigma$ in each of the disjoint intervals $(t_{5i+1}, t_{5(i+1)})$ for $i=0,1,2,\dots,\lfloor r/5 \rfloor$. It follows that $\sum_t \cA_\pi^t \leq 5 \sum_t \cC_\pi^t$, which completes the proof. 
\end{proof}

The following theorem, due to \cite{behnezhad2019fully}, bounds the expected number of changes of eliminators which occur when a vertex is entirely removed or added to a graph.

\begin{theorem}[Theorem 3 of \cite{behnezhad2019fully}]\label{thm:elimBound}
Let $G = (V,E)$ be any graph on $n$ vertices, and let $G' = (V',E')$ be obtained by removing a single vertex from $V$ along with all incident edges. Let $\pi: V \to [0,1]$ be a random mapping. Let $\cC_\pi(G,G') = \{v \in V | \elim_{G,\pi}(v) \neq \elim_{G',\pi}(v)\}$. Then we have $\mathbb{E}_\pi \left[|\cC_\pi(G,G')|\right] = O(\log n)$.
\end{theorem}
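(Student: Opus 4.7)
The plan is to first reduce to the case where the removed vertex $u$ belongs to $\LFMIS(G,\pi)$. If $u \notin \LFMIS(G,\pi)$, then during the greedy peeling that defines the LFMIS, $u$ is eliminated by some earlier-ranked neighbor $w \in \LFMIS(G,\pi)$; simulating the greedy process on $G'$ produces exactly the same sequence of leaders and exactly the same eliminator for every $v \in V'$, so $|\cC_\pi(G,G')| = 0$. Hence we may condition on the event $u \in \LFMIS(G,\pi)$.

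Next, I would characterize the set $\cC := \cC_\pi(G,G')$ recursively in terms of a cascade starting at $u$. The ``first wave'' is $S_0 := \{v \in V' : \elim_{G,\pi}(v) = u\}$, i.e., the neighbors of $u$ that were killed by $u$ in $G$. Every $v \in S_0$ has a changed eliminator. In $G'$, each such $v$ is now re-evaluated by the greedy process: either $v$ itself becomes a new leader or it is eliminated by a different neighbor. Whenever $v \in S_0$ becomes a new leader in $G'$, every vertex $w$ that was in $\LFMIS(G,\pi)$ and is adjacent to $v$ with $\pi(v) < \pi(w)$ is displaced from the LFMIS in $G'$, triggering a second wave, and so on. I would formalize this by defining an influence structure $H_u$, rooted at $u$, whose nodes are exactly the vertices whose status in the greedy process is affected by $u$'s removal, with $|\cC| \leq |V(H_u)|$.

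The key step is to show $\mathbb{E}_\pi\bigl[|V(H_u)|\bigr] = O(\log n)$. My plan is to invoke the classical ``witness-tree''/``peeling'' analysis of randomized greedy MIS: under a uniformly random ranking $\pi$, for each vertex $v$ the recursive procedure that determines whether $v \in \LFMIS$ (repeatedly look at the minimum-rank element of $N[v]$, recurse if it is not $v$ itself, etc.) has expected depth $O(\log n)$. A vertex $v$ lies in the influence set $V(H_u)$ only if $u$ appears somewhere in $v$'s decision/peeling history. By a charging argument that reverses the role of $u$ and $v$, the expected number of such $v$ equals (up to constants) the expected size of a single peeling tree rooted at $u$, which is $O(\log n)$. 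Summing probabilities over vertices via linearity of expectation then yields the claim.

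The main obstacle will be establishing the correct correspondence between the ``influence cascade'' $H_u$ and the peeling trees used in the random-greedy MIS analysis. One has to argue that each cascade step in going from $G$ to $G'$ (a new vertex entering the LFMIS, or an old one being displaced) can be charged to a distinct edge/vertex in some decision sequence whose randomized length is geometrically distributed with expectation $O(\log n)$. In particular, care is needed because a single new leader $v \in S_0$ may cause several downstream leaders to leave the LFMIS, and one must show these branching events do not blow up the expected total size---this is where the uniformity of $\pi$ and the geometric shrinkage of the relevant ``alive'' sets in the greedy process are crucial, exactly matching the standard $O(\log n)$-depth bound for random greedy MIS.
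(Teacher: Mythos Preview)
The paper does not prove this theorem; it is quoted as Theorem~3 of \cite{behnezhad2019fully} and used as a black box in the runtime analysis (combined with Lemma~\ref{lem:main} to bound $T$). So there is no ``paper's own proof'' to compare your proposal against.

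On its own merits, your outline is in the right spirit but has a genuine gap at exactly the point you flag. The reduction to $u\in\LFMIS(G,\pi)$ is fine (modulo the trivial $+1$ for $u$ itself), and the recursive cascade description of $\cC_\pi(G,G')$ is accurate. The problem is the ``charging argument that reverses the role of $u$ and $v$'': this is not a standard symmetry, and the claim that the expected number of $v$ whose decision history contains $u$ equals the expected size of a single peeling tree rooted at $u$ is not justified. Knowing that each vertex's decision tree has expected depth $O(\log n)$ does not by itself control the \emph{branching} of the cascade---a single new leader in $S_0$ can displace several old leaders, each of whose former followers then re-enter the process, and nothing in a per-vertex depth bound limits this fan-out.

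The argument in \cite{behnezhad2019fully} instead ties the cascade to the \emph{aggregate} query complexity of random-order greedy MIS (in the style of Yoshida--Yamamoto--Ito): the total expected number of recursive queries summed over all $n$ vertices is $O(n)$, and each step of the cascade triggered by removing $u$ can be charged to a distinct such query. This aggregate-budget viewpoint is what tames the branching; your sketch would need to replace the per-vertex depth bound with this global accounting to close the gap.
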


\begin{theorem} \label{thm:LFMISMain} There is a algorithm which, on a fully dynamic stream of insertions and deletions of vertices to a graph $G$, maintains at all time steps a top-$k$ LFMIS of $G$ with leaders (Definition \ref{def:LFMISLead}) under a random ranking $\pi: V \to [0,1]$. The expected amortized per-update time of the algorithm is $O(k \log n + \log^2 n)$, where $n$ is the maximum number active of vertices at any time. Moreover, the algorithm does not need to know $n$ in advance. 
\end{theorem}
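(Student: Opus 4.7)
\textbf{Proof plan for Theorem~\ref{thm:LFMISMain}.} All the heavy structural work has already been carried out in the preceding subsection, so my plan is essentially to assemble Proposition~\ref{prop:T}, Lemma~\ref{lem:main}, and Theorem~\ref{thm:elimBound}, then take care of two implementation details (the random ranking~$\pi$ without a priori knowledge of~$n$, and the symmetric application of the eliminator bound to insertions as well as deletions). Correctness of the data structure is already established in Lemma~\ref{lem:correctness}, so I only need to bound the expected amortized update time.

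\textbf{Step 1: reduce runtime to bounding $T$.} By Proposition~\ref{prop:T}, the total cost over a stream of $M$ updates is $O(T(k+\log n) + Mk)$, where $T$ counts queue insertions. Classify each queue insertion as one of the three scenarios stated in the paper: (i) the vertex was just inserted by the stream (contributes at most~$M$ in total); (ii) the vertex's leader changed during the update, contributing $\sum_t \cA_\pi^t$; (iii) the vertex was kicked out of $\ALG$ in Line~\ref{line:delOverflow}, contributing $\sum_t \cB_\pi^t$. Thus $T \le M + \sum_t \cA_\pi^t + \sum_t \cB_\pi^t$, and Lemma~\ref{lem:main} already upper-bounds both sums by $O\!\bigl(\sum_t \cC_\pi^t\bigr)$.

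\textbf{Step 2: bound $\mathbb{E}[\cC_\pi^t]$.} Theorem~\ref{thm:elimBound} gives $\mathbb{E}_\pi[|\cC_\pi(G,G')|] = O(\log n)$ when $G'$ is obtained from $G$ by deleting a single vertex. For an insertion step $t$, observe that $G^{t-1}$ is obtained from $G^t$ by deleting the inserted vertex, so we may apply the theorem with the roles swapped, using $n = |V(G^t)|$; for a deletion step we apply it directly. In both cases $\mathbb{E}[\cC_\pi^t] = O(\log n)$, where $n$ is a bound on the current number of active vertices. Summing over the $M$ updates,
\[
\mathbb{E}[T] \;\le\; M + 6\sum_{t=1}^{M} \mathbb{E}[\cC_\pi^t] \;=\; O(M \log n),
\]
so by Proposition~\ref{prop:T} the total expected runtime is $O(M \log n \cdot (k + \log n) + Mk) = O(M(k\log n + \log^2 n))$, giving the claimed amortized bound.

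\textbf{Step 3: not knowing $n$ in advance.} To implement the random ranking without knowing $n$, I would generate $\pi(v)$ lazily: initially assign a constant number of random bits, and extend the representation on the fly whenever a comparison between two ranks needs finer resolution. Because the expected number of bits required before any comparison resolves is $O(1)$ and a global tie among the at-most-$n$ active vertices occurs only with probability $O(n^2 / 2^b)$ for $b$ bits, we can afford $b = O(\log n)$ bits per rank in expectation, supporting $O(\log n)$-time priority queue operations as used in Proposition~\ref{prop:T}. This matches the runtime already paid per insertion/deletion of $\cQ$, so no additional overhead is incurred, and no advance knowledge of the maximum active size is required; I do not anticipate any real obstacle here beyond carefully checking that lazy bit extension interacts cleanly with the priority queue. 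With this the theorem follows.
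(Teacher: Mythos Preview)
Your proposal is correct and follows essentially the same approach as the paper: the paper's proof is exactly the assembly of Proposition~\ref{prop:T}, Lemma~\ref{lem:main}, and Theorem~\ref{thm:elimBound} that you outline in Steps~1--2. You are in fact slightly more careful than the paper in two places: you explicitly justify applying Theorem~\ref{thm:elimBound} to insertions by symmetry, and you actually address the ``does not need to know $n$'' claim (via lazy bit generation for $\pi$), whereas the paper's proof of this theorem simply asserts it.
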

\begin{proof}
By the above discussion, letting $T$ be as in Proposition \ref{prop:T}, we have $T \leq  M+  \sum_t \cA_\pi^t + \cB_\pi^t $. By the same proposition, the total update time of the algorithm over a sequence of $n$ updates is at most $O(T(k + \log n) + kM)$.  By Lemma \ref{lem:main}, we have $\sum_t \cA_\pi^t + \cB_\pi^t  \leq 6 \sum_t \cC^t_\pi$, and by Theorem \ref{thm:elimBound} we have $\mathbb{E}_\pi \left[\sum_t \cC^t_\pi\right] = O(M \log n)$. It follows that $\ex{T} = O(M \log n)$, therefore the total update time is $O(kM \log n + M \log^2 n)$, which completes the proof. 
\end{proof}

\section{Fully Dynamic $k$-Centers via Locally Sensitive Hashing}\label{sec:LSH}

In this section, we demonstrate how the algorithm for general metric spaces of Section \ref{sec:generalMetric} can be improved to run in \textit{sublinear} in $n$ amortized update time, even when $k = \Theta(n)$, if the metric in question admits good locally sensitive hash functions (introduced below in Section \ref{sec:LSHAlg}). Roughly speaking, a locally sensitive hash function is a mapping $h: \cX \to U$, for an universe $U$, which has the property that points which are close in the metric should collide, and points which are far should not. Thus, when searching for points which are close to a given $x \in \cX$, one can first apply a LSH to quickly prune far points, and search only through the points in the hash bucket $h(x)$. We will use this approach to speed up the algorithm from Section \ref{sec:generalMetric}. 

There are several serious challenges when using an arbitrary ANN data structure to answer nearest-neighbor queries for our $k$-center algorithm. Firstly, the algorithm \textit{adaptively} queries the ANN data structure: the points which are inserted into $\LFMIS_{k+1}$, as well as the future edges which are reported by the data structure, depend on the prior edges which were returned by the data structure. Such adaptive reuse breaks down traditional guarantees of randomized algorithms, hence designing such algorithms which are robust to adaptivity is the subject of a growing body of research \cite{ben2020framework,cherapanamjeri2020adaptive,HassidimKMMS20,WoodruffZ20,ACSS21}. More nefariously, the adaptivity also goes in the other direction: namely, the random ordering $\pi$ influences which points will be added to the set $\LFMIS_{k+1}$, in turn influencing the future queries made to the ANN data structure, which in turn dictate the edges which exist in the graph (by means of queries to the ANN oracle). Thus, the graph itself cannot be assumed to be independent of $\pi$! However, we show that we can leverage LSH functions to define the graph independent of $\pi$.

\paragraph{Summary of the LSH-Based Algorithm. }
We now describe the approach of our algorithm for LSH spaces.
Specifically, first note that the factor of $k$ in the amortized update time in Proposition \ref{prop:T} comes from the time required to compute $S = \ALG \cap N(v)$ in Algorithm \ref{alg:ins}. However, to determine which of the three cases we are in for the execution of Algorithm \ref{alg:ins}, we only need to be given the value $u^* = \arg \min_{u' \in S} \pi(u')$ of the vertex in $S$ with smallest rank, or $S = \emptyset$ if none exists. If $S = \emptyset$, then the remainder of Algorithm \ref{alg:ins} runs in constant time. If $\pi(u^*) < \pi(v)$, where $v$ is the query point, then the remaining run-time is constant unless $v$ was a leader, in which case it is proportional to the number of followers of $v$, each of which are inserted into $\cQ$ at that time. Lastly, if $\pi(u^*) > \pi(v)$, then we search through each $u \in S$, make $u$ a follower of $v$, and add the followers of $u$ to $\cQ$. If $u$ was previously in $\ALG$, it must have also been in $\LFMIS_{k+1}$ on the prior time step. Thus it follows that each such $u \in S$ changes its eliminator on this step.

In summary, after the computation of $S = \ALG \cap N(v)$, the remaining runtime is bounded by the sum of the number of points added to $\cQ$, and the number of points that change their eliminator on that step. Since, ultimately, the approach in Section \ref{sec:amortized} was to bound $T$ by the total number of times a point's eliminator changes, our goal will be to obtain a more efficient data structure for returning $S = \ALG \cap N(v)$. Specifically, if after a small upfront runtime $R$, such a data structure can read off the entries of $S$ in the order of the rank, each in constant time, one could therefore replace the factor of $k$ at both parts of the sum in the running time bound of Proposition \ref{prop:T} by $R$. We begin by formalizing the guarantee that such a data structure should have. 

We will demonstrate that approximate nearest neighbor search algorithms based on locally sensitive hashing can be modified to have the above properties. However, since such an algorithm will only be approximate, it will sometimes returns points in $S$ which are farther than distance $r$ from $v$, where $r$ is the threshold. Thus, the resulting graph defined by the locally sensitive hashing procedure will now be an \textit{approximate threshold graph:}

\begin{definition}\label{def:approxThreshGraph}
Fix a point set $P$ from a metric space $(\cX,d)$, and real values $r>0$ and $c\geq 1$. A $(r,c,L)$-approximate threshold graph $G_{r,c} = (V(G_{r,c}),E(G_{r,c}))$ for $P$ is any graph with $V(G_{r,c}) = P$, and whose whose edges satisfy $E(G_r) \subseteq E(G_{r,c})$ and $|E(G_{r,c}) \setminus E(G_{cr})| \leq L$, where $E(G_r), E(G_{cr})$ are the edge set of the threshold graphs $G_r,G_{cr}$ respectively.

\end{definition}

If $L = 0$, it is straightforward to see that an algorithm for solving the top-$k$ LFMIS problem on a $(r,c,0)$-approximate threshold graph $G_{r,c}$ can be used to obtain a $c(2+\eps)$ approximation to $k$-center. When $L>0$, an algorithm can first check, for each edge $e \in E(G_{r,c})$ it considers, whether $e \in G_{rc}$, and discard it if it is not the case. We will see that the runtime of handling a $(r,c,L)$-approximate threshold graph will depend linearly on $L$. Moreover, we will set parameters so that $L$ is a constant in expectation.

\subsection{Locally Sensitive Hashing and the LSH Algorithm}\label{sec:LSHAlg}

We begin by introducing the standard definition of a locally sensitive hash family for a metric space \cite{indyk1998approximate}.

\begin{definition}[Locally sensitive hashing \cite{indyk1998approximate,har2012approximate}]\label{def:LSH}
Let $\cX$ be a metric space, let $U$ be a range space, and let $r \geq 0$ $c \geq 1$ and $0 \leq p_2 \leq p_1  \leq 1$ be reals.
A family $\cH = \{h : \cX \to U\}$ is called $(r,cr,p_1,p_2)$-sensitive if for any $q,p \in \cX$:

\begin{itemize}
    \item If $d(p,q) \leq r$, then $\prb{\cH}{h(q) = h(p)} \geq p_1$. 
    \item If $d(p,q) > cr$, then $\prb{\cH}{h(q) = h(p)} \leq p_2$. 
\end{itemize}
\end{definition}

Given a $(r,cr,p_1,p_2)$-sensitive family $\cH$, we can define $\cH^t$ to be the set of all functions $h^t: \cX \to U^t$ defined by $h^t(x) = (h_1(x),h_2(x),\dots,h_t(x))$, where  $h_1,\dots,h_k \in \cH$. In other words, a random function from $\cH^t$ is obtained by drawing $t$ independent hash functions from $\cH$ and concatenating the results. It is easy to see that the resulting hash family $\cH^t$ is $(r,cr,p_1^t,p_2^t)$-sensitive. 
We now demonstrate how a $(r,c)$-approximate threshold graph can be defined via a locally sensitive hash function. 

\begin{definition}\label{def:inducedGraph}
Fix a metric space $(\cX,d)$ and a finite point set $P \subset \cX$, as well as integers $t,s \geq 1$. Let $\cH: \cX \to U$ be a $(r,cr,p_1,p_2)$-sensitive family. Then a graph $G_{r,cr}(P,\cH,t,s) = ( V,E)$ induced by $\cH$ is a random graph which is generated via the following procedure. First, one randomly selects hash functions $h_1,h_2,\dots,h_s \sim \cH^t$. Then the vertex set is given by $V=P$, and then edges are defined via $(x,y) \in E$ if and only if $h_i(x) = h_i(y)$ for some $i \in [s]$. 
\end{definition}

We now demonstrate that, if $\cH$ is a sufficiently sensitive hash family, the random graph $G_{r,cr}(P,\cH,t,s)$ constitutes a $(r,c,L)$-approximate threshold graph with good probability, where $L$ is a constant in expectation. 

\begin{proposition}\label{prop:isAThreshold}
Fix a metric space $(\cX,d)$ and a point set $P \subset \cX$ of size $|P|= n$, and let $\cH: \cX \to U$ be a $(r,cr,p_1,p_2)$-sensitive family. Fix any $\delta \in (0,\frac{1}{2})$. Set $s = \ln(n^2/\delta) n^{2\rho}/p_1 $, where $\rho = \ln \frac{1}{p_1}/\ln \frac{1}{p_2}$, and  $t = \lceil 2 \log_{1/p_2} n  \rceil$. Then, with probability at least $1-\delta$, the the random graph $G_{r,cr}(P,\cH,t,s)$ is a $(r,c,L)$-approximate threshold graph, where $L$ is a random variable satisfying $\ex{L} < 2$ (Definition \ref{def:approxThreshGraph}). 
\end{proposition}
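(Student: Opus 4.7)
The plan is to establish the two defining requirements of a $(r,c,L)$-approximate threshold graph separately, treating the high-probability guarantee on the containment $E(G_r) \subseteq E(G_{r,cr}(P,\cH,t,s))$ and the expectation bound on spurious far-edges as essentially independent calculations over the randomness in the $s$ draws of $h_i \sim \cH^t$.

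For the containment of all true near-edges, I would fix a pair $(x,y)$ with $d(x,y) \leq r$ and bound the probability that no $h_i$ witnesses a collision. Since $h_i$ is the concatenation of $t$ independent draws from $\cH$, the LSH sensitivity gives $\Pr[h_i(x)=h_i(y)] \geq p_1^t$, and independence of the $s$ concatenated hashes then yields $\Pr[\forall i: h_i(x)\neq h_i(y)] \leq (1-p_1^t)^s \leq \exp(-sp_1^t)$. Using the identity $p_1=p_2^\rho$ together with $t \leq 2\log_{1/p_2} n + 1$ (from the ceiling) gives the key lower bound $p_1^t \geq p_1 \cdot n^{-2\rho}$; substituting the chosen $s = \ln(n^2/\delta) n^{2\rho}/p_1$ produces $sp_1^t \geq \ln(n^2/\delta)$, so the per-pair failure probability is at most $\delta/n^2$. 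Union-bounding over the at most $\binom{n}{2}$ near-pairs yields overall failure probability at most $\delta/2 < \delta$, so the containment holds with the required probability.

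For the expectation bound on $L$, I would use linearity of expectation over the pairs at distance strictly greater than $cr$. For any such fixed pair $(x,y)$, each $h_i$ collides on it with probability at most $p_2^t$, so
\[
\ex{L} \;=\; \sum_{(x,y):\,d(x,y)>cr} \Pr[\exists i: h_i(x)=h_i(y)] \;\leq\; \binom{n}{2}\,\bigl(1-(1-p_2^t)^s\bigr) \;\leq\; \binom{n}{2}\cdot s\cdot p_2^t .
\]
Here $t \geq 2\log_{1/p_2} n$ gives the complementary upper bound $p_2^t \leq n^{-2}$, so the calculation reduces to verifying that the chosen $s$ together with the $n^{-2}$ factor cancel the $\binom{n}{2}$ down to the claimed constant. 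In the last step I would also invoke the standard inequality $1-(1-x)^s \leq sx$ only in the small-$s p_2^t$ regime and otherwise use that the number of edges is trivially bounded by $\binom{n}{2}$ to tighten the bound.

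The main obstacle, and where I would expect to spend the most care, is the constant-level accounting in the second part: the choice $t = \lceil 2 \log_{1/p_2} n\rceil$ is exactly calibrated so that $p_2^t \leq 1/n^2$ cancels the $\binom{n}{2}$ pair count down to a constant independent of $n$, while the factor $n^{2\rho}/p_1$ inside $s$ is dictated by the near-pair analysis. Ensuring that these two competing constraints balance into the claimed bound $\ex{L} < 2$ (as opposed to some weaker $\ex{L} = O(1)$ statement) is where the proof becomes delicate; the cleanest route is to track the two constraints with exact (rather than asymptotic) constants and to separate the analysis from the high-probability event so that $\delta$ appears only in the near-edge calculation.
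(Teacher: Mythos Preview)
Your approach matches the paper's: split into the near-edge containment (via $p_1^t \ge p_1 n^{-2\rho}$, independence across the $s$ hashes, and a union bound over pairs) and the far-edge expectation (via $p_2^t \le n^{-2}$ and linearity of expectation).

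The one structural piece your outline omits is the conditioning step that explains the constant $2$. The paper first argues $\ex{L} < 1$ \emph{unconditionally} from the far-edge calculation, and then observes that conditioning on the near-edge containment event---which has probability at least $1-\delta > 1/2$---can at most double any nonnegative expectation, yielding the stated $\ex{L} < 2$. Your plan to keep the two analyses completely separated and report an unconditional expectation is a reasonable reading of the statement, but then your target would be $\ex{L} < 1$ directly, not $2$.

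Your caution about the far-edge constants is well placed. The paper's argument at that step is actually \emph{looser} than yours: it bounds the probability that a far pair collides under a \emph{single} $h\sim\cH^t$ by $p_2^t < n^{-2}$ and sums over $P^2$ to get $<1$, silently dropping the union over the $s$ hash functions that your bound $\binom{n}{2}\, s\, p_2^t$ retains. With the stated $s = \ln(n^2/\delta)\, n^{2\rho}/p_1$, your more honest estimate is of order $s$ rather than a constant, so the delicacy you flag is real---the paper's written argument takes exactly the shortcut you were wary of rather than resolving it.
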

\begin{proof}

First, fix any $x,y \in P$ such that $d(x,y) > c r$. We have that $\prb{h \sim \cH^k}{h(x) = h(y)} \leq p_2^t < \frac{1}{ n^2}$. It follows that 
\[\exx{h_1,\dots,h_s \sim \cH^t}{|E(G_{r,cr}(P,\cH,t,s) \setminus E(G_{r })|} \leq \sum_{(x,y) \in P^2} \frac{1}{n^2} < 1\]
Namely, we have $\ex{L} < 1$ where $L = |E(G_{r,cr}(P,\cH,t,s) \setminus E(G_{r })|$.
Next, fix any $(x,y) \in E(G_r)$. We have 
\[\prb{h \sim \cH^k}{h(x) = h(y)} \geq p_1^t > p_1^{2 \log_{1/p_2} n  + 1}= p_1 (n^2)^{-\rho}\] Thus, the probability that at least one $h_i$ satisfies $h_i(x) = h_i(y)$ is at least
\[1-(1- p_1 n^{-2\rho})^s >1-(1/e)^{\ln(n^2/\delta)} =  1- \delta/n^2\] 
After a union bound over all such  possible pairs, it follows that $(x,y) \in E(G_{r,cr}(P,\cH,t,s)) $ for all $(x,y) \in E(G_r)$ with probability at least $1-\delta$. Note that since $\delta < 1/2$ and $L$ is a non-negative random variable, it follows that conditioning on the prior event can increase the expectation of $L$ by at most a factor of $2$, which completes the proof. 
\end{proof}

We will now describe a data structure which allows us to maintain a subset $\cL$ of vertices of the point set $P$, and quickly answer queries for neighboring edges of a vertex $v$ in the graph $G$ defined by the intersection of $G_{r,cr}(P,\cH,t,s)$ and $G_{cr}$. Note that if $G_{r,cr}(P,\cH,t,s)$ is a $(r,c,L)$-approximate threshold graph, then this intersection graph $G$ satisfies $G_r \subseteq G \subseteq G_{cr}$, and is therefore a  $(r,c,0)$-approximate threshold graph. It is precisely this graph $G$ which we will run our algorithm for top-$k$ LFMIS on. However, in addition to finding all neighbors of $v$ in $\cL$, we will also need to quickly return the neighbor with smallest rank $\pi$, where $\pi:V \to [0,1]$ is a random ranking as in Section \ref{sec:generalMetric}. Roughly, the data structure will hash all points in $\cL$ into the hash buckets given by $h_1,\dots,h_s$, and maintain each hash bucket via a binary search tree of depth $O(\log n)$, where the ordering is based on the ranking $\pi$. 

For the following Lemma and Theorem, we fix a metric space $(\cX,d)$ and a point set $P \subset \cX$ of size $|P|= n$, as well as a scale $r > 0$, and approximation factor $c$. Moreover, let $\cH: \cX \to U$ be a $(r,cr,p_1,p_2)$-sensitive hash family for the metric space $\cX$, and let $\Run(\cH)$ be the time required to evaluate a hash function $h \in \cH$. Furthermore, let $\pi: P \to [0,1]$ be any ranking over the points $P$, such that $\pi(x)$ is truncated to $O(\log n)$ bits, and such that $\pi(x) \neq \pi(y)$ for any distinct $x,y \in P$ (after truncation). Note that the latter holds with probability $1-1/\poly(n)$ if $\pi$ is chosen uniformly at random. Lastly, for a graph $G = (V(G),E(G))$, let $N_G(v)$ be the neighborhood of $v$ in $G$ (in order to avoid confusion when multiple graphs are present). 
\begin{lemma}\label{lem:binaryDataStructure}
Let $G_{r,cr}(P,\cH,t,s)$ be a draw of the random graph as in Definition \ref{def:inducedGraph}, where $r,c,P,\cH$ are as above, such that $G_{r,cr}(P,\cH,t,s)$ is a $(r,c,L)$-approximate threshold graph. Let $G = G_{r,cr}(P,\cH,t,s)\cap G_{cr}$. Then there is a fully dynamic data structure, which maintains a subset $\cL \subset V(G)$ and can perform the following operations:
\begin{itemize}
    \item $\ttx{Insert}(v)$: inserts a vertex $v$ into $\cL$ in time $O(s  \log n + ts \Run(\cH))$
    \item $\ttx{Delete}(v)$: deletes a vertex $v$ from $\cL$ in time $O(s \log n + ts \Run(\cH))$
    \item $\ttx{Query-Top}(v)$: returns $u^* = \arg \min_{u \in N_G \cap \cL} \pi(u)$, or \textsc{Empty} if $N_G(v) \cap \cL = \emptyset$, in time $O(sL \log n + ts \Run(\cH)) $
    \item $\ttx{Query-All}(v)$: returns the set $N_G(v) \cap \cL$, running in time $O(s(|N_G(v) \cap \cL| + L) \log n +  ts \Run(\cH))$. 
\end{itemize}
Moreover, given the hash functions $h_1,\dots,h_s \in \cH^t$ which define the graph $G_{r,cr}(P,\cH,t,s)$, the algorithm is deterministic, and therefore correct even against an adaptive adversary. 
\end{lemma}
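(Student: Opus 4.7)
My plan is to maintain, for each of the $s$ hash functions $h_i \in \cH^t$ that define $G_{r,cr}(P,\cH,t,s)$, a hash table indexed by hash-bucket identifiers whose buckets are balanced binary search trees (e.g.\ red--black trees) ordered by $\pi$. The BST stored at bucket $b$ under $h_i$ holds exactly the vertices $u \in \cL$ with $h_i(u) = b$. Thus $\ttx{Insert}(v)$ and $\ttx{Delete}(v)$ each compute the $s$ hash values $h_1(v), \dots, h_s(v)$ in $O(ts \cdot \Run(\cH))$ time (since each $h_i \in \cH^t$ is a composition of $t$ base hash functions), and then perform one $O(\log n)$ BST update per hash function, giving the claimed $O(s \log n + ts \Run(\cH))$ bound.

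For $\ttx{Query-Top}(v)$, after computing the $s$ hash values I would build a min-heap of size $s$ whose $i$-th entry is the $\pi$-minimum vertex $u \neq v$ currently residing in the BST of bucket $h_i(v)$. I then repeatedly extract the global minimum $u$: if $d(v,u) \leq cr$, return $u$; otherwise $(v,u) \in E(G_{r,cr}) \setminus E(G_{cr})$ is a ``bad'' edge, so I advance to the in-order successor of $u$ in that bucket's BST (an $O(\log n)$ operation) and reinsert the new candidate into the heap. Correctness follows because the true answer is the minimum $\pi$-ranked vertex in $\cL \cap N_G(v)$, which must lie in at least one of the $s$ buckets $h_i(v)$. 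For the running time, each bad edge $(v,u)$ contributes at most $s$ wasted extractions (one per bucket in which $u$ collides with $v$), and since the total number of bad edges in the entire graph is bounded by $L$, the total wasted extractions is at most $sL$, each of $O(\log n)$ cost. The same scheme handles $\ttx{Query-All}(v)$: iterate over every element of the BST at bucket $h_i(v)$ for every $i$, filter by the exact distance predicate $d(v,\cdot) \leq cr$, and deduplicate using a temporary hash set (or a per-vertex ``seen'' flag which is cleared at the end). The total work is proportional to the sum of the bucket sizes containing $v$, which by the same incidence counting is at most $s(|N_G(v)\cap \cL| + L) + s$, times an $O(\log n)$ per-access cost.

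Determinism against an adaptive adversary is immediate once $h_1,\dots,h_s$ and $\pi$ are fixed, since every operation of the data structure is a deterministic function of the currently stored $\cL$, the query vertex, and these fixed hash/rank values; there is no internal randomness for the adversary to exploit. The main obstacle, and the only part requiring care, is the $sL$ accounting in the query bounds: one must charge each wasted heap extraction (or wasted scan of a bad-edge candidate in $\ttx{Query-All}$) to a specific (bad edge, colliding bucket) pair, and observe that the number of such pairs is bounded by $sL$ globally regardless of the query vertex $v$. The routine BST primitives then supply the remaining $O(\log n)$ factors.
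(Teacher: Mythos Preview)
Your proposal is correct and follows essentially the same approach as the paper: maintain, for each of the $s$ hash functions, a $\pi$-ordered search structure on the contents of each hash bucket, then answer queries by scanning the relevant buckets and filtering out the at most $L$ ``bad'' collisions with $d(v,\cdot)>cr$. The only cosmetic differences are that the paper uses a fixed-depth binary tree keyed by the $O(\log n)$-bit $\pi$ values (with subtree counters) rather than a generic balanced BST, and for \texttt{Query-Top} it processes the $s$ buckets independently and takes the minimum of the $s$ per-bucket answers rather than merging them through a heap; both variants yield the same $O(sL\log n + ts\,\Run(\cH))$ bound via the same $(\text{bad edge},\text{bucket})$ charging you describe.
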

\begin{proof}
For each $i \in [s]$ and hash bucket $b \in U$, we store a binary tree $T_{i,b}$ with depth at most $O(\log n)$, such that each node $z$ corresponds to an interval $[a,b] \subset [0,1]$, and the left and right children of $z$ correspond to the intervals $[a,(a+b)/2]$ and $[(a+b/2),b]$, respectively. Moreover, each node $z$ maintains a counter for the number of points in $\cL$ which are stored in its subtree. Given a vertex $v$ with rank $\pi(v)$, one can then insert $v$ into the unique leaf of $T_{i,b}$ corresponding to the $O(\log n)$ bit value $\pi(v)$ in time $O(\log n)$. Note that, since the keys $\pi(v)$ are unique, each leaf contains at most one vertex. Similarly, one can remove and search for a vertex from $T_{i,b}$ in time $O(\log n)$.

When processing any query for an input vertex $v \in P$, one first evaluates all $ts$ hash functions required to compute $h_1(v),\dots,h_s(v)$, which requires  $ts \Run(\cH)$ time. For insertions and deletions, one can insert $v$ from each of the $s$ resulting trees in time $O(\log n)$ per tree, which yields the bounds for $\ttx{Insert}(v)$ and $\ttx{Delete}(v)$. For $\ttx{Query-Top}(v)$, for each $i \in [s]$, one performs an in-order traversal of $T_{i,h_i(v)}$, ignoring nodes without any points stored in their subtree, and returns the first leaf corresponding to a vertex $u \in N_G(v)$, or \textsc{Empty} if all points in the tree are examined before finding such a neighbor. Each subsequent non-empty leaf in the traversal can be obtained in $O(\log n)$ time, and since by definition of a $(r,c,L)$-approximate threshold graph, $h_i(v) = h_i(u')$ for at most $u'$ vertices with $(v,u') \notin G_{cr}$, it follows that one must examine at most $L$ vertices $u'$ with $(v,u') \notin G_{cr}$ before one finds  $u^* = \arg \min_{u \in N_G \cap \cL} \pi(u)$ (or exhausts all points in the tree). Thus, the runtime is $O(L \log n)$ to search through each of the $s$ hash functions, which results in the desired bounds. 

Finally, for $\ttx{Query-All}(v)$, one performs the same search as above, but instead completes the full in-order traversal of each tree $T_{i,h_i(v)}$. By the  $(r,c,L)$-approximate threshold graph property, each tree $T_{i,h_i(v)}$ contains at most $|N_G(v) \cap \cL| + L$ vertices from $\cL$, after which the runtime follows by the argument in the prior paragraph. 
\end{proof}

Given the data structure from Lemma \ref{lem:binaryDataStructure}, we will now demonstrate how the algorithm from Section \ref{sec:generalMetric} can be implemented in sublinear in $k$ time, given a sufficently good LSH function for the metric. The following theorem summarizes the main consequences of this implementation, assuming the graph $G_{r,cr}(P,\cH,t,s)$ is a $(r,c,L)$-approximate threshold graph. 

\begin{theorem}\label{thm:LSHPreMain}
 Let $G_{r,cr}(P,\cH,t,s)$ be a draw of the random graph as in Definition \ref{def:inducedGraph}, where $r,c,P,\cH$ are as above, such that $G_{r,cr}(P,\cH,t,s)$ is a $(r,c,L)$-approximate threshold graph. Let $G = (V,E)$ be the graph with $V = P$ and $E  = E(G_{r,cr}(P,\cH,t,s))\cap E(G_{cr})$. Then there is a fully dynamic data structure which, under a sequence of vertex insertions and deletions from $G$, maintains a top-$n$ LFMIS with leaders (Definition \ref{def:LFMISLead}) of $G$ at all time steps. The expected amortized per-update runtime of the algorithm is $O( ( sL \log n + ts \Run(\cH)) \log n + \log^2 n )$, where the expectation is taken over the choice of $\pi$.
\end{theorem}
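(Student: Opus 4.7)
The plan is to re-use the algorithm from Section~\ref{sec:generalMetric} with $k$ set large enough (e.g.\ $k = |V|$) so that the maintained set $\ALG$ coincides with the full $\LFMIS(G,\pi)$, and to replace the only linear-per-step graph operation---the computation of $S = \ALG \cap N_G(v)$ in Algorithm~\ref{alg:ins}---by queries to the data structure of Lemma~\ref{lem:binaryDataStructure}, instantiated on the hash functions $h_1,\dots,h_s$ defining $G_{r,cr}(P,\cH,t,s)$. We would maintain $\cL = \ALG$ by invoking $\ttx{Insert}/\ttx{Delete}$ each time a vertex enters or leaves $\ALG$. Inside $\ins(v,\pi(v))$ I would first call $\ttx{Query-Top}(v)$ to get $u^{*} = \arg\min_{u \in N_G(v) \cap \ALG} \pi(u)$ (or \textsc{Empty}); the first two branches of Algorithm~\ref{alg:ins} only need $u^{*}$ and can proceed in constant time (plus follower cleanup at $v$), while in the third branch ($\pi(u^{*}) > \pi(v)$) I would additionally call $\ttx{Query-All}(v)$ to recover all of $S$.

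Correctness should be immediate from Section~\ref{sec:correctness}: conditioned on the sampled hash functions, the data structure of Lemma~\ref{lem:binaryDataStructure} is deterministic and returns precisely $N_G(v) \cap \ALG$, so the algorithm's externally observable behavior is identical to running Algorithm~\ref{alg:main} on the fixed graph $G$ with an oracle for its adjacency relation. The amortized accounting of Section~\ref{sec:amortized} then transfers verbatim, since it depends only on $\pi$ and on the current graph $G$ (and not on how neighbor queries are answered), giving $\ex{T} = O(M \log n)$, where $T$ is the total number of insertions ever made into $\cQ$ over an $M$-step stream, and where the expectation is over the random ranking $\pi$.

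The remaining step---and the step I expect to be the main obstacle---is to control the cost of the case-3 $\ttx{Query-All}$ calls, whose per-call cost scales with $|S|$. The key observation I would exploit is that every $u$ reported in a case-3 call had $u \in \LFMIS(G^{t-1},\pi)$ and is knocked out by $v$ at the same step with $\pi(v) < \pi(u)$, so $\elim_{G^{t-1},\pi}(u) = u \neq \elim_{G^{t},\pi}(u)$; hence each such $u$ contributes to $\cC^{t}_{\pi}$. Consequently, $\sum_{\text{case 3}} |S| \le \sum_t \cC^{t}_{\pi} = O(M \log n)$ in expectation by Theorem~\ref{thm:elimBound}. Combining this with the $O(M+T)$ calls to $\ins$ (each paying one $\ttx{Query-Top}$ of cost $O(sL\log n + ts\,\Run(\cH))$), the $O(M+T)$ updates to $\cL$ (each paying one $\ttx{Insert}/\ttx{Delete}$ of the same cost), the $O(s\log n)$ per vertex traversed inside case-3 $\ttx{Query-All}$ calls, and $O((M+T)\log n)$ time for the priority queue, the expected total runtime is
\[
O\!\bigl(M\log n \cdot (sL \log n + ts\,\Run(\cH)) + M \log^{2} n\bigr),
\]
which after amortizing over the $M$ updates yields exactly the claimed per-update bound.
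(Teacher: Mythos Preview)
Your proposal is correct and follows essentially the same approach as the paper: instantiate the Section~\ref{sec:generalMetric} algorithm with $\cL=\ALG$ maintained via Lemma~\ref{lem:binaryDataStructure}, use $\ttx{Query-Top}$ for each $\ins$ call and $\ttx{Query-All}$ only in the third branch, and charge the variable part of each $\ttx{Query-All}$ to eliminator changes via the observation that every $u$ returned there was its own eliminator before being made a follower of $v$. The paper packages the accounting into a single claim (an analog of Proposition~\ref{prop:T} with $\lambda = sL\log n + ts\,\Run(\cH)$ replacing $k$), but the ingredients and the final bound are identical to yours.
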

\begin{proof}
The algorithm is straightforward: we run the fully dynamic algorithm for top-$k$ LFMIS with leaders from Section \ref{sec:generalMetric}, however we utilize the data structure from Lemma \ref{lem:binaryDataStructure} to compute $S = \cL \cap N_G(v)$ in Algorithm \ref{alg:ins} (where $\cL = \cL_{n+1}$), as well as handle deletions from $\cL$ in Algorithm \ref{alg:del}. Note that to handle a call to $\ins(v)$ of Algorithm \ref{alg:ins}, one first calls  $\ttx{Query-Top}(v)$ in data structure from Lemma \ref{lem:binaryDataStructure}. If the result is $\emptyset$, or $u$ with $\pi(u) < \pi(v)$, then one can proceed as in Algorithm \ref{alg:ins} but by updating $\cL$ via Lemma \ref{lem:binaryDataStructure}. If the result is $u$ with $\pi(u) > \pi(v)$, one then calls  $\ttx{Query-All}(v)$ to obtain the entire set $S = N_G(v) \cap \cL$, each of which will subsequently be made a follower of $v$. 

Let $T$ be the total number of times that a vertex is inserted into the queue $\cQ$ over the entire execution of the algorithm (as in Proposition \ref{prop:T}), and as in 
Section \ref{sec:amortized}, we let $\cC^t_\pi$ denote the number of vertices whose eliminator changed after the $t$-th time step. We first prove the following claim, which is analogous to  Proposition \ref{prop:T}. 

\begin{claim}
The total runtime of the algorithm, over a sequence of $M$ insertions and deletions of vertices from $G$, is at most $O(T(\lambda + \log n) + \lambda (M+ \sum_{t \in [M}) \cC^t_\pi))$, where $\lambda = sL \log n + ts \Run(\cH)$. 
\end{claim}
\begin{proof}
First note that $\lambda$ upper bounds the cost of inserting and deleting from $\cL$, as well as calling $\ttx{Query-Top}(v)$. For every vertex, when it is first inserted into the stream, we pay it a cost of $\lambda$ to cover the call to $\ttx{Query-Top}(v)$. Moreover, whenever a vertex is added to the queue $\cQ$, we pay a cost of $\lambda$ to cover a subsequent call to $\ttx{Query-Top}(v)$ when it is removed from the queue and inserted again, plus an additional $O(\log n)$ required to insert and remove the top of a priority queue. The only cost of the algorithm which the above does not cover is the cost of calling $\ttx{Query-All}(v)$, which can be bounded by $O(\lambda \cdot |N_G(v) \cap \cL| )$. Note that, by correctness of the top-$k$ LFMIS algorithm (Lemma \ref{lem:correctness}), each vertex in $\cL$ is its own eliminator at the beginning of each time step. It follows that each vertex $u \in N_G(v) \cap \cL$ had its eliminator changed on step $t$, since $\ttx{Query-All}(v)$ is only called on time step $t$ in the third case of Algorithm \ref{alg:ins}, where all points in $|N_G(v) \cap \cL|$ will be made followers of $v$. Thus the total cost of all calls to $\ttx{Query-All}(v)$ can be bounded by $\lambda \sum_{t \in [M}) \cC^t_\pi$, which completes the proof of the claim. 
\end{proof}

Given the above, by Lemma \ref{lem:main} we have that $T \leq M \leq 6 \cC^t_\pi$, and by Theorem \ref{thm:elimBound} we have $\exx{\pi}{\sum_t \cC^t_\pi} = O(M \log n)$. It follows that, the expected total runtime of the algorithm, taken over the randomness used to generate $\pi$ (with $h_1,\dots,h_s$ previously fixed and conditioned on) is at most $O(M \log^2 n + M \lambda \log n)$ as needed.
\end{proof}

\begin{theorem}\label{thm:lshMain}
Let $(\cX,d)$ be a metric space, and fix $\delta \in (0,1/2)$. Suppose that for any $r \in (r_{\min}, r_{\max})$ there exists an $(r,cr,p_1,p_2)$-sensitive hash family $\cH_r: \cX \to U$, such that each $h \in \cH_r$ can be evaluated in time at most $\Run(\cH)$, and such that $p_2$ is bounded away from $1$. Then there is a fully dynamic algorithm that, on a sequence of $M$ insertions and deletions of points from $\cX$, given an upper bound $M \leq \hat{M} \leq \poly(M)$, with probability $1-\delta$, correctly maintains a $c(2+\eps)$-approximate $k$-center clustering to the active point set $P^t$ at all time steps $t$, simultaneously for all $k \geq 1$. The total runtime of the algorithm is at most 
\[\tilde{O}\left(M \cdot \frac{\log \Delta \log \delta^{-1}}{\eps p_1} n^{2 \rho} \cdot \Run(\cH) \right)\]
where $\rho = \frac{\ln p_1}{\ln p_2}$, and $n$ is an upper bound on the maximum number of points at any time step.
\end{theorem}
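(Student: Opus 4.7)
The plan is to combine the Hochbaum--Shmoys threshold-graph reduction with the LSH-based approximate threshold graph construction from Section~\ref{sec:LSH}. For each discretized scale $r_i = (1+\eps/2)^i r_{\min}$ with $i \in \{0, 1, \ldots, I\}$ where $I = \lceil \log_{1+\eps/2} \Delta \rceil$, I would instantiate one data structure which maintains a top-$n$ LFMIS with leaders of the random graph $G_{r_i, c r_i}(P, \cH_{r_i}, t, s)$ using Theorem~\ref{thm:LSHPreMain}, with parameters $s = \ln(n^2/\delta')\cdot n^{2\rho}/p_1$ and $t = \lceil 2 \log_{1/p_2} n \rceil$ as in Proposition~\ref{prop:isAThreshold}, where $\delta' = \delta/(2\hat{M} I)$. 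To answer a $k$-center query at any time $t$ for any $k$, I scan the $I+1$ scales and return the LFMIS (with its leader mapping $\ell$) maintained at the smallest index $i$ for which $|\LFMIS(G_i, \pi_i)| \leq k$.

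For correctness, by Proposition~\ref{prop:isAThreshold} and a union bound over all $O(\hat{M} I)$ (scale, time)-pairs, with probability at least $1-\delta/2$ the graph $G_i$ at every scale and every time step is a $(r_i, c, L_i)$-approximate threshold graph. Conditioned on this event, the graph fed to Theorem~\ref{thm:LSHPreMain}, namely $G = G_{r_i, cr_i}(P,\cH,t,s) \cap G_{cr_i}$, satisfies $E(G_{r_i}) \subseteq E(G) \subseteq E(G_{cr_i})$. Hence an LFMIS of $G$ with at most $k$ elements gives a $k$-center solution of cost at most $cr_i$, while an LFMIS of size $\geq k+1$ certifies $k+1$ points at pairwise distance $> r_i$ in the underlying metric, forcing $\OPT \geq r_i/2$. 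Taking the smallest valid scale $r_i$ and comparing against $r_{i-1}$ yields a $c(2+\eps)$-approximation by the standard Hochbaum--Shmoys argument.

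For the runtime, Theorem~\ref{thm:LSHPreMain} bounds the expected amortized per-update cost (over $\pi_i$) by $O((s L_i \log n + t s \Run(\cH))\log n + \log^2 n)$. The subtlety I would be careful about is the order of expectations: Theorem~\ref{thm:LSHPreMain} is stated conditional on the hash functions, while Proposition~\ref{prop:isAThreshold} controls $\ex{L_i} < 2$ over the hash functions; since the algorithm is deterministic given $(h_1,\dots,h_s,\pi_i)$ and the amortized bound in Section~\ref{sec:amortized} is linear in $L_i$, iterating the expectations in the order (first $\pi_i$, then the hashes) gives an expected amortized cost of $O((s\log n + ts \Run(\cH))\log n + \log^2 n)$ per scale. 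Substituting $s = O(\log(n/\delta)\cdot n^{2\rho}/p_1)$ and $t = O(\log n)$ (using that $p_2$ is bounded away from~$1$, so $\log(1/p_2) = \Theta(1)$), multiplying by the $O(\eps^{-1}\log \Delta)$ scales, and absorbing polylogs into $\tilde{O}(\cdot)$ yields total expected runtime $\tilde{O}(M\cdot \frac{\log \Delta}{\eps p_1}\cdot n^{2\rho}\cdot \Run(\cH))$.

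Finally, to convert this expected-runtime guarantee into a high-probability one with the extra $\log\delta^{-1}$ factor, I would apply Proposition~\ref{prop:highProb} to the entire ensemble (treated as a single fully dynamic algorithm whose state is all $I+1$ LSH data structures), with failure probability $\delta/2$; a union bound with the correctness event above gives overall failure probability $\delta$. The main obstacle I anticipate is simply the bookkeeping of independence between the hash functions and the ranking $\pi_i$, and ensuring that the amortized bound of Section~\ref{sec:amortized} (which was stated for a fixed graph) transfers to the LSH setting where the graph itself is revealed adaptively through queries---this is exactly what Lemma~\ref{lem:binaryDataStructure} and the deterministic-given-hashes guarantee it provides are designed to enable.
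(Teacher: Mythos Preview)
Your plan assembles the right ingredients—the threshold-graph reduction, Proposition~\ref{prop:isAThreshold} for the LSH parameters, Theorem~\ref{thm:LSHPreMain} for the per-scale expected cost, and Proposition~\ref{prop:highProb} for the high-probability conversion—and is close in spirit to the paper's argument. There is, however, one substantive piece you are missing.

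You set the LSH parameters $s$ and $t$ directly in terms of $n$, the maximum number of simultaneously active points, but the theorem only gives the algorithm the bound $\hat{M}$ on the stream length; $n$ is an analysis quantity, not an input. If you substitute $\hat{M}$ for $n$ in $s$, the runtime carries a factor $\hat{M}^{2\rho}$ rather than $n^{2\rho}$, which can be arbitrarily worse. The paper closes this gap with an adaptive restart scheme that your plan does not mention: it restarts the whole data structure whenever the active set first exceeds a new power of two (so the running instance always has its size parameter at most $2|P^t|$), and additionally at checkpoints $t_j$ with $t_j - t_{j-1} = \lceil |P^{t_{j-1}}|/2\rceil$ (so that when the active set shrinks, the re-insertion cost amortizes against the preceding updates). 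This is what guarantees that no instance is ever run with size parameter exceeding $2n$, yielding the $n^{2\rho}$ bound without knowing $n$.

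A smaller point you gloss over concerns the interaction between Proposition~\ref{prop:highProb} and LSH correctness. Each restart in that proposition re-instantiates the algorithm with fresh randomness, which here includes the hash functions, so correctness is a fresh event per restart. Your choice $\delta' = \delta/(2\hat{M} I)$ union-bounds over scales and time steps within a single instance but not across restarts. The paper handles this by shrinking the per-instance failure probability to $O(\delta/\log^2(M/\delta))$ and union-bounding over the $O(\log(1/\delta)\log^2 M)$ restarts that occur with high probability; this is an easy fix, but it needs to be made explicit for the argument to go through.
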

\begin{proof}

We first demonstrate that there exists an algorithm $\cA(\delta,M)$ which takes as input $\delta \in (0,1/2)$ and $M \geq 1$, and on a sequence of at most $M$ insertions and deletions of points in $\cX$, with probability $1-\delta$ correctly solves the top-$M$ LFMIS with leaders (Definition \ref{def:LFMISLead}) on a graph $G$ that is $(r,c,0)$-approximate threshold graph for $P$ (Definition \ref{def:approxThreshGraph}), where $P \subset \cX$ is the set of all points which were inserted during the sequence, and runs in total expected time $\alpha_{\delta,M}$, where 

\[\alpha_{\delta,M} = \tilde{O}\left(M^{1+2\rho} \log(M/\delta)  \Run(\cH)  \right)\]
The reduction from having such an algorithm to obtaining a $k$-center solution for every $k \geq 1$, incurring a blow-up of $\eps^{-1} \log \Delta$, and requiring one to scale down $\delta$ by a factor of $O(\eps^{-1} \log \Delta)$ so that all $\eps^{-1} \log \Delta$ instances are correct, is the same as in Section \ref{sec:kCenters}, with the modification that the clustering obtained by a MIS $\cL$ at scale $r \geq 0$ has cost at most $cr$, rather than $r$. Thus, in what follows, we focus on a fixed $r$. 

First, setting $s_{\delta,M} = O(\log (M/\delta ) M^{2\rho} / p_1)$ and $t_M = O(\log_{1/p_2} M)$, by Proposition \ref{prop:isAThreshold} it holds that with probability $1-\delta$ the graph $G_{r,cr}(P,\cH,t_M,s_{\delta,M})$ is a $(r,c,L)$-approximate threshold graph, with $\ex{L} < 2$. Then by Theorem \ref{thm:LSHPreMain}, there is an algorithm which maintains a top-$M$ LFMIS with leaders to the graph $G = G_{r,cr}(P,\cH,t_M,s_{\delta,M}) \cap G_{rc}$, which in particular is a $(r,c,0)$-approximate threshold graph for $P$, and runs in expected time at most  

\[O( M ( s_{\delta,M} L \log M+ t_M s_{\delta,M} \Run(\cH)) \log M+ M \log^2 M )\]
where the expectation is taken over the choice of the random ranking $\pi$. Taking expectation over $L$, which depends only on the hash functions $h_1,\dots,h_{s_{\delta,M}}$, and is therefore independent of $\pi$, the expected total runtime is at most $\alpha_{\delta,M}$ as needed. 

To go from the updated time holding in expectation to holding with probability $1-\delta$, we follow the same proof of Proposition \ref{prop:highProb}, except that we set the failure probability of each instance to be $O(\delta/\log^2(M/\delta))$. By the proof of Proposition \ref{prop:highProb}, the total number of copies ever run by the algorithm is at most $O(\log(1/\delta) \sum_{i=1}^{\log M} i) = O(\log(1/\delta) \log^2 M)$ with probability at most $1-\delta/2$, and thus with probability at least $1-\delta$ it holds that both at most $O(\log(1/\delta) \log^2 M)$  copies of the algorithm are run, and each of them is correct at all times. Note that whenever the runtime of the algorithm exceeds this bound, the algorithm can safely terminate, as the probability that this occurs is at most $\delta$ by Proposition \ref{prop:highProb}.

Put together, the above demonstrates the existence of an algorithm $\bar{\cA}(\delta,M)$ which takes as input $\delta \in (0,1/2)$ and $M \geq 1$, and on a sequence of at most $M$ insertions and deletions of points in $\cX$, with probability $1-\delta$ correctly maintains a $c(2+\eps)$-approximation to the optimal $k$-center clustering simultaneously for all $k \geq 1$, with total runtime at most $\tilde{O}(\eps^{-1} \log \Delta \alpha_{\delta,M})$. However, we would like to only run instances of $\bar{\cA}(\delta,M)$ with $M = O(n)$ at any given point in time, so that the amortized update time has a factor of $n^{2 \rho}$ instead of $M^{2\rho}$. To accomplish this, we greedily pick time steps $1 \leq t_1 < t_2 < \dots < M$ with the property that $t_i - t_{i-1} = \lceil |P^{t_{i-1}}|/2 \rceil$. Observe that for such time steps, we have $|P^{t_{i}}| <  2|P^{t_{i-1}}|$. We then define time steps $1 \leq t_1' < t_2' < \dots < M$ with the property that $t_i'$ is the first time step where the active point set size exceeds $2^i$. We then run an instance of $\bar{\cA}(\delta_0,2^i)$, starting with $i=1$, where $\delta_0 = \delta/M$. Whenever we reach the next time step $t_{i+1}'$ we restart the algorithm $\bar{\cA}$ with parameters $(\delta_0, 2^{i+1})$, except that instead of running $\bar{\cA}$ on the entire prefix of the stream up to time step $t_{i+1}'$, we only insert the points in $P^{t_{i+1}'}$ which are active at that time step. Similarly, when we reach a time step $t_{j}$, we restart $\bar{\cA}$ with the same parameters, and begin by isnerting the active point set $P^{t_{j}}$, before continuing with the stream. 

Note that $\bar{\cA}$ is only restarted $\log n$ times due to the active point set size doubling. Moreover, each time it is restarted due to the time step being equal to $t_j$ for some $j$, the at most $2|P^{t_{j-1}}|$ point insertions required to restart the algorithm can be amortized over the $t_j - t_{j-1} \geq |P^{t_{j-1}}| /2$ prior time steps. Since each instance is correct with probability $1-\delta_0$, by a union bound all instances that are ever restarted are correct with probability $1-\delta$. Note that, since the amortized runtime dependency of the overall algorithm on $M$ is polylogarithmic, substituting $M$ with an upper bound $\hat{M}$ satisfying  $M \leq \hat{M} \leq \poly(M)$ increases the runtime by at most a constant. Moreover, we never run $\bar{\cA}(\delta,t)$ with a value of $t$ larger than $2 n$, which yields the desired runtime.  
\end{proof}

\subsection{Corollaries of the LSH Algorithm to Specific Metric Spaces}\label{sec:LSHcorollaries}
We now state our results for specific metric spaces by applying Theorem \ref{thm:lshMain} in combination with known locally sensitive hash functions from the literature. 
The following corollary follows immediately by application of the locally sensitive hash functions of \cite{datar2004locality,har2012approximate}, along with the bounds from Theorem \ref{thm:lshMain}. 

\begin{corollary}\label{cor:Euclidean} Fix any $c \geq 1$. Then there is a fully dynamic algorithm which, on a sequence of $M$ insertions and deletions of points from $d$-dimensional Euclidean space $(\R^d , \ell_p)$, for $p \geq 1$ at most a constant,  with probability $1-\delta$, correctly maintains a $c(4+\eps)$-approximate $k$-center clustering to the active point set $P^t$ at all time steps $t \in [M]$, and simultaneously for all $k \geq 1$. The total runtime is at most 

\[\tilde{O}\left( M \frac{ \log \delta^{-1} \log \Delta  }{\eps} d n^{1/c}\right)\]
\end{corollary}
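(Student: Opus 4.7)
The plan is to directly invoke \Cref{thm:lshMain} with the $p$-stable LSH family of Datar, Immorlica, Indyk, and Mirrokni~\cite{datar2004locality}, calibrated so that its approximation factor is $2c$ rather than $c$. Concretely, for any $p \in [1,2]$ and any target radius $r > 0$, the family $\cH_r$ is defined by drawing $h_{a,b}(x) = \lfloor (\langle a,x\rangle + b)/w\rfloor$, where the entries of $a \in \R^d$ are i.i.d.\ $p$-stable, $b \in [0,w]$ is uniform, and the width $w$ is chosen as a function of $r$. Standard analysis of this family (see \cite{datar2004locality,har2012approximate}) shows that for an appropriate choice of $w = w(r, c)$ the resulting family is $(r, 2cr, p_1, p_2)$-sensitive with $p_1 = \Omega(1)$, $p_2$ bounded away from $1$, and exponent
\[
\rho \;=\; \frac{\ln(1/p_1)}{\ln(1/p_2)} \;\le\; \frac{1}{2c}.
\]
Moreover each hash function can be evaluated in $O(d)$ time, so $\Run(\cH_r) = O(d)$.

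Plugging this family into \Cref{thm:lshMain} (with approximation parameter $2c$ in place of $c$) yields a fully dynamic algorithm whose maintained clustering has cost at most $2c(2+\eps')\OPT$ at every time step with probability $1-\delta$, simultaneously for all $k \ge 1$. Rescaling $\eps' := \eps/2$ turns this into the claimed $c(4+\eps)$-approximation guarantee. The total runtime given by \Cref{thm:lshMain} is
\[
\tilde{O}\!\left(M \cdot \frac{\log\Delta \, \log \delta^{-1}}{\eps\, p_1}\, n^{2\rho}\, \Run(\cH_r)\right) \;=\; \tilde{O}\!\left(M \cdot \frac{\log\Delta \, \log\delta^{-1}}{\eps}\, d\, n^{1/c}\right),
\]
using $p_1 = \Omega(1)$, $\Run(\cH_r) = O(d)$, and $2\rho \le 1/c$. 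This matches the bound in the corollary statement.

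The only real thing to check is that the hypotheses of \Cref{thm:lshMain} are satisfied \emph{uniformly} in the scale $r$: we need an $(r,2cr,p_1,p_2)$-sensitive family for every $r \in (r_{\min}, r_{\max})$, all with the same $(p_1,p_2)$ (so that $\rho$ is uniform and $p_2$ is bounded away from $1$). This is straightforward because the Datar et al.\ construction is scale-covariant: shrinking $w$ and $r$ by the same factor gives identical collision probabilities, so a single choice of the normalized ratio $w/r$ yields the same $(p_1,p_2)$ at every radius. The claim for $\ell_p$ with $p$ any constant in $[1,2]$ follows; for completeness, the Cauchy variant handles $p=1$ and the Gaussian variant handles $p=2$, with all intermediate $p\in(0,2]$ covered by the general $p$-stable distribution.
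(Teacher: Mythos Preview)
Your proof is correct and follows exactly the approach the paper takes: the paper simply states that the corollary follows immediately by applying the $p$-stable LSH of \cite{datar2004locality,har2012approximate} in Theorem~\ref{thm:lshMain}, and your argument spells this out (including the standard trick of calibrating the LSH approximation to $2c$ so that $2\rho \le 1/c$). Your additional remarks on scale-covariance and uniformity of $(p_1,p_2)$ across radii are a welcome elaboration but do not deviate from the paper's intended route.
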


For the case of standard Euclidean space ($p=2$), one can used the improved ball carving technique of Andoni and Indyk \cite{andoni2006near} to obtain better locally sensitive hash functions, which result in the following:

\begin{corollary}\label{cor:Euclidean2} Fix any $c \geq 1$. Then there is a fully dynamic algorithm which, on a sequence of $M$ insertions and deletions of points from $d$-dimensional Euclidean space $(\R^d , \ell_2)$,  with probability $1-\delta$, correctly maintains a $c(\sqrt{8}+\eps)$-approximate $k$-center clustering to the active point set $P^t$ at all time steps $t \in [M]$, and simultaneously for all $k \geq 1$. The total runtime is at most 

\[\tilde{O}\left( M\frac{ \log \delta^{-1} \log \Delta  }{\eps} d n^{1/c^2 + o(1)}\right)\]
\end{corollary}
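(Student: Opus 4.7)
The plan is to apply \Cref{thm:lshMain} in a black-box manner, instantiating it with the ball-carving locally sensitive hash family of Andoni and Indyk for Euclidean space. The only subtlety is choosing the LSH approximation parameter correctly so that the resulting $k$-center approximation ratio and the exponent of $n$ line up as claimed; in particular, the factor $\sqrt{8} = 2\sqrt{2}$ arises from rescaling the LSH approximation parameter by $\sqrt{2}$ so that $n^{2\rho}$ becomes $n^{1/c^2 + o(1)}$ rather than $n^{2/c^2 + o(1)}$.

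First I would recall the Andoni--Indyk ball-carving result: for every $c' \geq 1$ and every $r > 0$, there is an $(r, c'r, p_1, p_2)$-sensitive family $\cH_r$ for $(\R^d, \ell_2)$ with $\rho = \ln(1/p_1)/\ln(1/p_2) = 1/(c')^2 + o(1)$, where the $o(1)$ is a function of $n$ tending to zero. Each $h \in \cH_r$ can be evaluated in time $\tilde O(d)$, and both $p_1$ and $p_2$ are bounded strictly away from $1$ (with $p_1$ bounded away from $0$ as well, e.g.\ $p_1 = \Omega(1)$ for $c' = \Omega(1)$). Since the family applies at every scale $r$, the hypotheses of \Cref{thm:lshMain} are satisfied for this family.

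Next I would apply \Cref{thm:lshMain} with LSH approximation parameter $c' := c\sqrt{2}$. The theorem then produces a fully dynamic algorithm maintaining, with probability $1-\delta$ and simultaneously for every $k \geq 1$, a $c'(2+\eps) = (2\sqrt{2}\,c + c\sqrt{2}\,\eps)$-approximate $k$-center clustering of the active point set. Rescaling $\eps$ by the constant factor $\sqrt{2}$ (and absorbing it into the $\tilde O$) rewrites this ratio as $c(\sqrt{8} + \eps)$, matching the statement. The runtime bound from \Cref{thm:lshMain} becomes
\[
\tilde{O}\!\left(M \cdot \frac{\log\Delta\,\log\delta^{-1}}{\eps\, p_1} \cdot n^{2\rho}\cdot \Run(\cH)\right)
= \tilde{O}\!\left(M \cdot \frac{\log\Delta\,\log\delta^{-1}}{\eps} \cdot n^{2/(c')^2 + o(1)} \cdot d\right),
\]
and since $2/(c')^2 = 2/(2c^2) = 1/c^2$, this simplifies to the claimed bound $\tilde O\bigl(M \,\frac{\log\delta^{-1}\log\Delta}{\eps}\, d\, n^{1/c^2 + o(1)}\bigr)$.

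The main (and essentially only) technical point to verify carefully is that the Andoni--Indyk family really satisfies the precise quantitative hypotheses of \Cref{thm:lshMain}, namely $p_2$ bounded away from $1$ and the $o(1)$ term in $\rho$ being absorbable into the $n^{o(1)}$ slack; this is standard for the ball-carving construction and holds uniformly in the scale $r$. The rest of the argument is just the parameter substitution above. A minor bookkeeping point is that, as in \Cref{cor:Euclidean}, one must also note that the overhead $1/p_1$ from \Cref{thm:lshMain} is swept into the $\tilde O(\cdot)$ since $p_1 = \Omega(1)$ once $c' = \Omega(1)$, so it contributes no additional $n$-dependence.
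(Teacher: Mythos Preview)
Your proposal is correct and takes essentially the same approach as the paper: apply \Cref{thm:lshMain} with the Andoni--Indyk ball-carving LSH, using the substitution $c' = c\sqrt{2}$ so that the $n^{2\rho}$ factor becomes $n^{1/c^2 + o(1)}$ and the approximation ratio $c'(2+\eps)$ becomes $c(\sqrt{8}+\eps)$ after rescaling~$\eps$. One very minor quibble: for the Andoni--Indyk family, $p_1$ is not literally $\Omega(1)$ but rather inverse-polylogarithmic in $n$ (or $n^{-o(1)}$), which is still absorbed by the $\tilde{O}$ and $n^{o(1)}$ slack, so your conclusion stands.
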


Additionally, one can use the well-known MinHash \cite{broder1997resemblance} to obtain a fully dynamic $k$-center algorithm for the \textit{Jaccard Distance}. Here, the metric space is the set of all subsets of a finite universe $X$, equipped with the distance $d(A,B) = 1- \frac{|A \cap B|}{|A \cup B|}$ for $A,B \subseteq X$. We begin stating a standard bound on the value of $\rho$ for the MinHash LSH family.

\begin{proposition}[\cite{indyk1998approximate}]
Let $\cH$ be the hash family given by 
\[ \cH = \{h_\pi :2^X \to X \; | \;  h_\pi(A) = \arg \min_{a \in A} \pi(a), \; \pi \text{ is a permutation of } X \} \] 
Then for any $c \geq 1$ and $r \in [0,1/(2c)]$, we have that $\cH$ is $(r,cr,1-r,1-cr)$-sensitive for the Jaccard Metric over $X$, where $\rho = 1/c$. 
\end{proposition}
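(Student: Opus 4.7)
The proposition packages together two standard facts about MinHash. The plan is to first establish the collision-probability identity that underlies MinHash, derive the sensitivity parameters $(1-r, 1-cr)$ from it, and then bound $\rho = \ln(1/p_1)/\ln(1/p_2)$ by $1/c$ via a short application of Bernoulli's inequality, using the hypothesis $r \leq 1/(2c)$ only to keep the denominator bounded away from $0$.

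\textbf{Step 1 (collision probability).} I would begin by showing that for any two nonempty sets $A, B \subseteq X$ and a uniformly random permutation $\pi$ of $X$,
\[
\Pr_{\pi}\bigl[h_\pi(A) = h_\pi(B)\bigr] \;=\; \frac{|A \cap B|}{|A \cup B|} \;=\; 1 - d(A,B).
\]
The argument is the textbook one: condition on the element $x^\star = \arg\min_{x \in A \cup B} \pi(x)$. Since $\pi$ is a uniformly random permutation, $x^\star$ is uniformly distributed over $A \cup B$, and $h_\pi(A) = h_\pi(B)$ holds iff $x^\star \in A \cap B$.

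\textbf{Step 2 (sensitivity bounds).} The sensitivity parameters follow immediately. If $d(A,B) \le r$, then
\[
\Pr_\pi\bigl[h_\pi(A) = h_\pi(B)\bigr] = 1 - d(A,B) \ge 1 - r,
\]
giving $p_1 = 1 - r$. Similarly, if $d(A,B) > cr$ then $\Pr_\pi[h_\pi(A) = h_\pi(B)] < 1 - cr$, giving $p_2 = 1 - cr$. The hypothesis $r \in [0, 1/(2c)]$ ensures $1 - cr \ge 1/2 > 0$, so both probabilities lie in $(0,1)$ and $\cH$ is $(r, cr, 1-r, 1-cr)$-sensitive in the sense of \Cref{def:LSH}.

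\textbf{Step 3 (bounding $\rho$).} It remains to show $\rho = \ln(1-r)^{-1} / \ln(1-cr)^{-1} \le 1/c$. Cross-multiplying (both logarithms are positive since $0 < 1-r, 1-cr < 1$), this is equivalent to
\[
c\,\ln\!\tfrac{1}{1-r} \;\le\; \ln\!\tfrac{1}{1-cr}, \qquad \text{i.e.,} \qquad 1 - cr \;\le\; (1-r)^c.
\]
The last inequality is exactly Bernoulli's inequality applied to $1 + x$ with $x = -r \in (-1, 0]$ and exponent $c \ge 1$. (The condition $r \le 1/(2c) < 1$ is needed so $-r > -1$.) This gives $\rho \le 1/c$; the usual convention in LSH statements is to record this bound as ``$\rho = 1/c$''.

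\textbf{Anticipated obstacle.} There is no real obstacle here — all three steps are classical. The only subtlety worth flagging is that the condition $r \le 1/(2c)$ in the hypothesis is stronger than what is strictly needed for Bernoulli ($r < 1/c$ would suffice for the $\rho$ bound, and $r \le 1$ for the sensitivity statement), but keeping $r \le 1/(2c)$ conveniently guarantees that $p_2$ is bounded away from $1$, which is precisely the side condition invoked in the hypotheses of \Cref{thm:lshMain}, making this proposition directly pluggable into that theorem to obtain a fully dynamic $k$-center algorithm for the Jaccard metric.
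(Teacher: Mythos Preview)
Your proposal is correct and follows essentially the same approach as the paper: both establish the collision-probability identity $\Pr[h(A)=h(B)]=1-d(A,B)$, read off $p_1=1-r$, $p_2=1-cr$, and then verify $\ln(1-r)/\ln(1-cr)\le 1/c$. The only difference is that the paper cites this last inequality as Claim~3.11 of \cite{har2012approximate}, whereas you supply a direct proof via Bernoulli's inequality $(1-r)^c\ge 1-cr$; this is just filling in the content of the cited claim, not a different route.
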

\begin{proof}
If $d(A,B) = r$, for any $r \in [0,1]$, we have 
\[ \prb{h\sim \cH}{h(A) = h(B)} = \frac{|A \cap B|}{|A \cup B|} = 1-r \]
Thus, we have $p_1 = 1-r$ and $p_2 = 1-cr$. Now by Claim $3.11$ of \cite{har2012approximate}, we have that for all $x \in [0,1)$ and $c \geq 1$ such that $1-cx > 0$, the following inequality holds:
\[ \frac{\ln(1-x)}{\ln(1-cx)} \leq \frac{1}{c} \]
Thus we have the desired bound:

\[       \rho \leq \frac{\ln(1-r)}{\ln(1- cr)} \leq 1/c \qedhere ¸\]
\end{proof}

\begin{corollary}\label{cor:Jaccard}
Let $X$ be a finite set, and fix any $c \geq 1$. Then there is a fully dynamic algorithm which, on a sequence of $M$ insertions and deletions of subsets of $X$ equipped with the Jaccard Metric,  with probability $1-\delta$, correctly maintains a $c(4+\eps)$-approximate $k$-center clustering to the active point set $P^t$ at all time steps $t \in [M]$, and simultaneously for all $k \geq 1$. The total runtime is at most 
\[\tilde{O}\left(M \frac{ \log \delta^{-1} \log \Delta  }{\eps} |X| n^{1/c}\right)\]
\end{corollary}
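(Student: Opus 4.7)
The plan is to apply \cref{thm:lshMain} to the MinHash family from the preceding proposition, but instantiated with approximation parameter $2c$ rather than $c$. By that proposition (applied with $c' := 2c$), MinHash is $(r, 2cr, 1-r, 1-2cr)$-sensitive for any $r \in [0, 1/(4c)]$ with $\rho = \ln(1-r)/\ln(1-2cr) \le 1/(2c)$. A single MinHash $h_\pi(A) = \arg\min_{a\in A}\pi(a)$ can be evaluated in $O(|A|) = O(|X|)$ time given constant-time access to a random priority $\pi:X\to[0,1]$ truncated to $O(\log n)$ bits, so $\Run(\cH) = O(|X|)$. The parameter $p_1 = 1-r$ is bounded away from $0$ and $p_2 = 1-2cr$ is bounded away from $1$ uniformly over the relevant range, so the hypotheses of \cref{thm:lshMain} are satisfied.

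Next, I would address the fact that the sensitivity bound only covers radii $r \le 1/(4c)$, whereas the reduction of \cref{sec:kCenters} tests geometric scales throughout $[r_{\min}, r_{\max}]$ with $r_{\max} \le 1$. For the $O(\eps^{-1}\log\Delta)$ scales $r \le 1/(4c)$, the MinHash-based LSH serves as the required sensitive family and yields a $k$-center approximation of $(2c)(2+\eps') = c(4+\eps)$ via \cref{thm:lshMain} after setting $\eps' = \eps/2$. For the $O(\eps^{-1}\log c)$ remaining scales $r > 1/(4c)$, I would use the trivial fallback that, since every pair of points lies at Jaccard distance at most $1$, if $\OPT > 1/(4c)$ then an arbitrary $k$-point clustering already has cost $1 \le 4c \cdot \OPT \le c(4+\eps) \cdot \OPT$. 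Stitching the two regimes together yields a $c(4+\eps)$-approximation globally.

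Plugging $\rho \le 1/(2c)$, $p_1 = \Omega(1)$, and $\Run(\cH) = O(|X|)$ into the runtime bound of \cref{thm:lshMain} gives
\[
\tilde{O}\!\left(M \cdot \frac{\log \Delta \log \delta^{-1}}{\eps} \cdot |X| \cdot n^{2\rho}\right) \;=\; \tilde{O}\!\left(M \cdot \frac{\log \Delta \log \delta^{-1}}{\eps} \cdot |X| \cdot n^{1/c}\right),
\]
matching the corollary's claim, and the success probability $1-\delta$ is inherited directly from \cref{thm:lshMain}.

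The main obstacle I anticipate is carefully reconciling the narrow validity range $r \le 1/(4c)$ of the MinHash sensitivity with the wider set of scales tested by the reduction. The fallback above exploits only the diameter bound of the Jaccard metric, but one must verify that it composes correctly with the high-probability guarantees of \cref{thm:lshMain} across all $O(\eps^{-1}\log\Delta)$ simultaneously maintained thresholds and that the boundary handling does not inflate either the approximation ratio or the update-time accounting.
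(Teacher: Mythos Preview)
The proposal is correct and takes essentially the same approach as the paper: instantiate the LSH framework with MinHash at approximation parameter $2c$ (so that $n^{2\rho}\le n^{1/c}$ and $\Run(\cH)=\tilde O(|X|)$), and handle scales beyond the MinHash validity range by the trivial fallback that the Jaccard metric is bounded by~$1$. The paper invokes the per-scale Theorem~\ref{thm:LSHPreMain} directly rather than the packaged Theorem~\ref{thm:lshMain}, but the argument and the boundary handling are the same.
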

\begin{proof}
Letting $r_{\min}$ be the minimum distance between points in the stream, we run a copy of the top-$M$ LFMIS algorithm of Theorem \ref{thm:LSHPreMain} for $r = r_{\min}, (1+\eps)r_{\min},\dots,1/(2c)$, where we set the value of the approximation factor $c$ in Theorem \ref{thm:LSHPreMain} to be scaled by a factor of $2$, so that each instance computes a LFMIS on a $(r,2c,0)$-approximate threshold graph. Note that for any time step $t$, if the copy of the algorithm for $r=1/c$ contains a LFMIS with at least $k+1$ vertices, it follows that the cost of the optimal clustering is at most $1/(4c)$. In this case, we can return an arbitrary vertex as a $1$-clustering of the entire dataset, which will have cost at most $1$, as the Jaccard metric is bounded by $1$, and thereby yielding a $4c$ approximation. Otherwise, the solution is at most a 
$c(4+2\eps)$-approximation, which is the desired result after a re-scaling of $\eps$. Note that $\Run(\cH) = \tilde{O}(|X|)$ to evaluate the MinHash, which completes the proof.  
\end{proof}

\section{Algorithm for $k$-Center Against an Adaptive Adversary}
\label{sec:app-det-upper}

\NewDocumentCommand{\subtree}{O{T} m}{%
	#1(#2)
}
\NewDocumentCommand{\subtreepoints}{m}{%
	\mathcal{P}(#1)
}
\NewDocumentCommand{\detcen}{m}{%
	C_{#1}%
}

\SetKwFunction{FnInsertNode}{InsertIntoNode}
\SetKwFunction{FnDeleteNode}{DeleteFromNode}
\SetKwFunction{FnTryCenter}{TryMakeCenter}
\SetKwFunction{FnInsert}{InsertPoint}
\SetKwFunction{FnDelete}{DeletePoint}
\SetKwData{iscen}{isCenter}
\SetKwData{islowb}{lowerBoundWitness}

We describe a deterministic algorithm for $k$-center that \aw{maintains an $O(\min\{\log(n/k) / \log \log (n + \Delta),k\})$-approximate solution
with an update time of $O(k \log n \cdot \log\Delta \cdot \log (n+\Delta))$.
Given some constant $\epsilon >0$, we maintain one data structure for each value
$\OPT'$ that is a power of $1+\epsilon$ in $[1,\Delta]$, i.e.,
$O(\log_{1+\epsilon}\Delta)$ many. The data structure for each such
value $\OPT'$ outputs a solution of cost at most $O(\min\{\log(n/k) / \log \log (n+\Delta),k\})\OPT'$
or asserts that $\OPT>\OPT'$. We output the solution with smallest
cost which is hence a $O(\min\{\log(n/k) / \log \log (n+\Delta),k\})$-approximation.

Given the value $\costbound$, our algorithm maintains} a hierarchy on the input represented by a $B$-ary tree $T$, which we call \emph{clustering tree}, where $B = \log (n + \Delta)$. The main property of the clustering tree is that the input points are stored in the leaves and each inner node stores a $k$-center set for the $Bk$ centers that are stored at its $B$ children.

\begin{definition}[clustering tree]
	\label{def:clustree}
	Let $\costbound > 0$, let $P$ be a set of points and let $T$ be a $B$-ary tree, where $B \geq 2$. We call $T$ a \emph{clustering tree} on $P$ with \emph{node-cost} $\costbound$ if the following conditions hold:
	\begin{enumerate}
		\item each node $u$ stores at most $Bk$ points from $P$, denoted $\points{u}$, \label{enum:ct-leaf-node}
		\item for each node $u$, at most $k$ points, denoted $\detcen{u}$, are marked as centers. Either, their $k$-center cost is at most $\costbound$ on all points stored in $u$, or $u$ is marked as a witness that there is no center set with cost at most $\costbound/2$, \label{enum:ct-cost}
		\item each inner node stores the at most $Bk$ centers of its children. \label{enum:ct-innernode}
	\end{enumerate}
\end{definition}

For each node $u$ in $T$, the algorithm maintains a corresponding graph on the at most $Bk$ points $\points{u}$ it stores, which is called \emph{blocking graph}. Without loss of generality, we assume that $T$ is a full $B$-ary tree with $n/(Bk)$ leaves. We explain in the proof of \cref{lem:det-main-kcenter} how to get rid of this assumption. For the sake of simplicity, we identify a node $u$ with its associated blocking graph $N=(V,E)$ in the following. For each node $N=(V,E)$, at most $k$ points are marked as centers (\iscen in \cref{alg:det-guess-node}), and the algorithm maintains the invariant that two centers $u,v \in V$ have distance at least $\costbound$ by keeping record of \emph{blocking} edges in the blocking graph between centers and points that have distance less than $\costbound$ to one of these centers. We say that a center $u$ \emph{blocks} a point $v$ (from being a center) if there is an edge $(u,v)$ in the blocking graph. In addition, the algorithm records whether $N$ contains more than $k$ points with pairwise distance greater than $\costbound$ (\islowb in \cref{alg:det-guess-node,alg:det-guess}).

\paragraph*{Insertions (see \FnInsert).} When a point $u$ is inserted into $T$, a node $N=(V,E)$ with less than $Bk$ points is selected and it is checked whether $d(v,u) \leq \costbound$ for any center $v \in V$. If this is the case, the algorithm inserts an edge $(u,v)$ for every such center $v$ into $E$ and terminates afterwards. Otherwise, the algorithm checks whether the number of centers is less than $k$. If this is the case, it marks $u$ as a center and inserts an edge $(u,w)$ for \emph{each point} $w \in V$ with $d(u,w) \leq \costbound$ and recurses on the parent of $N$. Otherwise, if there are more than $k-1$ centers in $u$, the algorithm marks $N$ as witness and terminates.

\paragraph*{Deletions (see \FnDelete).} When a point $u$ is deleted from $T$, the point is first removed from the leaf $N$ (and the blocking graph) where it is stored. If $u$ was not a center, the algorithm terminates. Otherwise, the algorithm checks whether any points were unblocked (have no adjacent node in the blocking graph) and, if this is the case, proceeds by attempting to mark these points as centers and inserting them into the parent of $N$ one by one (after marking the first point as center, the remaining points may be blocked again). Afterwards, the algorithm recurses on the parent of $N$.

\begin{algorithm}
	\SetKwData{neigh}{neighbors}
	\SetKwData{newcen}{newCenters}
	\KwData{$\iscen$ is a boolean array on the elements of $V$, $\islowb$ is a boolean array on the nodes of $T$}
	\Fn{\FnInsertNode{$N = (V,E), p, \costbound$}}{
		insert $p$ into $V$ \;
		\ForEach{$v \in V \setminus \{ p \}$}{
			\If{$\iscen[v] \wedge d(p,v) \leq \costbound$}{
				insert $(v,p)$ into $E$ \;
			}
		}
		\Return \FnTryCenter($G, p, \costbound$) \;
	}
	\Fn{\FnDeleteNode{$N = (V,E), p, \costbound$}}{
		$\newcen \gets \emptyset$ \;
		$\neigh \gets \ngh{p}$ \;
		delete $p$ from $N$ \;
		\If{$\iscen[p] = true$}{
			\ForEach{$u \in \neigh$}{
				$\newcen \gets \newcen \cup \FnTryCenter{N, u, \costbound}$ \;
			}
		}
		\Return{$\newcen$} \;
	}
	\Fn{\FnTryCenter{$N = (V,E), p, \costbound$}}{
		\If{$\dg{p} = 0 \wedge \lvert \{ v \mid v \in V \wedge \iscen[v] \} \rvert < k$}{
			$\iscen[p] \gets true$ \;
			\ForEach{$v \in V \setminus \{ p \}$}{
				\If{$d(p,v) \leq \costbound$}{
					insert $(p,v)$ into $E$ \;
				}
			}
			$\islowb[N] \gets false$ \;
			\Return{$\{ p \}$} \;
		}
		\ElseIf{$\dg{p} = 0$}{
			$\islowb[N] \gets true$ \;
		}
		\Return{$\emptyset$} \;
	}
	\caption{\label{alg:det-guess-node} Insertion and deletion of a point $P$ in a node of the clustering tree $T$ (represented by a blocking graph $G$).}
\end{algorithm}

\begin{algorithm}
	\SetKwData{cen}{centers}
	\SetKwData{newcen}{newCenters}
	\SetKwData{failed}{failed}
	\KwData{$\islowb$ is a boolean array on the nodes of $T$}
	\Fn{\FnInsert{$T, p, \costbound$}}{
		$N \gets$ leaf in $T$ that contains less than $Bk$ elements \;
		\Do{$\cen = \{ p \}$}{
			$\cen \gets \FnInsertNode{$N, p$}$ \;
			$N \gets N.parent$ \;
		}
	}
	\Fn{\FnDelete{$T, p, \costbound$}}{
		$N \gets$ leaf in $T$ that contains $p$ \;
		$\cen \gets \emptyset; \failed \gets false$ \;
		\Do{$N \neq null$}{
			$\cen \gets \cen \cup \FnDeleteNode{$N, p$}$ \;
			$\newcen \gets \emptyset$ \;
			\ForEach{$v \in \cen$}{
				$\newcen \gets \newcen \cup \FnInsertNode{N.parent, v}$ \;
			}
			$\cen \gets \newcen$ \;
			$N \gets N.parent$ \;
		}
	}
	\caption{\label{alg:det-guess} Insertion and deletion of a point $p$ in the clustering tree $T$.}
\end{algorithm}

Given a rooted tree $T$ and a node $u$ of $T$, we denote the subtree of $T$ that is rooted at $u$ by $\subtree{u}$. For a clustering tree $T$, we denote the set of all points stored at the leaves of $\subtree{u}$ by $\subtreepoints{u}$. Recall that the points directly stored at $u$ are denoted by $\points{u}$. Observe that for each node $u$ in a clustering tree, $\subtree{u}$ is a clustering tree of $\subtreepoints{u}$.

\subsection{Structural Properties of the Algorithm}

We show that \cref{alg:det-guess} maintains a clustering tree.

\begin{lemma}
	\label{lem:det-clustree-insert}
	Let $T$ be a clustering tree on a point set $P$. After calling $\FnInsert(T, p)$ (see \cref{alg:det-guess}) for some point $p \notin P$, $T$ is a clustering tree on $P \cup \{ p \}$.
\end{lemma}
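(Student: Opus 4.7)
My plan is to verify the three conditions of \cref{def:clustree} by induction along the root-ward path traversed by \FnInsert. Before the case analysis, I would isolate an auxiliary invariant that the algorithm maintains inside every node: any two centers lie at distance strictly greater than $\costbound$. This invariant holds because \FnTryCenter marks a point as a center only when its blocking degree in the current graph is zero, while \FnInsertNode has already inserted a blocking edge from every existing center within $\costbound$. This invariant will be the key tool for the witness argument below.

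At the leaf $N$ where $p$ is first placed, the size bound is immediate because the algorithm picks a leaf with fewer than $Bk$ points. For the cost/witness condition I would case-split on the outcome of \FnTryCenter. If $p$ has a blocking neighbor, it is within $\costbound$ of an existing center and is covered; if $p$ is promoted, it covers itself; and if $p$ is at distance $>\costbound$ from all $k$ existing centers (so $\islowb[N]$ becomes \textsc{true}), the auxiliary invariant shows that $N$ now contains $k{+}1$ points pairwise at distance $>\costbound$, so the triangle inequality certifies that no $k$-center solution on $\points{N}$ achieves cost $\leq \costbound/2$. If $\islowb[N]$ was already \textsc{true} beforehand, insertion preserves the previously recorded witness since the update only adds a point. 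The third condition is vacuous at a leaf.

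For the inductive step at an inner node, I would exploit that the do-while loop of \FnInsert advances only when $p$ was newly promoted to a center in the previously processed child, which requires that child to have held at most $k-1$ centers. Hence before the update the parent stored at most $(k-1) + (B-1)k = Bk - 1$ centers from its children, leaving room for $p$ and preserving the size bound. The third condition is maintained because the only change to the child's center set is the addition of $p$, which is precisely what the algorithm inserts into the parent; the cost/witness condition is verified by the same case analysis used at the leaf. When the loop exits at some ancestor $N^{\star}$, the corresponding child of $N^{\star}$ did not gain a center on this update, so the centers stored at every strict ancestor of $N^{\star}$ are unchanged and their invariants persist from the precondition. The main obstacle in this plan is the capacity argument: preserving the $Bk$ bound at propagation hinges on the observation that the loop advances only after a successful promotion, which forces the child to have held strictly fewer than $k$ centers beforehand; everything else reduces to straightforward bookkeeping on the blocking graph.
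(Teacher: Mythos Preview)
Your proof is correct and follows the same inductive structure along the root-ward path as the paper. You are actually more thorough in two places: the paper's proof omits the $Bk$ capacity check at inner nodes (condition~\ref{enum:ct-leaf-node}), which you handle via the ``child held at most $k{-}1$ centers'' observation, and your direct ``$k{+}1$ pairwise-far points'' witness argument is a cleaner route to condition~\ref{enum:ct-cost} than the paper's pigeonhole on optimal clusters.
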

\begin{proof}
	We prove the statement by induction over the recursive calls of \FnInsertNode in \FnInsert. Let $N=(V,E)$ be the leaf in $T$ where $p$ is inserted. Condition~\ref{enum:ct-leaf-node} in \cref{def:clustree} is guaranteed for $N$ by \FnInsert. The algorithm \FnInsertNode ensures that $p$ is marked as a center only if it is not within distance $\costbound$ of any other center. Let $C$ be the center set of $N$ before inserting $p$. By condition~\ref{enum:ct-cost}, $C$ has $\cost[V]{C} \leq \costbound$. Let $C'$ be the center set of any optimal solution on $V$. If $\cost[V]{C'} \leq \costbound / 2$, each center of $C$ covers at least one cluster of $C'$. By pigeonhole principle, $p$ is not blocked by a center in $C$ if and only if $\lvert C \rvert < k$. Otherwise, if $\cost[V]{C'} > \costbound/2$, $p$ is chosen if no center in $C$ covers $p$ and $\lvert C' \rvert < k$, or $N$ is marked as witness. It follows that condition~\ref{enum:ct-cost} is still satisfied for $N$ after \FnInsertNode terminates.
	
	Now, let $N=(V,E)$ be any inner node in a call to $\FnInsertNode(N,p)$. We note that such call is only made if $p$ was marked as a center in its child $N'$ on which $\FnInsertNode$ was called before by $\FnInsert$. Thus, $p$ is inserted into $V$ if and only if $p$ is a center in $N'$. Therefore, condition~\ref{enum:ct-cost} is satisfied for $N$. By the above reasoning, condition~\ref{enum:ct-innernode} is also satisfied.
\end{proof}

\begin{lemma}
	Let $T$ be a clustering tree on a point set $P$. After calling $\FnDelete(T, p)$ (see \cref{alg:det-guess}) for some point $p \in P$, $T$ is a clustering tree on $P \setminus \{ p \}$.
\end{lemma}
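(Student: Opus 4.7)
The plan is to proceed by induction along the path from the leaf $N_0$ containing $p$ up to the root, mirroring the iterations of the $\Do$-loop in \FnDelete. The induction hypothesis at iteration $i$ will state that after the work at $N_i$ (the $\FnInsertNode$ calls that relay the new centers from $N_{i-1}$, followed by $\FnDeleteNode(N_i, p)$), the subtree $\subtree{N_i}$ is a clustering tree on $\subtreepoints{N_i} \setminus \{p\}$, and that the set $\cen$ carried into iteration $i+1$ consists exactly of the centers of $N_i$ that were not centers of $N_i$ before the update. I will verify conditions~\ref{enum:ct-leaf-node}--\ref{enum:ct-innernode} of \cref{def:clustree} at each $N_i$ in turn.

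For the base case at the leaf $N_0$, condition~\ref{enum:ct-leaf-node} is immediate and condition~\ref{enum:ct-innernode} is vacuous. For condition~\ref{enum:ct-cost}, the key observation is that only edges incident to $p$ disappear from the blocking graph, so every point $q \in \points{N_0} \setminus \{p\}$ whose blockers include some center other than $p$ stays blocked. Hence the only candidates for a status change are the former neighbors $\ngh{p}$, which are precisely the points \FnDeleteNode feeds into \FnTryCenter. A case analysis on the outcome of each \FnTryCenter call will then show that either the updated center set covers $\points{N_0}$ within distance $\costbound$, or \islowb stays or turns true and certifies $k+1$ pairwise $\costbound$-separated points.

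For the inductive step at an inner node $N_i$ with $i \geq 1$, I will first invoke condition~\ref{enum:ct-innernode} at $N_{i-1}$ before the update to conclude that $\points{N_i}$ previously coincided with the old centers of $N_{i-1}$; after the update to $N_{i-1}$ handled by the induction, the new centers of $N_{i-1}$ differ from the old ones by the set $\cen$ (additions) and possibly by $\{p\}$ (deletion, if $p$ was a center in $N_{i-1}$). Thus the sequence of $\FnInsertNode(N_i, v)$ calls for $v \in \cen$, followed by $\FnDeleteNode(N_i, p)$, realigns $\points{N_i}$ with the updated center set of $N_{i-1}$ and restores condition~\ref{enum:ct-innernode}. \cref{lem:det-clustree-insert} covers the effect of the \FnInsertNode batch on conditions~\ref{enum:ct-leaf-node} and~\ref{enum:ct-cost} locally at $N_i$, and the base-case argument applied to $\FnDeleteNode(N_i, p)$ closes these conditions after the deletion.

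The step I expect to be the main obstacle is verifying condition~\ref{enum:ct-cost} after \FnDeleteNode, since former neighbors of $p$ are processed in an arbitrary order and a newly promoted center may re-block candidates considered later. I plan to resolve this by noting that \FnTryCenter evaluates its guard on $\dg{\cdot}$ against the current state of the blocking graph, so the loop terminates with a final center set that is maximal among the still-unblocked points subject to the cap of $k$: if any point remains unblocked after the loop, the guard guarantees that $k$ centers are already present, whence \islowb faithfully witnesses $k+1$ pairwise $\costbound$-separated points. A secondary subtlety is that some members of $\cen$ inserted into $N_i$ may be temporarily blocked by the old centers of $N_{i-1}$ still residing in $N_i$; the subsequent $\FnDeleteNode(N_i, p)$, via its recursive \FnTryCenter calls on $p$'s former neighbors, will unblock and reconsider them exactly when needed, which explains the order in which insertions and deletion are interleaved inside the $\Do$-loop.
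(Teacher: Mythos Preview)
Your proposal is correct and follows essentially the same inductive structure as the paper's proof: induction over the loop iterations of \FnDelete, verifying the three clustering-tree conditions at each node on the leaf-to-root path, and propagating the newly created centers upward via \FnInsertNode.

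The one noteworthy difference is in how you handle condition~\ref{enum:ct-cost} after \FnDeleteNode. The paper argues by comparison with an optimal center set $C'$: if $\cost[V]{C'} \le \costbound/2$, then every unblocked former neighbor of $p$ covers some optimal cluster, so any maximal pairwise-nonblocking selection of $k - |C|$ of them restores coverage; otherwise the witness flag is justified. You instead argue operationally from the blocking-graph invariants (only edges incident to $p$ vanish, so only $\ngh{p}$ can become uncovered; each such point is either re-blocked by a surviving center, promoted, or triggers \islowb when $k$ centers already exist). Both arguments are valid; yours is slightly more direct and avoids invoking the optimum, while the paper's makes the connection to the $\costbound/2$ lower-bound certificate more explicit. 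Your treatment of the interleaving of insertions into $N_i$ followed by the deletion of $p$ from $N_i$, and the resulting temporary blocking, is also more detailed than the paper's, which handles this step in a single sentence.
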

\begin{proof}
	We prove the statement by induction over the loop's iterations in \FnDelete. Let $N=(V,E)$ be the leaf in $T$ where $p$ was inserted. \FnDeleteNode deletes $p$ from $N$ and iterates over all unblocked points to mark them as centers one by one. After deleting $p$, condition~\ref{enum:ct-leaf-node} holds for $N$. Let $U$ be the set of unblocked points after removing $p$, let $C$ be the center set after removing $p$ from $V$, and let $C'$ be the center set of any optimal solution on $V \setminus \{ p \}$. If $\cost[V]{C'} \leq \costbound / 2$, each unblocked point from $U$ covers at least one cluster of $C'$ with cost $\costbound$. Therefore, any selection of $k - \lvert C \rvert$ points from $U$ that do not block each other together with $C$ is a center set for $V$ with cost $\costbound$. Otherwise, if $\cost[V]{C'} > \costbound / 2$, a set of at most $k - \lvert C \rvert$ points from $U$ is chosen, or $N$ is marked as witness. It follows that condition~\ref{enum:ct-cost} is still satisfied for $N$ after \FnDeleteNode terminates. Let $C''$ be the center set that is returned by \FnDeleteNode. \FnDelete inserts all points from $C''$ into the parent node. This reinstates condition~\ref{enum:ct-innernode} on the parent node. Then, \FnDeleteNode recurses on the parent.
\end{proof}

\subsection{Correctness of the Algorithm}

We use the following notion of super clusters and its properties to prove the $O(k)$ upper bound on the approximation ratio of the algorithm.

\begin{definition}[super cluster]
	\label{def:supercluster}
	Let $P$ be a set of points and let $C$ be a center set with $\cost[P]{C} \leq 2\costbound$. Consider the graph $G = (C, E)$, where $E = \{ (u,v) \mid d(u,v) \leq 2\costbound \}$. For every connected component in $G$, we call the union of clusters corresponding to this component a \emph{super cluster}.
\end{definition}

\begin{lemma}
	\label{lem:clustree-supercluster}
	Let $T$ be a clustering tree constructed by \cref{alg:det-guess} with node-cost $\costbound$ and no node marked as witness. For any node $N$ in $T$, $N$ contains one point from each super cluster in $\subtreepoints{N}$ that is marked as center.
\end{lemma}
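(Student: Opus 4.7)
The plan is to induct on the height of $N$ in $T$, starting from the leaves. The key structural property I repeatedly use is the \emph{blocking invariant} enforced by \cref{alg:det-guess-node}: every point $p \in \points{N}$ either lies in $\detcen{N}$, or is blocked by some $c \in \detcen{N}$ with $d(c,p) \leq \costbound$. In particular, every point of $\points{N}$ is within distance $\costbound \leq 2\costbound$ of some member of $\detcen{N}$.

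For the base case, $N$ is a leaf and $\subtreepoints{N} = \points{N}$. Consider any super cluster $S \subseteq \subtreepoints{N}$, corresponding to a connected component $C'$ of the super cluster graph on the underlying center set. Pick any $p \in C'$; the blocking invariant supplies $c \in \detcen{N}$ with $d(c,p) \leq \costbound \leq 2\costbound$, which joins $c$'s cluster to $p$'s cluster by an edge of the super cluster graph. Hence $c \in S$ and $\detcen{N} \cap S \neq \emptyset$.

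For the inductive step, let $N$ be an inner node with children $N_1,\ldots,N_B$, and assume the claim for each $N_i$. Fix a super cluster $S$ of $\subtreepoints{N}$ and an index $i$ with $S \cap \subtreepoints{N_i} \neq \emptyset$. Restricted to $\subtreepoints{N_i}$, the set $S$ decomposes into (sub-)super clusters of $\subtreepoints{N_i}$, because deleting the super cluster graph edges that cross between subtrees can only split its components further. By the inductive hypothesis at least one such piece contains a center from $\detcen{N_i} \subseteq \points{N}$, giving a representative $c \in S \cap \points{N}$. Now apply the blocking invariant at $N$: either $c \in \detcen{N}$ and we are done, or some $c' \in \detcen{N}$ satisfies $d(c,c') \leq \costbound$. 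By the triangle inequality $c'$ lies within distance $2\costbound$ of a point of $S$, so its cluster shares a super cluster graph edge with $S$'s component, placing $c' \in \detcen{N} \cap S$.

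The main obstacle I anticipate is justifying that the super cluster graph is genuinely well-defined at interior nodes: \cref{def:supercluster} requires the underlying center set $C$ of $\subtreepoints{N}$ to satisfy $\cost[\subtreepoints{N}]{C} \leq 2\costbound$, whereas a naive triangle inequality walking from the centers at $N$ down to the leaves would accrue a factor of $\costbound$ per level of $T$. Controlling this blowup requires a careful choice of the representing center set at each level together with a telescoping argument that leans on the no-witness assumption, so that the triangle-inequality conclusions in the inductive step genuinely place $c'$ inside the same super cluster as $c$.
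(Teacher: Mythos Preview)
The paper's argument is a direct two-line contradiction with no explicit induction: assuming $p \in S \cap \points{N}$ but $\detcen{N}\cap S=\emptyset$, the no-witness condition yields some $c\in\detcen{N}$ with $d(c,p)\le\costbound$, and the super-cluster definition then forces $c\in S$. Your inductive scaffold is a legitimate way to supply the premise $\points{N}\cap S\neq\emptyset$ at inner nodes (which the paper leaves implicit), but the obstacle you flag is self-created. You never need a fresh center set for each $\subtreepoints{N_i}$: keep the \emph{single} super-cluster decomposition on $\subtreepoints{N}$ throughout, and take as inductive hypothesis ``whenever $\subtreepoints{N_i}$ meets $S$, so does $\detcen{N_i}$.'' With that formulation the ``sub-super-clusters of $\subtreepoints{N_i}$'' step and the attendant cost blowup simply disappear.

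A separate gap sits in the line ``$d(c,p)\le\costbound$ joins $c$'s cluster to $p$'s cluster by an edge of the super-cluster graph'' (and the analogous line in the inductive step). Edges of the super-cluster graph run between \emph{centers of $C$}, so the quantity you must bound is the distance between the $C$-center of $c$ and the $C$-center of $p$, not $d(c,p)$ itself. If the defining center set $C$ has cost $\alpha$, the bound you actually get is $\alpha+\costbound$ in the base case (where you chose $p\in C$, so its $C$-center is $p$) and $2\alpha+\costbound$ in the inductive step; this is $\le 2\costbound$ only when $\alpha\le\costbound/2$, which is exactly the optimal center set assumed in \cref{lem:det-kcenter-cost}. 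The paper compresses the same dependence into ``by \cref{def:supercluster}''; you should make it explicit rather than treat it as a triangle-inequality triviality.
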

\begin{proof}
	Let $S$ be a supercluster of $P$. Let $N$ be a node in $T$ that contains a point $p \in S$. For the sake of contradiction, assume that there exists no point $q \in S \cap N$ that is marked as center. By \cref{def:supercluster}, the $k$-center clustering cost of the centers in $N$ is greater than $\costbound$. By \cref{def:clustree}.\ref{enum:ct-cost}, this implies that a point must be marked as witness. This is a contradiction to the assumption that no node is marked as witness.
\end{proof}

The following simple observation leads to the bound of $O(\log (n/k) / \log \log (n+\Delta))$ on the approximation ratio.

\begin{observation}
	\label{lem:kcenter-of-cluster}
	Let $c > 0$, $\ell \in \mathbb{N}$, let $P_1, \ldots, P_\ell$ be sets of points and let $C_1, \ldots, C_\ell$ so that $\cost[P_i]{C_i} \leq c$ for every $i \in [\ell]$. For every $k$-center set $C'$ with $\cost[\cup_i C_i]{C'} \leq c'$ we have $\cost[\cup_i P_i]{C'} \leq c + c'$ by the triangle inequality.
\end{observation}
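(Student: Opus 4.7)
The plan is to prove this by a direct application of the triangle inequality, chaining one point-to-center distance with one center-to-center distance. The statement hypotheses give two layers of covering: first the $C_i$ cover the $P_i$ with radius $c$, then $C'$ covers $\bigcup_i C_i$ with radius $c'$, and we want to conclude that $C'$ covers $\bigcup_i P_i$ with radius $c + c'$.

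First I would fix an arbitrary point $p \in \bigcup_i P_i$, and pick any index $i$ with $p \in P_i$. From $\cost[P_i]{C_i} \leq c$, there exists a center $c_i \in C_i$ with $d(p, c_i) \leq c$. Next, since $c_i \in \bigcup_j C_j$ and $\cost[\cup_j C_j]{C'} \leq c'$, there exists a center $q \in C'$ with $d(c_i, q) \leq c'$. By the triangle inequality,
\[
d(p, q) \le d(p, c_i) + d(c_i, q) \le c + c'.
\]
In particular, $d(p, C') \le c + c'$. Taking the maximum (or, for non-$\infty$ norms, the appropriate aggregation) over $p \in \bigcup_i P_i$ yields $\cost[\cup_i P_i]{C'} \leq c + c'$.

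I do not anticipate any real obstacle here: the observation is essentially the standard ``composition of covers'' fact, and the only mild subtlety is being careful about which index $i$ is chosen for a given $p$ (any witness suffices). One small note is to be consistent with the definition of $\mathrm{cost}$ used in the paper (the $\infty$/max form for $k$-center), but the triangle-inequality argument works pointwise and so immediately extends to the $\max$ aggregation.
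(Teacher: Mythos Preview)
Your proof is correct and is exactly the intended argument: the paper states this as an observation with the justification ``by the triangle inequality'' inline and gives no further proof, and your pointwise chaining $d(p,q)\le d(p,c_i)+d(c_i,q)\le c+c'$ is precisely that.
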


We combine the previous results and obtain the following approximation ratio for our algorithm.

\begin{lemma}
	\label{lem:det-kcenter-cost}
	Let $T$ be a clustering tree on a point set $P$ that is constructed by \cref{alg:det-guess} with node-cost $\costbound$. Let $C$ be the points in the root of $T$ that are marked as centers. If no node in $C$ is marked as witness, $ \cost[P]{C} \leq \min\{k, \log(n/k) / \log B \} \cdot 4 \opt{P}$. Otherwise, $\opt{P} \geq \costbound / 2$.
\end{lemma}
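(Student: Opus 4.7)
The plan splits on whether any node of $T$ is marked as witness. The ``otherwise'' branch is almost immediate. If some node $N$ is marked as witness, this marking happened during an operation in which a point $p\in\points{N}$ was certified to have zero blocking edges while $|\detcen{N}|=k$. Each center in $\detcen{N}$ was itself inserted with zero blocking edges at its own moment of promotion, so the $k$ existing centers are pairwise at distance $>\costbound$; together with $p$ they form $k+1$ points of $P$ (points at leaves lie in $P$, and inner-node entries are just leaf centers propagated upward, hence also in $P$) pairwise at distance $>\costbound$. Any $k$-clustering of $P$ must place two of these $k+1$ points in the same cluster, forcing a radius of $>\costbound/2$ via the triangle inequality, whence $\opt{P}\ge\costbound/2$.

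For the no-witness regime I prove the $\log_B(n/k)$ bound and the $k$ bound separately and take their minimum. The $\log_B(n/k)$ bound comes from iterating \cref{lem:kcenter-of-cluster} along $T$. By Definition~\ref{def:clustree}, item~\ref{enum:ct-cost}, at every node $u$ the set $\detcen{u}$ covers $\points{u}$ with cost at most $\costbound$. Inductively, assume that for every child $u'$ of $u$ the set $\detcen{u'}$ covers $\subtreepoints{u'}$ with cost at most $\ell\,\costbound$; since $\points{u}=\bigcup_{u'}\detcen{u'}$, \cref{lem:kcenter-of-cluster} applied to the families $\{(\subtreepoints{u'},\detcen{u'})\}_{u'}$ together with $\detcen{u}$ as the outer center set shows that $\detcen{u}$ covers $\subtreepoints{u}$ with cost at most $(\ell+1)\,\costbound$. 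Applied at the root this yields $\cost[P]{C}\le(h+1)\,\costbound=\log_B(n/k)\cdot\costbound$, since $T$ has $n/(Bk)$ leaves.

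The $k$ bound uses \cref{lem:clustree-supercluster}. Let $C^*$ be an optimal $k$-center of $P$ with $\cost[P]{C^*}=\opt{P}$ and form super clusters on $C^*$ per \cref{def:supercluster}; this is legal as long as $\opt{P}\le 2\costbound$, which holds in the calibrated regime described below. \cref{lem:clustree-supercluster} applied at the root provides, for each super cluster $S_i$, some $c_i\in C\cap S_i$. Within $S_i$, the at most $k$ members of $C^*$ form a connected subgraph whose edges have metric weight at most $2\costbound$, so any two of them lie at metric distance at most $2(k-1)\costbound$ along such a path; adding cluster radii of $\opt{P}$ on both sides via the triangle inequality gives $d(p,c_i)\le 2\,\opt{P}+2(k-1)\,\costbound$ for every $p\in S_i$, and hence $\cost[P]{C}\le 2\,\opt{P}+2(k-1)\,\costbound$.

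The final step converts $\costbound$ into $\opt{P}$. Because this lemma is used inside the outer algorithm, which maintains one data structure per $\costbound\in\{(1+\epsilon)^i\}$ and outputs the solution from the smallest $\costbound$ with no witness, applying the ``otherwise'' clause one step down the ladder gives $\costbound\le 2(1+\epsilon)\,\opt{P}$. Substituting this into the level-by-level bound and into the super cluster bound yields $\cost[P]{C}\le 4\min\{k,\log_B(n/k)\}\,\opt{P}$ after absorbing the $\epsilon$ into the theorem's approximation factor. I anticipate the main obstacle to be the super cluster step: we must justify that \cref{lem:clustree-supercluster} applies with $C^*$ in place of the algorithm's own centers, which reduces to re-inspecting its proof and checking that the no-witness assumption together with $\opt{P}\le\costbound/2$ suffices to derive the same contradiction about a point that would have to be forced into becoming a $(k{+}1)$-th center.
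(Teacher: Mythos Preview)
Your proposal follows essentially the same three-step approach as the paper: the witness branch yields $k+1$ pairwise-far points and hence $\opt{P}\ge\costbound/2$; the $\log_B(n/k)$ bound comes from iterating Observation~\ref{lem:kcenter-of-cluster} level by level up the tree; and the $k$ bound uses super clusters via Lemma~\ref{lem:clustree-supercluster} applied at the root with the optimal center set. The one point of divergence is how the conversion from $\costbound$ to $\opt{P}$ is handled: the paper simply opens by assuming $\opt{P}\le\costbound/2$ (declaring the alternative to fall under the witness conclusion) and thereafter derives bounds in terms of $\costbound$, whereas you make the dependence on the outer $(1+\epsilon)$-ladder explicit to obtain $\costbound\le 2(1+\epsilon)\opt{P}$. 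This is really the same calibration, just placed differently; your treatment is the more transparent of the two about which step actually uses the surrounding framework.
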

\begin{proof}
	Assume that $\opt{P} \leq \costbound / 2$, as otherwise, there exists a node that stores at least $k+1$ points that have pairwise distance $\costbound$, which implies the claim. First, we prove $\cost[P]{C} \leq \log(n/(Bk)) / \log B \cdot \costbound$. It follows from \cref{lem:kcenter-of-cluster} that $C$ has $k$-center cost $2\costbound$ on the points stored in the root's children. Since $T$ has depth at most $\log(n/(Bk)) / \log B$, it follows by recursively applying \cref{lem:kcenter-of-cluster} on the children that $C$ also has cost $\log(n/(Bk)) / \log B \cdot 2 \costbound$ on $P$.
	
	Now, we prove $\cost[P]{C} \leq k$. Let $p \in P$, and let $S$ be the super cluster of $p$ with corresponding optimal center set $C'$. We have $d(p,C') \leq \costbound / 2$. By \cref{lem:clustree-supercluster}, there exists a center $q \in C$ so that $d(q,C') \leq \costbound / 2$. By the definition of super clusters, for every $x,y \in C'$, $d(x,y) \leq (k-1) \cdot 2\costbound$. It follows from the triangle inequality that $d(p,q) \leq 2k \costbound$.
\end{proof}

\subsection{Update Time Analysis}

\begin{lemma}
	\label{lem:det-time}
	The amortized update time of \cref{alg:det-guess} is $O(Bk \log (n/k) / \log B)$.
\end{lemma}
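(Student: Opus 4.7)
The plan is to bound the total cost of any sequence of $M$ user updates by $O(MBkd)$, where $d = O(\log(n/k)/\log B)$ is the depth of the (full $B$-ary) clustering tree $T$; division by $M$ then yields the claimed amortized bound. I will proceed level by level, showing that the total work done at each of the $d+1$ levels of $T$ across the entire execution is $O(MBk)$.

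The structural crux is an invariant I will call \emph{once a center, always a center}: once $\iscen[p]$ is set to true inside a node $N$, it is never reset to false until $p$ is physically removed from $N$; this is verified by inspecting \cref{alg:det-guess-node,alg:det-guess}, where $\iscen[\cdot] \gets \mathit{false}$ never appears. A point $p$ enters a node at level $\ell \ge 1$ only at the moment it becomes a center in the (unique) child of that node containing $p$, and a leaf receives $p$ only when the user inserts $p$. Combined with the invariant, this forces at most $M$ calls to \FnInsertNode and at most $M$ calls to \FnDeleteNode at each level over the entire execution.

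With this in hand, I will split the per-level work into three buckets. First, each \FnInsertNode call and the point-removal phase of each \FnDeleteNode call are $O(Bk)$, yielding $O(MBk)$ total. Second, a successful \FnTryCenter call costs $O(Bk)$, and the invariant forces at most one successful call per point-residency, so there are at most $M$ such calls per level, contributing $O(MBk)$. Third, each \emph{unsuccessful} \FnTryCenter call is $O(1)$ but is invoked once per blocking edge traversed inside \FnDeleteNode, and I will bound the total number of such calls by the total number of blocking edges ever created at level $\ell$. Edges are added only by \FnInsertNode (at most $k$ per call, since only edges to the at most $k$ existing centers are added) and by a successful \FnTryCenter (at most $Bk$ per call), giving a total of $O(Mk) + O(MBk) = O(MBk)$ edges and hence $O(MBk)$ work for unsuccessful tries. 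Summing across the $d+1$ levels gives $O(MBk \log(n/k)/\log B)$ total work.

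The main obstacle is the third bucket: a single user deletion can na\"ively trigger $\Theta(Bk^2)$ per-level cost through cascading unsuccessful \FnTryCenter calls, so the factor-$k$ saving over that na\"ive estimate is essential and can only be obtained amortized. It hinges on the once-a-center-always-a-center invariant capping the number of successful tries together with the blocking-edge creation budget capping the total edge-traversal cost; articulating these two invariants precisely and using them in tandem is where I expect most of the writing effort to go.
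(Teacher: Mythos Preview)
Your proposal is correct and takes essentially the same approach as the paper: both hinge on the ``once a center, always a center'' invariant to bound center-marking events by $M$ per level, which is exactly the paper's ``marking a point as center is irrevocable until it is deleted'' observation. The paper organizes this as a per-point token argument rather than your level-by-level bucketing, and for the deletion phase it simply bounds the neighbor loop by $\deg(p)\le Bk$ directly (each node stores at most $Bk$ points) rather than your edge-creation budget, but these are presentational rather than substantive differences.
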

\begin{proof}
	To maintain blocking graphs efficiently, the graphs are stored in adjacency list representation and the degrees of the vertices as well as the number of centers are stored in counters. When a point $p$ is inserted, $3Bk \log (n/(Bk)) / \log B$ tokens are paid into the account of $p$. Each token pays for a (universally) constant amount of work.
	
	Each point is inserted at most once in each of the $\log (n/(Bk)) / \log B$ nodes from the leaf where it is inserted up to the root. The key observation is that it is marked as a center in each of these nodes at most once (when it is inserted, or later when a center is deleted): Marking a point as center is irrevocable until it is deleted. For each node $N$ and point $p \in N$, it can be checked in constant time whether $p$ can be marked as a center in $N$ by checking its degree in the blocking graph. Marking $p$ as a center takes time $O(Bk)$ because it is sufficient to check the distance to all other at most $Bk$ points in $N$ and insert the corresponding blocking edges. For each point $p$, we charge the time it takes to \emph{mark} $p$ as a center to the account of $p$. Therefore, marking $p$ as center withdraws a most $Bk \log (n/(Bk))$ tokens from its account in total.
	
	Consider the insertion of a point $p$ into a node. As mentioned, $p$ is inserted in at most $\log (n/(Bk)) / \log B$ nodes, and in each of these nodes, it is inserted at most once (when it is inserted into the tree, or when it is marked as a center in a child node). Inserting a point into a node $N$ requires the algorithm to check the distance to all at most $k$ centers in $N$ to insert blocking edges, which results in at most $k \log (n/(Bk)) / \log B$ work in total.
	
	It remains to analyze the time that is required to update the tree when a point $p$ is deleted. For any node $N$, if $p$ is not a center in $N$, deleting $p$ takes constant time. Otherwise, if $p$ is a center, the algorithm needs to check its at most $Bk$ neighbors in the blocking graph one by one whether they can be marked as centers. Checking a point $q$ takes only constant time, and marking $q$ as a center has already been charged to $q$ by the previous analysis. All centers that have been marked have to be inserted into the parent of $N$, but this has also been charged to the corresponding points. Therefore, deleting $p$ consumes at most $Bk \log (n/(Bk)) / \log B$ tokens from the account of $p$.
\end{proof}

\begin{lemma}
	\label{lem:det-time-worstcase}
	The worst-case insertion time of \cref{alg:det-guess} is $O(Bk \log(n/k) / \log B)$, and the worst-cast deletion time is $O(Bk^2 \log(n/k) / \log B)$.
\end{lemma}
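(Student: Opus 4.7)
The plan is to bound the worst-case time of $\FnInsert$ and $\FnDelete$ by multiplying a per-node cost by the depth of the clustering tree. Since $T$ is a full $B$-ary tree with $n/(Bk)$ leaves (following the reduction discussed before \cref{lem:det-main-kcenter}), its depth is $O(\log(n/(Bk))/\log B) = O(\log(n/k)/\log B)$, so both running-time bounds reduce to an analysis of the work done at a single visited node.

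For the insertion bound, I would trace through $\FnInsert$ and observe that it visits at most one node per level. At each visited node $N=(V,E)$, the call $\FnInsertNode(N,p,\costbound)$ first scans the at most $Bk$ points of $V$ checking only the (at most $k$) centers against $p$, which is $O(Bk)$; then $\FnTryCenter(N,p,\costbound)$ either does $O(1)$ work or marks $p$ as a center, in which case it scans $V$ once more to insert blocking edges, incurring an additional $O(Bk)$. The loop in $\FnInsert$ terminates as soon as a call returns without marking $p$ as a new center. Hence the work per level is $O(Bk)$, and multiplying by the tree depth yields the worst-case insertion time of $O(Bk\log(n/k)/\log B)$.

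For the deletion bound, the key observation is that at each visited node, only a single point can be freshly promoted enough times to dominate the cost. Concretely, at a node $N$ the call $\FnDeleteNode(N,p,\costbound)$ enumerates $\ngh{p}$, which has size at most $Bk$, and invokes $\FnTryCenter$ on each unblocked neighbor; each such invocation is $O(1)$ unless it actually marks the point as a center, and since $N$ holds at most $k$ centers at most $k$ successful markings can occur, each costing $O(Bk)$ to install blocking edges. This gives $O(Bk + k \cdot Bk) = O(Bk^2)$ for the deletion call itself. Afterwards $\FnDelete$ propagates the at most $k$ newly created centers into the parent via $\FnInsertNode$, each costing $O(Bk)$, for another $O(k\cdot Bk) = O(Bk^2)$ term per level. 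Summing over the $O(\log(n/k)/\log B)$ levels gives the claimed $O(Bk^2\log(n/k)/\log B)$ bound.

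The only subtle step I expect is justifying that at most $k$ new centers are created per level during a deletion, so that the loop iterating over $\ngh{p}$ does not contribute a $(Bk)^2$ term from repeated expensive $\FnTryCenter$ calls. This follows from the invariant, maintained in \cref{alg:det-guess-node} via the \iscen counter, that a node always contains at most $k$ centers; the $\FnTryCenter$ guard $\lvert\{v \mid \iscen[v]\}\rvert < k$ ensures that at most $k - \lvert \detcen{N}^{\text{old}}\rvert \le k$ promotions succeed, and all other calls short-circuit in $O(1)$. With this, the per-level bookkeeping is straightforward and the two running-time bounds follow directly.
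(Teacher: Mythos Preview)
Your proposal is correct and follows essentially the same approach as the paper's own proof: both bound the per-level cost (at most $O(Bk)$ for an insertion, at most $O(Bk^2)$ for a deletion, the latter coming from $O(Bk)$ neighbor enumeration plus at most $k$ successful $\FnTryCenter$ promotions each costing $O(Bk)$ and at most $k$ insertions into the parent each costing $O(Bk)$) and multiply by the tree depth $O(\log(n/(Bk))/\log B)$. Your explicit justification that at most $k$ promotions can succeed per node via the $\lvert\{v \mid \iscen[v]\}\rvert < k$ guard matches the implicit reasoning in the paper.
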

\begin{proof}
	To maintain blocking graphs efficiently, the graphs are stored in adjacency list representation and the degrees of the vertices as well as the number of centers are stored in counters.
	
	Each point is inserted at most once in each of the $\log (n/(Bk)) / \log B$ nodes from the leaf where it is inserted up to the root. For each node $N$ and point $p \in N$, it can be checked in constant time whether $p$ can be marked as a center in $N$ by checking its degree in the blocking graph. Marking $p$ as a center takes time $O(Bk)$ because it is sufficient to check the distance to all other at most $Bk$ points in $N$ and insert the corresponding blocking edges.
	
	Consider the insertion of a point $p$ into the tree. As mentioned, $p$ is inserted in at most $\log (n/(Bk)) / \log B$ nodes. Therefore, during the initial insertion of $p$ into the tree, the insertion procedure may insert $p$ into at most $\log (n/(Bk)) / \log B$ nodes. Inserting a point into a node $N$ requires the algorithm to check the distance to all at most $k$ centers in $N$ to insert blocking edges and possibly marking it as a center, which results in at most $O(Bk \log (n/(Bk)) / \log B)$ time in total.
	
	Now, consider the deletion of $p$ from the tree. For any node $N$, if $p$ is not a center in $N$, deleting $p$ takes constant time. Otherwise, if $p$ is a center, the algorithm needs to check its at most $Bk$ neighbors in the blocking graph one by one whether they can be marked as centers. Checking a point $q$ takes only constant time, and marking $q$ as a center takes time $O(Bk)$. All at most $k$ points that have been marked have to be inserted into the parent of $N$, which takes time $O(Bk \cdot k)$. As the length of any path from a leaf to the root is at most $\log(n/(Bk)) / \log B$, deleting $p$ requires at most $O(Bk^2 \log (n/(Bk)) / \log B)$ time.
\end{proof}

\subsection{Main Theorem}

It only remains to combine all previous results to obtain \cref{thm:mpt-det-upper}.

\begin{theorem}[\cref{thm:mpt-det-upper}]
	\label{lem:det-main-kcenter}
	Let $\epsilon, k > 0$ and $B \geq 2$. There exists a deterministic algorithm for the dynamic $k$-center problem that has amortized update time $O(Bk \log n \log(\dmax) / \log(1+\epsilon))$ and approximation factor $(1+\epsilon) \cdot \min\{4k, 4\log(n/k) / \log B\}$. The worst-cast insertion time is $O(Bk \log n \log\Delta / \log(1+\epsilon))$, and the worst-case deletion time is $O(Bk^2 \log n \log\Delta / \log(1+\epsilon))$.
\end{theorem}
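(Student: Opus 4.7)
The plan is to combine the per-guess algorithm of \cref{alg:det-guess} with a standard geometric search over the unknown value of $\OPT$. Concretely, I would maintain, in parallel, one clustering-tree instance for every node-cost $\costbound \in \{(1+\epsilon)^i : 0 \le i \le \lceil \log_{1+\epsilon}\Delta \rceil\}$, which gives $O(\log \Delta / \log(1+\epsilon))$ instances in total. All instances receive the same insert and delete operations. After each update, I would find the smallest $\costbound$ whose clustering tree has no witness node, and output its root-level center set (or, equivalently, output the non-witness instance with smallest cost — both yield the same asymptotic guarantee).

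For correctness, fix any time step and let $\costbound^\star$ be the smallest power of $(1+\epsilon)$ satisfying $\costbound^\star \ge 2\opt{P}$; then $\costbound^\star \le 2(1+\epsilon)\opt{P}$. By the contrapositive of the second case of \cref{lem:det-kcenter-cost}, the instance with node-cost $\costbound^\star$ cannot have a witness. The first case of the same lemma (inspecting its proof, which actually bounds $\cost[P]{C}$ by $2\log(n/(Bk))/\log B \cdot \costbound^\star$ and by $2k\costbound^\star$) combined with $\costbound^\star \le 2(1+\epsilon)\opt{P}$ gives a root-level cost of at most $(1+\epsilon)\cdot\min\{4k,\,4\log(n/k)/\log B\}\cdot\opt{P}$. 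Since our output is the best non-witness instance, it inherits this bound.

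For the running time, each of the $O(\log\Delta/\log(1+\epsilon))$ instances has the per-update complexity established in \cref{lem:det-time} and \cref{lem:det-time-worstcase}. Multiplying through, and using $\log(n/k)/\log B \le \log n$ when $B\ge 2$, gives amortized update time $O(Bk\log n\log\Delta/\log(1+\epsilon))$, worst-case insertion time $O(Bk\log n\log\Delta/\log(1+\epsilon))$, and worst-case deletion time $O(Bk^2\log n\log\Delta/\log(1+\epsilon))$, as required.

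The remaining technicality — and the only step I expect to need real care — is removing the earlier assumption that each $T$ is a full $B$-ary tree with exactly $n/(Bk)$ leaves, since our algorithm does not know an a priori bound on $n$. I would handle this by growing each clustering tree from the bottom up: start with a single leaf, and whenever the current root is ``full'' in the sense that the existing tree can no longer accommodate a new point in any leaf, introduce a new root whose children are the old root and $B-1$ fresh empty subtrees of the same depth; this keeps $T$ a full $B$-ary tree of depth $O(\log n / \log B)$ for the current number of active points $n$. No shrinking is required, because the amortized and worst-case bounds quoted above are in terms of the current $n$, and the structural invariants of \cref{def:clustree} are preserved by construction at every growth step. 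The extra bookkeeping is $O(1)$ amortized per insertion and therefore does not affect the stated complexity.
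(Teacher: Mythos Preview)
Your proposal is correct and follows the same high-level strategy as the paper: run one clustering-tree instance per power-of-$(1+\epsilon)$ guess of $\OPT$, output the centers of the smallest non-witness instance, and multiply the per-instance bounds of \cref{lem:det-time,lem:det-time-worstcase} by the $O(\log\Delta/\log(1+\epsilon))$ guesses. The only substantive difference is in how the full-tree assumption is removed. The paper keeps the tree \emph{packed} as a left-aligned complete $B$-ary tree: insertions go to the leftmost non-full leaf on the last level, and on a deletion from a leaf $N$ it relocates an arbitrary point from the rightmost non-empty leaf into $N$ (costing at most a constant-factor blowup since this is just one extra delete/insert pair); it also explicitly grows \emph{and shrinks} so the depth tracks $\log(n_t/(Bk))/\log B$ for the \emph{current} $n_t$. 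Your grow-only scheme (add a fresh root with $B-1$ empty sibling subtrees when full, never shrink) is simpler and also correct, because the theorem's $n$ is the maximum number of active points over time, so depth $O(\log n/\log B)$ suffices even without shrinking. Two small remarks: your sentence ``the bounds quoted above are in terms of the current $n$'' should say ``in terms of the global upper bound $n$'' (that is precisely why no shrinking is needed); and when introducing the new root you should note that its initial point set is just the $\le k$ centers of the old root, so conditions~\ref{enum:ct-cost} and~\ref{enum:ct-innernode} of \cref{def:clustree} hold trivially with cost $0$ and the setup is $O(k)$ work amortized over the $\Omega(Bk)$ insertions that triggered the growth.
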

\begin{proof}
	Since $P = (\points{1}, \ldots, \points{n})$ is a dynamic point set, its size $n_i := \lvert \points{i} \rvert$ can increase over time. Therefore, we need to remove the assumption that the clustering tree has depth $\log(\max_{t \in [n]} n_t)$. First, we note that we can insert and delete points so that the clustering tree $T$ is a complete $B$-ary tree (the inner nodes induce a full $B$-tree, and all leaves on the last level are aligned left): We always insert points into the left-most leaf on the last level of $T$ that is not full; when a point is deleted from a leaf $N$ that is not the right-most leaf $N'$ on the last level of $T$, we delete an arbitrary point $q$ from $N'$ and insert $q$ into $N$. Deleting and reinserting points this way can be seen as two update operations, and therefore, it can only increase time required to update the tree by a factor of $3$. Furthermore, any leaf can be turned into an inner node by adding a copy of itself as its left child and adding an empty node as its right child. Vice versa, an empty leaf and its (left) sibling can be contracted into its parent. This way, the algorithm can guarantee that the depth of the tree is between $\lfloor \log(n_t/(Bk)) / \log B \rfloor$ and $\lceil \log(n_t/(Bk)) / \log B \rceil$ at all times $t \in [n]$.
	
	Recall that $d(x,y) \geq 1$ for every $x,y \in \cX$, and $\dmax = \max_{x,y \in \cX} d(x,y)$. For every $\Gamma \in \{ (1+\epsilon)^i \mid i \in [\lceil \log_{1+\epsilon} (\dmax) \rceil]$, the algorithm maintains an instance $T_\Gamma$ of a clustering tree with node-cost $\costbound = (1+\epsilon)^\Gamma$ by invoking \cref{alg:det-guess}. After every update, the algorithm determines the smallest $\Gamma$ so that no node in $T_\Gamma$ is marked as witness, and it reports the center set of the root of $T_\Gamma$. The bound on the cost follows immediately from \cref{lem:det-kcenter-cost}. Since there are at most $\log(\dmax) / \log(1+\epsilon)$ instances, the bounds on the running time follow from \cref{lem:det-time,lem:det-time-worstcase}.	
\end{proof}

\section{Algorithms for $k$-Sum-of-Radii and $k$-Sum-of-Diameters}

\label{sec:primal-dual}

In this section we present our (randomized) dynamic $(13.008+\epsilon)$-approximation
algorithm for $k$-sum-of-radii with an amortized update time of $k^{O(1/\epsilon)}\log\Delta$,
against an oblivious adversary. Our strategy is to maintain a 
bi-criteria approximation with $O(k/\epsilon)$ clusters whose sum of radii is at most $(6+\epsilon)\OPT$
(and which covers all input points). We show how to use an arbitrary
offline $\alpha$-approximation algorithm to turn this solution into
a $(6+2\alpha+\eps)$-approximate solution with only $k$ clusters.
Using the algorithm in \cite{CharikarP04} (for which $\alpha=3.504+\eps$),
this yields a dynamic $(13.008+\epsilon)$-approximation for $k$-sum-of-radii,
and hence a $(26.015+\eps)$-approximation for $k$-sum-of-diameters. %
We provide the algorithm and a high-level overview of its analysis in \cref{sec:pd-bicriteria,sec:pd-dynamic}. The details of the formal proofs can be found in \cref{sec:pd-proofs}.

\subsection{Bicriteria Approximation}
\label{sec:pd-bicriteria}

Assume that we are given an $\epsilon>0$ such that w.l.o.g.~it holds
that $1/\epsilon\in\N$. We maintain one data structure for each value
$\OPT'$ that is a power of $1+\epsilon$ in $[1,\Delta]$, i.e.,
$O(\log_{1+\epsilon}\Delta)$ many. The data structure for each such
value $\OPT'$ outputs a solution of cost at most $(13.008+\epsilon)\OPT'$
or asserts that $\OPT>\OPT'$. We output the solution with smallest
cost which is hence a $(13.008+O(\epsilon))$-approximation.
We describe first how to maintain the mentioned bi-criteria approximation.
We define $z:=\epsilon\OPT'/k$. Our strategy is to maintain a solution
for an auxiliary problem based on a Lagrangian relaxation-type approach.
More specifically, we are allowed to select an arbitrarily
large number of clusters, however, for each cluster we need to pay
a fixed cost of $z$ plus the radius of the cluster.
For the radii we allow only integral multiples of $z$ that are bounded
by $\OPT'$, i.e., only radii in the set $R=\{z,2z,...,\OPT'-z,\OPT'\}$.

This problem can be modeled by an integer program. We formulate the
problem as an LP $(P)$ which has a variable
$x_{p}^{(r)}$ for each combination of a point $p$ and a radius~$r$
from a set of (suitably discretized) radii $R$, and a constraint
for each point $p$. Let $(D)$ denote its dual LP, see below, where
$z:=\epsilon\OPT'/k$.

\begin{alignat*}{9}
	\min & \,\,\, & \sum_{p\in P}\sum_{r\in R}x_{p}^{(r)}(r+z) &  &  &  &  &  &  &  & \max & \,\,\, & \sum_{p\in P}y_{p}\,\,\,\,\,\,\,\,\,\,\,\\
	\mathrm{s.t.} &  & \sum_{p'\in P}\sum_{r:d(p,p')\le r}x_{p'}^{(r)} & \ge1 & \,\,\, & \forall p\in P & \,\,\,\,\,\,\,\: & (P) & \,\,\,\,\,\,\,\,\,\,\,\,\,\,\, &  & \mathrm{s.t.} &  & \sum_{p'\in P:d(p,p')\le r}y_{p'} & \le r+z & \,\,\, & \forall p\in P,r\in R & \,\,\,\,\,\,\,\:(D)\\
	&  & x_{p}^{(r)} & \ge0 &  & \forall p\in P\,\,\forall r\in R &  &  &  &  &  &  & y_{p} & \ge0 &  & \forall p\in P
\end{alignat*}

We describe first an offline
primal-dual algorithm that computes an integral solution to~$(P)$ and
a fractional solution to the dual $(D)$, so that their costs differ
by at most a factor $6$. By weak duality, this implies that our solution
for $(P)$ is a $6$-approximation. 
We initialize $x\equiv0$ and $y\equiv0$, define $U_{1}:=P$, and
we say that a pair $(p',r')$ \emph{covers a point }$p$ if $d(p',p)\le r'$.
Our algorithm works in iterations. %

At the beginning of the $i$-th iteration, assume that we are given
a set of points $U_{i}$, a vector $\big(x_{p}^{(r)}\big)_{p\in P,r\in R}$,
and a vector $\left(y_{p}\right)_{p\in P}$, such that $U_{i}$ contains
all points in $P$ that are not covered by any pair $(p,r)$ for which
$x_{p}^{(r)}=1$ (which is clearly the case for $i=1$ since $U_{1}=P$
and $x\equiv0$). We select a point $p\in U_{i}$ uniformly at random
among all points in $U_{i}$. We raise its dual variable $y_{p}$
until there is a value $r\in R$ such that the dual constraint for
$(p,r)$ becomes \emph{half-tight, }meaning that %
$\sum_{p':d(p,p')\le r}y_{p'}=r/2+z.$ %

Note that it might be that the constraint is already at least half-tight
at the beginning of iteration $i$ in which case we do not raise $y_{p}$
but still perform the following operations. Assume that $r$ is the
largest value for which the constraint for $(p,r)$ is at least half-tight.
We define $x_{p}^{(2r)}:=1$, $(p_{i},r_{i}):=(p,2r)$ and set $U_{i+1}$
to be all points in $U_{i}$ that are not covered by $(p,2r)$. 
\begin{lemma}
	\label{lem:pd-iteration-time} Each iteration $i$ needs a running
		time of $O(\lvert U_{i}\rvert+ik/\epsilon)$. \end{lemma}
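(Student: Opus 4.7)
The plan is to bound the running time of iteration $i$ by analyzing each of its three phases: (1) selecting the random point $p \in U_i$, (2) raising $y_p$ until some constraint $(p,r)$ becomes half-tight and then finding the largest such $r$, and (3) computing $U_{i+1}$ by identifying the points in $U_i$ not covered by $(p,2r)$. Phases (1) and (3) will account for the $O(|U_i|)$ term, while phase (2) will account for the $O(ik/\epsilon)$ term.

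Phase (1) requires time $O(|U_i|)$, since we can sample uniformly from $U_i$ by maintaining $U_i$ as an array. For phase (3), once the radius $r$ has been determined, we iterate over each $p' \in U_i$, evaluate $d(p, p')$ in constant time, and discard $p'$ from $U_i$ if $d(p, p') \le 2r$; this takes $O(|U_i|)$ time. So the bottleneck is phase (2), where the key observation is that the only points with nonzero dual variable at the beginning of iteration $i$ are $p_1, \dots, p_{i-1}$, i.e., there are at most $i-1$ points that contribute to any sum $\sum_{p' : d(p,p') \le r} y_{p'}$ (other than $p$ itself, whose variable is the one being raised).

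For phase (2), I would explicitly compute, for each of the $|R| = \OPT'/z = k/\epsilon$ radii $r \in R$, the ``base'' value $S_r := \sum_{j < i,\, d(p, p_j) \le r} y_{p_j}$ by iterating through $p_1, \dots, p_{i-1}$ and adding $y_{p_j}$ whenever $d(p, p_j) \le r$. This costs $O(i)$ per radius and hence $O(ik/\epsilon)$ in total. Since $p$ itself satisfies $d(p,p) = 0 \le r$ for every $r \in R$, raising $y_p$ by $\delta$ changes the LHS of the constraint for $(p,r)$ to $S_r + \delta$. The constraint becomes half-tight precisely when $\delta = r/2 + z - S_r$; if $S_r \ge r/2 + z$ for some $r$ already, then we do not raise $y_p$. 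Otherwise, the amount by which $y_p$ is raised is $\delta^* := \min_{r \in R} (r/2 + z - S_r)$, which can be computed in $O(k/\epsilon)$ time once the $S_r$'s are known. Finally, scanning through the updated sums $S_r + \delta^*$ once more in $O(k/\epsilon)$ time yields the largest $r \in R$ for which the constraint is at least half-tight.

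The main subtlety, and the only place where care is needed, is ensuring that all sums $S_r$ can in fact be computed in $O(i)$ time per radius, which relies on the structural fact that at most $i-1$ dual variables are nonzero prior to iteration $i$ together with constant-time access to the distance oracle and to the previously selected $(p_j, y_{p_j})$ pairs (e.g., stored in an array indexed by iteration number). Summing the costs of the three phases yields the claimed bound $O(|U_i| + ik/\epsilon)$.
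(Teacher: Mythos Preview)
Your proof is correct and follows essentially the same approach as the paper: the paper's proof likewise observes that at most $i-1$ dual variables are nonzero at the start of iteration $i$, so each of the $|R|\le k/\epsilon$ constraint sums can be computed in $O(i)$ time (giving the $O(ik/\epsilon)$ term), and that $U_{i+1}$ is obtained by a single pass over $U_i$ (giving the $O(|U_i|)$ term). Your write-up is simply more explicit about the mechanics of computing $\delta^*$ and the largest half-tight radius.
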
 
We stop
when in some iteration $i^{*}$ it holds that $U_{i^{*}}=\emptyset$
or if we completed the $(2k/\epsilon)^{2}$-th iteration and $U_{(2k/\epsilon)^{2}+1}\ne\emptyset$.
Suppose that $x$ and $y$ are the primal and dual vectors after the
last iteration. In case that $U_{(2k/\epsilon)^{2}}\ne\emptyset$
we can guarantee that our dual solution has a value of more than $\OPT'$
from which we can conclude that $\OPT>\OPT'$; therefore, we stop
the computation for the estimated value $\OPT'$ in this case. 
\begin{lemma}[{restate=[name=]lemPdIterationsBound}]
	\label{lem:pd-iterations-bound}
	If $U_{(2k/\epsilon)^2 + (2k/\epsilon)}\ne\emptyset$ then $\OPT>\OPT'$. 
\end{lemma}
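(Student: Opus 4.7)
The plan is to lower-bound the value of the dual solution produced by the algorithm and contradict $\OPT\le\OPT'$ via weak LP duality. The only non-zero dual variables are $Y_i:=y_{p_i}$ for the selected points $p_1,\ldots,p_I$, where $I$ is the number of completed iterations, so the dual objective equals $\sum_i Y_i$.

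The first step extracts a linear inequality per iteration. After iteration $i$, the constraint for $(p_i, r_i/2)$ is at least half-tight; since all points with positive dual value lie in $\{p_1,\ldots,p_i\}$, this reads
\[
Y_i + \sum_{j\in J_i} Y_j \;\ge\; r_i/4 + z, \qquad\text{where } J_i=\{\,j<i:\,d(p_i,p_j)\le r_i/2\,\}. \qquad(\star)
\]
Summing $(\star)$ across all completed iterations and swapping the order of summation yields $\sum_i Y_i\,(1+|K_i|)\ge \sum_i (r_i/4+z)$, with $K_j:=\{i>j:\,j\in J_i\}$.

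The second step bounds $|K_j|\le k/\epsilon$ by a pigeonhole over radii. For any $i_1<i_2\in K_j$, the fact that $p_{i_2}\in U_{i_2}$ is uncovered by $(p_{i_1},r_{i_1})$ gives $d(p_{i_1},p_{i_2})>r_{i_1}$; combined via the triangle inequality with $d(p_{i_1},p_j)\le r_{i_1}/2$ and $d(p_j,p_{i_2})\le r_{i_2}/2$, one obtains $r_{i_1}<r_{i_2}$. Since $r_i=2r$ with $r\in R$ takes at most $|R|=k/\epsilon$ distinct values, strict monotonicity forces $|K_j|\le k/\epsilon$. Substituting this together with $r_i\ge 2z$ (so $r_i/4+z\ge 3z/2$) into the summed inequality gives
\[
\sum_i Y_i \;\ge\; \frac{3\,I\,z}{2\,(1+k/\epsilon)}.
\]

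Finally, apply weak duality. If $\OPT\le\OPT'$, then using the optimum $k$-sum-of-radii solution with each radius rounded up into $R$ as a feasible integer solution to $(P)$ bounds the LP optimum by $\OPT+2kz\le (1+2\epsilon)\OPT'$, hence $\sum_i Y_i\le (1+2\epsilon)\OPT'$. On the other hand, with $z=\epsilon\OPT'/k$ and $I\ge (2k/\epsilon)^2$, the lower bound above exceeds $(1+2\epsilon)\OPT'$ for $k\ge 1$ and $\epsilon\le 1$ (the extra $2k/\epsilon$ slack in $I$ guaranteed by the hypothesis provides strict inequality), a contradiction, so $\OPT>\OPT'$. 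The main subtlety in this plan is dual feasibility throughout the raises, which the authors handle separately by choosing the \emph{largest} qualifying $r$; modulo that, the decisive ingredient is the geometric-pigeonhole cap on $|K_j|$ via the strict radius monotonicity.
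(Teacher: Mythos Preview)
Your proof is correct, and it takes a genuinely different route from the paper's argument, even though both rely on the same core combinatorial fact.

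The paper partitions iterations into \emph{successful} ($\delta_i>0$) and \emph{unsuccessful} ($\delta_i=0$) ones. It observes that in a successful iteration $Y_i$ is a positive multiple of $z/2$, so after $2k/\epsilon+1$ successful iterations the dual value already exceeds $\OPT'$; then a charging argument shows that each successful iteration can be ``blamed'' for at most $|R|$ unsuccessful ones, because the half-tight radius charged to a fixed $p_j$ strictly increases each time. Multiplying these two bounds yields the iteration cap.

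Your argument avoids the successful/unsuccessful dichotomy entirely: you sum the half-tightness inequality $(\star)$ across \emph{all} iterations and absorb the cross-terms via the coefficient bound $|K_j|\le |R|$. The proof that the radii $\{r_i:i\in K_j\}$ are strictly increasing is exactly the paper's charging observation, repackaged. What you lose by not isolating successful iterations (each worth $\ge z/2$ individually) you recover by getting a uniform contribution of at least $3z/2$ from every iteration's right-hand side. Your route is arguably cleaner (no case split), and you are also more careful than the paper about the weak-duality upper bound: the paper writes ``$\OPT\ge\sum_p y_p$'' somewhat loosely, whereas you correctly note that rounding the optimal $k$-sum-of-radii solution into $R$ only yields an LP-feasible solution of cost at most $(1+2\epsilon)\OPT'$, which still suffices for the contradiction once $I\ge (2k/\epsilon)^2+2k/\epsilon$.
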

If $U_{(2k/\epsilon)^{2}}=\emptyset$
we perform a pruning step in order to transform $x$ into a
solution whose cost is at most by a factor $3$ larger than the cost
of $y$. We initialize $\bar{S}:=\emptyset$.
Let $S=\{(p_{1},r_{1}),(p_{2},r_{2}),...\}$ denote the set
of pairs $(p,r)$ with $x_{p}^{(r)}=1$. We sort the pairs in $S$
non-increasingly by their respective radius $r$. Consider a pair
$(p_{j},r_{j})$. We insert the cluster $(p_{j},3r_{j})$ in our solution $\bar{S}$
and delete from $S$ all pairs $(p_{j'},r_{j'})$ such that $j'>j$
and $d(p_{j},p_{j'})<r_{j}+r_{j'}$. Note that $(p_{j},3r_{j})$ covers
all points that are covered by any deleted pair $(p_{j'},r_{j'})$
due to our ordering of the pairs. Let $\bar{S}$ denote
the solution obained in this way and let $\bar{x}$ denote the corresponding
solution to $(P)$, i.e., $\bar{x}_{p}^{(r)}=1$ if and only if $(p,r)\in\bar{S}$.
We will show below that $\bar{S}$ is a feasible solution to~$(P)$
with at most $O(k/\epsilon)$ clusters. Let $\bar{C}\subseteq P$
denote their centers, i.e., $\bar{C}=\{p\mid\exists r\in R:(p,r)\in\bar{S}\}$.
\begin{lemma} \label{lem:mpt-pd-pruning-time} Given $x$ we can
	compute $\bar{x}$ in time $O((k/\epsilon)^{4})$ and $\bar{x}$ selects
	at most $O(k/\epsilon)$ centers. %
\end{lemma}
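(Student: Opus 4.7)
My plan is to address the two halves of the lemma separately: first the $O((k/\epsilon)^{4})$ running-time bound, then the $O(k/\epsilon)$ bound on the number of centers selected by $\bar{x}$.

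For the running time, I would begin with the observation that $|S| = O((k/\epsilon)^{2})$, since the primal-dual procedure adds at most one pair to $S$ per iteration and either terminates with $U_{i^{*}}=\emptyset$ within $(2k/\epsilon)^{2}$ iterations or concludes $\OPT > \OPT'$ (in which case the pruning step is not invoked). Sorting $S$ non-increasingly by radius takes $O(|S|\log|S|)$ time. The main pruning loop then processes the pairs in order and, for every surviving $(p_{j},r_{j})$, scans through all remaining pairs to apply the deletion condition $d(p_{j},p_{j'}) < r_{j}+r_{j'}$. A direct pairwise implementation therefore runs in $O(|S|^{2}) = O((k/\epsilon)^{4})$ time, which dominates the overall running time.

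For $|\bar{S}| = O(k/\epsilon)$, my approach is a packing argument combined with weak LP duality. For each $(p_{j},r_{j}) \in \bar{S}$, recall that $r_{j}=2r$ where $r$ is the half-tight radius at the moment $p_{j}$ was selected; half-tightness yields $\sum_{p' \,:\, d(p_{j},p') \leq r_{j}/2} y_{p'} \geq r_{j}/4 + z \geq z$, and this inequality persists thereafter since the algorithm only ever raises $y$-values. The pruning rule ensures that any two distinct pairs $(p_{j},r_{j}), (p_{j'},r_{j'})$ in $\bar{S}$ satisfy $d(p_{j},p_{j'}) \geq r_{j}+r_{j'} \geq r_{j}/2 + r_{j'}/2$, so the balls of radius $r_{j}/2$ around the centers of $\bar{S}$ are pairwise disjoint. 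Summing the per-ball dual contributions gives $\sum_{p \in P} y_{p} \geq |\bar{S}| \cdot z$. On the other hand, rounding an optimal $k$-sum-of-radii solution up to radii in $R = \{z, 2z, \ldots, \OPT'\}$ yields a feasible integral primal of cost at most $\OPT + kz + kz \leq (1+2\epsilon)\OPT'$ (using $\OPT \leq \OPT'$ and $kz = \epsilon \OPT'$), so by weak duality $\sum_{p} y_{p} \leq (1+O(\epsilon))\OPT'$. Combining the two bounds yields $|\bar{S}| \leq (1+O(\epsilon))\OPT'/z = O(k/\epsilon)$.

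The step I expect to demand the most care is this LP-duality piece: verifying that rounding the optimal $k$-sum-of-radii solution into the discretized radius set $R$ keeps the primal feasible while inflating its cost by only $O(\epsilon\OPT')$, so that the dual value is indeed bounded by $(1+O(\epsilon))\OPT'$. The remaining steps---the sort-plus-pairwise-scan for the running time, and the disjoint-balls argument based on half-tightness and the pruning rule---are essentially routine once this duality bound is in place.
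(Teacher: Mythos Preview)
Your proposal is correct and follows the paper's approach: the running-time bound via $|S|=O((k/\epsilon)^2)$ followed by a sort and pairwise scan is exactly the paper's argument, and your disjoint-balls-plus-weak-duality argument for $|\bar{S}|=O(k/\epsilon)$ is precisely the mechanism behind Lemma~\ref{lem:pd-cost-a} (the paper's formal proof of the present lemma actually treats only the running time and leaves the center bound to that cost lemma, from which it follows since each summand $r+z\ge z$). One minor notational point to tidy: pairs placed into $\bar{S}$ carry the tripled radius $3r_j$, so when you write $(p_j,r_j)\in\bar{S}$ with $r_j=2r$ and invoke the pruning separation $d(p_j,p_{j'})\ge r_j+r_{j'}$, make clear that these $r_j$ are the $S$-radii, not the $\bar{S}$-radii.
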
 
We transform now the bi-criteria approximate solution $\bar{x}$ into a feasible solution
$\tilde{x}$ with only $k$ clusters. To this end, we invoke the offline
$(3.504+\epsilon)$-approximation algorithm from \cite{CharikarP04}
on the input points $\bar{C}$. Let $\hat{S}$ denote the set of pairs
$(\hat{p},\hat{r})$ that it outputs (and note that not necessarily
$\hat{r}\in R$ since we use the algorithm as a black-box). Note that
$\hat{S}$ covers only the points in $\bar{C}$, and not necessarily
all points in $P$. 
On the other hand, the solution $\hat{S}$ has a cost of at most $2\OPT$ since
	we can always find a solution with this cost covering $\bar{C}$, even if we are only allowed
	to select centers from $\bar{C}$.
Thus, based on $\bar{S}$ and $\hat{S}$ we compute
a solution $\tilde{S}$ with at most $k$ clusters that covers $P$.
We initialize $\tilde{S}:=\emptyset$. For each pair $(\hat{p},\hat{r})\in\hat{S}$
we consider the points $\bar{C}'$ in $\bar{C}$ that are covered
by $(\hat{p},\hat{r})$. Among all these points, let $\bar{p}$ be
the point with maximum radius $\bar{r}$ such that $(\bar{p},\bar{r})\in\bar{S}$.
We add to $\tilde{S}$ the pair $(\hat{p},\hat{r}+\bar{r})$ and remove
$\bar{C}'$ from $\bar{C}$. We do this operation with each pair $(\hat{p},\hat{r})\in\hat{S}$.
Let $\tilde{S}$ denote the resulting set of pairs. 
\begin{lemma}[{restate=[name=]lemPdOfflineTime}]
	\label{lem:pd-offline-time}
	Given $\bar{S}$ and $\hat{S}$ we can compute $\tilde{S}$ in time
	$O(k^3/\epsilon^2)$. 
\end{lemma}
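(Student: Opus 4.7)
The plan is to simply count the operations performed by the construction of $\tilde{S}$. By Lemma~\ref{lem:mpt-pd-pruning-time}, the set $\bar{S}$ has $|\bar{S}| = O(k/\epsilon)$ pairs, and hence $|\bar{C}| = O(k/\epsilon)$. Since $\hat{S}$ is the output of a $k$-center-type $\alpha$-approximation algorithm applied to $\bar{C}$, we have $|\hat{S}| \le k$. The key observation is that each iteration of the outer loop processes exactly one pair in $\hat{S}$, and inside this loop we only ever scan the remaining points of $\bar{C}$.

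First I would specify the data structures. I would store $\bar{C}$ as a doubly-linked list with $O(1)$ deletion, and maintain a hash map (or simple array indexed by point identifier) that, for each $\bar{p} \in \bar{C}$, returns the radius $\bar{r}$ of the unique pair $(\bar{p}, \bar{r}) \in \bar{S}$. Both structures can be built in time $O(|\bar{S}|) = O(k/\epsilon)$ at the start of the procedure.

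Next, for each pair $(\hat{p}, \hat{r}) \in \hat{S}$, I would iterate once through the current $\bar{C}$. For each $\bar{c} \in \bar{C}$ the algorithm computes $d(\hat{p}, \bar{c})$, checks whether it is at most $\hat{r}$ (in which case $\bar{c} \in \bar{C}'$), and if so looks up the associated radius via the hash map, updating a running maximum $\bar{r}$ and deleting $\bar{c}$ from $\bar{C}$. After the scan, $(\hat{p}, \hat{r} + \bar{r})$ is appended to $\tilde{S}$. Each inner step is $O(1)$, so processing one pair of $\hat{S}$ costs $O(|\bar{C}|) = O(k/\epsilon)$, and the total work over all pairs in $\hat{S}$ is $O(|\hat{S}| \cdot |\bar{C}|) = O(k \cdot k/\epsilon) = O(k^2/\epsilon)$, which is well within the claimed $O(k^3/\epsilon^2)$ bound.

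There is no substantive obstacle here: the only subtlety is ensuring that the radius lookup for each candidate $\bar{c} \in \bar{C}'$ is $O(1)$ (handled by the auxiliary map) and that deletions from $\bar{C}$ do not cost more than the scan itself (handled by the linked-list representation). I would also note the corner case where $\bar{C}' = \emptyset$ for some $(\hat{p},\hat{r}) \in \hat{S}$: in that case the pair contributes nothing and can simply be skipped, which does not affect either the runtime bound or the correctness of $\tilde{S}$ as a cover of $P$ (established separately by the covering argument on $\bar{S}$ and $\hat{S}$).
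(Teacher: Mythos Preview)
Your proposal is correct and takes essentially the same approach as the paper: both argue by a double loop over $\hat{S}$ (of size at most $k$) and $\bar{S}$/$\bar{C}$. The only difference is that the paper uses the cruder bound $|\bar{S}| \le |S| \le (2k/\epsilon)^2$ rather than the $O(k/\epsilon)$ bound from Lemma~\ref{lem:mpt-pd-pruning-time}, which is why the stated time is $O(k^3/\epsilon^2)$; your more careful accounting actually gives $O(k^2/\epsilon)$.
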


We show that $\tilde{S}$ is a feasible solution with small cost.
We start by bounding the cost of $\bar{x}$ via $y$. 
\begin{lemma}[{restate=[name=]lemPdCostA}]
	\label{lem:pd-cost-a}
	We have that $\bar{x}$ and $y$ are feasible solutions to (P) and
	(D), respectively, for which we have that 
	$
	\sum_{p\in P}\sum_{r\in R}\bar{x}_{p}^{(r)}(r+z)\le 6\cdot\sum_{p\in P}y_{p}\le6\OPT'.
	$
\end{lemma}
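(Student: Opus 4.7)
\emph{Proof plan.} The three assertions can be shown separately. For primal feasibility of $\bar x$, note that the outer loop terminates only when $U_{i^{*}+1}=\emptyset$, so every $p\in P$ is covered by some $(p_j,r_j)\in S$; for each pruned pair $(p_{j'},r_{j'})$ removed because $d(p_j,p_{j'})<r_j+r_{j'}$ for some $j<j'$, the triangle inequality combined with the non-increasing radius ordering gives, for any $q$ previously covered by $(p_{j'},r_{j'})$, that $d(p_j,q)\le d(p_j,p_{j'})+d(p_{j'},q)<r_j+2r_{j'}\le 3r_j$, so $(p_j,3r_j)\in\bar S$ still covers $q$.

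For dual feasibility of $y$ I would argue by induction on iterations, maintaining two invariants at the end of each iteration: (I1) $\sum_{p':d(p,p')\le r}y_{p'}\le r+z$ for every $(p,r)$, and (I2) $\sum_{p':d(c,p')\le r}y_{p'}\le r/2+z$ for the point $c$ just raised and every $r\in R$. Invariant (I2) is enforced directly by the stopping rule together with monotonicity of the LHS in $r$. For (I1) on any $(p',r')$ whose LHS was affected by the raise (i.e., $d(c,p')\le r'$), the triangle inequality gives
\[
\sum_{p_k:d(p',p_k)\le r'}y_{p_k}\;\le\;\sum_{p_k:d(c,p_k)\le 2r'}y_{p_k},
\]
so (I2) applied at radius $2r'\in R$ upper-bounds the new LHS of $(p',r')$ by $r'+z$. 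This is precisely why the algorithm only needs to check the $|R|$ dual constraints at $c$ when raising $y_c$; the boundary case $2r'\notin R$ is absorbed by the termination rule that asserts $\OPT>\OPT'$.

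For the cost bound I would exploit the separation produced by pruning: any two distinct un-pruned pairs $(p_j,r_j),(p_{j'},r_{j'})\in\bar S$ satisfy $d(p_j,p_{j'})\ge r_j+r_{j'}$, so the balls of radius $r_j/2$ around the un-pruned centers are pairwise disjoint. Because the half-tight condition at iteration $j$ (taken at half-tight radius $r_j/2$) gives $\sum_{p':d(p_j,p')\le r_j/2}y_{p'}=r_j/4+z$, and any later chosen $p_k$ lies at distance $>r_j>r_j/2$ from $p_j$ by construction of $U_k$, this mass is stable under subsequent iterations. Disjointness then yields $\sum_{j}(r_j/4+z)\le \sum_{p\in P}y_p$, from which the cluster-cost bound $\sum_{j}(3r_j+z)\le 6\sum_{p\in P}y_p$ follows after careful accounting of the $z$-terms (using that $|\bar S|\,z$ is itself dominated by the same disjointness inequality). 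Finally, $\sum_p y_p\le\OPT'$ is weak LP duality: when $\OPT\le\OPT'$, rounding the radii of an optimal $k$-center solution up to $R$ yields a feasible primal of cost at most $\OPT'+kz=(1+\epsilon)\OPT'$ that upper-bounds the dual.

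The main obstacle is obtaining the constant $6$ in the cost bound. A direct application of the disjoint-ball inequality gives $\sum_j r_j\le 4\sum_p y_p$ and hence only the naive factor $12$ for $\sum_j(3r_j+z)$; tightening this to $6$ requires a more delicate accounting that uses the ``largest $r$'' rule in the stopping criterion together with the $|\bar S|\,z$ slack on both sides of the disjointness inequality to reabsorb the additive $z$ terms attached to each cluster.
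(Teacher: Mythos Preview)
Your arguments for primal and dual feasibility are correct and match the paper's: dual feasibility is isolated there as a separate lemma proved by exactly your triangle-inequality reduction from an arbitrary constraint $(p',r')$ to the constraint $(c,2r')$ at the currently raised point. The stability remark (that later centers $p_k$ lie outside the small ball around $p_j$) is true but unnecessary, since dual variables are never decreased and only a lower bound on the ball's mass is needed.

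The genuine gap is in the cost bound, and your proposed patch does not close it. You charge each surviving center at radius $r_j/2$, obtaining $r_j/4+z$ dual mass per ball and hence the factor $12$ you acknowledge. The ``largest $r$'' rule points the wrong way for your purposes (it tells you the constraint at the \emph{larger} radius is \emph{not} half-tight in your reading), and the $|\bar S|\,z$ slack only shaves a lower-order term off $12\sum_p y_p$; there is no route from your inequality $\sum_j(r_j/4+z)\le\sum_p y_p$ to the factor $6$.

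The paper's argument is shorter and gives $6$ in one step. For each $(p,r)\in\bar S$ it works with the ball of radius $r/3$, i.e., the pre-tripling radius that appears in the pruning rule. At this one radius two facts hold simultaneously: the pruning condition makes the balls $\{B(p,r/3):(p,r)\in\bar S\}$ pairwise disjoint, and the constraint $(p,r/3)$ is itself at least half-tight by the algorithm's stopping rule, so $\sum_{p'\in B(p,r/3)}y_{p'}\ge r/6+z$. Hence $r+z\le 6(r/6+z)$, and summing over the disjoint balls yields $\sum_{(p,r)\in\bar S}(r+z)\le 6\sum_p y_p$. In short, charge at the pre-tripling radius itself rather than at half of it; no delicate $z$-accounting is needed.

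A minor point: your weak-duality witness should be an optimal $k$-sum-of-radii solution (not a $k$-center solution) with each radius rounded up into $R$.
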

Next, we argue that $\tilde{S}$ is feasible and bound
its cost by the cost of $S'$ and the cost of $\bar{x}$. 
\begin{lemma}[{restate=[name=]lemPdCostB}]
	\label{lem:pd-cost-b}
	We have that $\tilde{S}$ is a feasible solution with cost at most
	$\sum_{(\hat{p},\hat{r})\in \hat{S}}\hat{r}+\sum_{p\in P}\sum_{r\in R}\bar{x}_{p}^{(r)}(r+z)\le(13.008+\epsilon)\OPT'$. 
\end{lemma}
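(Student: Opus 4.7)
The plan is to verify three things in order: feasibility of $\tilde{S}$, the bound $|\tilde{S}| \le k$, and finally the cost bound. For feasibility, I trace any input point $p \in P$ through the construction: by \cref{lem:pd-cost-a} $\bar{x}$ is feasible for $(P)$, so there exists a pair $(\bar{q}, r_{q}) \in \bar{S}$ with $d(p,\bar{q}) \le r_{q}$. Since $\hat{S}$ is a solution for $\bar{C}$, there exists $(\hat{p},\hat{r}) \in \hat{S}$ with $d(\hat{p},\bar{q}) \le \hat{r}$, so $\bar{q}$ appears in the set $\bar{C}'$ processed when $(\hat{p},\hat{r})$ is considered. Letting $\bar{r}$ be the maximum radius chosen at that step, $\bar{r} \ge r_{q}$, and the triangle inequality yields $d(\hat{p},p) \le \hat{r} + r_{q} \le \hat{r} + \bar{r}$, so $p$ is covered by the pair $(\hat{p},\hat{r}+\bar{r})\in\tilde{S}$. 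The bound $|\tilde{S}|\le k$ is immediate since the construction adds one pair per element of $\hat{S}$ and $|\hat{S}|\le k$.

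For the cost I would split the sum as
\begin{equation*}
\sum_{(\hat{p},\hat{r}+\bar{r})\in\tilde{S}} (\hat{r}+\bar{r}) \;=\; \sum_{(\hat{p},\hat{r})\in\hat{S}}\hat{r} \;+\; \sum_{(\hat{p},\hat{r})\in\hat{S}}\bar{r}(\hat{p},\hat{r}),
\end{equation*}
where $\bar{r}(\hat{p},\hat{r})$ denotes the maximum radius attached to a center of $\bar{C}'$ at the step for $(\hat{p},\hat{r})$. The key observation for the second sum is that whenever we process $(\hat{p},\hat{r})$, we remove the set $\bar{C}'$ from $\bar{C}$, so each element of $\bar{C}$ is used by at most one $\hat{S}$-pair and the attached radius $\bar{r}(\hat{p},\hat{r})$ corresponds to a distinct pair in $\bar{S}$. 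Hence
\begin{equation*}
\sum_{(\hat{p},\hat{r})\in\hat{S}}\bar{r}(\hat{p},\hat{r}) \;\le\; \sum_{(p,r)\in\bar{S}} r \;\le\; \sum_{p\in P}\sum_{r\in R}\bar{x}_{p}^{(r)}(r+z),
\end{equation*}
which by \cref{lem:pd-cost-a} is at most $6\,\OPT'$.

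For the first sum, I use the fact that running an offline $(3.504+\epsilon)$-approximation on $\bar{C}$ yields a cost at most $(3.504+\epsilon)$ times the optimal $k$-sum-of-radii cost on $\bar{C}$ with centers constrained to lie in $\bar{C}$. As noted in the excerpt, by taking an optimal solution on $P$ and replacing each of its centers by a nearest element of $\bar{C}$ (blowing up the radii via the triangle inequality), one obtains a feasible solution covering $\bar{C}$ with centers in $\bar{C}$ of cost at most $2\,\OPT \le 2\,\OPT'$. Therefore $\sum_{(\hat{p},\hat{r})\in\hat{S}}\hat{r} \le (3.504+\epsilon)\cdot 2\,\OPT' = (7.008+2\epsilon)\,\OPT'$. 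Combining the two bounds gives a total cost of at most $(13.008 + 2\epsilon)\,\OPT'$, and a rescaling of $\epsilon$ yields the claimed $(13.008+\epsilon)\,\OPT'$.

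The main obstacle I anticipate is the argument that the optimal cost of covering $\bar{C}$ with centers from $\bar{C}$ is at most $2\,\OPT$: one must carefully check that the ``snapping'' of an optimal solution on $P$ to centers in $\bar{C}$ only doubles the radii and preserves coverage of every point of $\bar{C}$. The charging argument for the $\bar{r}$-sum is delicate as well — one has to make precise that removing $\bar{C}'$ from $\bar{C}$ after each step prevents any pair of $\bar{S}$ from being counted twice, so the sum telescopes into a sub-sum of $\sum_{(p,r)\in\bar{S}} r$, and only then can \cref{lem:pd-cost-a} be applied cleanly.
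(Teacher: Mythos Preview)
Your proposal is correct and follows essentially the same approach as the paper: feasibility via the triangle inequality through $\bar{S}$ and then $\hat{S}$, and the cost split into $\sum\hat{r}$ (bounded by $2(3.504+\epsilon)\OPT'$ via the snapping argument) plus the $\bar{r}$-contribution (bounded by $\sum\bar{x}_p^{(r)}(r+z)\le 6\OPT'$ via \cref{lem:pd-cost-a}). One small slip: from ``there exists $(\hat{p},\hat{r})\in\hat{S}$ with $d(\hat{p},\bar{q})\le\hat{r}$'' you cannot conclude that $\bar{q}$ lies in $\bar{C}'$ for \emph{that} pair, since $\bar{q}$ may have been removed when an earlier pair in $\hat{S}$ was processed; you should instead let $(\hat{p},\hat{r})$ be the unique pair for which $\bar{q}\in\bar{C}'$ (which exists because $\hat{S}$ covers $\bar{C}$), and then both $d(\hat{p},\bar{q})\le\hat{r}$ and $\bar{r}\ge r_q$ follow as you need.
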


\subsection{Dynamic Algorithm}
\label{sec:pd-dynamic}

We describe now how we maintain the solutions $x,\bar{x},y,S,\bar{S}$,
and $\tilde{S}$ dynamically when points are inserted or deleted.
Our strategy is similar to \cite{ChaFul18}.

Suppose that a point $p$ is inserted. For each $i\in\{1,...,2k/\epsilon+1\}$
we insert $p$ into the set $U_{i}$ if $U_{i}\ne\emptyset$ and $p$
is not covered by a pair $(p_{j},r_{j})$ with $j\in\{1,...,i-1\}$.
If there is an index $i\in\{1,...,2k/\epsilon+1\}$ such that $U_{i-1}\ne\emptyset$
(assume again that $U_{0}\ne\emptyset$), $U_{i}=\emptyset$, and
$p$ is not covered by any pair $(p_{j},r_{j})$ with $j\in\{1,...,i-1\}$,
we start the above algorithm in the iteration $i$, being initialized
with $U_{i}=\{p\}$ and the solutions $x,y$ as being computed previously.

Suppose now that a point $p$ is deleted. We remove $p$ from each
set $U_{i}$ that contains $p$. If there is no $r\in R$ such that
$(p,r)\in S$ then we do not do anything else. Assume now that $(p,r)\in S$
for some $r\in R$. %
The intuition is that this does not happen very often since in each
iteration $i$ we choose a point uniformly at random from $U_{i}$.
More precisely, in expectation the adversary needs to delete a constant
fraction of the points in $U_{i}$ before deleting $p$. Consider
the index $i$ such that $(p,r)=(p_{i},r_{i})$. We restart the algorithm
from iteration $i$. More precisely, we initialize $y$ to the values
that they have after raising the dual variables $y_{p_{1}},...,y_{p_{i-1}}$
in this order as described above until a constraint for the respective
point $p_{j}$ becomes half-tight. We initialize $x$ to the corresponding
primal variables, i.e., $x_{p_{j}}^{(2r_{j})}=1$ for each $j\in\{1,...,i-1\}$
and $x_{p'}^{(r')}=0$ for all other values of $p',r'$. Also, we
initialize the set $U_{i}$ to be the obtained set after removing
$p$. With this initialization, we start the algorithm above in iteration
$i$, and thus compute like above the (final) vectors $y,x$ and based
on them $S,\bar{S}$, and $\tilde{S}$.

When we restart the algorithm in some iteration $i$ then it takes
time $O(|U_{i}|k^{2})$ to compute the new set $S$. We can charge
this to the points that were already in $U_{i}$ when $U_{i}$ was
recomputed the last time %
and to the points that were inserted into $U_{i}$ later. After a
point $p$ was inserted, it is charged at most $O(k/\epsilon)$ times
in the latter manner since it appears in at most $O(k/\epsilon)$
sets $U_{i}$. Finally, given $S$, we can compute the sets $\bar{S}$,
$\hat{S}$, and $\tilde{S}$ in time $O(k^{3}/\epsilon^{2})$. The
algorithm from \cite{CharikarP04} takes time $n^{O(1/\epsilon)}$ if the input has size $n$.
One can show that this yields an update time of $k^{O(1/\epsilon)}+(k/\epsilon)^{4}$
for each value $\OPT'$. Finally, the same set $\tilde{S}$ yields
a solution for $k$-sum-of-diameters, increasing the approximation
ratio by a factor of~2. 

\thmPdMain*

\subsection{Deferred Proofs}
\label{sec:pd-proofs}

\SetKwFunction{FnPrimalDual}{PrimalDual}
\SetKwFunction{FnPrune}{Prune}

In this section, we formally prove the correctness of \cref{alg:pd} for the $k$-sum-of-radii and the $k$-sum-of-diameter problem. To compute a solution in the static setting, the algorithm is invoked as $\FnPrimalDual(P, \emptyset, R, z, k, \epsilon, 0, 0, 0)$, where $P$ is the set of input points, $R$ is the set of radii, $z = \epsilon \OPT' / k$ is the facility cost and $k$ is the number of clusters. \Cref{alg:pd} is a pseudo-code version of the algorithm described earlier.

\begin{algorithm}
	\SetKwData{null}{null}
	\KwData{point set $P$, unassigned point sets $U = \{ U_0, \ldots \}$, radii set $R$, facility cost $z$, number of clusters $k$, precision $\epsilon$, primal vector $x = \{ x^{(r)}_p \mid r \in R \wedge p \in P \}$, dual vector $Y = \{ y_p \mid p \in P \}$}
	\Fn{\FnPrimalDual($P, U, R, z, k, \epsilon, x, y, i$)}{
		\While{$U_i \neq \emptyset$}{
			$p_i \gets$ uniformly random point from $U_i$ \;
			$\delta_i \gets \max ( \{ 0 \} \cup \{\delta' \mid \delta' \in \R \wedge \forall r' \in R : \sum_{p' \in P : d(p_i, p') \le r'} y_{p'} + \delta' \leq r'/2 + z \} )$ \;
			\If{$\delta_i > 0$}{
				$r_i \gets \max \{r' \mid r' \in R \wedge \sum_{p' \in P : d(p_i, p')\le r'} y_{p'} + \delta_i = r'/2 + z \}$ \;
			}
			\Else{
				$r_i \gets \max \{r' \mid r' \in R \wedge \sum_{p' \in P : d(p_i, p')\le r'} y_{p'} \geq r'/2 + z \}$ \;
			}
			$y_{p_i} \gets \delta_i$; $x^{(2r_i)}_{p_i} \gets 1$ \;
			$U_{i+1} \gets U_i \setminus \{ p' \mid p' \in U_i \wedge d(p_i, p') \leq 2r_i \}$ \;
			$i \gets i + 1$ \;
			\If{$i > (2k/\epsilon)^2$}{
				\Return{``$\OPT' < \OPT$''} \;
			}
		}
		$\bar{S} \gets \emptyset$; $S \gets$ sort $(p_j, r_j)_{j \in [i-1]}$ non-increasingly according to $r_j$ \;
		\ForAll{$(p,r) \in S$}{
			\If{$\nexists (p_j,r_j) \in \bar{S} : d(p, p_j) \leq r + r_j$}{
				$\bar{S} \gets \bar{S} \cup \{ (p, 3r) \}$ \;
			}
		}
		\Return{$\bar{S}, U, x, y, r, i$} \;
	}
	\caption{\label{alg:pd} Pseudo code of the primal-dual algorithm for $k$-sum-of-radii as described in \cref{sec:primal-dual}.}
\end{algorithm}

\paragraph{Proof of \cref{lem:pd-iteration-time}}

We bound the running time of a single primal-dual step. This implies \cref{lem:pd-iteration-time}.

\begin{lemma}[\cref{lem:pd-iteration-time}]
	The running time of the $i$\xth/ iteration of the while-loop in \FnPrimalDual is $O(ik / \epsilon + \lvert U_i \rvert)$.
\end{lemma}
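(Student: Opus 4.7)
The plan is to bound the cost of each operation performed inside one iteration of the while-loop of \FnPrimalDual by either $O(ik/\epsilon)$ or $O(\lvert U_i \rvert)$. The iteration consists of four basic steps: (i) sampling $p_i$ from $U_i$, (ii) computing $\delta_i$, (iii) computing $r_i$ and updating $y_{p_i}$, $x^{(2 r_i)}_{p_i}$, and (iv) computing $U_{i+1}$ from $U_i$. Steps (i) and (iii) are trivial (assuming $U_i$ is stored with an array supporting random access in $O(1)$), and step (iv) is clearly $O(\lvert U_i \rvert)$ by scanning $U_i$ and computing the distance to $p_i$ for each point. The only nontrivial part is (ii), and more generally the evaluation of sums of the form $\sum_{p' \in P : d(p_i, p') \le r'} y_{p'}$.

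The key observation is that after $i-1$ iterations, the dual vector $y$ has at most $i-1$ nonzero entries, namely $y_{p_1}, \ldots, y_{p_{i-1}}$, since this is the only way the algorithm ever modifies $y$. Hence, for any fixed radius $r' \in R$, the sum $\sum_{p' \in P : d(p_i, p') \le r'} y_{p'}$ can be evaluated in $O(i)$ time by scanning the list of previously selected centers $p_1, \ldots, p_{i-1}$, computing $d(p_i, p_j)$ for each, and summing $y_{p_j}$ whenever $d(p_i, p_j) \le r'$. Since $R = \{z, 2z, \ldots, \OPT'\}$ has size $|R| = \OPT'/z = k/\epsilon$, and $\delta_i$ (respectively $r_i$) is obtained by a minimum (respectively maximum) over $R$ of such sums, we can compute both $\delta_i$ and $r_i$ in time $O(i \cdot k/\epsilon) = O(ik/\epsilon)$.

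Combining the four steps yields a total runtime of $O(1) + O(ik/\epsilon) + O(1) + O(\lvert U_i \rvert) = O(ik/\epsilon + \lvert U_i \rvert)$, as claimed. I do not anticipate any real obstacle here; the whole argument rests on the sparsity of $y$ (at most $i-1$ nonzero entries after $i-1$ iterations) together with $|R|=k/\epsilon$. If a tighter bound were needed, one could presort the distances $d(p_i, p_j)$ for $j < i$ and then sweep through $R$ in order, giving $O(i \log i + k/\epsilon)$ instead of $O(ik/\epsilon)$, but the weaker bound already suffices for the lemma.
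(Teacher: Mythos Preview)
Your proof is correct and follows essentially the same approach as the paper: exploit the sparsity of $y$ (at most $i-1$ nonzero entries at the start of iteration $i$) together with $|R|=k/\epsilon$ to bound the computation of $\delta_i$ and $r_i$ by $O(ik/\epsilon)$, and handle $U_{i+1}$ by a single scan of $U_i$. The paper's proof is just a terser version of exactly this argument.
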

\begin{proof}
	In each iteration, at most one entry of $y$, i.e., $y_{p_i}$, increases. Therefore, at most $i-1$ entries of $y$ are non-zero at the beginning of iteration $i$. By keeping a list of non-zero entries, each sum corresponding to a constraint of the dual program can be computed in time $O(i)$. Since $\lvert R \rvert \leq k / \epsilon$, it follows that $\delta_i$ and $r_i$ can be computed in time $O(ik/\epsilon)$. To construct $U_{i+1}$, it is sufficient to iterate over $U_i$ once.
\end{proof}

\paragraph{Proof of \cref{lem:pd-iterations-bound}}

We show that our choice of $z = \epsilon \OPT' / k$ will result in a solution $\bar{S}$ if $\OPT \leq \OPT'$.

\lemPdIterationsBound*
\begin{proof}
	Since, by weak duality, $\sum_{p \in P} y_p$ is a lower bound to the optimum value of $(P)$, we bound the number of iterations that are sufficient to guarantee $\sum_{p \in P} y_p > \OPT'$. Call an iteration of the while-loop in \FnPrimalDual \emph{successful} if $\delta_i > 0$, and \emph{unsuccessful} otherwise. First, observe that for each successful iteration $i$, the algorithms increases $y_{p_i}$ by at least $z/2$. This is due to the fact that all values in $R$ are multiples of $z$, and therefore $r/2 + z$ is a multiple of $z/2$ for any $r \in R$. Since the algorithm increases $y_{p_i}$ as much as possible, all $y_{p_i}$ are multiples of $z/2$. Therefore, after $\OPT' / (z/2) +1= 2k / \epsilon +1$ successful iterations, it holds that $\OPT \geq \sum_{p \in P} y_p > \OPT'$.
	
	Now, we prove that for each successful iteration, there are at most $\lvert R \rvert$ unsuccessful iterations. Then, it follows that after $((2k/\epsilon)+1) \cdot \lvert R \rvert \leq (2k/\epsilon)^2 + (2k/\epsilon)$ iterations, $\OPT > \OPT'$. Let $i$ be an unsuccessful iteration. The crucial observation for the following argument is that for the maximum $r_i \in R$ so that $(p_i, r_i)$ is at least half-tight and for any $j < i$ so that $d(p_i, p_j) \leq r_i$ and $y_{p_j} > 0$, we have $r_j < r_i$: otherwise, $p_i$ would have been removed from $U_j$. On the other hand, such $p_j$ must exist because the dual constraint $(p_i, r_i)$ is at least half-tight but $y_{p_i} = 0$. Now, we \emph{charge} the radius $r_i$ to the point $p_j$ and observe that we will never charge $r_i$ to $p_j$ again. This is due to the fact that all points $p \in U_{i-1}$ with distance $d(p_j, p) \leq r_i$ are removed from $U_i$ because $d(p_i, p) \leq d(p_i, p_j) + d(p_j, p) \leq 2r_i$. Therefore, for any successful iteration $j$ and any $r \in R$, there is at most one $i > j$ so that $r = r_i$ is charged to $p_j$, each accounting for an unsuccessful iteration. 
\end{proof}

\paragraph{Proof of \cref{lem:mpt-pd-pruning-time}}

We argue that the pruning step has running time $\textrm{poly}(k/\epsilon)$ to prove \cref{lem:mpt-pd-pruning-time}.

\begin{lemma}[\cref{lem:mpt-pd-pruning-time}]
	\label{lem:pd-pruning-time}
	The for-loop in \FnPrimalDual has running time $O((k/\epsilon)^4)$.
\end{lemma}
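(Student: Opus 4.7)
My plan is to bound the number of iterations of the for-loop by $|S|$ and then bound the per-iteration cost by $|\bar S|$, using the fact that both are $O((k/\epsilon)^2)$ at that point in the algorithm.

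First, I would invoke \Cref{lem:pd-iterations-bound}: if \FnPrimalDual reaches the pruning step (rather than returning ``$\OPT'<\OPT$''), then fewer than $(2k/\epsilon)^2$ while-loop iterations have been completed, so $|S|\le (2k/\epsilon)^2=O((k/\epsilon)^2)$. The initial sort of $S$ non-increasingly by radius thus takes $O((k/\epsilon)^2\log(k/\epsilon))$ time.

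Next, I would analyze the body of the for-loop. For each processed pair $(p,r)\in S$, the algorithm performs a single membership test of the form ``does there exist $(p_j,r_j)\in\bar S$ with $d(p,p_j)\le r+r_j$?''. Since distances can be read off in constant time and $\bar S$ is maintained explicitly as a list, this test takes $O(|\bar S|)$ time, after which at most one pair is appended to $\bar S$. Because $\bar S\subseteq\{(p,3r):(p,r)\in S\}$, we always have $|\bar S|\le|S|=O((k/\epsilon)^2)$, so each iteration runs in $O((k/\epsilon)^2)$ time.

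Multiplying the $O((k/\epsilon)^2)$ iterations by the $O((k/\epsilon)^2)$ cost per iteration gives $O((k/\epsilon)^4)$ total time for the for-loop (which dominates the sorting cost), yielding the claimed bound. The main ``obstacle'' is really just the bookkeeping: verifying that no hidden work happens inside the pruning, in particular that the non-increasing ordering of radii means we never need to revisit previously rejected pairs, so a single pass suffices.
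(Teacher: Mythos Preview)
Your proposal is correct and follows essentially the same approach as the paper: bound $|S|$ by $O((k/\epsilon)^2)$, note that sorting is cheap, and observe that each for-loop iteration scans $\bar S$ whose size is at most $|S|$. One small quibble: the bound $|S|\le(2k/\epsilon)^2$ does not come from \Cref{lem:pd-iterations-bound} (which is a correctness statement about when $\OPT>\OPT'$) but directly from the algorithm's explicit early-exit check ``\textbf{if} $i>(2k/\epsilon)^2$ \textbf{return}''; your parenthetical already captures this mechanism, so the argument stands regardless.
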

\begin{proof}
	Since $i \leq 2(2k/\epsilon)^2$, it holds that $\lvert S \rvert \leq \lvert \{ p \in P \mid \exists  r \in R : x^{(r)}_p > 0 \} \rvert \leq 2(2k/\epsilon)^2$. Therefore, sorting $S$ requires at most $O((2k/\epsilon)^2 \log (2k/\epsilon))$ time. In each iteration of the loop, it suffices to iterate over $\bar{S}$. Since $\lvert \bar{S} \rvert \leq \lvert S \rvert$, the claim follows.
\end{proof}

\paragraph{Proof of \cref{lem:pd-offline-time}}

\lemPdOfflineTime*
\begin{proof}
	Recall that we sort points in $S$ non-increasingly. Therefore, for every point $p$, there exists only one $(p,r) \in S$ that is inserted into $\bar{S}$. It follows that $\lvert \bar{S} \rvert \leq \lvert S \rvert \leq (2k/\epsilon)^2$ and since $\lvert \hat{S} \rvert \leq k$, it suffices to iterate over $\bar{S}$ for every $(\hat{p}, \hat{r}) \in \hat{S}$.
\end{proof}

\paragraph{Proof of \cref{lem:pd-cost-a}}

We turn to the feasibility and approximation guarantee of the solution that is computed by \cref{alg:pd}. First, we observe that the dual solution is always feasible.

\begin{lemma}
	\label{lem:always-dual-feasible}
	During the entire execution of \cref{alg:pd}, no dual constraint $(p, r)$ is violated.
\end{lemma}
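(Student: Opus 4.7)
The plan is to prove the claim by induction on the iteration count of the while-loop in \FnPrimalDual. The base case is trivial: initially $y\equiv 0$, so every dual constraint $\sum_{p':d(p,p')\le r}y_{p'}\le r+z$ reads $0\le r+z$. For the inductive step, assume all dual constraints are satisfied at the start of iteration $i$ and show the same at its end.

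In iteration $i$ the only variable that changes is $y_{p_i}$, which is raised by $\delta_i$. By its very definition in the algorithm, $\delta_i$ is the maximum value in $\{0\}\cup\mathbb{R}_{\ge 0}$ such that, for every $r'\in R$,
\[
\sum_{p':d(p_i,p')\le r'} y_{p'} + \delta_i \;\le\; \tfrac{r'}{2}+z.
\]
Consequently, immediately after the update we have the \emph{half-tightness} invariant at the pivot $p_i$:
\[
\sum_{p':d(p_i,p')\le r'} y_{p'} \;\le\; \tfrac{r'}{2}+z \qquad \text{for every } r'\in R. \qquad (\star)
\]
This is the workhorse of the argument: the algorithm actively enforces a stronger bound at $p_i$ than the dual feasibility requires, and the ``slack factor of two'' will be converted into feasibility at every other point via the triangle inequality.

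Next I would fix an arbitrary constraint $(p,r)$ with $p\in P$ and $r\in R$ and split into cases on $d(p,p_i)$. If $d(p,p_i)>r$, the pivot $p_i$ does not appear in $\{p':d(p,p')\le r\}$, so the left-hand side of the $(p,r)$-constraint is unchanged and the inductive hypothesis yields the bound. Otherwise $d(p,p_i)\le r$, and the triangle inequality gives $\{p':d(p,p')\le r\}\subseteq \{p':d(p_i,p')\le 2r\}$, so
\[
\sum_{p':d(p,p')\le r} y_{p'} \;\le\; \sum_{p':d(p_i,p')\le 2r} y_{p'}.
\]
Since $r$ and $\OPT'$ are multiples of $z$, so is $2r$; whenever $2r\le \OPT'$ we have $2r\in R$, and substituting $r'=2r$ into $(\star)$ yields the desired $\sum\le r+z$.

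The main obstacle I anticipate is the boundary sub-case $2r>\OPT'$, i.e.\ $r>\OPT'/2$, where $2r\notin R$ and the clean triangle-inequality argument breaks. I would handle this by applying $(\star)$ at $r'=\OPT'$ to obtain $\sum_{p':d(p_i,p')\le \OPT'} y_{p'} \le \OPT'/2+z \le r+z$, and then argue that in this regime the set of contributing indices $\{p'=p_j : y_{p_j}>0,\, d(p,p_j)\le r\}$ is contained in $\{p_j : d(p_i,p_j)\le \OPT'\}$. The latter inclusion follows from the invariants already established for the sequence of shrinking sets $U_j$ and from the stopping rule of \Cref{lem:pd-iterations-bound}, which guarantees that pivots separated by more than $\OPT'$ cannot both be selected without exceeding the dual-value threshold that forces the algorithm to terminate. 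Putting the cases together, the $(p,r)$-constraint remains satisfied after iteration $i$, completing the induction.
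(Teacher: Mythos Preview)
Your main argument---split on whether $d(p,p_i)\le r$, use the triangle inequality to nest the $r$-ball at $p$ inside the $2r$-ball at $p_i$, and then invoke half-tightness at $(p_i,2r)$---is exactly the paper's proof (the paper phrases it by contradiction, but the pieces are identical). You are in fact more careful than the paper: the paper simply writes ``the dual constraint $(p_i,2r)$ is more than half-tight'' and declares a contradiction with the choice of $\delta_i$, without ever remarking that $2r$ might fall outside $R$.

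Your handling of the boundary case $r>\OPT'/2$, however, is where the argument breaks. First, invoking \cref{lem:pd-iterations-bound} is circular: that lemma's proof appeals to weak LP duality, which presupposes precisely the dual feasibility you are trying to establish. Second, the inclusion you assert---that every pivot $p_j$ with $y_{p_j}>0$ and $d(p,p_j)\le r$ satisfies $d(p_i,p_j)\le\OPT'$---does not follow from any invariant of the sets $U_j$ and is in fact false. Take $\OPT'=4$, $z=1$, $R=\{1,2,3,4\}$, and a star metric with hub $p$ and three leaves $p_1,p_2,p_3$ at distance $3$ from $p$ (so $d(p_j,p_{j'})=6$). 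If the algorithm picks $p_1,p_2,p_3$ in the first three iterations, each receives $y_{p_j}=3/2$ (the binding half-tightness constraint is at $r'=1$, and no other positive-$y$ point lies within distance $\OPT'$ of it). At $i=3$ the constraint $(p,3)$ has $d(p,p_1)=3\le r$ but $d(p_3,p_1)=6>\OPT'$, so your inclusion fails; indeed the constraint $(p,3)$ then reads $4.5>4$. The boundary sub-case is therefore a genuine gap, not something the $U_j$ invariants or the stopping rule can patch.
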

\begin{proof}
	For the sake of contradiction, let $i$ be the first iteration of the while-loop in \FnPrimalDual after which there exists $(p,r)$ such that $\sum_{p' \in P : d(p, p') \leq r} > r + z$. From this definition, it follows that $d(p, p_i) \leq r$ as only $y_{p_i}$ has increased in iteration $i$. By the triangle inequality, for all $p' \in P$ so that $d(p,p') \leq r$, we have that $d(p_i,p') \leq 2r$. Therefore, the dual constraint $(p_i, 2r)$ is more than half-tight:
	\begin{align*}
		\sum_{p' \in P : d(p_i, p') \leq 2r} y_{p'}
		\geq \sum_{p' \in P : d(p, p') \leq r} y_{p'}
		> r + z
		= \frac{2r}{2} + z .
	\end{align*}
	This is a contradiction to the choice of $(p_i, r_i)$.
\end{proof}

The primal solution is also feasible to the LP, and its cost is bounded by $6\OPT'$.

\lemPdCostA*
\begin{proof}
	Since $U_i = \emptyset$, $x$ is feasible at the end of \cref{alg:pd}. By the construction of $\bar{S}$, it follows that $\bar{x}$ is feasible. By \cref{lem:always-dual-feasible}, $y$ is always feasibile for (D). It remains to bound the cost of $\bar{S}$.
	
	By the construction of $\bar{S}$, for any $(p, r) \in \bar{S}$, $x^{(r/3)}_p = 1$ and $(p, r/3)$ is at least half-tight in $y$. In other words, $r + z \le 6 \cdot (r/(2 \cdot 3) + z) = 6 \sum_{d(p,p') \leq r/3} y_p$. Let $\bar{S}(p,r) = \{ p' \in P \mid d(p, p') \leq r \}$. For all $(p_1,r_1), (p_2,r_2) \in \bar{S}$, $p_1 \neq p_2$, we have that $\bar{S}(p_1,r_1/3)$ and $\bar{S}(p_2,r_2/3)$ are disjoint due to the construction of $\bar{S}$. Therefore, it holds that
	\begin{equation*}
		\sum_{p\in P}\sum_{r\in R}\bar{x}_{p}^{(r)}(r+z)
		= \sum_{(p,r) \in \bar{S}} (r+z)
		\leq 6\cdot\sum_{p\in P}y_{p} .
	\end{equation*}
	By weak duality, $6\cdot\sum_{p\in P}y_{p} \leq 6\OPT'$.
\end{proof}

\paragraph{Proof of \cref{lem:pd-cost-b}}

Finally, we prove that the pruned solution is a feasible $k$-sum-of-radii solution with cost bounded by $(13.008 + \epsilon)\OPT'$.

\lemPdCostB*
\begin{proof}
	First observe that the cost of $\hat{S}$ is bounded by $2\cdot 3.504 \cdot \OPT$ since we can construct a solution with cost at most $2\OPT$ that covers $\bar{C}$ using only centers from $\bar{C}$: for each point $p \in \bar{C}$, take the point $p' \in \OPT$ covering $p$ with some radius $r'$, and select $p$ with  radius $2r'$.
	
	Let $p \in P$, let $(\bar{p}, \bar{r}) \in \bar{S}$ so that $d(p, \bar{p}) \leq \bar{r}$ and let $(\hat{p}, \hat{r}) \in \hat{S}$ that was chosen to cover $\bar{p}$. By the triangle inequality, $d(p, \bar{p}) \leq \hat{r} + \bar{r}$. 
	Let $(\hat{p}, \tilde{r})$ be the corresponding tuple in $\tilde{S}$. By the construction of $\tilde{S}$, we have that $\hat{r} + \bar{r} \leq \tilde{r}$. Therefore, $\tilde{S}$ is feasible. It follows that the cost of $\tilde{S}$ is at most
	
	\begin{equation*}
		\sum_{(\hat{p},\hat{r})\in \hat{S}}\hat{r}+\sum_{p\in P}\sum_{r\in R}\bar{x}_{p}^{(r)}(r+z)
		\leq (2\cdot3.504 + 6 + \epsilon) \OPT'. \qedhere
	\end{equation*}
\end{proof}

\paragraph{Proof of \cref{thm:pd-main}}

\thmPdMain*
\begin{proof}
	Consider a fixed choice of $\OPT'$. This assumption will be removed at the end of the proof. We invoke \cref{alg:pd} on the initial point set as for the static setting. Consider an operation $t$, and let $\breve{S}, \breve{U}, \breve{x}, \breve{y}, \breve{r}, \breve{i}$ be the state before this operation.
	
	If a point $p$ is inserted, the algorithm checks, for every $j \in \{ 1, \ldots, i-1 \}$, if $d(p_j, p) \leq 2r_i$. If this is not the case for any $j$, $p$ is added to $\breve{U_j}$ and the algorithm proceeds. Otherwise, the algorithm stops. If the last check fails and $i > (2k/\epsilon)^2$, the algorithm stops, too. Otherwise, it runs $\FnPrimalDual(P, \breve{U}, R, z, k, \epsilon, x, y, i)$ with the updated $\breve{U}$. In any case, the point $p$ deposits a budget of $2k/\epsilon$ tokens for each possible $U_i$, $i \in [2k/\epsilon]$, i.e., $(2k / \epsilon)^2$ tokens in total. By \cref{lem:pd-iteration-time,lem:pd-pruning-time}, the total running time is $O((2k/\epsilon)k/\epsilon + 0 + (k/\epsilon)^4 + (2k / \epsilon)^2) = O((k/\epsilon)^4)$.
	
	If a point $p$ is deleted, the algorithm deletes $p$ from all $\breve{U_i}$ it is contained in. If for all radii $r \in R$ we have that $\breve{x}^{(r)}_p = 0$ then the algorithms stops. Otherwise, let $j$ be so that $p \in \breve{U}_j \setminus{\breve{U}}_{j+1}$, i.e., $p$ is the center of the $j$\xth/ cluster. The algorithm sets, for all $r \in R$ and all $j' \geq j$, $\breve{x}^{(r)}_{p_{j'}} = 0$, $\breve{y}_{p_{j'}} = 0$ and, for all $j' > j$, $\breve{U}_{j'} = \emptyset$. Then, it calls $\FnPrimalDual(P, \breve{U}, R, z, k, \epsilon, \breve{x}, \breve{y}, j)$. By \cref{lem:pd-iteration-time,lem:pd-pruning-time}, the total running time is $O((2k/\epsilon)^2 k/\epsilon + (2k/\epsilon)\lvert U_j \rvert + (k/\epsilon)^4)$.
	
	The correctness of the algorithm follows from the fact that if $\OPT' \geq \OPT$, the algorithm produces a feasible solution irrespective of the choice of the $p_i$, $i \in [2k / \epsilon]$, by \cref{lem:pd-cost-b} and observing that the procedure described above simulates a valid run of the algorithm for $P = P_{t-1} \cup \{ p \}$ and $P = P_t \setminus \{ p \}$, respectively. Finally, we prove that the expected time to process all deletions up to operation $t$ is bounded by $O(t \cdot 2k/\epsilon)$. The argument runs closely along the running time analysis in~\cite{ChaFul18}.
	
	Let $t' \leq t$, $j \in [2k/\epsilon]$ and let $\bar{U}^{(t')}_j$ be the set $U^{(t')}_i$ that was returned by \FnPrimalDual after the last call that took place before operation $t'$ so that the argument $i$ is such that $i \leq j$. Note that this is the last call to \FnPrimalDual before operation $t'$ when $U_j$ is reclustered. We decompose $U^{(t')}_j$ into $A^{(t')}_j = U^{(t')}_j \setminus \bar{U}^{(t')}_j$ and $B^{(t')}_j = U_j \cap \bar{U}^{(t')}_j$ and define the random variable $T^{t'}_i$, where $T^{t'}_i = \lvert B^{(t')} \rvert$ if operation $t'$ deletes center $p_i$ and $T^{t}_i = 0$ otherwise. Next, we bound $E[\sum_{t' < t} \sum_{i \in [2k / \epsilon]} T^{(t')}_i]$. For $t' <t$ and $i \in [2k / \epsilon]$, consider $E[T^{(t')}_i]$. Since $p_i$ was picked uniformly at random from $B^{(t')}$, the probability that operation $t'$ deletes $p_i$ is $1 / \lvert B^{(t')} \rvert$. Therefore, $E[T^{t'}_i] = 1$. By linearity of expectation, $E[\sum_{t' < t} \sum_{i \in [2k / \epsilon]} T^{(t')}_i] \leq 2k/\epsilon$. If operation $t'$ deletes $p_j$, $U_j$ is reclustered at operation $t'$ and any point in $A$ is not in $A^{(T^{t'}_i)}$ for any $t'' \geq t'$. Therefore, each point $p \in A$ can pay $(2k/\epsilon)$ tokens from its insertion budget if $p_j$ is deleted. The expected amortized cost for all operations up to operation $t$ is therefore at most $O((2k/\epsilon)^2 k/\epsilon + (2k/\epsilon)^2 + (k/\epsilon)^4) = O((k/\epsilon)^4)$.

	Now, we remove the assumption that $\OPT'$ is known. Recall that $d(x,y) \geq 1$ for every $x,y \in X$, and $\dmax = \max_{x,y \in \cX} d(x,y)$. For every $\Gamma \in \{ (1+\epsilon)^i \mid i \in [\lceil \log_{1+\epsilon} (k \dmax) \rceil]$, the algorithm maintains an instance of the LP with $\costbound = (1+\epsilon)^\Gamma$. After every update, the algorithm determines the smallest $\Gamma$ for which a solution is returned. Recall that the algorithm from \cite{CharikarP04} takes time $O(n^{O(1/\epsilon)})$. The total expected amortized cost is $O(k^{O(1/\epsilon)} \log \dmax)$.
\end{proof}

\section{Lower Bound for Arbitrary Metrics}\label{sec:LB}
We now demonstrate that any algorithm which approximates the optimal $k$-center cost, in an arbitrary metric space of $n$ points, must run in $\Omega(nk)$ time. Specifically, the input to an algorithm for $k$-center in arbitrary metric spaces is both the point set $P$  \textit{and} the metric $d$ over the points. In particular, the input can be represented via the distance matrix  distance matrix $\bD \in \R^{n \times n}$ over the point set $P$, and the behavior of such an algorithm can be described by a sequences of adaptive queries to $\bD$. 

The above setting casts the problem of approximating the cost of the optimal $k$-center clustering as a \textit{property testing} problem  \cite{goldreich1998property,goldreich2017introduction}, where the goal is to solve the approximation problem while making a small number of queries to $\cD$. Naturally, the query complexity of such a clustering algorithm lower bounds its runtime, so to prove optimality of our dynamic $k$-center algorithms it suffices to focus only on the query complexity. In particular, in what follows we will demonstrate that any algorithm that approximates the optimal $k$-center cost to any non-trivial factor with probability $2/3$ must query at least $\Omega(nk)$ entries of the matrix.  In particular, this rules out any fully dynamic algorithm giving a non-trivial approximation in $o(k)$ amortized update time for general metric spaces. 

Moreover, we demonstrate that this lower bound holds for the $(k,z)$-clustering objective, which includes the well studied $k$-medians and $k$-means. Recall that this problem is defined as outputting a set $\cC \subset \cX$ of size at most $k$ which minimizes the objective function
\[  \cost[P]{\cC}[k][z] =   \sum_{p \in P} d^z(p,\ell(p))    \]
where $\ell(p)$ is the cluster center associated with the point $p$. 
Note that  $( \cost[P]{\cC}[k][z])^{1/z}$ is always within a factor of $n$ of the optimal $k$-center cost. Thus, it follows that if $k$-center cannot be approximated to any non-trivial factor (including factors which are polynomial in $n$) in $o(nk)$ queries to $\bD$, the same holds true for $(k,z)$-clustering for any constant $z$. Thus, in the proofs of the following results we focus solely on proving a lower bound for approximation $k$-center to any factor $R$, which will therefore imply the corresponding lower bounds for $(k,z)$-clustering. 

We do so by first proving Theorem \ref{thm:LBBig}, which gives a $\Omega(nk)$ lower bound when $n = \Omega(k \log k)$. Next, in Proposition \ref{prop:lb}, we prove a general $\Omega(k^2)$ lower bound for any $n > k$, which will complete the proof of Theorem \ref{thm:LB}. We note that the proof of Proposition \ref{prop:lb} is fairly straightforward, and the main challenge will be to prove Theorem \ref{thm:LBBig}.

\begin{theorem} \label{thm:LBBig}
Fix and $k \geq 1$ and $n > C k \log k$ for a sufficiently large constant $C$. Then any algorithm which, given oracle access the distance matrix $\cD \in \R^n$ of a set $X$ of $n$ points in a metric space, determines correctly with probability $2/3$ whether the optimal $k$-center cost on $X$ is at most $1$ or at least $R$, for any value $R >1$, must make at least $\Omega(k n)$ queries in expectation to $\cD$. The same bound holds true replacing the $k$-center objective with $(k,z)$-clustering, for any constant $z>0$. 
\end{theorem}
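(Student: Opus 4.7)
By Yao's minimax principle, it suffices to exhibit a distribution on inputs and show that every deterministic algorithm making fewer than $\Omega(nk)$ queries in expectation fails to distinguish its YES and NO parts with probability $2/3$. Set $m := n/k$. The YES distribution $\mu_{\mathrm{YES}}$ draws a uniformly random partition $\sigma$ of $[n]$ into $k$ blocks of size $m$ and defines $d(x,y) = 1$ when $\sigma(x) = \sigma(y)$ and $d(x,y) = 4R$ otherwise; the optimal $k$-center cost is $1$. The NO distribution $\mu_{\mathrm{NO}}$ draws the same $\sigma$ together with an independent uniform $i^* \in [n]$ and modifies the metric so that $d(i^*, y) = 4R$ for every $y \neq i^*$ (all other distances unchanged); triangle inequality is straightforward, and any $k$ centers incur cost at least $4R$, because either $i^*$ is made a center (leaving one of the $k$ original clusters uncovered at distance $4R$) or it is not (leaving $i^*$ itself at distance $4R$ from every center). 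An $R$-approximation for $k$-center therefore distinguishes the two cases. The extension to $(k,z)$-clustering is immediate from the preliminary observation that $(\cost[P]{\mathcal C}[k][z])^{1/z}$ is within a factor of $n$ of the $k$-center cost, which converts an $R$-separation for $k$-center into a polynomial-in-$n$ separation for the $(k,z)$-objectives.

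\textbf{Coupling argument.} Couple $\mu_{\mathrm{YES}}$ and $\mu_{\mathrm{NO}}$ on a common probability space by drawing a single pair $(\sigma, i^*)$ and letting the YES instance be $\sigma$ and the NO instance be $(\sigma, i^*)$. For a deterministic algorithm $A$, let $T_\sigma$ denote its transcript on the YES instance and define
\[
B_\sigma := \{ p \in [n] : \text{some query in } T_\sigma \text{ involving } p \text{ returns distance } 1 \}.
\]
The structural claim, proved by induction along the queries, is that whenever $i^* \notin B_\sigma$ the transcripts of $A$ on the YES and NO instances coincide: at any query $\{x,y\}$ the two answers differ only if $i^* \in \{x,y\}$ and $\sigma(x) = \sigma(y)$, and this event already forces $i^* \in B_\sigma$. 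Averaging over uniform $i^*$ gives $\Pr_{i^*}[\text{transcripts agree} \mid \sigma] \geq 1 - |B_\sigma|/n$, so the distinguishing advantage of $A$ satisfies
\[
\bigl| \Pr[A \to \mathrm{YES} \mid \mu_{\mathrm{YES}}] - \Pr[A \to \mathrm{YES} \mid \mu_{\mathrm{NO}}] \bigr| \,\leq\, \mathbf{E}_\sigma[|B_\sigma|] \,/\, n.
\]
To achieve advantage at least $1/3$, the algorithm must therefore satisfy $\mathbf{E}_\sigma[|B_\sigma|] \geq n/3$.

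\textbf{Bounding the eliminated set.} Each ``same'' (distance-$1$) response in $T_\sigma$ contributes at most two endpoints to $B_\sigma$, so $|B_\sigma| \leq 2 q_s$, where $q_s = q_s(\sigma)$ counts the distance-$1$ responses. I will then show $\mathbf{E}_\sigma[q_s] \leq \mathbf{E}_\sigma[q]/k$ by a step-wise conditional analysis: the $i$-th query $(x_i, y_i)$ is $\mathcal{F}_{i-1}$-measurable, and the posterior on $\sigma$ given $\mathcal{F}_{i-1}$ is uniform over partitions consistent with all previously discovered same/different-edges. If $x_i$ lies in a discovered component of size $s$, a residual-capacity counting argument on the $k$ clusters of capacity $m$ gives $\Pr[\sigma(x_i) = \sigma(y_i) \mid \mathcal{F}_{i-1}] \leq (m-s)/(n-s) \leq m/n = 1/k$. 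Summing over the $q$ queries yields $\mathbf{E}[q_s] \leq \mathbf{E}[q]/k$, and combining with $\mathbf{E}_\sigma[|B_\sigma|] \geq n/3$ produces $\mathbf{E}[q] = \Omega(nk)$, establishing the $k$-center lower bound. The $(k,z)$-clustering bound then follows from the objective-comparison reduction above.

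\textbf{Main obstacle.} The principal technical challenge is executing the per-step conditional bound uniformly over arbitrary adaptive histories. Long histories can impose intricate patterns of same- and different-constraints on $\sigma$, producing highly non-uniform posteriors over valid partitions, and the argument must proceed solely through residual-capacity counts within the $k$ clusters regardless of the specific constraint pattern. The hypothesis $n > C k \log k$ is used here to ensure that the total mass $2 q_s$ of nontrivial discovered components stays well below $n/2$ throughout the execution---otherwise the capacity-based upper bound on the merge probability could degrade. Translating the cheap averaged intuition ``each pair has same-cluster probability $(m-1)/(n-1) \approx 1/k$'' into a robust pointwise bound that composes additively across all $q$ adaptive steps, and doing so while allowing $q = q(\sigma)$ to depend on the input, is where the bulk of the proof's care will be required.
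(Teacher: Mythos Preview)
Your high-level strategy (Yao's minimax, a planted-point coupling, and bounding the set $B_\sigma$ of points that ever appear in a distance-$1$ response) matches the paper's almost exactly. The gap is precisely where you flag it: the per-step bound $\Pr[\sigma(x_i)=\sigma(y_i)\mid\mathcal F_{i-1}]\le (m-s)/(n-s)$ is \emph{false} for the equal-block partition, because ``different'' constraints can force two fresh-looking points into the same block without ever enlarging the same-edge component of either. Concretely, take $k=3$, learn that $x_1,x_2,x_3$ are pairwise different (three queries), then learn $z\ne x_1$ and $z\ne x_2$ (two more queries); now $z$'s same-component has size $s=1$, yet $\Pr[\sigma(z)=\sigma(x_3)\mid\mathcal F]=1$, not $(m-1)/(n-1)\approx 1/3$. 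Residual-capacity counts ignore the different-constraints entirely, so no purely capacity-based argument can yield the uniform per-step bound you need.

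The paper's fix is to change the hard distribution, not the analysis: it uses an i.i.d.\ hash $h:[n]\to[k]$ rather than a balanced partition. Independence of the coordinates $h(1),\dots,h(n)$ is what makes a clean per-step bound possible. For any index $i$ that has participated in fewer than $k/2$ queries (``light''), one argues by an exchange over the value of $h(i)$: for every hash function $h$ consistent with the history, at least $k/2$ of the variants obtained by reassigning $h(i)$ remain consistent (each prior query on $i$ can rule out at most one value), and at most one of them yields $h(i)=h(j)$. This gives $\Pr[\text{same}\mid\text{history}]\le 2/k$ whenever one endpoint is light, and a separate counting argument shows at most $O(n)$ indices can ever become heavy within the query budget. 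Your balanced-partition model does not admit this coordinate-wise exchange, because fixing $h$ on $[n]\setminus\{i\}$ completely determines $h(i)$. Either switch to the i.i.d.\ model, or replace the per-step bound by a global argument that carefully charges each high-probability ``same'' query to the earlier ``different'' queries that set it up; the latter route is substantially harder and is not what the paper does.
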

\begin{proof}
We suppose there exists such an algorithm that makes at most an expected $k n/8000$ queries to $\cD$. By forcing the the algorithm to output an arbitrary guess for $c$ whenever it queries a factor of $20$ more entries than its expectation, by Markov's inequality it follows that there is an algorithm which correctly solves the problem with probability $2/3 - 1/20 > 6/10$, and always makes at most $kn/400$ queries to $\cD$.

\paragraph{The Hard Distribution.}
We define a distribution $\cD$ over $n \times n$ distance matrices $\cD$ as follows. First, we select a random hash function $h:[n] \to [k]$, a uniformly random coordinate $i \sim [n]$. We then set $\bD(h)$ to be the matrix defined by 
$\bD_{p,q}(h) = 1$ for $p \neq q$ if $h(p) = h(q)$, and $\bD_{p,q}(h) = R$ otherwise, where $R$ is an arbitrarily large value which we will later fix. We then flip a coin $c \in \{0,1\}$. If $c=0$, we return the matrix $\bD(h)$, but if $c = 1$, we define the matrix $\bD(h,i)$ to be the matrix resulting from changing every $1$ in the $i$-th row and column of $\bD(h)$ to the value $2R$. It is straightforward to check that the resulting distribution satisfies the triangle inequality, and therefore always results in a valid metric space. We write $\cD_0 ,\cD_1$ to denote the distribution $\cD$ conditioned on $c=0,1$ respectively. In the testing problem, a matrix $\bD \sim \cD$ is drawn, and the algorithm is allowed to make an adaptive sequence of queries to the entries of $\bD$, and thereafter correctly determine with probability $2/3$ the value of $c$ corresponding to the draw of $\bD$. 

Note that a draw from $\cD$ can then be described by the values $(h,i,c)$, where $h \in \cH = \{h': [n] \to [k] \}$, $i\in[n]$, and $c \in \{0,1\}$. Note that, under this view, a single matrix $\bD \sim \cD_0$ can correspond to multiple draws of $(h,i,0)$. Supposing there is a randomized algorithm which is correct with probability $6/10$ over the distribution $\cD$ and its own randomness, it follows that there is a deterministic algorithm $\cA$ which is correct with probability $6/10$ over just $\cD$, and we fix this algorithm now. 

Let $(d_1,p_1),(d_2,p_2),\dots,(d_t,p_t)$ be an adaptive sequence of queries and observations made by an algorithm $\cA$, where $d_i \in \{1,R,2R\}$ is a distance and $p_i \in \binom{n}{2}$ is a position in $\cD$, such that the algorithm queries position $p_i$ and observed the value $d_i$ in that position.

\begin{claim}\label{claim:lb1}
There is an algorithm with optimal query vs. success probability trade-off which reports $c=1$ whenever it sees an entry with value $d_i = 2R$, otherwise it reports $c=0$ d if it never sees a distance of value $2R$. 
\end{claim}
\begin{proof}
 To see this, first note that if $c=0$, one never sees a value of $2R$, so any algorithm which returns $c=0$ after observing a distance of size $2R$ is always incorrect on that instance. 

For the second claim, suppose an algorithm $\cA$ returned that $c=1$ after never having seen a value of $2R$. Fix any such sequence  $S= \{(d_1,p_1),(d_2,p_2),\dots,(d_t,p_t)\}$ of adaptive queries and observations such that $d_i \neq 2R$ for all $i=1,2,\dots,t$. We claim that $\pr{c=0 | S} \geq \pr{c=1 |S}$. To see this, let $(h,i,1)$ be any realization of a draw from $\cD_1$, and note that $\pr{(h,i,1)} = \pr{(h,i,0)} = \frac{1}{2 n } k^{-n}$. Let $F_0(S)$ be the set of tuples $(h,i)$ such that the draw $(h,i,0)$ could have resulted in $S$, and $F_1(S)$ the set of tuples $(h, i)$ such that $(h,i,1)$  could have resulted in $s$. Let $(h,i,1)$ be a draw that resulted in $S$. Then $(h,i,0)$ also results in $S$, because the difference between the resulting matrices $\cD$ is supported only on positions which were initially $2R$ in the matrix generated by $(h,i,1)$. Thus $F_1(S) \subseteq F_2(S)$, which demonstrates that $\pr{c=0 | S} \geq \pr{c=1 |S}$. Thus the algorithm can only improve its chances at success by reporting $c=0$, which completes the proof of the claim.
\end{proof}

\paragraph{Decision Tree of the Algorithm.}
The adaptive algorithm $\cA$ can be defined by a $3$-ary decision tree $T$ of depth at most $ k n/400$, where each non-leaf node $v \in T$ is labelled with a position $p(v) = (x_v,y_v) \in [n] \times [n]$, and has three children corresponding to the three possible observations $\bD_{p(x)} \in \{1,R,2R\}$. Each leaf node contains only a decision of whether to output $c=0$ or $c=1$. For any $v \in T$, let $v_1,v_{R}, v_{2R}$ denote the three children of $v$ corresponding to the edges labelled $1,R$ and $2R$,
Every child coming from a ``$2R$'' edge is a leaf, since by the above claim the algorithm can be assumed to terminate and report that $c=1$ whenever it sees the value of $2R$. For any vertex $v \in T$ at depth $\ell$, let $S(v) = \{ (d_1,p_1),\dots,(\cdot, p(v))\}$ be the unique sequence of queries and observations which correspond to the path from the root to $v$. Note that the last entry $(\cdot,  p(v)) \in S(v)$ has a blank observation field, meaning that at $v$ the observation $p(v)$ has not yet been made.

For any $v \in T$ and $i \in [n]$, we say that a point $i$ is \textit{light} at $v$ if the number of queries $(d_j,p_j) \in S $ with $i \in p_j$ is less than $k/2$. If $i$ is not light at $v$ we say that it is \textit{heavy} at $v$. For any $i \in [n]$, if in the sequence of observations leading to $v$ the algorithm observed a $1$ in the $i$-th row or column, we say that $i$ is \textit{dead} at $v$, otherwise we say that $i$ is \textit{alive}. We write $\pr{v}$ to denote the probability, over the draw of $\bD \sim \cD$, that the algorithm traverses the decision tree to $v$, and $\pr{v | \; c= b}$ for $b \in \{0,1\}$ to denote this probability conditioned on $\bD \sim \cD_b$.  Next, define $F_b(v) = F_b(S(v))$ for $b \in \{0,1\}$, where $F_b(S)$ is as above. 
Note that if $(h,i) \in F_0(v)$ for some $i \in [n]$, then $(h,j) \in F_0(v)$ for all $j \in [n]$, since the matrices generated by $(h,i,0)$ are the same for all $i \in [n]$. Thus, we can write $h \in F_0(v)$ to denote that $(h,i) \in F_0(v)$ for all $i \in [n]$.

\begin{claim}\label{claim:lb2}
Let $v \in T$ be a non-leaf node where at least one index $i$ in $p(v) = (i,j)$ is alive and light at $v$. Then we have \[\prb{\bD \sim \cD}{\bD_{p(v)} = 1 | S(v), c=0} \leq \frac{2}{k}\]
\end{claim}
\begin{proof}
Fix any function $h \in \cH$ such that $h \in F_0(v)$: namely, $h$ is consistent with the observations seen thus far. 
Let $h_1,\dots,h_k \in \cH$ be defined via $h_t(j) = h(j)$ for $j \neq i$, and $h_t(i) = t$, for each $t \in [k]$. We claim that $h_t \in F_0(v)$ for at least $k/2$ values of $t$. 
To show this, first note that the values of $\{h(j)\}_{j \neq i}$ define a graph with at most $k$ connected components, each of which is a clique on the set of values $j \in [n] \setminus \{i\}$ which map to the same hash bucket under $h$. 
The only way for $h_t \notin F_0(v)$ to occur is if an observation $(i,\ell)$ was made in $S(v)$ such that $h(\ell) = t$. Note that such an observation must have resulted in the value $R$, since $i$ is still alive (so it could not have been $1$). In this case, one knows that $i$ was not in the connected component containing $\ell$. However, since $i$ is light, there have been at most $k/2$ observations involving $i$ in $S(v)$. Each of these observations eliminate at most one of the $h_t$'s, from which the claim follows. 

Given the above, it follows that if at the vertex $v$ we observe $\bD_{p(v)} = \bD_{i,j} = 1$, then we eliminate every $h_t$ with $t \neq h(j)$ and $h_t \in F_0(v)$. Since for every set of values $\{h(j)\}_{j \neq i}$ which are consistent with $S(v)$ there were $k/2$ such functions $h_t \in F_0(v)$, it follows that only a $2/k$ fraction of all $h \in \cH$ result in the observation $\bD_{p(v)} = 1$. Thus, $|\cF_0(v_1)| \leq \frac{2}{k}|\cF_0(v)|$, which completes the proof of the proposition.
\end{proof}

Let $\cE_1$ be the set of leafs $v$ which are children of a $2R$ labelled edge, and let $\cE_0$ be the set of all other leaves. Note that we have $\pr{v \; | \; c=0} = 0$ for all $v \in \cE_1$, and moreover $\sum_{v \in \cE_0} \pr{v | c=0} = 1$. For $v \in T$, let $\theta(v)$ denote the number of times, on the path from the root to $v$, an edge $(u,u_1)$ was crossed where at least one index $i \in p(u)$ was alive and light at $u$. Note that such an edge kills $i$, thus we have $\theta(v) \leq n$ for all nodes $v$. Further, define $\hat{\cE}_0 \subset \cE_0$ to be the subset of vertices $v \in \cE_0$ with $\theta(v) < n/20$. We now prove two claims regarding the probabilities of arriving at a leaf $v \in \hat{\cE}_0$. 

\begin{claim}\label{claim:lb3}
Define $\hat{\cE}_0$ as above. Then we have 
\[  \sum_{v \in \hat{\cE}_0} \pr{v | c=0} > 9/10 \]
\end{claim}
\begin{proof}
We define indicator random variables $\bX_1,\bX_2,\dots,\bX_t \in \{0,1\}$, where $t \leq  kn/400$ is the depth of $T$, such that $\bX_i = 1$ if the $i$-th observation made by the algorithm causes a coordinate $i \in [n]$, which was prior to observation $i$ both alive and light, to die, where the randomness is taken over a uniform draw of $\bD \sim \cD_0$. Note that the algorithm may terminate on the $t'$-th step for some $t'$ before the $t$-th observation, in which case we all trailing variables $\bX_{t'},\dots,\bX_t$ to $0$. By Claim \ref{claim:lb2}, we have $\ex{\bX_i} \leq 2k$ for all $i \in [t]$, so $\ex{\sum_{i \in [t]} \bX_i} <  n/ 200$. By Markov's inequality, we have $\sum_{i \in [t]} \bX_i < n/20$ with probability at least $9/10$. Thus with probability at least $9/10$ over the draw of $\bD \sim \cD_0$ we land in a leaf vertex $v$ with $\theta(v) < n/20$, implying that $v \in \hat{\cE}_0$ as needed. 
\end{proof}

\begin{claim}\label{claim:lb4}
For any $v \in \hat{\cE_0}$, we have $  \pr{v \; | \; c=1} > (9/10 )\pr{v \; | \; c=0}$.
\end{claim}
\begin{proof}
Fix any $v \in \hat{\cE}_0$. By definition, when the algorithm concludes at $v$, at most $n/5$ indices were killed while having originally been light. Furthermore, since each heavy index requires by definition at least $k/2$ queries to become heavy, and since each query contributes to the heaviness of at most $2$ indices, it follows that at most $kn/400 (4/k) = n/100$ indices could ever have become heavy during any execution. Thus there are at least $n - n/20 - n/100 > (9/10)n$ indices $i$ which are both alive and light at $v$. 

Now fix any $h \in \cF_0(v)$. We show that $(h,i) \in \cF_1(v)$ for at least $(9/10)n$ indices $i \in [n]$, which will demonstrate that $|\cF_1(v)| > (9/10)  |\cF_0(v)|$, and thereby complete the proof. In particular, it will suffice to show that is true for any $i \in [n]$ which is alive at $v$. To see why this is the case, note that by definition if $i$ is alive at $v \in \cE_0$, then $S(v)$ includes only observations in the $i$-th row and column of $\cD$ which are equal to $R$. It follows that none of these observations would change if the input was instead specified by $(h,i,1)$ instead of $(h,j,0)$ for any $j \in [n]$, as the difference between the two resulting matrices are supported on values where $\cD$ is equal to $1$ in the $i$-th row and column of $\cD$. Thus if $i$ is alive at $v$, we have that $(h,i) \in \cF_1(v)$, which completes the proof of the claim.
\end{proof}
\noindent Putting together the bounds from Claims \ref{claim:lb3} and \ref{claim:lb4}, it follows that 
\[  \sum_{v \in \hat{\cE}_0} \pr{v\; |\;  c=1} > (9/10)^2 = .81 \]
Moreover, because by Claim \ref{claim:lb1} the algorithm always outputs $c=0$ when it ends in any $v \in \cE$, it follows that the algorithm incorrectly determined the value of $c$ with probability at least $.81$ conditioned on $c=1$, and therefore is incorrect with probability at least $.405 > 4/10$ which is a contradiction since $\cA$ was assumed to have success probability at least $6/10$. 

\paragraph{From the Hard Distribution to $k$-Centers.} 

To complete the proof, it suffices to demonstrate that the optimal $k$-center cost is at most $1$ when $\bD \sim \bD_0$, and at least $R$ when $\bD \sim \bD_1$. The first case is clear, since we can choose at least one index in the pre-image of $h^{-1}(t)\subseteq [n]$ for each $t \in [k]$ to be a center. For the second case, note that conditioned on $|h^{-1}(t)| \geq 2$ for all $t \in [k]$, the resulting metric contains $k+1$ points which are pairwise-distance at least $R$ from each other. In particular, for the resulting metric, at least one point must map to a center which it is distance at least $R$ away from, and therefore the cost is at least $R$. Now since $n = \Omega( k \log k)$ with a sufficently large constant, it follows by the coupon collector's argument that with probability at least $1/1000$, we have that $|h^{-1}(t)| \geq 2$ for all $t \in [k]$. Moreover, that the $1/1000$ probability under which does not occur can be subtracted into the failure probability of $.405$ in the earlier argument, which still results in a $.404 > 4/10$ failure probability, and therefore leads and leading to the same contradiction, which completes the proof. 
\end{proof}

\begin{proposition}\label{prop:lb}
Fix any $1 \leq k < n$. Then any algorithm which, given oracle access the distance matrix $\cD \in \R^n$ of a set $X$ of $n$ points in a metric space, determines correctly with probability $2/3$ whether the optimal $k$-center cost on $X$ is at most $1$ or at least $R$, for any value $R >1$, must make at least $\Omega(k^2)$ queries in expectation to $\cD$. The same bound holds true replacing the $k$-center objective with $(k,z)$-clustering, for any constant $z>0$. 
\end{proposition}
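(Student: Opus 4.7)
The plan is to prove Proposition \ref{prop:lb} by a direct ``needle in a haystack'' argument, sidestepping the intricate decision-tree analysis of Theorem \ref{thm:LBBig}. For any $n > k$, I will construct a hard distribution over $n$-point metric instances that embeds a random distance-$1$ pair among $\Theta(k^2)$ candidate pairs, and then apply Yao's minimax principle.

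First, I will describe the hard distribution $\mu$. Designate points $\{1, \ldots, k+1\}$ as \emph{active} and $\{k+2, \ldots, n\}$ as \emph{duplicates} of point~$1$, with $d(i,1) = \varepsilon$ for $i \geq k+2$ (for an arbitrarily small $\varepsilon > 0$) and $d(i,p) = d(1,p)$ for $i \geq k+2$ and active $p \neq 1$. Distances among duplicates are at most $2\varepsilon$. The distribution $\mu$ is a uniform mixture of two cases: in Case $c=0$, all pairs of active points are at distance $2R$, so the optimal $k$-center cost is at least $R$ (with $k+1$ effectively-distinct locations but only $k$ clusters, pigeonhole forces some cluster to have diameter $2R$, hence cost at least $R$); in Case $c=1$, a pair $\{a,b\}$ is chosen uniformly from $\binom{[k+1]}{2}$, set at distance $1$, and all other active-active pairs remain at distance $2R$, so the optimal cost is at most~$1$ (cluster $\{a,b\}$ together, cluster point~$1$ with all duplicates, leave the remaining active points as singletons, yielding exactly $k$ clusters with maximum cost~$1$). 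The triangle inequality is routine to verify.

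Next, I argue the query lower bound. The crucial observation is that Cases $c=0$ and $c=1$ produce identical query responses except at the single coordinate $(a,b)$. Thus, for any deterministic algorithm $\cA$ making $Q$ queries, the transcript and output agree in the two cases unless $\cA$ queries $\{a,b\}$. By induction on the query index, $\cA$'s query sequence in the two cases coincides up to the first query of $\{a,b\}$; since $(a,b)$ is uniform over $\binom{[k+1]}{2}$, the probability $\cA$ queries $\{a,b\}$ within its first $Q$ queries is at most $Q/\binom{k+1}{2} \leq 2Q/k^2$. Consequently, $\cA$ distinguishes the two cases on $\mu$ with probability at most $\tfrac{1}{2} + Q/k^2$, so achieving success probability $2/3$ forces $Q = \Omega(k^2)$. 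By Yao's minimax principle, the same bound holds for randomized algorithms in expectation.

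Finally, the extension to $(k,z)$-clustering is immediate: in Case $c=0$, any $k$-clustering must place two active points in the same cluster, contributing at least $R^z$ to the $(k,z)$-cost, while Case $c=1$ admits a clustering of cost at most~$1$. Hence the same construction distinguishes $(k,z)$-costs $\leq 1$ from $\geq R^z$, giving the same $\Omega(k^2)$ query lower bound for any constant $z > 0$. The main obstacle I anticipate is being careful about the discrete-versus-continuous $k$-center distinction (whether cluster centers must be input points), which is why I use $2R$ rather than $R$ as the ``far'' distance---this preserves the cost gap of at least $R$ even when centers are allowed to be virtual midpoints of input points.
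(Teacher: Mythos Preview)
Your approach is essentially the same as the paper's: hide a single distance-$1$ pair uniformly among the $\binom{k+1}{2}$ pairs of ``active'' points and argue via Yao that any deterministic tester must hit the needle. The paper uses far-distance $R$ rather than $2R$ (in this oracle model the metric space is $X$ itself, so centers are input points and your midpoint concern does not arise), but this is cosmetic.

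There is one small gap. Your claim that the two cases ``produce identical query responses except at the single coordinate $(a,b)$'' fails when $1 \in \{a,b\}$: if, say, $a=1$, then for every duplicate $i \geq k+2$ you defined $d(i,b) = d(1,b)$, which equals $1$ in Case $c=1$ but $2R$ in Case $c=0$. So there are $n-k$ additional revealing entries, and an algorithm that only probes duplicate--active pairs could find the needle without ever querying $(a,b)$. The paper sidesteps this by first reducing to $n=k+1$ via a simulation argument (any query involving a duplicate can be answered by the corresponding query to point $1$, so without loss of generality the algorithm queries only among the $k+1$ active points). You can adopt the same reduction, or alternatively draw $\{a,b\}$ uniformly from $\binom{\{2,\ldots,k+1\}}{2}$ so that point $1$ is never part of the needle; the number of candidate pairs is then $\binom{k}{2} = \Theta(k^2)$ and the rest of your argument goes through verbatim.
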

\begin{proof}
By the same arguements given in Theorem \ref{thm:LBBig}, one can assume that the existence of such an algorithm that would violate the statement of the proposition implies the existence of a deterministic algorithm which always makes at most $c k^2$ queries to $\cD$ and is correct with probability $3/5$, for some arbitrarily small constant $c$. 
In what follows, we assume $n=k+1$, and for larger $n$ we will simply add $n-(k+1)$ duplicate points on top of the first point in the following distribution; note that any algorithm for the dataset with the duplicate point can be simulated, with no increase in query complexity, via access to the distance matrix on the first $k+1$ points.

The hard instance is as then as follows.
With probability $1/2$, we give as input the distance matrix $\bD \in \R^{k+1 \times k+1} $ with $\bD_{i,j} =R $ for all $i \neq j$. Note that any $k$-center solution must have one of the points in a cluster centered at another, and therefore the optimal $k$-center cost is at least $R$ for this instance. In the second case, the input is $\bD$ but with a single entry $\bD_{i,j} = \bD_{j,i} = 1$, where $(i,j)$ is chosen uniformly at random. Note that the result is still a valid metric space in all cases. Moreover, note that the optimal $k$-center cost is $1$, and is obtained by choosing all points except $i$ (or alternatively except $j$). 

By the same (and in fact simplified) argument as in Claim \ref{claim:lb1}, we can assume the algorithm returns that the $k$-center cost is at most $1$ if and only if it sees an entry with value equal to $1$. Since the algorithm is deterministic, and since the only distance other than $1$ is $R$, we can define a deterministic set $S$ of $c k^2$ indices in $\binom{k+1}{2}$ such that the adaptive algorithm would choose exactly the set $S$ if, for every query it made, it observed the distance $R$ (and therefore would return that the $k$-center cost was at most $1$ at the end). Then in the second case, the probability that $(i,j)$ is contained in $S$ is at most $\frac{4}{c}$. Setting $c > 40$, it follows that the algorithm is incorrect with probability at least $1/2-\frac{4}{c} > 2/5$, contradicting the claimed success probability of $3/5$, and completing the proof. 
\end{proof}

\section{Lower Bound Against Adaptive Adversaries}
\label{sec:LBadap}

In this section we present our lower bounds on the approximation
guarantees of dynamic algorithms for $k$-center, $k$-median, $k$-means,
$k$-sum-of-radii, $k$-sum-of-diameters, and $(k,z)$-clustering against
a \emph{metric-adaptive} adversary. Recall that a point-adaptive adversary needs to fix the metric space in advance, while a metric-adaptive adversary only needs to answer a distance query consistently with all answers it gave previously (so the metric itself is chosen adaptively). First,
we define a generic strategy for an adversary that in each operation
creates or deletes a point and answers the distance queries of the
algorithm. Then we show how to derive lower bounds for all of our
problems from this single strategy. Some formal details of the
analysis are deferred to \cref{sec:appendix-lower-adaptive}.

\subsection{Strategy of the Adaptive Adversary}

Let $f(k,n)$ be a positive function that is, for every fixed $k$, non-decreasing in $n$. Suppose that there is an algorithm (for any of the problems under
consideration) that in an amortized sense queries the distances
between at most $f(k,n)$ pairs of points per operation of the adversary,
where $n$ denotes the number of points at the beginning of the respective
operation. Note that any algorithm with an amortized update time
of at most $f(k,n)$ fulfills this condition. To determine the distance
between two points the algorithm asks a distance query to the adversary. 
We present now an adversary ${\cal A}$ whose goal
is to maximize the approximation ratio of the algorithm. To
record past answers and to give consistent answers, ${\cal A}$ maintains
a graph $G=(V,E)$ which contains a vertex $v_{p}\in V$ for each
point $p$ that has been inserted previously (including points that
have been deleted already). Intuitively, with the edges in $E$
the adversary keeps track of previous answers to distance queries.
Each vertex $v_{p}$ is labeled as \emph{open, closed, }or \emph{off}.
If a point $p$ has not been deleted yet, then its vertex $v_{p}$
is labeled as open or closed. Once point $p$ is deleted, then
$v_{p}$ is labeled as off. Intuitively, if $p$ has not been deleted
yet, then $v_{p}$ is open if it has small degree and closed if
it has large degree. In the latter case, ${\cal A}$ will delete $p$
soon. All edges in $G$ have length 1, and for two vertices $v,v'\in V$
we denote by $d_{G}(v,v')$ their distance in $G$.

When choosing the next update operation, ${\cal A}$ checks whether there
is a closed vertex $v_{p}$. If yes, ${\cal A}$ picks an arbitrary
closed vertex $v_{p}$, deletes the corresponding point $p$, and
labels the vertex $v_{p}$ as off. Otherwise, it adds a new point
$p$, adds a corresponding vertex $v_{p}$ to $G$, and labels $v_{p}$
as open.

Suppose now that the algorithm queries the distance $d(p,p')$ for
two points $p,p'$ while processing an operation. Note that $p$ and/or
$p'$ might have been deleted already. If both $v_{p}$ and $v_{p'}$
are open then ${\cal A}$ reports to the algorithm that $d(p,p')=1$
and adds an edge $\{v_{p},v_{p'}\}$ to $E$. Intuitively, due to
the edge $\{v_{p},v_{p'}\}$ the adversary remembers that it reported
the distance $d(p,p')=1$ before and ensures that in the future it
will report distances consistently. Otherwise, ${\cal A}$ considers
an augmented graph $G'$ which consists of $G$ and has
in addition an edge $\{v_{\bar{p}},v_{\bar{p}'}\}$ of length 1
between any pair of open vertices $\bar{p},\bar{p}'$. The adversary
computes the shortest path $P$ between $p$ and $p'$ in $G'$ and
reports that $d(p,p')$ equals the length of $P$. If $P$ uses an
edge between two open vertices $\bar{p},\bar{p}'$, then ${\cal A}$
adds the edge $\{v_{\bar{p}},v_{\bar{p}'}\}$ to $G$. Note that $P$
can contain at most one edge between two open vertices since it
is a shortest path in a graph in which all pairs of open vertices
have distance~1. %
Observe that if both $v_{p}$ and $v_{p'}$ are open then this procedure
reports that $d(p,p')=1$ and adds an edge $\{v_{p},v_{p'}\}$ to
$E$ which is consistent with our definition above for this
case. If a vertex $v_{p}$ has degree at least $100f(k,t)$ for the current operation $t$, then $v_{p}$ is labeled as closed. A closed vertex never becomes open again.

In the next lemma we prove some properties about this strategy of
${\cal A}$. For each operation $t$, denote by $G_{t}=(V_{t},E_{t})$
the graph $G$ at the beginning of the operation $t$. 
Recall that the value of $n$ right before the operation $t$ (which is  the number of
current points) equals the number of open and closed
vertices in $G_{t}$.
We show that
the the number of open vertices is $\Theta(t)$, each vertex has
bounded degree, and there exist arbitrarily large values $t$
such that in $G_{t}$ there are no closed vertices (i.e., only open
and off vertices). \begin{lemma} \label{lem:mpt-det-low-adv-prop}
	For every operation $t>0$ the strategy of the adversary ensures the
	following properties for $G_{t}$
	\begin{enumerate}
		\item the number of open vertices in $G_{t}$ is at least $92t/100$,
		\label{it::mpt-det-low-adv-prop-numac} 
		\item each vertex in $G_{t}$ has a degree of at most $101f(k,n)$, \label{it::mpt-det-low-adv-prop-deg} 
		\item there exists an operation $t'$ with $t < t' \leq 2t$ such that $G_{t'}$
		contains only open and off vertices, but no closed vertices. \label{it::mpt-det-low-adv-prop-clean} 
	\end{enumerate}
\end{lemma}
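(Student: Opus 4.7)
The three properties all reduce to a single counting estimate: the total number of edges ever added to $G$ by operation $t$ is at most the total number of distance queries made by the algorithm, which is at most $t \cdot f(k,n)$ by the amortized-time assumption (with $f(k,n)$ taken at the current or maximum value of $n$, using the monotonicity of $f$ in $n$). I will derive all three items from this single estimate.

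For the degree bound (item 2), the key observation to pin down first is that the adversary only ever adds an edge between two currently \emph{open} vertices: directly in the case that both endpoints of a distance query are open, and otherwise as the unique open-to-open edge of the augmenting shortest path in $G'$. Hence once a vertex becomes closed its degree is permanently frozen, and since each query increases any individual vertex's degree by at most $1$, the degree of every vertex is at most $100 f(k,n) + 1 \le 101 f(k,n)$ at all times (using $f(k,n) \ge 1$).

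For item 1, let $I_t$ and $D_t$ denote the numbers of insertions and deletions among operations $1,\dots,t-1$, and let $C_t$ denote the number of closed vertices in $G_t$. By the adversary's strategy, a deletion happens exactly when there is a closed vertex; otherwise a point is inserted. Hence $I_t + D_t = t-1$, and the number of open vertices is $I_t - D_t - C_t = (t-1) - 2D_t - C_t$. Every vertex counted by $D_t$ or $C_t$ has been closed at some earlier moment, at which point (and ever after) its degree was at least $100 f(k,n)$. Summing over all such vertices and comparing with $\sum_v \deg_{G_t}(v) \le 2t\cdot f(k,n)$ yields $C_t + D_t \le t/50$. Plugging this in gives a number of open vertices at least $(t-1) - 2(C_t + D_t) \ge (t-1) - t/25 \ge 92t/100$, establishing item 1.

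Finally, for item 3, I would argue by contradiction: suppose every $G_{t'}$ with $t' \in (t,2t]$ contains at least one closed vertex. Then by the strategy, each of these $t$ operations is a deletion, so $D_{2t} \ge D_t + t \ge t$. But applying the bound $C_{2t} + D_{2t} \le (2t)/50 = t/25$ from the preceding paragraph at time $2t$ contradicts $D_{2t} \ge t$ for all $t \ge 1$. Hence some $t' \in (t,2t]$ has $C_{t'} = 0$, proving item 3. The only real technical point that needs care is verifying that edges really are never added incident to closed or off vertices (so that frozen degrees remain frozen) and that the amortized query bound can be applied uniformly in $t$ via the monotonicity of $f$; once that is pinned down, the rest is elementary bookkeeping.
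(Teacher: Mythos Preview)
Your approach is essentially the paper's, and items~2 and~3 are handled correctly. For item~2 your degree-freezing observation is exactly the paper's argument (with one wrinkle: in the full version the adversary may add a single ``connecting'' edge incident to a just-closed vertex to keep every component touching an open vertex, but this adds at most one to that vertex's degree, so the bound $100f(k,\cdot)+1$ survives). Your item~3 is actually cleaner than the paper's, which introduces an auxiliary notion of ``semi-open'' vertices; once item~1 is in hand, your direct contradiction from $D_{2t}\ge t$ versus $C_{2t}+D_{2t}\le (2t)/50$ is the right way to finish.

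The gap is in item~1. You write $(C_t+D_t)\cdot 100f(k,n)\le \sum_v\deg_{G_t}(v)\le 2t\cdot f(k,n)$ and cancel $f(k,n)$, treating it as a single number. But the closure threshold for a vertex $v$ is $100f(k,t_v)$, where $t_v$ is the \emph{operation at which $v$ became closed}, while the total query bound is $\sum_{i\le t} f(k,n_i)$; these are evaluated at different (and varying) arguments. When $f$ is nonconstant you only get $C_t+D_t\le t\cdot f(k,t)/(50\,f(k,1))$, which is useless if $f$ grows. Monotonicity of $f$ alone does not close this. What the paper does instead is compare \emph{per operation}: operation $i$ contributes $O(f(k,n_i))$ to the total degree, while the closure threshold in force at that time is $100f(k,i)\ge 100f(k,n_i)$, so each operation contributes at most a constant $O(1/100)$ to the count of ever-closed vertices, and summing over $i\le t$ gives $C_t+D_t\le O(t/100)$. (Carefully handling the \emph{amortized} rather than per-operation query bound needs a summation-by-parts step, but the essential idea is this operation-by-operation ratio, not a single global comparison.) Once item~1 is patched this way, your argument for item~3 goes through unchanged.
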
 We say that an operation $t\in\N$ is a \emph{clean operation}
if in $G_{t}$ there are no closed vertices. For any clean operation
$t$, denote by $\bar{G}_{t}=(\bar{V}_{t},\bar{E}_{t})$ the subgraph
of $G_{t}$ induced by the open vertices in $V_{t}$.

\vspace{-0.2cm}

\paragraph*{Consistent metrics. }

The algorithm does not necessarily know the complete metric of the
given points, it knows only the distances reported by the adversary.
In particular, there might be many possible metrics that are consistent
with the reported distances. For each $t\in\N$ denote by $Q_{t}$
the points that were inserted before operation $t$, including
all points that were deleted before operation $t$, and let
$P_{t}\subseteq Q_{t}$ denote the points in $Q_{t}$ that are not
deleted. Given a metric $M$ on any point set $P'$, for all pairs
of points $p,p'\in P'$ we denote by $d_{M}(p,p')\ge0$ the distance
between $p$ and $p'$ according to $M$. For any $t\in\N$ we say
that a metric $M$ for the point set $Q_{t}$ is \emph{consistent}
if for any pair of points $p,p'\in Q_{t}$ for which the adversary
reported the distance $d(p,p')$ before operation~$t$, it
holds that $d(p,p')=d_{M}(p,p')$. In particular, any consistent metric
might be the true underlying metric for the point set $Q_{t}$.

The key insight is that for each clean operation $t$, we can build
a consistent metric with the following procedure. Take the graph $G_{t}$
and insert an arbitrary set of edges of length $1$ between pairs
of open vertices (but no edges that are incident to off vertices),
and let $G'_{t}$ denote the resulting graph. Let $M$ be the shortest
path metric according to $G'_{t}$. If a metric $M$ for $Q_{t}$
is constructed in this way, we say that $M$ is an \emph{augmented
	graph metric for }$t$.

\begin{lemma} \label{lem:metric-consistent}Let $t\in\N$ be a clean
	operation and let $M$ be an augmented graph metric for $t$. Then
	$M$ is consistent. 
\end{lemma}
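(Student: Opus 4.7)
The plan is to establish the two inequalities $d_M(p,p') \le d(p,p')$ and $d_M(p,p') \ge d(p,p')$ for every pair $(p,p') \in Q_t$ whose distance was reported at some operation $t' < t$. Throughout, I will denote by $G^*_t$ the underlying graph of $M$, i.e., $G_t$ together with the arbitrary set of unit-length edges between open-at-$t$ vertices chosen in the definition of $M$.

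For the upper bound, I would unpack the adversary's rule: the reported value $d(p,p')$ equals the length of a shortest path $P$ in $G'_{t'}$, and by construction every edge of $P$ is either already an edge of $G_{t'}$ or is an edge between two vertices open at $t'$. In the latter case the strategy inserts the edge into $G$ immediately after answering the query. Hence every edge of $P$ lies in $G_{t'+1} \subseteq G_t \subseteq G^*_t$, and $P$ is a path of length $d(p,p')$ in $G^*_t$.

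For the lower bound the key structural fact is the monotonicity of the labels: once a vertex is labeled closed it is never re-opened, and once off it stays off. Consequently, if $u \in V_{t'}$ is open at any time $s \ge t'$, then $u$ was already open at $t'$. Combined with the rule that edges are added to $G$ only between simultaneously open vertices, this gives three useful facts: (i) any edge of $G_t$ with both endpoints in $V_{t'}$ is either an edge of $G_{t'}$ or joins two vertices open at $t'$; (ii) any augmented edge of $G^*_t$ whose endpoints both lie in $V_{t'}$ joins two vertices open at $t'$; and (iii) any vertex $w \notin V_{t'}$ appearing in $G^*_t$ was inserted strictly after $t'$, so every neighbor of $w$ in $G^*_t$ that happens to lie in $V_{t'}$ must have been open at some time $>t'$, and hence (by monotonicity) open at $t'$.

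The main step is then a path-translation argument. Given any path $Q = (p = w_0, w_1, \dots, w_\ell = p')$ in $G^*_t$, I extract the subsequence $u_0 = p, u_1, \dots, u_m = p'$ of its vertices lying in $V_{t'}$. For each consecutive pair $(u_i, u_{i+1})$, I would show that $d_{G'_{t'}}(u_i, u_{i+1})$ is at most the length of the sub-path of $Q$ connecting them: if this sub-path is a single edge, facts (i)/(ii) supply a unit-length edge of $G'_{t'}$ between $u_i$ and $u_{i+1}$; if it has length at least $2$, it passes through some $w \notin V_{t'}$, and then (iii) forces both $u_i$ and $u_{i+1}$ to be open at $t'$, so $G'_{t'}$ contains a unit-length augmented edge between them. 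Summing the bounds over $i$ yields $d_{G'_{t'}}(p,p') \le |Q|$, and since $d(p,p') = d_{G'_{t'}}(p,p')$ by the adversary's definition, $d(p,p') \le d_M(p,p')$. The main obstacle is exactly the bookkeeping around intermediate vertices in $Q$ that were inserted after $t'$ (and may even be off at time $t$): monotonicity of the labels plus the "edges added only between open vertices" rule is precisely what lets such sub-paths be contracted to a single augmented edge of $G'_{t'}$.
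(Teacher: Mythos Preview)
Your proof is correct. The two inequalities are established exactly as you outline: the upper bound follows because the adversary records the (at most one) augmented edge of the reported path in $G$, so that path survives in $G_t\subseteq G^*_t$; the lower bound follows from your path-contraction argument, and the three facts (i)--(iii) are valid consequences of the two monotonicity properties you identify (labels only move open~$\to$~closed~$\to$~off, and every edge ever added to $G$ joins two simultaneously open vertices).

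The paper organizes the argument differently. Rather than proving $d_M=d$ directly for an arbitrary augmentation, it first proves the reported answers are consistent both with the bare graph $G_t$ and with the \emph{maximal} augmentation $G_t\cup(A_t\times A_t)$ (Lemma~\ref{lem:sp-consistent}), and then invokes a one-line sandwich observation (Observation~\ref{lem:convex-compliance}): any metric coming from an intermediate edge set is automatically consistent as well. The underlying monotonicity facts are identical to yours. What the paper's decomposition buys is modularity: once consistency with the two extremes is established, the specific metrics $M_{\mathrm{uni}}$, $M(p^*)$, $M_{\ell_1,\ell_2}(p^*)$ used in the later lower bounds all fall out for free, without re-running the path-translation argument each time. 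Your direct argument is equally valid and arguably more transparent for this single lemma; in fact, your path-contraction step makes explicit the work that is somewhat compressed in the paper's proof of the hard direction of Lemma~\ref{lem:sp-consistent}.
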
 
In particular, there are no shortcuts
via off vertices in $G_{t}$ that could make the metric $M$
inconsistent.

We fix a clean operation $t\in\N$. We define some metrics that are consistent
with $Q_{t}$ that we will use later for the lower bounds for our
specific problems. The first one is the ``uniform'' metric $M_{\mathrm{uni}}$
that we obtain by adding to $G_{t}$ an edge between \emph{each} pair
of open vertices in $G_{t}$. As a result, $d_{M_{\mathrm{uni}}}(p,p')=1$
for any $p,p'\in P_{t}$. 
\begin{lemma} \label{lem:muni-consistent}
	For each clean operation $t$ the corresponding metric $M_{\mathrm{uni}}$
	is consistent. 
\end{lemma}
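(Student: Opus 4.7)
The plan is very short because the statement is essentially an immediate specialization of \Cref{lem:metric-consistent}. Recall the definition of an augmented graph metric for a clean operation $t$: starting from $G_t$, we insert some set of length-$1$ edges between pairs of open vertices (and no edges touching off vertices), and then take the shortest-path metric of the resulting graph. The metric $M_{\mathrm{uni}}$ is precisely obtained by this construction in the extremal case where the chosen set consists of \emph{all} pairs of open vertices of $G_t$.

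First, I would verify this identification: adding an edge of length $1$ between every pair of open vertices in $G_t$ yields a graph $G'_t$ in which the distance between any two open vertices is exactly $1$, since each open pair is directly adjacent. Then the shortest-path metric of $G'_t$ is exactly the claimed metric $M_{\mathrm{uni}}$, i.e.\ $d_{M_{\mathrm{uni}}}(p,p') = 1$ whenever $v_p, v_{p'}$ are open. (For distances involving off vertices, $M_{\mathrm{uni}}$ inherits the shortest-path distances in $G'_t$, which is fine, since the definition of augmented graph metric allows arbitrary such distances provided no edges touch off vertices.)

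Having identified $M_{\mathrm{uni}}$ as an augmented graph metric for the clean operation $t$, consistency then follows directly from \Cref{lem:metric-consistent}, which asserts that every augmented graph metric for a clean operation is consistent. No further calculation is needed.

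There is no real obstacle here; the only point to be careful about is to confirm that ``add edges of length $1$ between \emph{all} pairs of open vertices'' is indeed a legal choice of the ``arbitrary set of edges between pairs of open vertices'' appearing in the definition of an augmented graph metric (in particular, that it adds no edges incident to off vertices, which is immediate by construction). With that observation in place, the lemma is a one-line consequence of \Cref{lem:metric-consistent}.
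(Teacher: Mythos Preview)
Your proposal is correct and matches the paper's intended argument: $M_{\mathrm{uni}}$ is by construction an augmented graph metric for the clean operation $t$ (the extremal case where all open--open edges are added), so consistency follows immediately from \Cref{lem:metric-consistent}. In the appendix the paper actually proves \Cref{lem:metric-consistent} and \Cref{lem:muni-consistent} simultaneously from a single underlying lemma (\Cref{lem:sp-consistent}) together with a sandwiching observation (\Cref{lem:convex-compliance}), but your reduction to \Cref{lem:metric-consistent} is exactly the logical structure the main text sets up and is entirely valid.
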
 
In contrast to $M_{\mathrm{uni}}$, our
next metric ensures that there are distances of up to $\Omega(\log n)$
between some pairs of points. Let $p^{*}\in P_{t}$ be a point such
that $v_{p^{*}}$ is open. For each $i\in\N$ let $V^{(i)}\subseteq V_{t}$
denote the open vertices $v\in V_{t}$ with $d_{G_{t}}(v_{p^{*}},v)=i$,
and let $V^{(n)}\subseteq V_{t}$ denote the vertices in $G_{t}$
that are in a different connected component than $p^{*}$. Since
the vertices in $G_{t}$ have degree at most $100 f(k,n)$, more than half of all
vertices are in sets $V^{(i)}$ with $i\ge\Omega(\log n/ \log f(k,n))$.

We define now a metric $M(p^{*})$ as the shortest path metric in
the graph defined as follows. We start with~$G_{t}$; for each $i,i'\in\N$
we add to $G_{t}$ an edge $\{v_{p},v_{p'}\}$ between any pair of
vertices $v_{p}\in V^{(i)}$, $v_{p'}\in V^{(i')}$ such that $|i-i'|\le1$.
As a result, for any $i,i'\in\N$ and any $v_{p}\in V^{(i)}$, $v_{p'}\in V^{(i')}$
we have that $d_{M(p^{*})}(p,p')=\max\{|i-i'|,1\}$, i.e., $d_{M(p^{*})}(p,p')=1$
if $i=i'$ and $d_{M(p^{*})}(p,p')=|i-i'|$ otherwise. 
\begin{lemma}
	\label{lem:mstar-consistent} For each clean operation $t$ and each
	point $p^{*}\in V_{t}$ the metric $M(p^{*})$ is consistent. 
\end{lemma}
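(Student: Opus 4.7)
The plan is to derive consistency of $M(p^*)$ as an immediate corollary of \cref{lem:metric-consistent}. Specifically, I would recognize $M(p^*)$ as an augmented graph metric for $t$, that is, the shortest-path metric on a graph obtained from $G_t$ by adding length-$1$ edges between pairs of open vertices (with no edge incident to an off vertex). Once this is established, consistency comes for free by invoking \cref{lem:metric-consistent}.

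The only nontrivial step in this reduction is to verify that every edge added in the construction of $M(p^*)$ has both endpoints at open vertices of $V_t$. For finite indices $i$, the set $V^{(i)}$ is by definition contained in the open vertices at $G_t$-distance exactly $i$ from $v_{p^*}$, so edges added within or between finite levels are automatically between open vertices. For the ``infinity'' level $V^{(n)}$, I would adopt the natural convention that the label $n$ is interpreted as exceeding the $G_t$-diameter of the connected component of $v_{p^*}$, so that the condition $|i-i'|\le 1$ activates edges only within $V^{(n)}$ itself (and not between $V^{(n)}$ and any finite level), and I would further restrict $V^{(n)}$ to its open vertices. This restriction is harmless since off vertices of $V^{(n)}$ lie in a different component of $G_t$ than $v_{p^*}$, so they play no role in distances reported between currently-open points.

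With this convention in place, $M(p^*)$ is precisely an augmented graph metric for the clean operation $t$, and \cref{lem:metric-consistent} immediately yields consistency. The main (and essentially only) obstacle I anticipate is the bookkeeping around the infinity level $V^{(n)}$ and ensuring that no added edges are incident to off vertices; beyond that, the proof is a direct invocation of the previously-established lemma.
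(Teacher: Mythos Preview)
Your proposal is correct and matches the paper's own approach: the paper derives \cref{lem:mstar-consistent} (together with the analogous lemmas for $M_{\mathrm{uni}}$, $M_{\ell_1,\ell_2}(p^*)$, and $M_\ell(P^*)$) in one stroke by observing that each such metric is an augmented graph metric for the clean operation $t$ and then invoking \cref{lem:metric-consistent} (which itself is obtained from \cref{lem:convex-compliance} and \cref{lem:sp-consistent}). Your careful handling of the ``infinity'' level $V^{(n)}$ is more explicit than the paper, which simply glosses over this point; the natural reading is that $V^{(n)}$, like the finite levels, consists of open vertices only, so the added edges never touch off vertices and the reduction to \cref{lem:metric-consistent} is immediate.
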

For any two thresholds $\ell_{1},\ell_{2}\in\N_{0}$ with $\ell_1 < \ell_2$  we define a metric
$M_{\ell_{1},\ell_{2}}(p^{*})$ (which is a variation of $M(p^{*})$)
as the shortest path metric in the following graph. Intuitively,
we group the vertices in $\bigcup_{i=0}^{\ell_{1}}V^{(i)}$ to one
large group and similarly the vertices in $\bigcup_{i=\ell_{2}}^{\infty}V^{(i)}$.
Formally, in addition to the edges defined for $M(p^{*})$, for each
pair of vertices $v_{p}\in V^{(i)}$, $v_{p'}\in V^{(i')}$ we add
an edge $\{v_{p},v_{p'}\}$ if $i\le i'\le\ell_{1}$ or $\ell_{2}\le i\le i'$.%

\begin{lemma} \label{lem:mstarrange-consistent} For each clean operation
	$t$, each point $p^{*}\in V_{t}$, and each $\ell_{1},\ell_{2}\in\N_{0}$
	the metric $M_{\ell_{1},\ell_{2}}(p^{*})$ is consistent. \end{lemma}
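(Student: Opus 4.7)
The plan is to reduce this lemma to the already-established machinery of augmented graph metrics (Lemma \ref{lem:metric-consistent}). Recall that an augmented graph metric for $t$ is obtained by starting from $G_t$, adding an \emph{arbitrary} set of length-$1$ edges between pairs of open vertices (and no edges incident to off vertices), and taking the resulting shortest-path metric. Since $M_{\ell_1,\ell_2}(p^*)$ is defined precisely as a shortest-path metric in an augmentation of $G_t$, the whole task boils down to verifying that every extra edge added when constructing this graph is incident only to open vertices. If I can show this, then Lemma \ref{lem:metric-consistent} immediately yields consistency.

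First I would unwind the construction. The graph defining $M_{\ell_1,\ell_2}(p^*)$ is obtained from $G_t$ by adding two types of edges: (a) the edges of $M(p^*)$, namely edges between $v_p\in V^{(i)}$ and $v_{p'}\in V^{(i')}$ with $|i-i'|\le 1$, and (b) the new edges of $M_{\ell_1,\ell_2}(p^*)$, namely edges between $v_p\in V^{(i)}$ and $v_{p'}\in V^{(i')}$ with $i\le i'\le \ell_1$ or with $\ell_2\le i\le i'$. By the definition of $V^{(i)}$ (the set of open vertices at graph distance $i$ from $v_{p^*}$), all vertices appearing as endpoints of these additional edges are open vertices of $G_t$. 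The only borderline case is the set $V^{(n)}$ of vertices in a different connected component from $p^*$; since $t$ is a clean operation, this component consists only of open and off vertices, but for the indices used in (a) and (b) the only relevant pair involving $V^{(n)}$ is when $i=i'=n$, and one should restrict to open members just as in the construction of $M(p^*)$ (whose consistency is already established in Lemma \ref{lem:mstar-consistent}). In particular, the same convention used implicitly in the proof of Lemma \ref{lem:mstar-consistent} carries over here, so the edge set added to $G_t$ lies entirely among open vertices.

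Once the above is checked, the conclusion is immediate: the construction of the underlying graph fits the template in the definition of an augmented graph metric for $t$, and hence $M_{\ell_1,\ell_2}(p^*)$ is an augmented graph metric. Applying Lemma \ref{lem:metric-consistent} then gives that $M_{\ell_1,\ell_2}(p^*)$ is consistent.

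The main technical subtlety I anticipate is the careful bookkeeping around the component $V^{(n)}$ and around thresholds $\ell_1,\ell_2$ that may fall outside the range of indices actually populated by $V^{(i)}$ (e.g., if $\ell_1\ge$ the eccentricity of $v_{p^*}$ within its component, or if $\ell_2$ is larger than any realized $i$). In all such degenerate cases either no new edges are added, or the added edges again lie entirely among open vertices, so the reduction to Lemma \ref{lem:metric-consistent} still applies. Beyond this bookkeeping, there is no substantial obstacle: the proof should be essentially a short verification that the construction falls into the augmented-graph-metric template.
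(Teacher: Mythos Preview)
Your proposal is correct and follows essentially the same route as the paper: the paper observes that $M_{\ell_1,\ell_2}(p^*)$ is an augmented graph metric (all added edges have both endpoints in some $V^{(i)}$, which by definition consist of open vertices) and then invokes the general consistency of augmented graph metrics, which it establishes via a sandwich argument (Observation~\ref{lem:convex-compliance} together with Lemma~\ref{lem:sp-consistent}) that is exactly the content of Lemma~\ref{lem:metric-consistent}. Your bookkeeping around $V^{(n)}$ and degenerate thresholds is more careful than what the paper spells out, but the reduction is the same.
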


\paragraph*{Lower bounds. }

Consider a clean operation $t$. The algorithm cannot distinguish
between $M_{\mathrm{uni}}$ and $M(p^{*})$ for any $p^{*}\in P_{t}$.
In particular, for the case that $k=1$ (for any of our clustering
problems) the algorithm selects a point $p^{*}$ as the center, and
then for each $i$ it cannot determine whether the distance
of the points in $V^{(i)}$ to $p^{*}$ equals 1 or $i$.
However, there are at least $n/2$ points in sets $V^{(i)}$
with $i\ge\Omega(\log n/\log f(k,n)))$ and hence they contribute a large amount
to the objective function value. This yields the following lower bounds,
already for the case that $k=1$. With more effort, we can show them even for bi-criteria approximations,
i.e., for algorithms that may output $O(k)$ centers, but where the approximation
ratio is still calculated with respect to the optimal cost on $k$
centers.

For $k$-center the situation changes
if the algorithm does
not need to be able to report an upper bound of the value of its computed
solution (but only the solution itself), since if $k=1$, then any
point is a 2-approximation. However, for arbitrary $k$ we can argue
that there must be $3k$ consecutive sets $V^{(i)},V^{(i+1)},...,V^{(i+3k-1)}$
such that the algorithm does not place any center on any point corresponding
to the vertices in these sets and hence incurs a cost of at least
$3k/2$. On the other hand, for the metric $M_{i,i+3k-1}(p^{*})$ the
optimal solution selects one center from each set $V^{(i+1)},V^{(i+4)},V^{(i+7)},...$
which yields a cost of only 1. %

\thmDetLB*

\subsection{Deferred Proofs}
\label{sec:appendix-lower-adaptive}

We introduce some formal notation and definitions we use to revisit the adversarial strategy that generates an input stream and answers distance queries on the set of currently known points. Then, we derive lower bound constructions for the aforementioned problems that are based on the metric space defined by the stream and the answers to the algorithm.

\SetKwFunction{FnGenerate}{GenerateStream}
\SetKwFunction{FnAnswer}{AnswerQuery}

We describe a strategy for an adversary $\mathcal{A}$ that generates a stream of update operations~$\s$ and answers distance queries~$\q$ on pairs of points by any dynamic algorithm with a guarantee on its amortized  complexity. In the following presentation, the adversary constructs the underlying metric space ad hoc. More precisely, the adversary constructs two metric spaces simultaneously that cannot be distinguished by the algorithm and its queries. All subsequent lower bounds stem from the fact that the problem at hand has different optimal costs on the input for the two metric spaces. When the algorithm outputs a solution, the adversary can fix a metric space that induces high cost for the centers chosen by the algorithm.

During the execution of the algorithm, the adversary maintains a graph $\kg$.
Each point that was inserted by the adversary is represented by a node in $\kg$. All query answers given to the algorithm by the adversary can be derived from 
$G$
using the shortest path metric $\dsp[\kg]{\cdot}{\cdot}$ on $\kg$. We denote the algorithm's $i$\xth/ query after update operation $t$ by $\q[t][i]$, and the adversary's answer by $\ans{\q[t][i]}$. For $t > 0$, we denote the number of queries asked by the algorithm between the $t$\xth/ and the $(t+1)$\xth/ update operation by $c(t)$. If $t$ is clear from context, we simplify notation and write $c := c(t)$. Note that, in this section, we use a slightly extended notation when indexing graphs when compared to other sections. Details follow.

We number the update operations consecutively starting with 1 using index $t$ and after each update operation, we index the distance queries that the algorithm issues while processing the update operation and the immediately following value- or solution-queries using index $i$. Let $c > 0$ and let $\kg[0][c(0)]$ be the empty graph.  For every $t > 0, i \in [c(t)]$, consider $i$-th query issued by the algorithm processing the $t$-th update operation.  The graph $\kg[t][i]$ has the following structure. For every point $x$ that is inserted in the first $t$ operations, $\V{\kg[t][0]}$ contains a node $x$. All edges have length $1$, and it holds that $\V{\kg[t][i]} \supseteq \V{\kg[t][i-1]}$ and $\E{\kg[t][i]} \supseteq \E{\kg[t][i-1]}$.

Edges are inserted by the adversary as detailed below. Let $\preceq$ denote the predicate that corresponds to the lexicographic order. In particular, the adversary maintains the following invariant, which is parameterized by the update operation $t$ and the corresponding query $i$: for all $(t',j) \preceq (t,i)$ and $(u,v) \defeq \q[t'][j]$, $\ans{\q[t'][j]} = \dsp[\kg[t][i]]{u}{v}$. In other words, any query given by the adversary remains consistent with the shortest path metric on all versions of 
$G$
after the query was answered. The adversary distinguishes the following types of nodes in $\kg[t][i]$ to answer a query. Recall that $f \defeq f(k,n)$ is an upper bound on the amortized complexity per update operation of the algorithm, which is non-decreasing in $n$ for fixed $k$.

\begin{definition}[type of nodes]
	\label{def:node_types}
	Let $t \geq 0, i \in [c]$ and let $u \in \V{\kg[t][i]}$. If $u$ has degree less than $100f(k,i)$ for all $i \in [t]$, it is \emph{open} after update $t$, otherwise it is \emph{closed}. In addition, the adversary can mark closed nodes as \emph{off}. We denote the set of open, closed and off nodes in $\kg[t][i]$ by $\actno[t][i]$, $\pasno[t][i]$ and $\disno[t][i]$, respectively.%
\end{definition}

For operation $t$, the adversary answers the $i$\xth/ query $\q[t][i]$ according to the shortest path metric on $\kg[t][i-1]$ with the additional edge set $\actno[t][i-1] \times \actno[t][i-1]$. In other words, the adversary (virtually) adds edges between all open nodes in $\kg[t][i-1]$ and reports the length of a shortest path between the query points in the resulting graph. After the adversary answered query $\q[t][i]$, the resulting %
graph $\kg[t][i]$ is $\kg[t][i-1]$ plus the (unique) edge $e$ of the shortest path between two open nodes that is not in $\kg[t][i-1]$ if such edge exists. If the connected component of $e$ in the resulting graph does not contain any open node, we also add an edge between the connected component and a node with degree at most $50f(k,i)$. Thus, the algorithm maintains the invariant that each connected component has at least one open vertex (see \cref{lem:kg-actnum} for details). A key element of our analysis is that \empty{all} answers up to operation $t$ and query $i$ are equal to the length of the shortest paths between the corresponding query points in $\kg[t][i]$. The generation of the input stream and the answers to all queries are formally given by \cref{alg:stream} and~\ref{alg:answer}, respectively.

\begin{algorithm}
	\Fn{\FnGenerate{$t$}}{
		\If{there exists a closed node $x \in \V{\kg[t-1][c]}$}{
			mark $x$ as off in $\kg[t][0]$\;
			\Return{$\langle$ delete $x$ $\rangle$}
		}
		\Else{
			let $x$ be a new point, i.e., that was not returned by the adversary before\;
			\Return{$\langle$ insert $x$ $\rangle$}
		}
	}
	\caption{\label{alg:stream} Construction of element $\s[t]$ of $\s$}
\end{algorithm}
\begin{algorithm}
	\Fn{\FnAnswer{$\q[t][i] = (x,y)$}}{
		let $\akg[t][i] = (\V{\kg[t][i-1]}, \E{\kg[t][i-1]} \cup (\actno[t][i-1] \times \actno[t][i-1]))$\;
		let $p = (e_1, \ldots, e_k)$ be a shortest path between $x$ and $y$ in $\akg[t][i]$\;
		set $\kg[t][i] \defeq \kg[t][i-1]$\;
		\If{$\exists e_i \notin \E{\kg[t][i-1]}$}{
			insert $e_i$ into $\kg[t][i]$ \label{lin:alg_answer_instert_a} \;
			let $C$ be the connected component of $e_i$ in $\kg[t][i]$ \;
			\If(\tcp*[f]{make sure $C$ contains an open node}){$C \cap \actno[t][i] = \emptyset$}{
				let $u = \arg\min_{u' \in C} \deg(u')$ \tcp*{pick node with degree $100f(k,t)$}
				let $v = \arg\min_{v' \in \actno[t][i]} \deg(v')$ \tcp*{pick node with degree at most $50f(k,t)$}
				insert $(u,v)$ into $\kg[t][i]$ \label{lin:alg_answer_instert_b} \;
			}
		}
		\Return{length of $p$}
	}
	\caption{\label{alg:answer} Answer of the adversary to query $\q[t][i]$}
\end{algorithm}

\subsubsection{Adversarial Strategy}

Let $n_t$ be the number of open and closed nodes, i.e., the number of current points for the algorithm, after operation $t$. In the whole section we use the notations  of $\kg[t'][i], \actno[t][i]$ etc.  from Definition~\ref{def:node_types}.

\paragraph{Proof of \cref{lem:mpt-det-low-adv-prop}}

The next three lemmas prove the three claims in \cref{lem:mpt-det-low-adv-prop}.

\begin{lemma}[\cref{lem:mpt-det-low-adv-prop} (\ref{it::mpt-det-low-adv-prop-numac})]
	\label{lem:kg-actnum}
	For every $t > 0$, the number of open nodes in $\kg[t][0]$ is at least $92t / 100$, and for every $t > 0, j \geq 0$, there exists at least one node with degree at most $50f(k,t)$ in $\kg[t][j]$.
\end{lemma}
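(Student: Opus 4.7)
The plan is to prove both parts of the lemma by first establishing a global upper bound on the total degree in $\kg[t][j]$ via the amortized complexity assumption, and then deriving each of the two claims from this bound together with structural properties of the adversary's deletion policy.

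First I would bound the total number of edges in $\kg[t][j]$. Inspecting \FnAnswer, each distance query contributes at most two edges to $\kg$: the edge $e_i$ added in line \ref{lin:alg_answer_instert_a}, and optionally the auxiliary edge $(u,v)$ in line \ref{lin:alg_answer_instert_b} that reconnects a newly created component to an open low-degree vertex. Since the algorithm's amortized complexity is at most $f(k,n)$ per operation, $f$ is non-decreasing in $n$, and the number of points at any operation $i \leq t$ is bounded by $i \leq t$, the total number of queries up to and including the $j$-th query of operation $t$ is at most $t \cdot f(k,t)$. Consequently, $\lvert \E{\kg[t][j]} \rvert \le 2 t f(k,t)$ and $\sum_{v} \dg[]{v} \le 4 t f(k,t)$.

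Next I would bound the number of ever-closed nodes. Let $I_t$ and $D_t$ denote the insertions and deletions in the first $t$ operations (so $I_t + D_t = t$). Since the adversary only deletes closed nodes, every off node is ever-closed, so the total number of ever-closed nodes equals $D_t$ plus the number of currently-closed nodes, and the number of open nodes is exactly $I_t$ minus the number of currently-closed nodes. Each ever-closed node $u$ had, at some operation $i_u \leq t$, degree at least $100 f(k,i_u)$, and since edges are never removed, it retains this degree in $\kg[t][j]$. Combined with the adversary's policy of immediately deleting a closed node whenever one exists (so closed nodes accumulate only via queries made between two consecutive insertions) and the degree budget of Step~1, a careful charging shows the number of ever-closed nodes is at most $2t/25$, from which the number of open nodes is at least $t - 2t/25 = 92t/100$, proving the first claim.

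For the second claim I would argue by contradiction using the first: if every node in $\kg[t][j]$ had degree strictly greater than $50 f(k,t)$, the sum of degrees would exceed $50 f(k,t) \cdot I_t \ge 50 f(k,t) \cdot 92 t/100 \ge 46 t f(k,t)$, contradicting the upper bound $4 t f(k,t)$ established in Step~1. Hence some node has degree at most $50 f(k,t)$, which is precisely the vertex that \FnAnswer can attach a new connected component to in line \ref{lin:alg_answer_instert_b} without driving its degree above the closed-threshold $100 f(k,t)$.

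The main obstacle will be the charging argument in the middle step that converts the $O(tf(k,t))$ degree budget into a linear-in-$t$ bound on the ever-closed nodes that is \emph{independent of $f$}. The subtle point is that a node may become closed early while $f(k,i_u)$ is small, so naively dividing the total degree by the minimum per-closed-node contribution only yields $O(tf(k,t))$. To avoid this one exploits that whenever closed nodes exist the next operation must be a deletion rather than an insertion, which ties the creation rate of new closed nodes to the per-operation query budget of the \emph{current} operation and lets the bound telescope cleanly across operations.
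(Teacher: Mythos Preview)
Your overall structure matches the paper: bound total degree via the amortized query budget, convert that into a bound on ever-closed nodes, and deduce both claims. Your Step~3 is in fact cleaner than the paper's treatment of the second claim (the paper reruns the weighted charging with threshold $50f(k,t)$; your global pigeonhole from the $4tf(k,t)$ degree bound and the $92t/100$ node count is sufficient and simpler).

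The gap is in your proposed mechanism for Step~2. The observation that ``whenever closed nodes exist the next operation is a deletion'' is \emph{not} what drives the bound on ever-closed nodes; it is used only in the proof of item~(\ref{it::mpt-det-low-adv-prop-clean}). For item~(\ref{it::mpt-det-low-adv-prop-numac}) the deletion policy is irrelevant. What you actually need is a per-operation \emph{weighted} charge: give each unit of degree added during operation $j$ the weight $1/(100f(k,j))$. Then every ever-closed node $v$ satisfies
\[
\sum_{j\le i_v}\frac{d_j(v)}{100f(k,j)}\;\ge\;\sum_{j\le i_v}\frac{d_j(v)}{100f(k,i_v)}\;=\;\frac{\deg_{i_v}(v)}{100f(k,i_v)}\;\ge\;1,
\]
using monotonicity of $f$ in its second argument. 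Summing over all nodes, the total weighted charge is $\sum_j \Delta_j/(100f(k,j))$, where $\Delta_j$ is the total degree added at operation $j$. The amortized budget gives only prefix bounds $\sum_{i\le j}\Delta_i\le 4\sum_{i\le j}f(k,i)$, not a per-operation bound, so you finish by Abel summation against the non-increasing weights $1/(100f(k,j))$ to get $\sum_j \Delta_j/(100f(k,j))\le 4t/100$. This is exactly the content of the paper's terse inequality $\pasnum[t][0]\le\sum_{i}4f(k,n_i)/(100f(k,i))$, and it does not telescope through the deletion schedule at all.

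Your global bound $\sum_v\deg(v)\le 4tf(k,t)$ from Step~1 is too coarse for this step on its own: if many nodes close early with threshold $100f(k,i_u)\ll 100f(k,t)$, dividing $4tf(k,t)$ by the small per-node contribution gives nothing. The fix is not the deletion schedule but the time-local weighting above.
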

\begin{proof}
	Recall that $f(k,n)$ is a positive function that is non-decreasing in $n$. We prove the first part of the claim by induction. By the properties of $f$, the case $t = 1$ follows trivially. Let $t \geq 2$. For any $i \in [t]$, the algorithm's query budget increases by $f(k, n_i)$ queries after the $i$\xth/ update. Since $f(k,i)$ is non-decreasing, nodes inserted after update operation $i-1$ can only become closed if their degrees increase to at least $100f(k,i)$. Answering a query $i$ inserts at most two edges into $\kg[t][i-1]$, and the sum of degrees increases by at most four. Therefore, it holds that $\pasnum[t][0] \leq \sum_{i \in [t]} 4f(k, n_i) / (100f(k, i)) \leq \sum_{i \in [t]} 4f(k, i) / (100f(k, i)) \leq 4t / 100$. It follows that the adversary will delete at most $4t / 100$ points in the first $t$ operations and insert points in the other at least $(1-4/100)t$ operations. The number of open nodes after update $t$ is $\actnum[t][0] \geq t - \pasnum[t][0] - 4t / 100 \geq 92t / 100$.
	
	To prove the second part of the claim, let $s_{t,j}$ be the number of nodes in $\kg[t][j]$ with degree at most $50f(k,t)$. Similarly as before, we have $n_t - s_{t,j} \leq \sum_{i \in [t]} 4f(k, n_i) / (50f(k, i)) \leq \sum_{i \in [t]} 4f(k, i) / (50f(k, i)) \leq 4t / 50$. Therefore, $s_{t,j} \geq n_t - 4t / 50 \geq (t - 4t / 100) - 4t / 50 \geq 1$.
\end{proof}

\begin{lemma}[\cref{lem:mpt-det-low-adv-prop} (\ref{it::mpt-det-low-adv-prop-deg})]
	\label{lem:max-degree}
	For every $t > 0, i \in [c]$, all nodes have degree at most $100f(k,t) + 1$ in $\kg[t][i]$.
\end{lemma}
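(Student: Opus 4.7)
The plan is to prove the degree bound by induction on the pairs $(t,i)$ taken in lexicographic order, where $t$ indexes the update operation and $i$ indexes the distance queries during operation $t$. The base case is trivial since $\kg[0][0]$ is empty. For the inductive step, I will treat the effect of an update operation separately from the effect of answering a query: the update either inserts an isolated node or flips an existing node's label to off, and neither action touches any edge, so degrees are preserved. The query, by Algorithm~\ref{alg:answer}, inserts at most two new edges, namely the shortest-path edge $e_i$ (when it does not already exist) and possibly the auxiliary edge $(u,v)$.

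For the first edge $e_i$, both endpoints lie in $\actno[t][i-1]$ by construction, so by the definition of \emph{open at time $t$} their degrees are strictly less than $100f(k,t)$ prior to insertion; after insertion their degrees are at most $100f(k,t)$, still within the claimed bound, while all other degrees are untouched.

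The delicate part is the optional edge $(u,v)$. The endpoint $v$ is easy: it lies in $\actno[t][i]$ and, by Lemma~\ref{lem:kg-actnum}, has degree at most $50f(k,t)$ before the insertion, so afterwards its degree is at most $50f(k,t)+1 \le 100f(k,t)+1$. For the endpoint $u$, the inductive hypothesis alone would only give $\deg(u) \le 100f(k,t)+1$, so a further $+1$ would blow past the bound. I will close this gap by using the precondition for the second insertion, namely $C \cap \actno[t][i] = \emptyset$. The key observation is that $e_i$ was inserted between two nodes that were open in $\kg[t][i-1]$, so both endpoints of $e_i$ lie in $C$; since $C$ contains no open node of $\kg[t][i]$, both of these endpoints must have transitioned from open to closed as a direct consequence of inserting $e_i$. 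This pins their degree in $\kg[t][i]$ to exactly $100f(k,t)$ (they were at most $100f(k,t)-1$ before and gained exactly one incident edge). Hence $\min_{u' \in C} \deg(u') \le 100f(k,t)$, and since $u$ is the argmin over $C$, inserting $(u,v)$ raises $\deg(u)$ to at most $100f(k,t)+1$, as required.

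The main obstacle is this tight accounting for $u$'s degree: one has to be careful to invoke the inductive hypothesis at the correct index, use the monotonicity of $f$ in its second argument to compare thresholds across different operations, and verify that whenever the second edge fires, the two endpoints of $e_i$ really do witness the minimum degree in $C$ (in particular, they must lie in $C$, which follows because $e_i \in E(C)$). Once these pieces are in place the induction closes and the lemma follows.
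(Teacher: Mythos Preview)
Your proposal is correct and follows essentially the same approach as the paper: both arguments hinge on the observation that the endpoints of $e_i$ lie in $C$ with degree exactly $100f(k,t)$ after $e_i$ is inserted, so the $\arg\min$ vertex $u$ in $C$ has degree at most $100f(k,t)$ before the auxiliary edge and at most $100f(k,t)+1$ after. Your version is somewhat more explicit---you set up the lexicographic induction and check the four affected vertices separately---whereas the paper compresses this into the remark that ``the vertex of the component that became closed most recently always has degree $100f(k,t)$''; but the substance is the same.
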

\begin{proof}
	By definition, the claim is true for open nodes. Edges are only inserted into $G$
	if the algorithm queries for the distance between two nodes $x,y$ and the adversary determines a shortest path between $x$ and $y$ that contains edges that are not present in $\kg[t][i-1]$ (see \cref{alg:answer}). Since all such edges are edges between open nodes, only degrees of open nodes in $\kg[t][i]$ increase. The adversary finds a shortest path on a supergraph of $\actno[t][i] \times \actno[t][i]$. Therefore, any shortest path it finds contains at most one edge with two open endpoints. It follows that a query increases the degree of any open node in $\kg[t][i]$ by at most one, which may turn it into a closed node with degree $100f(k,t)$. If the connected component of this edge contains no open node, it also inserts an edge from an open node (with degree at most $50 f(k,t)$) to the vertex with smallest degree in the component. Since the vertex of the component that became closed most recently always has degree $100f(k,t)$, this increases the degree of every closed node at most once by $1$.
\end{proof}

\begin{lemma}[\cref{lem:mpt-det-low-adv-prop} (\ref{it::mpt-det-low-adv-prop-clean})]
	\label{lem:no-closed-nodes}
	For every $t > 0$, there exists a clean update operation $t'$, $t < t' \le 2t$, i.e., $\kg[t'][0]$ only contains open and off nodes, but no closed nodes.
\end{lemma}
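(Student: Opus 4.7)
The plan is to observe that, by the adversary's strategy in \cref{alg:stream}, every \emph{insertion} operation is automatically clean: operation $t'$ is an insertion precisely when $\kg[t'-1][c(t'-1)]$ contains no closed nodes, and after the insertion $\kg[t'][0]$ differs only by one additional open node. So it suffices to exhibit at least one insertion operation in $(t, 2t]$.

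First, I would bound the total number of deletions (equivalently, the number of off nodes) that can occur in the first $2t$ operations. \Cref{lem:kg-actnum} already establishes that $\actnum[t'][0] \geq 92t'/100$ for every $t' > 0$, which in turn means that $\disnum[t'][0] = t' - \actnum[t'][0] - \pasnum[t'][0] \le 4t'/100$, since $\pasnum[t'][0] \le 4t'/100$ by the same calculation. In other words, at most $\disnum[2t][0] \le 8t/100$ deletion operations occur during the first $2t$ steps combined. (The key ingredient is that each of the $f(k,n_i)$ queries at step $i$ raises the total degree by at most $4$, while creating a new closed node at step $i$ requires degree at least $100 f(k, i)$ above the threshold from the previous step; this is precisely the accounting used in the proof of \cref{lem:kg-actnum}.)

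Combining these bounds, the number of deletion operations in $(t, 2t]$ is at most
\[
\disnum[2t][0] - \disnum[t][0] \;\le\; \disnum[2t][0] \;\le\; 8t/100.
\]
Since there are $t$ operations in the interval $(t, 2t]$, at least $t - 8t/100 = 92t/100 > 0$ of them are insertions. Pick any such insertion operation $t' \in (t, 2t]$. By \cref{alg:stream}, at the start of $t'$ there were no closed nodes in $\kg[t'-1][c(t'-1)]$, and the insertion of one new open node preserves this property in $\kg[t'][0]$, so $t'$ is clean.

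The only subtle point is making sure the count of deletions is indeed bounded by a constant fraction of $2t$ strictly less than $1$; this is exactly what the $4/100$ bound from \cref{lem:kg-actnum} provides, and any asymptotic bound of the form $o(t)$ on the total number of closed nodes ever created would have sufficed. No further machinery beyond the proof of \cref{lem:kg-actnum} is required.
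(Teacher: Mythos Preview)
Your approach is correct and is genuinely simpler than the paper's. The paper argues by contradiction: assuming every operation in $(t,2t]$ still sees a closed vertex, the adversary must perform $t$ consecutive deletions; a fresh budget argument (bounding how many vertices can ever become closed against the algorithm's total query allowance through time $2t$) then forces $\pasnum[2t][0] < 0$. Your route sidesteps the contradiction entirely by noting that, by construction of \FnGenerate, every insertion operation is already clean, so it suffices to exhibit one insertion in $(t,2t]$; this follows at once from the $92t'/100$ bound on open vertices in \cref{lem:kg-actnum}. What you gain is brevity and a clean reduction to a lemma already proved; what the paper's argument buys is self-containment (it re-derives the needed budget bound rather than invoking \cref{lem:kg-actnum}).

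One small accounting slip: the identity $\disnum[t'][0] = t' - \actnum[t'][0] - \pasnum[t'][0]$ is off. Each of the $t'$ operations is either an insertion (adding one vertex to $V$) or a deletion (adding one off vertex but no new vertex), so in fact $\actnum[t'][0] + \pasnum[t'][0] + 2\,\disnum[t'][0] = t'$. With $\actnum[t'][0] \ge 92t'/100$ this gives $\disnum[t'][0] \le 4t'/100$ directly, without any appeal to an upper bound on $\pasnum[t'][0]$; your stated bound $\disnum[2t][0] \le 8t/100$ and the remainder of the argument are unaffected.
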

\begin{proof}
	We prove the claim by induction over the operations $t$ with the properties that $\kg[t][0]$ contains no closed node, but $\kg[t+1][0]$ contains at least one closed node. The claim is true for the initial (empty) graph $\kg[0][0]$. Let $t > 0$. We prove that in at least one operation $t' \in \{ t+1, \ldots, 2t \}$, the number of closed nodes is $0$. For the sake of contradiction, assume that for all $t'$, $t < t' \leq 2t$, the number of closed nodes is non-zero, i.e., $\pasnum[t'][0] > 0$. We call an open node \emph{semi-open} if it has degree greater than $50f(k,i)$ after some update $i \in [2t]$. Otherwise, we call it \emph{fully-open}. Recall that, similarly, a vertex becomes closed if it has degree at least $100f(k,i)$ after some update $i \in [2t]$ (and never becomes open again).
	
	For any $i \in [2t]$, the algorithm's query budget increases by $f(k, n_i)$ queries after the $i$\xth/ update. Since $f(k,i)$ is non-decreasing, nodes inserted after update operation $i-1$ can only become semi-open if their degrees increase to at least $50f(k,i)$ (resp. $100f(k,i)$) by the definition of semi-open (resp. closed). Also due to the monotonicity of $f$, the number of semi-open or closed nodes is maximized if the algorithm invests its query budget as soon as possible. It follows that the number of semi-open or closed nodes up to operation $2t$ is at most $\sum_{j \in [2t]} 4f(k,j) / (50f(k,j)) \leq 4t/50$. Without loss of generality, we may assume that all semi-open nodes are closed (so the algorithm does not need to invest budget to make them closed).
	
	After update operation $2t$, the algorithm's total query budget from all update operations is at most $\sum_{i \in [2t]} f(k,n_i) \leq \sum_{i \in [2t]} f(k,i) \leq 2t f(k,2t)$. Answering a query $i$ inserts at most two edges into $\kg[t][i-1]$, and the sum of degrees increases by at most four. The algorithm may use its budget to increase the degree of at most $2t \cdot 4f(k,2t) / (50f(k,2t)) \leq 8t/50$ fully-open nodes to at least $100f(k,2t)$, i.e., to make them closed nodes. 
	Recall our assumption that $\pasnum[i][0] > 0$ for all $i \in \{t+1, \ldots, 2t\}$. The adversary deletes one point corresponding to a closed node in each update operation from $\{t+1, \ldots, 2t \}$. Therefore, the number of closed nodes after update operation $2t$ is
	\begin{equation*}
		\pasnum[2t][0] \leq \frac{4t}{50} + \frac{8t}{50} - t \leq \frac{12t}{50} - t < 0.
	\end{equation*}
	This is a contradiction to the assumption.
\end{proof}

\paragraph{Proofs of \cref{lem:metric-consistent,lem:muni-consistent,lem:muni-consistent,lem:mstar-consistent,lem:mstarrange-consistent}}

The following observation follows immediately from the properties of shortest path metrics.

\begin{observation}
	\label{lem:convex-compliance}
	Let $G=(V,E)$ and $G'=(V,E')$ be two graphs so that $E \subseteq E'$. If a sequence of queries is consistent with the shortest path metric on $G$ as well as on $G'$, then, for any $E''$, $E \subseteq E'' \subseteq E'$, it is also consistent with the shortest path metric on $(V, E'')$.
\end{observation}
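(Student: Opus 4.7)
The plan is to prove this observation using the standard monotonicity of shortest-path distances under edge addition, combined with a sandwich argument.

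First, I would record the monotonicity fact: if $H_1 = (V, F_1)$ and $H_2 = (V, F_2)$ are graphs with $F_1 \subseteq F_2$ (and unit edge lengths, as in the rest of this section), then for every pair of vertices $u,v \in V$ we have $\dsp[H_2]{u}{v} \le \dsp[H_1]{u}{v}$, since every $u$-$v$ path in $H_1$ remains a $u$-$v$ path in $H_2$ of the same length. Applying this twice to the chain $E \subseteq E'' \subseteq E'$ yields the sandwich
\[
\dsp[(V,E')]{u}{v} \;\le\; \dsp[(V,E'')]{u}{v} \;\le\; \dsp[(V,E)]{u}{v}
\]
for every $u,v \in V$.

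Next, I would use the hypothesis that the query sequence is consistent with both $G$ and $G'$. For every query $(u,v)$ in the sequence, the reported answer $\ans{(u,v)}$ equals $\dsp[G]{u}{v}$ (by consistency with $G$) and also equals $\dsp[G']{u}{v}$ (by consistency with $G'$). Hence the two outer terms in the sandwich coincide, and therefore the middle term does too: $\dsp[(V,E'')]{u}{v} = \dsp[G]{u}{v} = \ans{(u,v)}$. Since this holds for every queried pair, the sequence is consistent with the shortest-path metric on $(V, E'')$, as required.

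There is no real obstacle here; the statement is a one-line monotonicity argument, and the only thing to be careful about is to spell out that consistency is a pointwise statement about queried pairs, so it suffices to verify the equality $\dsp[(V,E'')]{u}{v} = \ans{(u,v)}$ query by query. No additional structure of the specific graph $G_{t}$, the vertex-type classification, or the distance-1 edge convention is used; the statement is purely graph-theoretic.
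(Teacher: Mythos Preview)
Your proof is correct and matches the paper's approach: the paper states that the observation ``follows immediately from the properties of shortest path metrics'' without further elaboration, and your monotonicity-plus-sandwich argument is exactly the intended one-line justification.
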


The following lemma together with \cref{lem:convex-compliance} implies \cref{lem:metric-consistent,lem:muni-consistent,lem:muni-consistent,lem:mstar-consistent,lem:mstarrange-consistent,lem:mmulti-consistent} by setting $G = \kg[t][i-1]$ and $G' = (\V{\kg[t][i-1]}, \E{\kg[t][i-1]} \cup (\actno[t][i-1] \times \actno[t][i-1]))$ in \cref{lem:convex-compliance}.

\begin{lemma}
	\label{lem:sp-consistent}
	For any $t, t' > 0$, $i, i' \in [c]$ so that $(t',i') \prec (t,i)$, the answer given to query $\q[t'][i']$ is consistent with the shortest path metric on $G'$.
\end{lemma}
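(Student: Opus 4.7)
The proof is by induction on $(t, i)$ in lexicographic order, establishing the statement $\ans{\q[t'][i']} = d_{\akg[t][i]}(\q[t'][i'])$ for every $(t', i') \prec (t, i)$. The base case, with $(t, i)$ being the immediate successor of $(t', i')$, holds by the very definition of the answer in \cref{alg:answer}. For the inductive step, we analyze how $\kg$ and $\akg$ evolve between consecutive steps: (a) update operations either insert an isolated node or relabel a closed node as off, neither of which affects any existing shortest-path distance in $\akg$; (b) line~\ref{lin:alg_answer_instert_a} commits to $\kg$ the at-most-one missing open-to-open edge used in the current shortest path; and (c) line~\ref{lin:alg_answer_instert_b} adds a bridge between a closed vertex and an open vertex when the newly formed connected component otherwise contains no open node. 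Throughout, $\actno$ can only shrink, since openness is monotone in $(t, i)$.

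The direction $d_{\akg[t][i]}(\q[t'][i']) \leq \ans{\q[t'][i']}$ is immediate. A shortest path of length $\ans{\q[t'][i']}$ in $\akg[t'][i']$ contains at most one open-to-open edge, since any two open nodes are at pairwise virtual distance $1$ and a shortest path cannot reuse such an edge. Line~\ref{lin:alg_answer_instert_a} commits that edge as a real edge in $\kg[t'][i'] \subseteq \kg[t][i-1]$, so the full path survives in $\akg[t][i]$.

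For the harder direction $d_{\akg[t][i]}(\q[t'][i']) \geq \ans{\q[t'][i']}$, we transform any path $P$ from $x$ to $y$ in $\akg[t][i]$, where $(x, y) = \q[t'][i']$, into a path of no greater length in $\akg[t'][i']$, and invoke optimality of $\ans{\q[t'][i']}$. Each edge of $P$ is classified as follows. A real edge already in $\kg[t'][i'-1]$ is available unchanged. A virtual edge between two vertices $u, v \in \actno[t][i-1]$ satisfies $u, v \in \actno[t'][i'-1]$ by monotonicity of openness, so the same edge is virtual in $\akg[t'][i']$. A real edge added by line~\ref{lin:alg_answer_instert_a} between $(t', i')$ and $(t, i-1)$ has both endpoints open at the time of insertion, hence open at $(t', i'-1)$, again giving a virtual edge in $\akg[t'][i']$.

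The main obstacle is the remaining case: a bridge edge $(u, v)$ added by line~\ref{lin:alg_answer_instert_b}, where $v$ is open at the insertion time (and hence at $(t', i'-1)$), but $u$ may already be closed at $(t', i'-1)$, so that $(u, v)$ was not a virtual edge at that earlier step. To handle this, we exploit the invariant from \cref{lem:kg-actnum} together with the very guard of line~\ref{lin:alg_answer_instert_b}: every connected component of $\kg$ always contains an open vertex, and this bridge is inserted precisely when the component of $u$ loses its last open vertex during the current query. Thus at time $(t', i'-1)$ some vertex $w$ in that component was still open, making the virtual edge $(w, v)$ available in $\akg[t'][i']$; a subpath of $P$ traversing the bridge can then be rerouted, via the pre-existing $\kg$-path from $u$ to $w$ and the virtual edge $(w, v)$, producing an equal-or-shorter segment. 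Performing these edge-by-edge (and subpath-by-subpath) substitutions yields a valid path in $\akg[t'][i']$ from $x$ to $y$ of length at most $|P|$, completing the induction. The most delicate aspect is precisely this rerouting argument, which is where the proof relies most heavily on the adversary's structural invariants.
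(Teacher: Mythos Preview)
Your argument has two gaps. First, when translating edges of $P$ back to $\akg[t'][i']$, you assert that a vertex open at $(t,i-1)$ is also open at $(t',i'-1)$ ``by monotonicity of openness.'' But monotonicity only applies to vertices that \emph{exist} at both times; a vertex inserted by the adversary after operation $t'$ is simply absent from $\akg[t'][i']$, so neither a virtual edge nor a committed line-\ref{lin:alg_answer_instert_a} edge incident to it is available there. The same issue affects the open endpoint $v$ of a bridge edge. Second, and more seriously, your rerouting of the bridge edge $(u,v)$ does not yield an ``equal-or-shorter segment'': you replace a single length-$1$ edge by a path $u \to w$ in $\kg$ of length at least $1$ followed by the virtual edge $(w,v)$ of length $1$, for total length at least $2$. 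The component invariant tells you such an open $w$ exists, but gives no bound on the distance from $u$ to $w$, and nothing forces this extra length to be absorbed elsewhere on $P$.

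The paper's argument is structured differently. It shows that $\ans{q}$ equals the shortest-path distance in the \emph{unaugmented} graph $\kg[t][i]$ (rather than in $\akg[t][i]$), and for the hard direction argues by contradiction at the first step $(t'',i'')$ where that distance would drop below $\ans{q}$, analysing only the one or two edges inserted at that single step instead of unwinding the entire history back to $(t',i')$.
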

\begin{proof}
	Let $t' > 0$, $i' \in [c]$ so that $(t',i') \prec (t,i)$ and denote $(x,y) \defeq q \defeq \q[t'][i']$. We prove that $\ans{q} = \dsp[\kg[t][i]]{x}{y}$. As neither vertices nor edges are deleted after they have been inserted, for every $(t_1, i_1) \prec (t_2, i_2)$, $\kg[t_2][i_2]$ is a supergraph of $\kg[t_1][i_1]$. Thus, if there exists a path $P$ between $x$ and $y$ in $\kg[t'][i']$, a shortest path in $\kg[t][i]$ between $x$ and $y$ cannot be longer than $P$.
	
	It remains to prove $\ans{q} \leq \dsp[\kg[t][i]]{x}{y}$. For the sake of contradiction, assume that there exist $t'',i''$ 
	so that $\ans{q} = \dsp[\kg[t''][i''-1]]{x}{y}$, but $\ans{q} > \dsp[\kg[t''][i'']]{x}{y}$. By \cref{def:node_types}, closed nodes never become open. For any closed node $v \in \pasno[t''][i''-1]$, it follows that $\dsp[\kg[t''][i'']]{v}{\actno[t][i]} \geq \dsp[\kg[t''][i''-1]]{v}{\actno[t][i-1]}$ as \cref{alg:answer} only inserts edges between vertices in $\actno[t][i]$ into $\kg[t][i]$. Therefore, any shortest path between $x$ and $y$ in the graph $(\V{\kg[t''][i''-1]}, \E{\kg[t''][i''-1] \cup (\actno[t''][i''-1] \times \actno[t''][i''-1])}$ has length at least $\dsp[\kg[t''][i''-1]]{x}{y} = \ans{q}$.
\end{proof}

\subsubsection{Lower Bounds for Clustering}

Our lower bounds apply for the case that the algorithm is allowed to choose as centers any points that have ever been inserted as well as to the case where  centers must belong to the set of current points. In the whole section we use the notations  of $\kg[t'][i], \actno[t][i]$ etc. from the introduction of \cref{sec:appendix-lower-adaptive}.

\paragraph{Proof of \cref{thm:det-1lb}}

For any $\ell\in\N_{0}$, we define a metric $M_{\ell}(P^{*})$
on a subset of points $P^{*}$ as the shortest path metric on the
following graph. For each pair of open vertices $u,v$ we add an edge if $d(u,P^{*})\geq\ell$
and $d(v,P^{*})\geq\ell$.

\begin{lemma} \label{lem:mmulti-consistent} For each clean update operation
	$t$, each subset of points $P^{*}\in V_{t}$, and each $\ell\in\N_{0}$
	the metric $M_{\ell}(P^{*})$ is consistent.
\end{lemma}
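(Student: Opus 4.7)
The plan is to directly apply the template that the authors have already set up for \cref{lem:metric-consistent,lem:muni-consistent,lem:mstar-consistent,lem:mstarrange-consistent}, namely: express $M_\ell(P^*)$ as the shortest path metric on a graph $H$ whose edge set sits between $\E{\kg[t][0]}$ and $\E{\kg[t][0]} \cup (\actno[t][0] \times \actno[t][0])$, and invoke \cref{lem:convex-compliance} together with \cref{lem:sp-consistent}.

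First I would formally define $H = (\V{\kg[t][0]}, \E{H})$ where $\E{H} = \E{\kg[t][0]} \cup F_\ell$, with $F_\ell \defeq \{\{u,v\} : u,v \in \actno[t][0],\; \dsp[\kg[t][0]]{u}{P^*} \ge \ell,\; \dsp[\kg[t][0]]{v}{P^*} \ge \ell\}$. By construction, $M_\ell(P^*)$ is exactly the shortest path metric on $H$. Since $F_\ell$ consists only of edges between open vertices, $F_\ell \subseteq \actno[t][0] \times \actno[t][0]$, and therefore
\[
\E{\kg[t][0]} \;\subseteq\; \E{H} \;\subseteq\; \E{\kg[t][0]} \cup \bigl(\actno[t][0] \times \actno[t][0]\bigr).
\]

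Next, I would appeal to \cref{lem:sp-consistent}, which asserts that every answer given to a query $\q[t'][i']$ with $(t',i') \prec (t,0)$ is consistent with the shortest path metric on $\kg[t][0]$. Moreover, by the definition of the adversary (\cref{alg:answer}), the same answers are also consistent with the shortest path metric on the augmented graph $G' = (\V{\kg[t][0]}, \E{\kg[t][0]} \cup (\actno[t][0] \times \actno[t][0]))$, since each answer was itself computed as a shortest path in a graph of this form and only shrinks when more open-to-open edges are added. Thus the prior answers are simultaneously consistent with the shortest path metrics on $\kg[t][0]$ and on $G'$.

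Finally, \cref{lem:convex-compliance} applies verbatim: for any edge set $E''$ sandwiched between $\E{\kg[t][0]}$ and $\E{G'}$, the shortest path metric on $(\V{\kg[t][0]}, E'')$ is consistent with all previously reported distances. Taking $E'' = \E{H}$ yields that $M_\ell(P^*)$ is consistent, which is exactly the claim. I do not anticipate a main obstacle here; the only thing to double-check is that $F_\ell$ lies inside $\actno[t][0] \times \actno[t][0]$, which is immediate from the definition of $M_\ell(P^*)$ as only adding edges between pairs of open vertices, and that the distances $\dsp[\kg[t][0]]{\cdot}{P^*}$ used to define $F_\ell$ are computed in the fixed graph $\kg[t][0]$, so $F_\ell$ is a well-defined subset independent of the choice of $E''$.
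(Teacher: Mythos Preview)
Your proposal is correct and follows essentially the same approach as the paper: the paper's one-line proof simply observes that $M_\ell(P^*)$ is an augmented graph metric for $t$ (since only edges between pairs of open vertices are added) and invokes \cref{lem:metric-consistent}, while you unroll this by directly exhibiting the sandwich $\E{\kg[t][0]} \subseteq \E{H} \subseteq \E{\kg[t][0]} \cup (\actno[t][0] \times \actno[t][0])$ and appealing to \cref{lem:convex-compliance} together with \cref{lem:sp-consistent}, which is precisely how \cref{lem:metric-consistent} itself is established.
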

\begin{proof}
	The metric $M_{\ell}(P^{*})$ is an augmented graph metric for $t$ and, thus, for a clean update operation $t$, it
	is consistent by Lemma~\ref{lem:metric-consistent}.
\end{proof}

\begin{lemma}[\cref{thm:det-1lb}, part 1]
	Let $k \ge 2$.
	Consider any dynamic algorithm for maintaining an approximate $k$-center solution of a dynamic point set  that (1) queries amortized $f(k,n)$ distances per operation, where $n$ is the number of current points,
	and (2) outputs at most $g(k) \in O(k)$ centers.
	For any $t\geq 2$ such that $t$ is a clean operation, the approximation factor of the algorithm's solution
	(with respect to an optimal $k$-center solution) against an
	adaptive adversary right after operation $t$ is at least
	$\Omega\left(\min\left\{ k,\frac{\log n}{k\log{f(k,2n)}}\right\} \right)$.
\end{lemma}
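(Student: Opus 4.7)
The plan is to exhibit, for any clean operation $t \ge 2$, a consistent metric on $Q_t$ under which the algorithm's $g(k) \in O(k)$ output centers incur a $k$-center cost that is $\Omega(\min\{k,\log n / (k\log f(k,2n))\})$ times the optimum over $k$ centers. Because the adversary may commit to any metric consistent with its previously reported distances, and \cref{lem:mstarrange-consistent} certifies the consistency of the layered metrics $M_{\ell_1,\ell_2}(p^*)$, it can choose such a metric \emph{after} seeing the algorithm's output, so no additional queries need to be charged in the lower bound.

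First, I would invoke \cref{lem:mpt-det-low-adv-prop} to deduce that after the clean operation $t$, the open vertices form a subgraph of $G_t$ on at least $92t/100$ vertices with maximum degree at most $\Delta := 100\, f(k,n)+1$. A standard ball-counting argument — the BFS ball of radius $r$ around any vertex has size at most $\Delta^r$ — then shows that for $D = \Omega(\log n / \log f(k,2n))$, the BFS ball of radius $D$ around every open vertex misses at least $n/2$ open points. I would pick any open point $p^*$ to serve as the center of the BFS hierarchy and consider the layers $V^{(0)}, V^{(1)},\ldots$ in $G_t$.

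Second, a pigeonhole argument on the algorithm's centers yields a long center-free interval of layers. Assign each center $c \in C$ to the BFS layer it lies in, or to the ``phantom'' layer $V^{(n)}$ if $c$ is disconnected from $p^*$ in $G_t$; since $|C|\le g(k)=O(k)$, at most $g(k)$ of the layers $V^{(0)},\dots,V^{(D)}$ contain a center, and the remaining at least $D - g(k)$ center-free layers partition into at most $g(k)+1$ contiguous gaps. One gap $[i+1,\ldots,i+L]$ must therefore satisfy
\[
L \;\ge\; \frac{D - g(k)}{g(k)+1} \;=\; \Omega\!\left(\frac{\log n}{k\,\log f(k,2n)}\right).
\]
Setting $L' := \min\{L,\, 3(k-2)\}$, I would then work in the consistent metric $M_{i,\,i+L'+1}(p^*)$.

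Finally, I would bound the $k$-center optimum and the algorithm's cost under this metric. For the optimum, placing one center in each of the two collapsed cliques $\bigcup_{j \le i} V^{(j)}$ and $\bigcup_{j \ge i+L'+1} V^{(j)}$ and covering the $L'$ middle layers by $\lceil L'/3\rceil \le k-2$ centers placed at layers $i+2,\,i+5,\,i+8,\ldots$ yields a $k$-center solution of cost $\le 1$, since each such middle center covers three consecutive layers via type-2 edges. For the algorithm, the fact that no center of $C$ lies in layers $i+1,\ldots,i+L'$ together with the structure of $M_{i,i+L'+1}(p^*)$ — whose only edges incident to middle layers are the $G_t$-edges and the type-2 layer-adjacency edges — forces any point in the middle layer $V^{(i+\lceil L'/2\rceil)}$ to be at distance $\Omega(L')=\Omega(\min\{L,k\})$ from every center in a collapsed clique, while a center in $V^{(n)}$ is only farther. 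This gives the claimed approximation ratio $\Omega(\min\{k,\,\log n/(k\log f(k,2n))\})$. The main technical delicacy lies in the pigeonhole step when algorithm centers are off (deleted) vertices or disconnected from $p^*$: off vertices still have a well-defined BFS distance in $G_t$ and thus fall into the same layer-based accounting, and disconnected centers can be assigned to $V^{(n)}$, which only strengthens the lower bound.
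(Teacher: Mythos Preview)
Your approach is essentially the paper's: BFS layering from a root, pigeonhole to locate a center-free run of layers, then invoke the layered metric $M_{\ell_1,\ell_2}(p^*)$ to separate the algorithm's cost from the optimum. The optimum construction you give (two clique centers plus $\lceil L'/3\rceil$ middle centers) is a minor variant of the paper's, and your cost bound via the middle layer is the same idea.

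There is, however, a genuine gap. You do the pigeonhole over layer indices $0,\ldots,D$ but never ensure these layers are nonempty. The ball-growth bound only says the radius-$D$ ball around $p^*$ misses $n/2$ open points; those points may all lie in \emph{other connected components} of the open-vertex subgraph, in which case the BFS from $p^*$ has depth $j_0\ll D$ and $V^{(j_0+1)},\ldots,V^{(D)}$ are empty. Your pigeonhole can then return a gap entirely inside these empty layers, leaving no point in the ``middle layer'' to witness the algorithm's high cost; worse, the metric $M_{i,i+L'+1}(p^*)$ can itself be disconnected (the second collapsed clique is then just $V^{(n)}$, with no finite path to the first clique). You address disconnected \emph{centers} via $V^{(n)}$, but the failure mode is disconnected \emph{points}, not centers.

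The paper closes this gap at the very start of its proof: it adds a length-$1$ path through one vertex of each connected component of $G[A]$, which is still an augmented graph metric (hence consistent by \cref{lem:metric-consistent}) and raises degrees by at most $2$. After this, $G[A]$ is connected, the BFS from any root has depth $\Omega(\log n/\log f(k,2n))$, and every layer up to that depth is nonempty, so the pigeonhole and the middle-layer witness go through as you intend.
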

\begin{proof}
	Denote $G\defeq(V,E)\defeq\kg[t][0]$ and $A\defeq\actno[t][0]$.
	By \cref{lem:kg-actnum}, the number of open nodes in $G$ is
	at least $92t/100\geq t$, which implies that $|A| \ge 92t/100$. 
	Thus,  after operation $t$,  the number $n$ of current points is at least  $92t/100$.

	Without loss of generality, we assume that $G[A]$ has exactly one connected component: If this is not the case, let $C_1, \ldots, C_s$ be the connected components of $G[A]$ and observe that we may insert a path connecting the connected components by inserting
	into $G[A]$ edges  $(v_1,v_2), \ldots, (v_{s-1}, v_s)$ of length $1$, where $v_i \in C_i$ are arbitrary vertices. This increases the maximum degree of nodes in $G$ by at most $2$ and
	the shortest-path metric $M$ on the resulting graph that is constructed in this way is an augmented graph metric for $t$. As $t$ is clean, Lemma~\ref{lem:metric-consistent} shows that $M$ is consistent.
	
	Let $x \in V$ be any node. By \cref{lem:max-degree}, the number $n^{(i)}$ of nodes that have distance at most $i$ to $x$ is at most $\sum_{j \in [i]} (100f(1,t)+3)^j \leq (100f(k,t)+3)^{i+1}$. 
	Consider the largest $\ell$ such that $(100f(k,t)+3)^{\ell+1} < n$. It follows that there exists a node $z$ at distance $\ell+1$ to $x$. Furthermore $(100f(k,t)+3)^{\ell+3} \ge n^{(\ell+1)} \ge n$, which implies that
	$\ell+2 \ge  \log_{100f(k,t)+3} n.$
	
	Let $S = \{ s_1, \ldots, s_{g(k)} \}$ be the solution of the algorithm. For $i \in [\ell+1]$, let us define $V^{(i)}$ to be the set of vertices $v$ in $G[A]$ with $d_{G[A]}(x,v) = i$. By pigeon hole principle, there must exist a consecutive sequence $(i_1, \ldots, i_m)$ so that $m \geq \ell / (g(k)+1)$ and, for all $i \in \{ i_1, \ldots, i_m \}$, $S \cap V^{(i)} = \emptyset$. Let $k' =  \min\{3k-1, m/2\}$. Consider the metric $M_{i_1,i_{k'}}(x)$ and let $y_{i_1}, \ldots, y_{k'}$ be elements from the respective sets $V^{(i_1)}, \ldots,V^{(i_{k'})}$.
	
	The algorithm's solution $S$ has cost at least $k'$ because $\dsp[G]{S}{V_{i_{k'}}} \geq k'$. The solution $\{ y_{3j-2} \mid j \in \N \wedge 3j-2 \in [k'] \}$ is optimal and has cost $1$. It follows that the approximation factor of $S$ is greater than or equal to $k' = \min\{3k-1, m/2\}$.
	
	Let $n$ be the number of points at iteration $t$. We calculate that
	\begin{alignat*}{1}
		\min\left\{ 3k/2,m/2\right\}  & \ge\Omega\left(\min\left\{ k,m\right\} \right)\\
		& \ge\Omega\left(\min\left\{ k,\frac{\ell}{g(k)+1}\right\} \right)\\
		& \ge\Omega\left(\min\left\{ k,\frac{1}{g(k)}\left(\frac{\log n}{\log{(103f(k,t))}}-1\right)\right\} \right)\\
		& \ge\Omega\left(\min\left\{ k,\frac{\log n}{k\log{(103f(k,2n))}}\right\} \right)\\
		& \ge\Omega\left(\min\left\{ k,\frac{\log n}{k\log103+k\log{f(k,2n)}}\right\} \right)\\
		& \ge\Omega\left(\min\left\{ k,\frac{\log n}{k\log{f(k,2n)}}\right\} \right)
	\end{alignat*}
	using that $g(k)=O(k)$.
\end{proof}

\begin{lemma}[\cref{thm:det-1lb}, part 2]
	\label{lem:diam}
	Consider any dynamic algorithm for computing the diameter of a dynamic point set  that queries amortized $f(1,n)$ distances per operation, where $n$ is the number of current points,
	and outputs at most $g\geq1$ centers.
	For any $t\geq 2$ such that $t$ is a clean operation, the approximation factor of the algorithm's solution
	(with respect to the correct diameter) against an
	adaptive adversary right after operation $t$ is at least
	$\log (92t/100) / (\log{103f(1,t)}) -1$ and $92t/100 \le n \le t$.
\end{lemma}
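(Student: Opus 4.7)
The plan is to specialize the adversarial framework of this section to the diameter problem, and then exploit the indistinguishability between the consistent metrics $M_{\mathrm{uni}}$ and $M(p^*)$ to force any valid one-sided diameter estimate to be wrong by an $\Omega(L)$ factor in at least one metric.

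First I would translate the structural lemmas to our setting. Since $t$ is clean, \cref{lem:kg-actnum} gives $|\actno[t][0]| \geq 92t/100$, so $n$ satisfies $92t/100 \leq n \leq t$, and \cref{lem:max-degree} bounds each vertex's degree in $G := \kg[t][0]$ by $100f(1,t)+1 \leq 103f(1,t)$. Next I would locate a far pair in $G$ by a standard BFS-counting argument: setting $L := \log(92t/100)/\log(103f(1,t)) - 1$, every ball of radius $L$ in $G$ contains at most $\sum_{i=0}^{L}(103f(1,t))^i < 92t/100$ vertices, so there must exist an open vertex $p^*$ and another open vertex $p'$ at graph-distance at least $L+1$ in $G$. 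As in the proof of \cref{thm:det-1lb} part~1, I would first ensure that the induced subgraph on open vertices is connected by inserting a small number of auxiliary unit-length edges between components, absorbing the additional degree cost into the $103$ factor.

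Then, by \cref{lem:muni-consistent,lem:mstar-consistent}, both $M_{\mathrm{uni}}$ and $M(p^*)$ are consistent with the adversary's answers to the algorithm's queries. Under $M_{\mathrm{uni}}$ the true diameter is exactly $1$, while under $M(p^*)$ the pair $(p^*, p')$ witnesses $d_{M(p^*)}(p^*, p') \geq L + 1$, so the diameter is at least $L$. Since the algorithm cannot distinguish the two metrics from its queries, its reported diameter estimate $\hat D$ must be the same in both cases. Under the standard one-sided approximation convention, $\hat D$ would need to simultaneously satisfy (say) $\hat D \in [1, \alpha]$ under $M_{\mathrm{uni}}$ and $\hat D \in [L, L\alpha]$ under $M(p^*)$ (and symmetrically in the underestimate case); the intersection is empty unless $\alpha \geq L - 1$, yielding the claimed lower bound.

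The main obstacle I anticipate is handling the fact that the algorithm outputs $g \geq 1$ centers, not just a numerical estimate: the argument must rule out that the algorithm sneaks in a better estimate implicitly via the centers themselves (for instance, via their maximum pairwise distance). Under $M_{\mathrm{uni}}$, however, every pairwise distance among current points is exactly $1$, so any such implicit estimate equals $1$ as well, while under $M(p^*)$ the true diameter is at least $L$, and the same contradiction applies. A secondary point is ensuring consistency when the algorithm queries distances to off (deleted) vertices; this is already guaranteed by \cref{lem:metric-consistent}, so neither of the two metrics $M_{\mathrm{uni}}$ and $M(p^*)$ loses consistency under such queries.
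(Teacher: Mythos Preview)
Your proposal is correct and follows essentially the same approach as the paper: use the degree bound from \cref{lem:max-degree} together with \cref{lem:kg-actnum} to exhibit, via a ball-growth argument, two open vertices at distance at least $\ell := \log(92t/100)/\log(103f(1,t)) - 1$ in the (connected) auxiliary graph, and then contrast a consistent metric realizing diameter $\geq \ell$ with $M_{\mathrm{uni}}$, in which the diameter is $1$. The only cosmetic difference is that the paper uses the shortest-path metric $M$ on the augmented $G[A]$ directly as the ``large-diameter'' metric, whereas you route through $M(p^*)$; both are consistent augmented graph metrics and both witness diameter $\geq \ell$, so this is immaterial. Your discussion of the $g$ centers is unnecessary (the paper's proof ignores $g$ entirely, as the bound is on the reported diameter value), but it does no harm.
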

\begin{proof}
	Denote $G\defeq(V,E)\defeq\kg[t][0]$ and $A\defeq\actno[t][0]$.
	By \cref{lem:kg-actnum}, the number of open nodes in $G$ is
	at least $92t/100\geq t$, which implies that $|A| \ge 92t/100$. 
	Thus,  after operation $t$,  $n \ge 92t/100$.

	Without loss of generality, we assume that $G[A]$ has exactly one connected component: If this is not the case, let $C_1, \ldots, C_s$ be the connected components of $G[A]$ and observe that we may insert a path connecting the connected components by inserting
	into $G[A]$ edges  $(v_1,v_2), \ldots, (v_{s-1}, v_s)$ of length $1$, where $v_i \in C_i$ are arbitrary vertices. This increases the maximum degree of nodes in $G$ by at most $2$ and
	the shortest-path metric $M$ on the resulting graph that is constructed in this way is an augmented graph metric for $t$. As $t$ is clean, Lemma~\ref{lem:metric-consistent} shows that $M$ is consistent.
	
	Let $x \in V$ be any node. By \cref{lem:max-degree}, the number $n^{(i)}$ of nodes that have distance $i$ to $x$ is at most $\sum_{j \in [i]} (100f(1,t)+3)^j \leq (100f(1,t)+3)^{i+1}$. 
	Consider the largest $i$ such that $(100f(1,t)+3)^{i+1} < n$. It follows that there exists a node at distance $i+1$ to $x$. Furthermore $(100f(1,t)+3)^{i+2} \ge n^{(i+1)} \ge n$, which implies that
	$i+2 \ge  \log_{100f(1,t)+3} n.$
	As $f(1,t) \ge 1$ for all values of $t$, there exists a shortest path $P$ starting at $x$ of length at least $i+1 \geq \log_{100f(1,t)+3}n - 1 \geq (\log n/ \log{(103f(1,t))}) -1 =: \ell$. It follows that the diameter is at least $\ell$.
	On the other hand, $M$ can be extended by adding an edge between any pair of open nodes, resulting in the
	consistent metric  $M_{\mathrm{uni}}$. For this metric  the diameter of $G$ is 1.
	As the algorithm cannot tell whether $\ell$ or 1 is the correct answer, and it always has to output a value that is as least as large as the correct answer, it will output at least $\ell$. Thus, the approximation ratio is at least $\ell \ge \log (92t/100)/ \log{103f(1,t)} -1$.
	Note that this implies a lower bound for the approximation ratio for $1$-sum-of-radii, and $1$-sum-of-diameter.
\end{proof}

\begin{lemma} \label{lem:det-low-1pclus} 
	Consider any dynamic algorithm for $(1,p)$-clustering that queries amortized $f(1,n)$ distances per operation, where $n$ is the number of current points,
	and outputs at most $1 \le g\leq n$ centers.
	For any $t\geq1$ such that $t$ is a clean operation, the approximation factor of the algorithm's solution
	(with respect to the optimal $(1,p)$-clustering cost) against an
	adaptive adversary right after operation $t$ is at least $\left[\frac{\log(t/4g)}{p + \log(101f(1,t))}\right]^{p}/4$ and $92t/100 \le n \le t$. \end{lemma}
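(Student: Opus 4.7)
The plan is to adapt the diameter argument from \cref{lem:diam} to $(1,p)$-clustering by having the adversary commit, at the end of the clean operation $t$, to an augmented graph metric that flattens the ``far'' portion of the BFS tree from the algorithm's centers into a clique, while keeping the ``near'' portion stretched. After the clean operation $t$, \cref{lem:mpt-det-low-adv-prop} supplies a graph $G = \bar{G}_t$ on $n \geq 92t/100$ open vertices with maximum degree at most $101 f(1,t)$. As in \cref{lem:diam}, I would first glue together any isolated components of $G$ with a constant number of length-$1$ open-vertex edges, which is harmless both for consistency and for the degree bound. Given the algorithm's $g$-center solution $S$, BFS from $S$ in $G$ partitions the vertices into consecutive non-empty layers $V^{(i)} = \{u \in V(G) : d_G(u, S) = i\}$ (with $V^{(0)} = S$), and the degree bound gives $|V^{(\leq \ell)}| \leq 2g(101 f(1,t))^\ell$ for every $\ell \geq 0$.

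The key calibration is to pick $\ell$ as the largest integer satisfying $2g(101 f(1,t))^\ell (\ell+1)^p \leq n / c$ for a suitable absolute constant $c$ (tuned to land on the factor $1/4$ in the statement). Using $\log(\ell+1) \leq \ell$ and $n \geq 92t/100$, this gives $\ell + 1 \geq \log(t/(4g))/(p + \log(101 f(1,t)))$, matching the quantity in the target bound. This choice does double duty: it forces $|V^{(\geq \ell+1)}| \geq (1 - o(1)) n$ (so plenty of mass sits far from $S$) and $(\ell+1)^p |V^{(\leq \ell)}| = O(n)$ (so the near ball becomes negligible once distances are raised to the $p$-th power, which is what lets the optimum stay cheap).

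The adversary then commits to the augmented graph metric $M^*$ obtained from $G$ by adding length-$1$ edges (i) within each layer $V^{(i)}$, (ii) between adjacent layers $V^{(i)}$ and $V^{(i+1)}$, and (iii) within the ``high group'' $V^{(\geq \ell+1)}$. Since all added edges join open vertex pairs, $M^*$ is an augmented graph metric for $t$ and is therefore consistent with the adversary's prior answers by \cref{lem:metric-consistent}. A direct BFS computation in this augmented graph yields $d_{M^*}(u, S) = i$ for $u \in V^{(i)}$ with $0 \leq i \leq \ell+1$, and $d_{M^*}(u, S) = \ell + 2$ for $u \in V^{(\geq \ell+2)}$ (one jumps into $V^{(\ell+1)}$ via the high-group clique in a single step and then walks the adjacent-layer chain down, which beats the pure chain path once $i \geq \ell+2$). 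Consequently the algorithm's cost in $M^*$ is at least $|V^{(\geq \ell+1)}|(\ell+1)^p = \Omega(n(\ell+1)^p)$.

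For the optimum, I would exhibit the single-center solution $c \in V^{(\ell+1)}$, which is non-empty because $|V^{(\geq \ell+1)}| > 0$ forces all BFS layers up to $\ell+1$ to be non-empty. Then $d_{M^*}(c, u) \leq \ell + 1 - i$ for $u \in V^{(i)}$, $i \leq \ell$, and $d_{M^*}(c, u) \leq 1$ for any other high-group $u$ via the clique, giving $\mathrm{OPT} \leq (\ell+1)^p |V^{(\leq \ell)}| + |V^{(\geq \ell+1)}| = O(n)$. Dividing gives an approximation ratio of $\Omega((\ell+1)^p)$, and combining with the lower bound on $\ell+1$ produces the claim. I expect the main obstacle to be the tight constant-factor bookkeeping needed to land exactly on the advertised factor of $1/4$: the slack between $n$ and $t$, the $\log(\ell+1) \leq \ell$ relaxation, and the $\ell$-vs-$\ell+1$ rounding must all be absorbed simultaneously, as they otherwise give only $[\log(t/(4g))/(p + \log(101 f(1,t)))]^p / C$ for a slightly larger constant $C$. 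A smaller nuisance is the degenerate case $\ell = 0$, which can be treated separately since the claimed bound is then $O(1)$.
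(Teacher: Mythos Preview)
Your approach is correct and follows the same high-level strategy as the paper: run BFS from the algorithm's centers $S$, show the far layers carry $\Omega(n)$ mass at distance $\Omega(\ell)$, and exhibit a single far center with cost $O(n)$. The technical execution differs in two places worth noting.

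First, your augmented metric adds within-layer and adjacent-layer edges in addition to the high-group clique, whereas the paper only adds the clique on $V_+ = \{v : d_G(v,C) \geq \ell\}$ (the metric $M_\ell(C)$ of \cref{lem:mmulti-consistent}). Your richer metric makes the bound $d_{M^*}(c, V^{(i)}) \leq \ell+1-i$ immediate via the adjacent-layer chain; in the paper's sparser metric the analogous bound $d(x, V^{(i)}) \leq \ell - i$ for the chosen center $x \in V_+$ is less transparent, so your variant is arguably cleaner here.

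Second, you upper-bound the near-ball contribution to $\mathrm{OPT}$ crudely by $(\ell+1)^p\,|V^{(\leq \ell)}|$ and compensate by folding the extra $(\ell+1)^p$ factor into the defining inequality for $\ell$. The paper instead fixes $\ell = \log(t/(4g))/(p+\log(101f(1,t)))$ explicitly and controls the near-ball sum $\sum_{i=0}^{\ell-1} g(101f)^i(\ell-i)^p$ via the elementary inequality $\alpha^p \leq (101f \cdot 2^p)^{2\alpha/3}$, which collapses the sum to $g(101f \cdot 2^p)^\ell = t/4$. Your route sidesteps this calculation at the price of the constant-chasing you flag; taking $c$ in the narrow window $[5/3,\,1.84]$ (so that both $n/(2gc) \geq t/(4g)$ and $(c-1)/(c+1) \geq 1/4$ hold, using $n \geq 92t/100$) recovers the stated factor $1/4$.
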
 
\begin{proof}
	Denote $G\defeq(V,E)\defeq\kg[t][0]$ and $A\defeq\actno[t][0]$.
	By \cref{lem:kg-actnum}, the number of open nodes in $G$ is
	at least $92t/100\geq t$, which implies that $|A| \ge 92t/100$. 
	Thus,  after operation $t$,  $n \ge 92t/100$.
	
	Let $C$ be the centers that are picked by the algorithm after operation $t$. By the assumption of the lemma, $|C| \le g$.
	Consider $M_{\ell}(C)$, where $\ell\defeq\log(t/4g)/(p+\log(101 f(1,t)))$.
	By \cref{lem:max-degree}, for any $s\in C$, the size of $\lvert\{x\mid x\in A\wedge d(x,s)<\ell\}\rvert$
	is at most $\sum_{i\in[\ell-1]}(101 f(1,t))^{i}<(101 f(1,t))^{\ell}$. 
	Let $V_{+}\defeq\{x\mid x\in A\wedge d(x,C)\geq\ell\}$.
	Since $\lvert C\rvert\leq g$, it holds that $\lvert V_{+}\rvert>|A| -g(101 f(1,t))^{\ell}  \ge 92t / 100 -g(t/4g)^{\log ((101 f(1,t)) / (p + \log (101 f(t,1)))}\ge 92t/100 - t/4 \ge t/2$.
	Since $d(s,V_{+})\geq\ell$ for any $s\in C$, the $(1,p)$-clustering
	cost of $S$, and thus the cost of the algorithm, is at least $\lvert V_{+}\rvert\cdot\ell^{p}\geq t/2\cdot\ell^{p}$.
	
	We will show that if instead a single point corresponding to a vertex of $V_{+}$ is picked as center, then the cost is at most $2t$, which provides an upper bound on the cost of the optimum solution.
	It follows that the approximation factor achieved by the algorithm is at least $\ell^p / 4$.

	To complete the proof consider a point $x$ whose corresponding point
	$v_{x}$ belongs to $V_{+}$. One can easily show that for
	each $\alpha\in\N$ it holds that $\alpha^{p}\le(101f(1,t)\cdot2^{p})^{2\alpha/3}$
	since for each $\alpha\in\N$ it holds that $\alpha^{p}=2^{p\log\alpha}\le2^{p\cdot\frac{2\alpha}{3}}$.
	Thus, the $(1,p)$-clustering cost if a single point $x$ is chosen
	as center is at most
	
	\begin{alignat*}{1}
		\sum_{i=0}^{\ell-1}g(101f(1,t))^{i}\cdot(\ell-i)^{p}+t\cdot1^{p} & \le g\left((101f(1,t))^{\ell}\cdot\sum_{i=1}^{\ell}\frac{1}{(101f(1,t))^{i}}\cdot i^{p}\right)+t\\
		& \le g\left((101f(1,t))^{\ell}\cdot\sum_{i=1}^{\ell}\frac{(101f(1,t)\cdot2^{p})^{2i/3}}{(101f(1,t))^{i}}\right)+t\\
		& \le g\left((101f(1,t))^{\ell}\cdot\sum_{i=1}^{\ell}\frac{2^{p\cdot2i/3}}{(101f(1,t))^{i/3}}\right)+t\\
		& \le g\left((101f(1,t))^{\ell}\cdot\sum_{i=1}^{\ell}\left(\frac{2^{\frac{2p}{3}}}{(101f(1,t))^{1/3}}\right)^{i}\right)+t\\
		& \le g(101f(1,t))^{\ell}2^{p\ell}+t\\
		& \le g(101f(1,t)\cdot2^{p})^{\ell}+t\\
		& \le5t/4\le2t
	\end{alignat*}
	using that $\ell=\frac{\log(t/4g)}{p+\log(101f(1,t))}=\frac{\log(t/4g)}{\log(2^{p}101f(1,t))}=\log_{2^{p}101f(1,t)}t/4g$.
	This yields an approximation ratio of at least $\frac{t/2\cdot\ell^{p}}{2t}=\Omega(\ell^{p})$. 
\end{proof}

\begin{lemma}[\cref{thm:det-1lb}, part 3]
	For any $k  \geq 1$, any dynamic algorithm which returns a set of $k$-centers against an adaptive adversary
	with an amortized update time of $f(k,n)$, for
	an arbitrary function $f$,  must have an approximation ratio of $\Omega\left(\left(\frac{\log(n)}{z+\log f(1,2n)}\right)^{z}\right)$ for the $(1,z)$-clustering.
\end{lemma}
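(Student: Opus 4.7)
The plan is to derive Part 3 as a direct corollary of \cref{lem:det-low-1pclus}, specializing $p = z$ and $g = k$ (so the algorithm is allowed to output up to $k$ centers). That lemma already yields, after every clean operation $t$, an approximation ratio of at least
$$\frac{1}{4}\left(\frac{\log(t/(4k))}{z + \log(101 f(1,t))}\right)^{z}$$
against the metric-adaptive adversary constructed in \cref{sec:LBadap}. All the real work (the adversarial strategy, the consistency of the metric $M_{\ell}(C)$, and the lower bound on the number of far points) is already done there; what remains is a routine translation between the operation index $t$ and the current point count $n$.

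To carry out this translation, I would invoke \cref{lem:no-closed-nodes} to guarantee that clean operations $t$ occur at arbitrarily large indices, and \cref{lem:kg-actnum} to conclude that for such a clean $t$ the number of currently active points satisfies $92t/100 \le n \le t$, so in particular $n \le t \le 2n$. Using that $f(1,\cdot)$ is non-decreasing, this gives $\log f(1,t) \le \log f(1,2n)$, and in the relevant regime $k \le n^{1-\Omega(1)}$ also $\log(t/(4k)) = \Omega(\log n)$; outside that regime we already have $k$ exceeding $\Omega\left((\log n/(z+\log f(1,2n)))^{z}\right)$, so the claim is immediate by the trivial approximation. Substituting these estimates into the bound above and absorbing the $\Theta(1)$ factors (including the leading $1/4$ and the additive $\log 101$) yields the stated ratio $\Omega\left(\left(\tfrac{\log n}{z+\log f(1,2n)}\right)^{z}\right)$.

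There is no serious obstacle in this step, since the heavy lifting has been done in \cref{lem:det-low-1pclus}; the only technical point worth spelling out is the asymmetric use of $f(1,\cdot)$ in both the upper bound on the query budget and the lower bound on the approximation. This is justified because the adversary only ever answers queries about single-center clustering cost and never needs to exploit the algorithm's larger $f(k,n)$ budget, so one may simulate any $k$-center algorithm with update time $f(k,n) \ge f(1,n)$ as if it were a $(1,z)$-clustering algorithm with the smaller budget $f(1,n)$, then apply the previous lemma verbatim.
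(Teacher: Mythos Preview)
Your core approach---invoking \cref{lem:det-low-1pclus} at a clean operation and then translating $t$ into $n$ via \cref{lem:kg-actnum,lem:no-closed-nodes}---is exactly what the paper does. Two points diverge, however.

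First, the paper applies \cref{lem:det-low-1pclus} with $g = O(1)$, not $g = k$. The statement concerns $(1,z)$-clustering; the phrase ``for any $k\ge 1$'' in the lemma header is inherited boilerplate, and the intended setting (as in the original \cref{thm:det-1lb}) is $k=1$ with the algorithm outputting $O(1)$ centers. With $g=O(1)$ the term $\log(t'/4g)$ is immediately $\Omega(\log n)$, so your case split on $k\le n^{1-\Omega(1)}$ is unnecessary.

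Second, and more seriously, your final paragraph is incorrect. You argue that an algorithm with query budget $f(k,n)\ge f(1,n)$ can be ``simulated as if it had the smaller budget $f(1,n)$.'' This is backwards: the adversary must answer \emph{every} query the algorithm actually makes, so a larger budget means the knowledge graph $G_t$ accrues higher degrees, the BFS levels $V^{(i)}$ grow faster, and the resulting lower bound weakens to one with $f(k,2n)$ in the denominator, not $f(1,2n)$. The appearance of $f(1,2n)$ in the statement is not an asymmetry to be justified; it simply reflects $k=1$, whence $f(k,\cdot)=f(1,\cdot)$ and the issue evaporates. Drop that paragraph and set $g=O(1)$, and your proof matches the paper's.
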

\begin{proof}
	Let $t\in\N$. By Lemma~\ref{lem:mpt-det-low-adv-prop}, there is
	a value $t'$ with $t<t'\le2t'$ such that $t'$ is a clean operation.
	Let $n$ be the number of open points at iteration $t'$. By \cref{lem:kg-actnum}
	we know that $t'\ge n\ge92t'/100$. Note that hence $t'\le2n$.
	
	Recall that we assumed that the function $f(k,n)$ is non-decreasing
	in $n$ (for any fixed $k$). Suppose that after operation $t'$ we
	query the solution value of an algorithm for $1$-center, $1$-sum-of-radii,
	or $1$-sum-of-diameter. By Lemma~\ref{lem:diam} its approximation
	ratio is at least $\frac{\log(92t'/100)}{\log(103f(k,t'))} -1\ge\frac{\log(92n/100)}{\log(103f(k,2n))} -1=\Omega\left(\frac{\log(n)}{\log(f(k,2n))}\right)$.
	Suppose that instead we query the solution from the algorithm for
	$(1,p)$-clustering. By Lemma~\ref{lem:det-low-1pclus}, its approximation
	ratio is at least
	
	\begin{alignat*}{1}
		\left[\frac{\log(t'/4g)}{p+\log(101f(1,t'))}\right]^{p}/4 & \ge\left[\frac{\log(n/4g)}{p+\log(101f(1,2n))}\right]^{p}/4\\
		& \ge\left[\frac{\log(n)-\log(4g)}{p+\log(101f(1,2n))}\right]^{p}/4\\
		& \ge\left[\frac{\log(n)}{1.1p+1.1\log(101f(1,2n))}\right]^{p}/4\\ 
		& \ge\left[\frac{\log(n)}{1.1p+1.1(7+\log(f(1,2n))}\right]^{p}/4\\ 
		& \ge\left[\frac{\log(n)}{1.1p+8+1.1\log(f(1,2n))}\right]^{p}/4\\ 
		& \ge\left[\frac{\log(n)}{2p+8+2\log(f(1,2n))}\right]^{p}/4\\ 
		& =\Omega\left(\left(\frac{\log n}{2p+8+2\log f(1,2n)}\right)^{p}\right)
	\end{alignat*}
	using that $g=O(1)$. Hence, for $k$-median and $k$-means if we
	take $p=1$ and $p=2$, respectively, this yields bounds of $\Omega\left(\frac{\log n}{10+2\log f(1,2n)}\right)$
	and $\Omega\left(\left(\frac{\log n}{12+2\log f(1,2n)}\right)^{2}\right)$,
	respectively. 	
\end{proof}

\bibliography{cluster}
\end{document}